\documentclass[aps,pra,twocolumn,superscriptaddress,showpacs,10pt]{revtex4-2}
\usepackage[T1]{fontenc}
\usepackage{amssymb}
\usepackage{amsthm}
\usepackage{dsfont}
\usepackage[shortlabels]{enumitem}
\usepackage{physics}
\usepackage{xcolor}
\definecolor{urlcol}{RGB}{0, 102, 204}
\usepackage{graphicx}
\usepackage[all]{nowidow}
\usepackage{thmtools}

\usepackage[pdfusetitle, hypertexnames=false]{hyperref}
\hypersetup{
  colorlinks = true,
  urlcolor = urlcol,
  linkcolor = blue,
  citecolor = blue,
  breaklinks,
}
\usepackage[capitalize]{cleveref}

\usepackage{mathtools}

\usepackage{tikz}

\newcommand{\bs}[1]{\boldsymbol{#1}}
\newcommand{\scomm}[2]{[\![#1,#2]\!]}
\DeclareMathOperator{\ad}{ad}
\DeclareMathOperator{\Lie}{Lie}
\renewcommand{\epsilon}{\varepsilon}
\newcommand{\bsj}{\boldsymbol{j}}
\newcommand{\bsjs}{\boldsymbol{j}^*}
\newcommand{\bsk}{\boldsymbol{k}}
\newcommand{\claws}{\text{claws}}
\newcommand{\subv}[1]{\mbf{b}_{#1}}
\newcommand{\claw}{-\hspace{-2pt}\langle}
\newcommand{\dgrU}[2]{\delta_{#1}(#2)}
\newcommand{\mc}[1]{\mathcal{#1}}
\newcommand{\mbb}[1]{\mathbb{#1}}
\newcommand{\md}{\text{-}}
\newcommand{\kryl}[1]{\eta^{(#1)}}
\newcommand{\bil}[2]{\kryl{#1}\kryl{#2}}
\newcommand{\bih}[1]{\kappa^{(#1)}}
\newcommand{\cur}[1]{J^{(#1)}}
\newcommand{\liek}{\mathfrak{K}}
\newcommand{\lieso}[1]{\mathfrak{so}(#1)}
\newcommand{\mbf}[1]{\mathbf{#1}}
\NewDocumentCommand\mm{g}{\gamma\IfValueTF{#1}{_{#1}}{}}
\NewDocumentCommand\emm{g}{\varphi\IfValueTF{#1}{_{#1}}{}}
\newcommand{\mrm}[1]{\mathrm{#1}}
\newcommand{\mh}[1]{\mbox{\hspace{#1}}}

\renewcommand{\sd}{\oplus}
\newtheorem{theorem}{Theorem}
\newtheorem{lemma}{Lemma}
\newtheorem{corollary}{Corollary}
\newtheorem{claim}{Claim}
\newtheorem{conjecture}{Conjecture}
\newtheorem{proposition}{Proposition}
\newtheorem{definition}{Definition}
\newtheorem{observation}{Observation}
\newtheorem{remark}{Remark}

\crefname{conjecture}{Conjecture}{Conjectures}
\Crefname{conjecture}{Conjecture}{Conjectures}
\crefname{claim}{Claim}{Claims}
\Crefname{claim}{Claim}{Claims}
\crefname{observation}{Observation}{Observations}
\Crefname{observation}{Observation}{Observations}

\makeatletter
\AfterEndEnvironment{proof}{\@doendpe}
\AfterEndEnvironment{definition}{\@doendpe}
\AfterEndEnvironment{lemma}{\@doendpe}
\AfterEndEnvironment{proposition}{\@doendpe}
\AfterEndEnvironment{corollary}{\@doendpe}
\AfterEndEnvironment{theorem}{\@doendpe}
\AfterEndEnvironment{remark}{\@doendpe}
\AfterEndEnvironment{conjecture}{\@doendpe}
\AfterEndEnvironment{observation}{\@doendpe}
\makeatother
\newcommand{\id}{\mathds{1}}

\DeclarePairedDelimiter\Parentheses{\lparen}{\rparen}
\DeclarePairedDelimiter\Brackets{\lbrack}{\rbrack}
\DeclarePairedDelimiter\Braces{\lbrace}{\rbrace}
\DeclarePairedDelimiter\BraKet{\langle}{\rangle}

\DeclarePairedDelimiterX\SetVerticalLine[2]{\lbrace}{\rbrace}{#1\nonscript\:\delimsize\vert\allowbreak\nonscript\:\mathopen{}#2}
\DeclarePairedDelimiterX\GroupSetVerticalLine[2]{\langle}{\rangle}{#1\nonscript\:\delimsize\vert\allowbreak\nonscript\:\mathopen{}#2}

\NewDocumentCommand\bk{sm}{\IfBooleanTF{#1}{\Brackets{#2}}{\Brackets*{#2}}}

\let\Set\SetVerticalLine \let\GroupSet\GroupSetVerticalLine
\newcommand\set[2]{\Set*{#1}{#2}}
\newcommand\groupset[2]{\GroupSet*{#1}{#2}}

\NewDocumentCommand\e{sd()d[]gd<>}{\IfValueTF{#2}{\IfBooleanTF{#1}{\Parentheses{#2}}{\Parentheses*{#2}}}{}\IfValueTF{#3}{\IfBooleanTF{#1}{\Brackets{#3}}{\Brackets*{#3}}}{}\IfValueTF{#4}{\IfBooleanTF{#1}{\Braces{#4}}{\Braces*{#4}}}{}\IfValueTF{#5}{\IfBooleanTF{#1}{\BraKet{#5}}{\BraKet*{#5}}}{}}

\NewDocumentCommand\ee{g}{\mathrm{e}\IfValueTF{#1}{^{#1}}{}}

\newcommand{\bvs}[1]{b^2[#1]} \newcommand{\cv}[1]{C[#1]} \newcommand{\hv}[1]{h[#1]} \newcommand{\hvs}[1]{h^2[#1]}  
\newcommand{\kv}[1]{K[#1]} \newcommand{\Pv}[1]{P(#1)} \newcommand{\Qv}[1]{Q(#1)} \newcommand{\vv}[1]{\boldsymbol{v}[#1]}

\begin{document}

\title{A graph-theoretic analysis of non-generic free-fermion solvability by Krylov decompositions}

\author{Jannis Ruh}
\affiliation{Centre for Quantum Software and Information, School of Computer Science, Faculty of Engineering \& Information Technology, University of Technology Sydney, NSW 2007, Australia}

\author{Samuel J. Elman}
\affiliation{Centre for Quantum Software and Information, School of Computer Science, Faculty of Engineering \& Information Technology, University of Technology Sydney, NSW 2007, Australia}

\author{Adrian Chapman}
\affiliation{Phasecraft Ltd.}

\date{October 2, 2026}

\begin{abstract}
    Exactly solvable models are often identified from structural features of their interactions.
    For quantum spin Hamiltonians, these are captured by the \emph{frustration graph} of anticommuting Pauli terms. 
    The frustration graph characterizes broad families of spin models that map to free fermions for arbitrary, or \emph{generic}, values of their coupling strengths.
    Nevertheless, there exist \emph{non-generically solvable} models that admit free-fermion solutions only at finely tuned couplings.
    In this work, we characterize a family of such models in terms of graph structures.
    Given an arbitrary instance of Fendley’s free-fermions-in-disguise (FFD) chain, we construct a set of operators, called \emph{Krylov currents}, that are exactly quadratic in its effective fermionic modes and thus extend it to a linear family of free-fermion solvable Hamiltonians.
    As the Krylov currents are nonlinear in the FFD Hamiltonian terms, the elements of this generated family are non-generically solvable, and they accordingly fail the known conditions for generic solvability.  
    We prove that every Krylov current admits an expansion in terms of induced paths of the frustration graph and that they generate the full special-orthogonal algebra of Gaussian transformations.
    Our results identify path support as a signature of non-generic solvability, and they provide a systematic route to finding exactly solvable many-body Hamiltonians.
\end{abstract}
\maketitle

\section{Introduction}
\label{sec:intro}

Exact analytic solutions to quantum many-body systems are invaluable for the mathematical and physical insights they provide.
For one, they present starting points for many well-known approximation schemes, which are used to model realistic systems.
Beyond this immediate utility, these methods frequently reveal surprising connections between previously unrelated mathematical structures, suggesting the existence of deeply unifying principles.

Perhaps the most enlightening method, in both of these regards, consists of mappings between spin and free-fermion systems. 
These mappings allow one to formulate certain calculations in terms of the reduced dynamics of a single particle, achieving an exponential reduction in complexity.
One of the earliest examples of this technique was demonstrated by the solution of the 1d XY model via the Jordan-Wigner transformation~\cite{jordan1928uber, Lieb1961}.
The Jordan-Wigner mapping has since been generalised to higher-dimensional lattices \cite{wang1991ground, huerta1993bose-fermi, kitaev2006anyons}, more general geometries \cite{backens2019jordan-wigner}, and systems of qudits and parafermions \cite{fradkin1989jordan-wigner, batista2001generalized, wang2025particle}. 
These solutions have also found applications in quantum computing through fermionic quantum information processing \cite{knill2001fermionic, bravyi2002fermionic, Terhal2002, DiVincenzo2005, oszmaniec2022fermion}, topological quantum error correction \cite{kells2009description, nussinov2012arbitrary, tantivasadakarn2020jordan-wigner, chapman2022free}, and fermion-to-qubit mappings \cite{verstraete2005mapping, Chien2020Optimising, Derby2021Compact, harrison2024Serpinski}.

A \emph{free-fermion solution} to a quantum many-body spin model identifies collective spin degrees of freedom with fermionic modes such that the spin-model Hamiltonian is quadratic in the fermion picture.
A natural such mapping injectively identifies each term of the Hamiltonian with a quadratic fermionic operator, extended linearly to a free-fermion Hamiltonian.
Because the mapping is \emph{linear}, its existence is independent of the specific values of the nonzero interaction strengths of the terms. 
For any generic such values of these coupling coefficients, the solution exists. 
We call this family of solutions \emph{generic}.

While linearity is sufficient for a solution to be generic, it is, remarkably, not necessary.
In a seminal work \cite{fendley2019free}, Fendley introduced a spin model that admits a generic free-fermion solution but provably does not admit a linear solution. 
The solution maps the entire spin Hamiltonian holistically to a free-fermion one for any values of the coupling strengths.
Models admitting such a nonlinear solution are called ``free fermions in disguise'' models, and the spin model in Ref.~\cite{fendley2019free} has been correspondingly termed the FFD model.
Subsequent works have investigated the dynamics generated by this model \cite{fukai2025quantum, exact2025vona, szszschagrin2026construction} as well as its state-space structure \cite{Vernier2026Hilbert, fukai2026solvinga}.

The existence of a generic free-fermion solution depends only on the algebraic structure of the Hamiltonian.
When the Hamiltonian is written as a sum of Pauli operators, this structure is captured by the \emph{frustration graph}.
This is the graph whose vertices correspond to Pauli terms, with edges between vertices whose associated Paulis anticommute.
In Ref.~\cite{Chapman2020}, it was shown that an injective linear solution exists if and only if its frustration graph is a \emph{line graph}.
In subsequent works~\cite{elman2021free, chapman2023unified}, this was generalised to disguised free-fermion models: a model admits a nonlinear free-fermion solution if its frustration graph is \emph{(even-hole, claw)-free}~\cite{elman2021free}, and more generally if it is \emph{simplicial, claw-free}~\cite{chapman2023unified}.
The latter family includes the former two, providing a unified graph-theoretic characterisation of both linear and disguised free-fermion solutions.
This characterisation has been extended to parafermions~\cite{mann2025graph} and dissipative models~\cite{kohei2026dissipative}.

\begin{figure*}[t]
    \centering
    \resizebox{\textwidth}{!}{\begin{tikzpicture}[
        every node/.style={font=\normalsize},
        ]
        \pgfmathsetmacro{\sr}{-1.4}
        \tikzset{dot/.style={circle,fill,inner sep=3pt}}
        \tikzset{hollowdot/.style={circle,draw=teal!70,fill=white,line width=0.9pt,inner sep=3pt}}
        \tikzset{lbl/.style={fill=white,inner sep=1pt}}
        
        \node[dot,fill=black!70] (chi) at (-1.4,0) {};
        \node[dot,fill=black!70] (a) at (0,0) {};
        \node[dot,fill=orange!85!black] (b) at (1.4,\sr) {};
        \node[dot,fill=black!70] (c) at (2.8,0) {};
        \node[dot,fill=orange!85!black] (d) at (4.2,\sr) {};
        \node[dot,fill=orange!85!black] (e) at (5.6,0) {};
        \node[dot,fill=teal!70] (f) at (7.0,\sr) {};
        \node[dot,fill=teal!70] (g) at (8.4,0) {};
        \node[dot,fill=black!70] (h) at (9.8,\sr) {};
        \node[dot,fill=teal!70] (k) at (11.2,0) {};
        \node[hollowdot] (T) at (1.4,1.9) {};
        \node (p) at (12.6,\sr) {};
        \node (q) at (14.0,0) {};
        
        \draw[line width=1pt,black!55] (chi) -- (a);
        \draw[line width=1pt,black!55] (a) -- (c) -- (e) -- (g) -- (k);
        \draw[line width=1pt,black!55] (b) -- (d) -- (f) -- (h);
        \draw[line width=1pt,black!55] (a) -- (b);
        \draw[line width=1pt,black!55] (b) -- (c);
        \draw[line width=1pt,black!55] (c) -- (d);
        \draw[line width=1pt,black!55] (d) -- (e);
        \draw[line width=1pt,black!55] (e) -- (f);
        \draw[line width=1pt,black!55] (f) -- (g);
        \draw[line width=1pt,black!55] (g) -- (h);
        \draw[line width=1pt,black!55] (h) -- (k);
        \draw[line width=1pt,black!55,dashed] (h) -- (p);
        \draw[line width=1pt,black!55,dashed] (k) -- (p);
        \draw[line width=1pt,black!55,dashed] (k) -- (q);
        
        \draw[line width=0.7pt,black!55,dashed] (T) -- (a);
        \draw[line width=0.7pt,black!55,dashed] (T) -- (b);
        \draw[line width=0.7pt,black!55,dashed] (T) -- (c);
        \draw[line width=0.7pt,black!55,dashed] (T) -- (e);
        
        \draw[line width=1.3pt,orange!85!black] (b) -- (d);
        \draw[line width=1.3pt,orange!85!black] (d) -- (e);
        
        \draw[line width=1.3pt,teal!70] (f) -- (g);
        \draw[line width=1.3pt,teal!70] (g) -- (k);
        \draw[line width=0.9pt,teal!70] (g) -- (T);
        
        \node[lbl] at (-1.4,0.42) { \(\chi\)};
        \node[lbl] at (0,0.42) { \(h_1 = Z_1Z_2X_3\)};
        \node[lbl] at (1.4,{\sr-0.42}) { \(h_2 = Z_2Z_3X_4\)};
        \node[lbl] at (2.8,0.42) { \(h_3 = Z_3Z_4X_5\)};
        \node[lbl] at (4.2,{\sr-0.42}) { \(h_4 = Z_4Z_5X_6\)};
        \node[lbl] at (5.6,0.42) { \(h_5 = Z_5Z_6X_7\)};
        \node[lbl] at (7.0,{\sr-0.42}) { \(h_6 = Z_6Z_7X_8\)};
        \node[lbl] at (8.4,0.42) { \(h_7 = Z_7Z_8X_9\)};
        \node[lbl] at (9.8,{\sr-0.42}) { \(h_8 = Z_8Z_9X_{10}\)};
        \node[lbl] at (11.2,0.42) { \(h_9 = Z_9Z_{10}X_{11}\)};
        \node[lbl] at (1.4,2.35) { \(Z_2Z_3Y_4Y_6X_7 \in J^{(5)}\)};
    \end{tikzpicture}}
    \caption{The zigzag-ladder frustration graph of the FFD Hamiltonian in \cref{eq:ffd}, together with the simplicial mode  \(\chi\), which neighbors only  \(h_1\). 
    The term  \(Z_2Z_3Y_4Y_6X_7 \in J^{(5)}\), obtained as the product  \(h_2h_4h_5\) along the induced path highlighted in orange, is also shown. 
    This path is the symmetric difference of the anchored induced paths  \(\chi\)--\(h_1\) and  \(\chi\)--\(h_1\)--\(h_2\)--\(h_4\)--\(h_5\), of lengths  \(1\) and  \(4\) (summing to  \(5\)), so the new term occurs in the Krylov current  \(J^{(5)}\). 
    Including this term introduces a claw with center at  \(h_7\), shown in teal together with its pairwise non-adjacent neighbors.}
    \label{fig:zigzag_ladder}
\end{figure*}

It would therefore be surprising if graph theory could access any aspect of \emph{non-generic} free-fermion solvability, which is exhibited by models whose emergent free-fermions exist only at criticality.
Indeed, the frustration graph alone cannot determine whether a model has a non-generic solution without knowledge of the coupling strengths.
Several families of such models have been discovered~\cite{Fendley2024, fukai2025free} that explicitly violate the graph-theoretic conditions of Ref.~\cite{chapman2023unified}.
Nonetheless, the algebraic solutions of these models circumvent the obstructions that would prohibit a generic solution, generally through a high degree of cancellation.

In this work, we illuminate the relationship between graph theory and non-generic solvability.
While the frustration graph alone cannot tell us whether a model admits a non-generic solution, it can nevertheless signify the existence of critical, free-fermion solvable phases at particular values of the coupling strengths \footnote{Here, we emphasize that we are considering non-generic \emph{solutions}: solutions which apply only at specific coupling-strength configurations. 
We show that these exist for certain frustration graphs, but we do not rule out the existence of others for the same graph.}.
Our main theorem is a result in this direction. 
The strategy we take is to view the FFD model as a kind of ``seed model'' which generates all quadratic Hamiltonians in its fermionic eigenmodes, and these new models are generally non-generically solvable.
This is analogous to the construction by which the transverse-field Ising model can correspondingly be said to generate all linear, generic solutions (possibly over a symmetry-restricted subspace).
The relationships between these possible solution families are summarized schematically in the table.

As a main result, we characterize, in graph-theoretic terms, a certain subspace of quadratics in the fermionic eigenmodes of the FFD model.
Consequently, we identify a linear subspace of extensions---operators that we add to the Hamiltonian---to this model under which it remains free-fermion solvable, yet only for specific values of the coupling strengths in the Pauli basis.
We furthermore show that these additional terms generate the maximal  \(\lieso{d}\) Lie algebra of free-fermionic Hamiltonians as their dynamical Lie algebra \cite{wiersema2024classification, aguilar2024full, goh2025lie, kocku2026classification}.
In this regard, our results can be said to constitute a complete characterization of the free-fermion dynamical Lie algebra of this model in terms of graph theory.

\begin{table}[]
    \begin{tabular}{l|ll}
        Solution & Linear & Nonlinear \\ \hline
        Generic & \begin{tabular}[c]{@{}l@{}}Line graphs sufficient\\ and necessary\end{tabular} & \begin{tabular}[c]{@{}l@{}}SCF graphs sufficient \\ (unknown if necessary)\end{tabular} \\
        Non-Generic & Impossible & \textbf{{[}This work{]}}                                                               
    \end{tabular}
\end{table}

Our paper is structured as follows.
\Cref{sec:mainresults} gives our main results together with some relevant background so as to be self-contained.
\Cref{sec:numerical_analysis} provides numerical results regarding the new free-fermion-solvable models we identify, such as gap structure and the number of obstructions to a generic solution (i.e. claws) in these models.
In \cref{sec:conjectures}, we give a discussion on the implications of our result as well as some conjectured extensions beyond the setting we consider. 
In \cref{sec:background}, we provide more detail regarding the frustration graph formalism, the solution to the FFD model, and the Krylov-mode construction. 
In \cref{sec:discuss} we have a concluding discussion, followed by the appendices that provide the mathematical details.

\section{Results}
\label{sec:mainresults}
Consider the FFD model, a 1D spin-chain model on  \(n\) qubits described by the following 3-local Hamiltonian
\begin{align}
    H_{\mrm{FFD}}=\sum_{j=1}^{n - 2} b_j Z_j Z_{j + 1} X_{j + 2} \eqqcolon \sum_{j = 1}^{m} h_j
    \label{eq:ffd}
\end{align}
for  \(m = n - 2\).
The frustration graph of this model is called the \emph{zigzag-ladder graph}, shown in \cref{fig:zigzag_ladder}. 

Our results involve operators defined by \emph{Krylov decompositions} in terms of this Hamiltonian (\cref{def:krylovmode,eq:current_operators}).
Let  \(\mc{L}\) be the linear map on operators, acting as
\begin{align}
    \mc{L}[X] \coloneqq \frac{i}{2}[H, X]\;.
\end{align}
Fix a particular operator called the \emph{simplicial mode}  \(\chi\), which is defined via the free-fermion solution of  \(H_{\mathrm{FFD}}\).
One generates an operator Krylov subspace by repeated application of  \(\mc{L}\) to  \(\chi\).
This defines the \emph{Krylov modes}
\begin{align}
    \eta^{(k)} \coloneqq \mc{L}^k[\chi]\;,
\end{align}
that span a subspace of dimension  \(d\), where  \(d \simeq 2 \frac{n}{3}\) for the FFD model. 

It is proven in Ref.~\cite{chapman2023unified} that the Krylov modes for general simplicial, claw-free models satisfy a non-orthogonal version of the canonical anticommutation relations for Majorana fermions
\begin{align}
    \{\kryl{j},\kryl{k}\}=2\mathbf{M}^G_{jk} I
    \label{eq:nocar_main}
\end{align}
for \(\mathbf{M}^G\) a symmetric, positive semidefinite matrix of rank \(d\). 
The Krylov modes thus constitute a basis set for the fermionic eigenmodes of these models.
It is additionally shown in Ref.~\cite{chapman2023unified} that these modes are linear combinations of \emph{anchored path-product operators}
\begin{align}
    h_P \coloneqq \chi \left(\overrightarrow{\textstyle\prod}_{i=1}^{\ell} h_{\Pv{i}}\right).
    \label{eq:hpdef}
\end{align}
For the FFD model,  \(P\) is an induced path in the zigzag ladder graph with  \(\Pv{0} = 0\) corresponding to the simplicial mode, i.e.,  \(h_{\Pv{0}} \coloneqq \chi\),  \(0 \leq \Pv{i} \leq m\) for all  \(i \in \{0, \dots, \ell\}\),  \(\Pv{i} < \Pv{j}\) for  \(i < j\), and the only factors in  \(h_P\) with which  \(h_{\Pv{i}}\) anticommutes are  \(h_{\Pv{i - 1}}\) and  \(h_{\Pv{i + 1}}\) (with  \(h_{\Pv{0}} = \chi\)), if they are present. Because  \(h_{\Pv{0}} = \chi\), these path-products are said to be \emph{anchored}, and we distinguish these from general path-product operators for which this condition does not necessarily hold. 
Operators in the span of path-product operators admit a \emph{path-product expansion}.

In Ref.~\cite{fukai2026solvingb}, where this terminology was first introduced, the authors give an explicit anchored path-product expansion for the eigenmodes of any Hamiltonian with an (even-hole, claw)-free frustration graph (including  \(H_{\mathrm{FFD}}\)).
Our first main result is an extension of this to certain quadratics in the fermionic eigenmodes, the \emph{Krylov currents}
\begin{theorem}[
    restate=KrylovCurrents,
    name={Path support of FFD Krylov currents},
    label={thm:ffd_currents_are_paths}]
    For the simplicial mode  \(\chi\) which anticommutes with only  \(h_1\) or  \(h_{m} \coloneqq h_{n - 2}\) in the FFD Hamiltonian, the Krylov current operators
    \begin{align}
        J^{(\ell)} \coloneqq i \sum_{j = 0}^{\ell} (-1)^j \e[\eta^{(j)}, \eta^{(\ell - j)}] \mh{3mm} \text{(\(\ell\) odd)}
        \label{eq:current_operators}
    \end{align}
    admit a path-product expansion.
\end{theorem}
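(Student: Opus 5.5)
The plan is to reduce \cref{thm:ffd_currents_are_paths} to a combinatorial statement about pairs of anchored induced paths. By the result of Ref.~\cite{fukai2026solvingb} quoted above, each Krylov mode has an anchored path-product expansion $\eta^{(k)} = \sum_P c_k(P)\, h_P$, where $P$ runs over the monotone anchored induced paths of the zigzag ladder. The coefficients $c_k(P)$ are polynomials in the couplings $b_j$. I will first derive these coefficients explicitly from the action of $\mc{L}$ on $h_P$. Since $\mc{L}[h_P]=\frac{i}{2}\sum_j [h_j,h_P]$ only picks up the $h_j$ that anticommute with $h_P$, the action should extend, shorten, or shift the path endpoint. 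Substituting into \cref{eq:current_operators} gives
\begin{align}
J^{(\ell)} = i\sum_{P,Q}\Big(\sum_{j=0}^{\ell}(-1)^j c_j(P)\,c_{\ell-j}(Q)\Big)[h_P,h_Q].
\end{align}
Each nonvanishing commutator equals $2h_Ph_Q$, with $h_P$ and $h_Q$ anticommuting.

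Next I will compute $h_Ph_Q$ for two anchored paths. Both begin with $\chi$, and $\chi^2=I$. Commuting the factors of $h_Q$ through $h_P$ only produces signs, so the common initial segment of $P$ and $Q$ squares away. Up to a sign, what remains is the product over the symmetric difference $P\triangle Q$. Suppose $P$ and $Q$ agree up to a last common vertex $v$ and are disjoint afterwards. Then $P\triangle Q$ consists of two monotone branches leaving $v$. In the zigzag ladder the first vertices of the two branches lie in $\{v+1,v+2\}$ and are adjacent to each other, so the two branches glue into a single path. I will check that this glued path is induced, using monotonicity and the fact that each branch is already induced. The example in \cref{fig:zigzag_ladder} ($\chi$--$h_1$ against $\chi$--$h_1$--$h_2$--$h_4$--$h_5$, giving $h_2h_4h_5$) is the model case. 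In this case $h_Ph_Q$ is, up to a phase, a (non-anchored) path-product operator, as required.

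The main obstacle is the remaining pairs, where $P$ and $Q$ diverge, reconverge, and diverge again. Then $P\triangle Q$ is a union of several disjoint segments and is not a single induced path. For these pairs I will look for a sign-reversing involution, or a direct cancellation, inside the alternating sum over $j$. My first attempt will be the map $(P,Q)\mapsto(P',Q')$ that swaps the tails of $P$ and $Q$ after their first reconvergence vertex. This map preserves the multiset union of vertices, hence the operator $h_Ph_Q$ up to sign. I will then show it negates the bracketed coefficient; this is where the oddness of $\ell$ and the factor $(-1)^j$ must enter. To make this tractable, I will use the derivation property of $\mc{L}$, which gives the telescoping identity
\begin{align}
\mc{L}[J^{(\ell)}] = 2i\,[\chi,\eta^{(\ell+1)}] \quad (\ell \text{ odd}).
\end{align}
By induction on $\ell$, this identity should control how the coefficients factorize across a reconvergence vertex. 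If the involution fails to cancel exactly, I will instead show that the surviving multi-segment contributions recombine into a sum of single-path products. Either way the induced-path check of the second step then completes the proof.
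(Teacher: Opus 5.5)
There is a genuine gap, and it starts in your second step. You claim that if two paths agree up to a last common vertex $v$ and are disjoint afterwards, their symmetric difference is an induced path. This is false. In the zigzag ladder the two branches usually run side by side, and their union contains triangles.

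Take $P=(\chi,1,3)$ and $Q=(\chi,1,2,4,5,7)$, writing $j$ for the vertex of $h_j$. Both are anchored induced paths and they agree exactly up to vertex $1$. Moreover $|P|+|Q|=9$ is odd, and there are $7$ anticommuting factor pairs, so $h_Ph_Q=-h_Qh_P$. Hence $[h_P,h_Q]$ enters your expansion of $J^{(\ell)}$ with nonzero coefficient. Yet $P\oplus Q=\{2,3,4,5,7\}$ contains the triangles $\{2,3,4\}$ and $\{3,4,5\}$, so $h_Ph_Q$ is not a path product. The pair $(\chi,1,3,5,7)$, $(\chi,1,2,4)$ behaves the same way.

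Conversely, a reconverging pair need not split into disjoint segments, because $v_{w-1}\sim v_{w+1}$ bridges a common vertex $w$. The pair $(\chi,1,3,5,6)$, $(\chi,1,2,4,6,7)$ reconverges at $6$ and satisfies both contribution conditions (odd total length, $11$ anticommuting factor pairs). Its symmetric difference is the same connected, triangle-containing set $\{2,3,4,5,7\}$. So the dichotomy ``single divergence gives a path, multiple segments give a non-path'' fails in both directions. The relevant split is between failures of the cutset condition and failures of the no-triangle condition in \cref{prop:subsetrep}.

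Your involution is also ill-defined on exactly such pairs. Swapping the tails of $(\chi,1,3,5,6)$ and $(\chi,1,2,4,6,7)$ after the reconvergence vertex $6$ produces $(\chi,1,3,5,6,7)$, which contains the triangle $\{5,6,7\}$. The swap that works is placed at the first clique cutset $\{v_j,v_{j+1}\}$ missed by $P\oplus Q$. There both paths pass through $v_j$ and avoid $v_{j+1}$, so the exchanged pieces glue back into induced paths. This is the paper's pairwise cancellation, and it disposes only of cutset failures.

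The triangle-type terms need a second, non-involutive mechanism that your plan does not contain. In the example, the two pairs with $P\cap Q=\{\chi,1\}$ do not cancel each other. They cancel only together with $(\chi,1,3,5,6)$, $(\chi,1,2,4,6,7)$, whose extra common vertex $6$ contributes a weight $b_6^2$.

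Seeing this requires the closed form of the Krylov coefficients, $\sum_k(-iu)^k c_k(P)=(-u)^{|P|-1}P_{G\setminus N[P]}(-u^2)/P_G(-u^2)$ (\cref{prop:TchiT}). This depends on $P$ only through $|P|$ and its free endpoint. Let $R_k$ denote the ladder on the vertices $\geq k$. The first two pairs combine into a multiple of $P_{R_6}(x)-P_{R_7}(x)=x\,b_6^2P_{R_9}(x)$ by \cref{eq:independence_polynomial_recursion}, and this is exactly minus the third pair. In general one chains such recursions along the maximal run of triangles of $P\oplus Q$. The same closed form is what makes the correctly placed swap sign-reversing, since that swap preserves the unordered pair of free endpoints and hence the weights.

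Your telescoping identity $\mathcal{L}[J^{(\ell)}]=2i[\chi,\eta^{(\ell+1)}]$ is correct but cannot supply this input. It fixes $J^{(\ell)}$ only modulo $\ker\mathcal{L}$, and $J^{(\ell)}$ has a nonzero component there: its part diagonal in the incognito modes, built from the $[\psi_m,\psi_m^\dagger]$. Path support of the right-hand side says nothing about that component.
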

In \cref{thm:ffd_currents_are_paths}, we restrict to odd \(\ell\) because \(J^{(\ell)} = 0 \) trivially for \(\ell\) even.
The Krylov currents thus span a subspace of extensions to  \(H_{\mathrm{FFD}}\) 
\begin{align}
    H_{\mathrm{NG}}[\bs{a}, \bs{b}] = a_0 H_{\mathrm{FFD}}[\bs{b}] + H_J[\bs{a}, \bs{b}]\;,\label{eq:ngmodels}
\end{align}
where
\begin{equation}
    H_{J}[\bs{a}, \bs{b}] = -\sum_{\ell=1,\text{ odd}}^d a_{\ell} J^{(\ell)}[\bs{b}]\label{eq:hjdef}
\end{equation}
under which the Hamiltonian admits a non-generic solution and whose terms correspond to induced paths in the zigzag ladder graph.
The  \(J^{(\ell)}\) themselves can be viewed as a Lagrange bilinear concomitant in Sturm--Liouville theory.
Namely, letting  \(\chi(t) \coloneqq e^{iHt/2} \chi e^{-iHt/2}\), we have
\begin{align}
    \int_0^T dt \,\e[\chi(t), \frac{d^{\ell + 1}}{dt^{\ell + 1}} \chi(t)] = -\frac{i}{2}\left(J^{(\ell)}(T) - J^{(\ell)}(0)\right)\;
\end{align}
where, similarly,  \(J^{(\ell)}(t) = e^{iHt/2}J^{(\ell)}e^{-iHt/2}\).

The solutions to the models in \cref{eq:ngmodels} thus consist of transforming to the eigenmodes of  \(H_{\mathrm{FFD}}\), where they are quadratic, but contain couplings between different eigenmodes.
In the eigenmode basis, the model can then be completely solved by a Gaussian transformation.
These solutions are non-generic in that they apply only when the coefficients in the Pauli expansion of \(H_J[\bs{a}, \bs{b}]\) have the correct relation to those of  \(H_{\mathrm{FFD}}\).
In particular, the frustration graph of \(H_{\text{NG}}[\bs{a}, \bs{b}]\) generally contains \emph{claws}, which are known obstructions to generic free-fermion solvability.
An example of a term appearing in the Krylov current \(J^{(5)}\) is shown in \cref{fig:zigzag_ladder}, together with the induced claw to which it belongs.

We prove \cref{thm:ffd_currents_are_paths} using identities involving the generating function of the Krylov currents.
In particular, we show that non-path contributions to this generating function either cancel pairwise or via a recursion relation involving the frustration graph.
The result is shown by equating coefficients of the generating function.

Our second main result concerns the algebraic completeness of the Krylov currents
\begin{theorem}[
    restate=currentAlgebra,
    name={Current algebra},
    label={thm:ffd_current_algebra}]
    Let  \(H_{\mrm{FFD}}\) be the FFD Hamiltonian with  \(b_1 = \ldots = b_6 = 1\) and  \(\chi = X_1\) (many, if not all, other parameter and mode choices are also fine; cf. \cref{rem:lie_condition}). 
    Let
    \begin{equation}
        \liek \coloneqq \Lie_{\mathbb{R}} \groupset{J^{(\ell)}}{\ell>0 \text{ odd}}
    \end{equation}
    be the real Lie algebra generated by the odd Krylov currents with the Lie bracket  \(\e[\cdot, \cdot]_{\liek} = i\e[\cdot, \cdot]\). 
    Then  \(\liek\) is the full Lie algebra of quadratic Majorana fermions, that is
    \begin{align}
        \liek &= \Lie_{\mathbb{R}} \groupset{i\mm_a\mm_b}{a, b \in \e{0, \ldots, d-1}} \label{eq:lie_mm} \\
        &\cong \lieso{d}\;, \label{eq:lie_so}
    \end{align}
    where  \(\e{\mm_a}_a\) are Majorana operators and  \(\lieso{d}\) is the Lie algebra of skew symmetric matrices.
\end{theorem}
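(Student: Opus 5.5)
The plan is to pass to an orthonormal Majorana basis built from the Krylov modes and write each \(J^{(\ell)}\) as an explicit skew-symmetric \(d\times d\) matrix. By \cref{eq:nocar_main}, the Gram matrix \(\mathbf{M}^G\) has rank \(d\). Gram--Schmidt (or Lanczos) on \(\kryl{0},\kryl{1},\dots\) therefore gives Majoranas \(\mm_0,\dots,\mm_{d-1}\) with \(\kryl{k}=\sum_a R_{ak}\mm_a\), where \(R\) is upper triangular and invertible on the first \(d\) modes. Since \(\mc{L}\) acts on this space as a real skew-symmetric matrix, in the Lanczos basis it is tridiagonal, \(\mc{L}[\mm_a]=\beta_{a}\mm_{a+1}-\beta_{a-1}\mm_{a-1}\), with all \(\beta_a\neq0\) for \(a<d-1\). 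Substituting into \cref{eq:current_operators} gives
\[
J^{(\ell)}=\tfrac{i}{2}\sum_{a,b}\bigl(A^{(\ell)}\bigr)_{ab}\,\mm_a\mm_b,\qquad A^{(\ell)}=2\sum_{j=0}^{\ell}(-1)^j R_{\cdot j}R_{\cdot,\ell-j}^{T},
\]
up to antisymmetrization. This is a real skew-symmetric matrix, so \(\liek\subseteq\lieso{d}\) immediately. The identification \(i[\tfrac{i}{2}\mm^TA\mm,\tfrac{i}{2}\mm^TB\mm]\leftrightarrow [A,B]\) up to a fixed constant reduces \cref{eq:lie_mm,eq:lie_so} to showing that the matrices \(A^{(\ell)}\) (\(\ell\) odd) generate \(\lieso{d}\) as a matrix Lie algebra.

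Next I would read off the structure of \(A^{(\ell)}\). Because \(R\) is triangular, \(A^{(1)}\propto \mm_0\mm_1\). More generally, \(A^{(\ell)}\) is supported on indices \(a+b\le \ell\), and its "leading antidiagonal" entries are products of \(\beta\)'s. In particular, \(A^{(2k-1)}\) contains \(\mm_{0}\mm_{2k-1}\), \(\mm_{1}\mm_{2k-2}\), and so on, with nonzero coefficients, plus lower-order terms. Equivalently, \(J^{(\ell)}\) is the \(\ell\)-th Taylor coefficient of a generating function \(\sum_j(-1)^j\chi(t)\chi^{(\ell-j)}\), and it commutes-up-to-derivative with \(H\) itself. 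The generator of \(\mc{L}\), namely the tridiagonal \(T\), should lie in \(\liek\): \(H\) is quadratic in the \(\mm\)'s and I expect \(J^{(\ell)}\) to be polynomials in \(T\) restricted to odd parity, e.g. \(A^{(\ell)}\propto p_\ell(T)\)-type combinations. I would first verify this, since \([T,\cdot]\) acting on \(\mm_0\mm_1\) then produces \(\mm_0\mm_2\), \(\mm_1\mm_2\), and so on.

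The generation argument then has three steps. First, use \(E_{01}\coloneqq \mm_0\mm_1\in\liek\) and an element \(X\in\liek\) of the form \(T+(\text{lower})\) to produce, by repeated brackets, \(E_{0a}\) for all \(a\). This is an induction: \([X,E_{0a}]\) yields \(\beta_aE_{0,a+1}\) plus terms already obtained, and \(\beta_a\neq0\) is what allows division. Second, once all \(E_{0a}\) are available, \([E_{0a},E_{0b}]\propto E_{ab}\) gives all of \(\lieso{d}\). The main obstacle is the first step: it is not obvious that \(\liek\) contains a tridiagonal-like element with the required nondegenerate off-diagonal, since the currents are not simply polynomials in \(T\). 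I would try to show that \(\ad_{J^{(1)}}\) together with \(J^{(3)}\) already moves \(\mm_0\) along the chain. Failing a clean general argument, I would use the hypothesis \(b_1=\dots=b_6=1\), \(\chi=X_1\) exactly as the theorem permits: compute the small matrices \(A^{(1)},A^{(3)},A^{(5)}\) explicitly for the first few sites, check that the generated algebra acts irreducibly with a generic (non-degenerate) bracket closure (e.g. by a rank computation on nested commutators), and invoke the fact that an irreducible subalgebra of \(\lieso{d}\) containing a rank-2 element \(E_{01}\) is all of \(\lieso{d}\). That last fact holds because the \(\liek\)-conjugates of a rank-2 skew element span \(\lieso{d}\) when the action on \(\mbb{R}^d\) is irreducible. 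Irreducibility itself would follow from the cyclicity of \(\mm_0\) under \(\liek\), which the \(\beta_a\neq0\) induction provides.
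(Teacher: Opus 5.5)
\textbf{There is a genuine gap in the generation step.} The reduction itself is correct. In the Lanczos basis $\mathcal{L}$ is tridiagonal and skew, with $\beta_a\neq0$ for $a<d-1$. Each $J^{(\ell)}$ equals $\tfrac{i}{2}\gamma^T A^{(\ell)}\gamma$ with $A^{(\ell)}$ real skew, so $\mathfrak{K}\subseteq\mathfrak{so}(d)$. Moreover $A^{(\ell)}$ is supported on $a+b\le\ell$ with $a+b$ odd, and its antidiagonal entries are $\pm R_{aa}R_{bb}\neq0$.

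The problems are as follows.
\begin{itemize}
\item Step one needs an element $X=T+(\text{lower})\in\mathfrak{K}$, and you never construct it. Your fallback deduces irreducibility from the cyclicity of $\gamma_0$, but that cyclicity was to come from this same step, so the fallback is circular.
\item An explicit check of $A^{(1)},A^{(3)},A^{(5)}$ on a few sites treats one short chain. It says nothing about arbitrary length, hence arbitrary $d$.
\item The lemma you invoke is false. An irreducible subalgebra of $\mathfrak{so}(d)$ containing a rank-two element need not be all of $\mathfrak{so}(d)$. For even $d$, the centralizer $\mathfrak{u}(d/2)$ of a complex structure $\mathcal{J}$ acts irreducibly on $\mathbb{R}^d$ and contains $e\wedge\mathcal{J}e$.
\end{itemize}

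This counterexample is not exotic: it is exactly the situation the theorem's hypothesis excludes. Modulo $E_{01}$ you have $J^{(3)}\equiv c\,(\beta_2E_{03}\pm\beta_0E_{12})$. If $|\beta_0|=|\beta_2|$, this element is (anti-)self-dual in $\mathfrak{so}(4)\cong\mathfrak{so}(3)\oplus\mathfrak{so}(3)$. Then $J^{(1)}$ and $J^{(3)}$ generate a copy of $\mathfrak{u}(2)$ that acts irreducibly on $\mathrm{span}\{\gamma_0,\dots,\gamma_3\}$ and contains $E_{01}$. In your variables the paper's quantity is $\Delta=C_{00}^4\beta_0^4\beta_1^2(\beta_0^2-\beta_2^2)$, so $\Delta\neq0$ means precisely $|\beta_0|\neq|\beta_2|$. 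For $b_1=\dots=b_6=1$ and $\chi=X_1$ one finds $\beta_0^2=1$, $\beta_1^2=2$, $\beta_2^2=3/2$, which gives the paper's $\Delta=-16$.

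\textbf{The missing idea is an induction on $s$ that never needs $T$}, and your own support observation already supplies it.
\begin{itemize}
\item Assume $J^{(1)},\dots,J^{(s-2)}$ generate all $E_{ab}$ with $a,b\le s-2$.
\item For $s\le d-1$, the only pairs with $a+b\le s$ odd and $\max(a,b)\ge s-1$ are $(0,s)$ and $(1,s-1)$. Hence $J^{(s)}\equiv c_0E_{0s}+c_1E_{1,s-1}$ with $c_0c_1\neq0$.
\item For $s\ge5$, bracketing with $E_{02}$ annihilates $E_{1,s-1}$ and sends $E_{0s}$ to $\pm E_{2s}$.
\item Brackets with the $E_{a2}$ then give every $E_{as}$, and subtraction gives $E_{1,s-1}$.
\item Brackets with the $E_{a1}$ give every $E_{a,s-1}$, and $[E_{0,s-1},E_{0s}]\propto E_{s-1,s}$.
\end{itemize}
Up to change of basis, this is the paper's argument. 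The paper runs it in the non-orthogonal basis $\kappa^{(a,b)}$ using the alternating-Hankel matrix $C$; there the divisors are positive principal minors of $C'$ together with $C_{00},C_{11}\neq0$.

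The specific couplings enter only in the base case $s=3$. There only $E_{01}$ is available, and separating $\kappa^{(1,3)}$ from $\kappa^{(0,2)}$ requires $\Delta\neq0$. Finally, when $d$ is odd the last step $s=d$ must be handled separately, because $\eta^{(d)}$ is linearly dependent on the lower Krylov modes.
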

The proof of \cref{thm:ffd_current_algebra} (in \cref{app:current_algebra}) follows inductively: including successive currents in the free-fermion dynamical Lie algebra results in the inclusion of a fixed set of Krylov bilinears. 
It relies on the recursive definition of the Krylov modes as well as the generalised canonical anticommutation relations \cref{eq:nocar_main}.
In fact, this proof uses only this relation and a nondegeneracy condition \cref{rem:lie_condition}, so it applies to any simplicial, claw-free model satisfying that condition.
However, we explicitly state it in a restricted setting to connect it to \cref{thm:ffd_currents_are_paths}, which is more restricted in scope.
Note that, in this more restricted setting, there is no basis for the Lie algebra for which every operator in the basis admits a path-product expansion. 
This is because there exist bilinears in the Krylov modes that do not have a path-product expansion (e.g. \([\eta^{(2)}, \eta^{(5)}]\) for \(m \geq 7\)), and the property of having a path-product expansion is closed under linear combination. 
Thus, at least one element of any such basis must not have this property.

We propose that our results can be extended in future work by the resolution of several conjectures.
Foremost, we conjecture that \cref{thm:ffd_currents_are_paths} holds for Hamiltonians with general chordal, claw-free frustration graphs and arbitrary choice of simplicial mode.
In this work, we prove the result for the FFD model with the simplicial clique defining  \(\chi\) chosen as a simplicial \emph{vertex} of the frustration graph.
If the latter condition is necessary---namely, if  \(\chi\) must be defined relative to a simplicial vertex as opposed to an arbitrary simplicial clique---then the former statement may still be true, as all chordal, claw-free graphs are known to contain a simplicial vertex \cite{Dirac1961}.

Our second conjecture concerns the free-fermion solvability of general models. 
We conjecture that a many-body quantum system with a given frustration graph is free-fermion solvable if and only if one can find an analog of  \(\chi\) such that the accompanying Krylov modes all admit anchored path product expansions (after cancellations and symmetries are accounted for).
Sufficiency follows straightforwardly from the proof for simplicial, claw-free models in Ref.~\cite{chapman2023unified}.
We reproduce that proof in \cref{app:kpsconjecture}.
To the best of our knowledge, we have no counterexamples to this conjecture, even in the non-generic case.
We return to the formal statements of both of these conjectures in \cref{sec:conjectures}.

\subsection{Numerical Analysis}\label{sec:numerical_analysis}

We restrict the model defined in \cref{eq:ngmodels} to the case where  \(m = 3N\) for some  \(N \in \mathbb{N}\). 
The parameters  \((b_j)_j\) and  \((a_{\ell})_{\ell}\) are all real and we set  \(a_0 = 1\). 
 \(\alpha(G)\) is the number of consecutive disjunct triangles in the frustration graph  \(G\) of the FFD model as well as its independence number. 

We first analyse the growth of the Pauli support of the Hamiltonian, as well as the number of claws in the frustration graph with  \(\alpha(G)\).
The exact numbers of vertices and claws depend on the connectivity of the simplicial mode. 
Specifically we can choose a simplicial mode that is either connected to only the first vertex, only the first two vertices or only the first three vertices in the frustration graph of the FFD model. 
The specific representation of the simplicial mode does not change the frustration graph, however it changes the specific representation of those vertices as operators. 
The choice of the parameters  \((b_m)_m\) and  \((a_l)_l\) seems not to change which operators appear in the model (e.g., a random choice results in the same operators as when setting all parameters to  \(1\).), except, of course, if they are specifically chosen such that some terms cancel or if they are set to zero.

\Cref{fig:merged_number_plot}~(a) shows the number of vertices,  \(N_V\), depending on the number of triangles  \(\alpha(G)\). 
We fit the  \(\chi = Z_1Z_2Z_3\) case with the exponential function,
\begin{equation}
    \text{exp}(N, F_1, F_2, F_3) = F_1 \ee{F_2 N} + F_3\;, \label{e"exp_fit}
\end{equation}
and the monomial
\begin{equation}
    \text{mono}[5](N, F_1) = F_1 N^5 \label{e"mono_fit}\;,
\end{equation}
where  \(F_1, F_2, F_3 \in \mathbb{R}\) are the fitting parameters.
The curve of the monomial fit leads us to conjecture that in the limit  \(\alpha(G) \to \infty\), the number of terms grows faster than  \(N^5\), potentially rather exponential, implying that the model is not sparse.

\Cref{fig:merged_number_plot}~(b) shows the number of claws,  \(N_{\claw}\). 
Again, we see that  \(N_{\claw}\) behaves approximately exponentially in the plotted regime; for higher  \(\alpha(G)\) there are so many claws that it soon becomes numerically infeasible to count them. 
In particular, the model is far from having a generic free-fermionic solution based on a claw-free frustration graph.

\begin{figure}[t]
    \centering
    \includegraphics{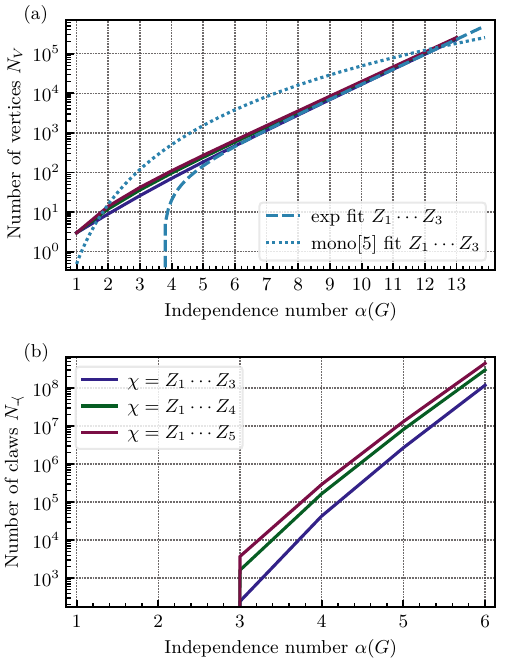}
    \caption{(a) The number of vertices  \(N_V\) in the frustration graph of the model in \cref{eq:ngmodels} depending on the independence number  \(\alpha(G)\), for (the) three different choices of simplicial modes. 
    The case  \(\chi = Z_1Z_2Z_3\) is fitted with an exponential function,  \(\text{exp}\), and a monomial,  \(\text{mono}[5]\), defined in \cref{e"exp_fit,e"mono_fit} respectively.
    (b) The number of claws  \(N_{\claw}\) in the frustration graph of the model in \cref{eq:ngmodels} depending on the independence number  \(\alpha(G)\), for (the) three different choices of simplicial modes.}
    \label{fig:merged_number_plot}
\end{figure}

In the limit  \(\alpha(G) \to \infty\), the FFD model is gapless at the critical point where  \(b_j = \mathrm{const}\) for all  \(j \in \e{1, \ldots, m}\). 
That is, there is no finite energy gap  \(\Delta \coloneqq 2 \mathrm{min}_j \lvert \varepsilon_j \rvert\) between the ground state energy and the first excited energy.
Reference \cite{fendley2019free} analyses this gap in more detail for the parametrization  \(H_{\mrm{FFD}} = H_{\mrm{FFD}}[\alpha, \beta, \gamma]\) defined by  \(b_{j = 1 \!\!\mod 3} = \sqrt{\alpha}\),  \(b_{j = 2 \!\!\mod 3} = \sqrt{\beta}\) and  \(b_{j = 0 \!\!\mod 3} = \sqrt{\gamma}\) for  \(\alpha, \beta, \gamma \in \mathbb{R}_+\). 
In the following we do a numerical analysis of the model
\begin{equation}
    H[\alpha, \beta, \gamma, \bs{a}] = H_{\mrm{FFD}}[\alpha, \beta, \gamma] + H'_J[\alpha, \beta, \gamma, \bs{a}]\label{e"prime_currents_model}
\end{equation}
where  \(H'_J\) is defined as  \(H_J\) in \cref{eq:hjdef} however with the current operators normalised (cf. \cref{e"renormalised_currents_model}). 
We fix the simplicial mode  \(\chi\) to be connected to the full simplicial clique (all three vertices of the first triangle) in the FFD model for technical reasons, e.g.,  \(\chi = Y_3\). 
This choice lies outside the hypothesis of \cref{thm:ffd_currents_are_paths}, and the path support of the corresponding currents is covered by \cref{conj:graphical_current_algebra}(i). 
\Cref{sec:phase_diagram_details} provides more details on the underlying calculations.

\Cref{fig:currents_phase} shows the energy gap  \(\Delta\) at the isotropic point  \(\alpha = \beta = \gamma = 1\) for different  \(\alpha(G)\), where we choose the weight of the currents to be constant w.r.t.  \(\ell\). 
Namely, we set  \(a_{\ell} = a \in \mathbb{R}\) and vary  \(a\). 
The phase diagram of the FFD model can be recovered at  \(a = 0\).
For nonzero  \(a\), the gap is very nearly constant except for regularly spaced minima on the logarithmic scale, where it approaches zero.
We see that, as the system size increases, the constant value of the gap is also decreasing.
\Cref{fig:triangle_phase_main} shows the corresponding phase diagram of the model, i.e., the gap as a function of the parameters  \(\alpha\),  \(\beta\) and  \(\gamma\) for a fixed number of triangles  \(\alpha(G) = 20\).
The current strength,  \(a\), is chosen around the first minimum (\(a \approx 1.8\)) that we see in \cref{fig:currents_phase}.

We see that the behaviour of the model in this regime is strikingly distinct from that of the FFD model.
In particular, \Cref{fig:currents_phase,,fig:triangle_phase_main}, suggest that there exists a value of  \(a\)---at least at finite  \(\alpha(G)\)---for which the isotropic point  \(\alpha = \beta = \gamma\) is an isolated critical point. 
Here, the perturbation prevents the gap from closing along the lines where two of the coefficients are equal and greater than the third coefficient, in contrast to the  \(a = 0\) case.
As we make a small displacement in  \(a\) away from this value, we see in \cref{fig:triangle_phase_main} that the isotropic point belongs to a distinct phase from the extreme points where one of the coefficients is much larger than the other two, as it is enclosed by a phase boundary where the gap closes.

The code for the numerical simulations can be found at \href{https://github.com/taeruh/nongeneric_free_fermions}{\path{github.com/taeruh/nongeneric_free_fermions}} with the numerical data at commit \href{https://github.com/taeruh/nongeneric_free_fermions/tree/26546a2414953fd0214c5ff70e9480d963d853f0}{26546a2} (in the \path{./results} directory).

\begin{figure}[t]
    \centering
    \includegraphics[width=\linewidth]{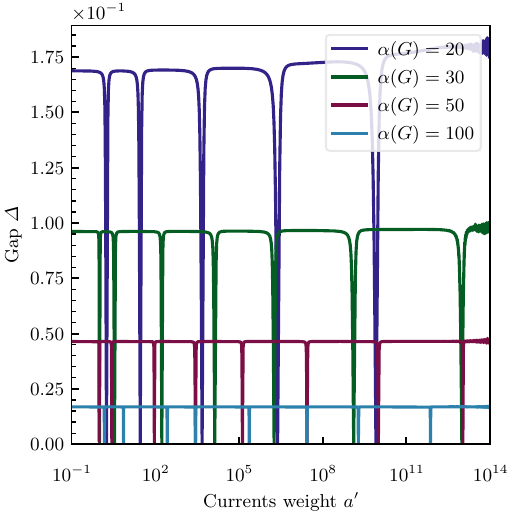}
    \caption{The energy gap,  \(\Delta\), of the model in \cref{e"prime_currents_model,e"renormalised_currents_model} at the isotropic point  \(\alpha = \beta = \gamma = 1\) with  \(a_\ell = a\) for all  \(\ell\) and different model sizes  \(\alpha(G)\). 
    The minima are continuous and genuine zeros, resolved as the
    sampling resolution is increased. 
    For high  \(a > 10^{13}\) the skew-diagonalisation numerics become noisy.}
    \label{fig:currents_phase}
\end{figure}
\begin{figure}[hbt!]
    \centering
    \includegraphics[width=\linewidth]{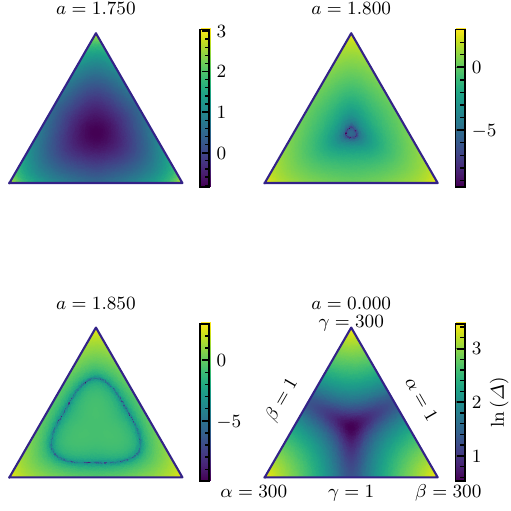}
    \caption{
    The phase diagram of the model in \cref{e"prime_currents_model,e"renormalised_currents_model} for  \(\alpha(G) = 20\) and  \(a_\ell = a\) chosen around the first minimum that we see in \cref{fig:currents_phase}, for all  \(\ell\). 
    The bottom-right diagram shows the axes labels. 
    It holds  \(\alpha + \beta + \gamma = 302\) (note that this absolute value can be renormalised; more interesting are the relative values of  \(\alpha\),  \(\beta\) and  \(\gamma\) to each other; also note that we do not allow these parameters to be zero, e.g., we never have  \(\alpha = 0\), in the numerical calculations as this causes instabilities, however, as we choose the maximal value fairly high, the relative values approximate this case, i.e.,  \(1/300 \approx 0\)). \Cref{fig:triangle_phase_dip1} shows the same phase diagrams, however, with more  \(a_\ell=a\) sample values, and \cref{fig:triangle_phase_dip2} shows the phase diagrams around the second minimum (\(a \approx 30\)), for  \(\alpha(G) = 20\), that we see in \cref{fig:currents_phase}.}
    \label{fig:triangle_phase_main}
\end{figure}

\subsection{Generalising Conjectures}\label{sec:conjectures}

\Cref{thm:ffd_currents_are_paths,thm:ffd_current_algebra} perhaps hint at something deeper:
\begin{conjecture}[
    restate=currentconjecture,
    name={Graphical current algebra},
    label={conj:graphical_current_algebra}]
    Let  \(H\) be a Hamiltonian with chordal, claw-free frustration graph  \(G\), and let  \(\chi\) be any simplicial mode of  \(H\).
    Then for every odd  \(\ell\) the Krylov current operator  \(J^{(\ell)}\) admits a path-product expansion.
\end{conjecture}
A positive resolution to \cref{conj:graphical_current_algebra} would directly generalize our result in two ways.
\begin{enumerate}[(i)]
    \item It would show that  \(\chi\) may be taken to be any simplicial mode of  \(H_{\text{FFD}}\) as opposed to one that only neighbors a simplicial vertex in the frustration graph.
    \item It would show that  \(G\) may itself be generalized from the zigzag-ladder graph to any chordal, claw-free frustration graph.
\end{enumerate}
Notably, it is possible for (i) or (ii) to independently fail without the conjecture being entirely false. 
If (i) is false, it may be that  \(G\) can still be generalized, since all chordal, claw-free graphs contain a simplicial vertex.
If (ii) is false, it may be that other simplicial modes of  \(H_{\mathrm{FFD}}\) satisfy the associated counterpart to our result.

The evidence for \cref{conj:graphical_current_algebra} is both computational and structural. 
First, we have verified the conjectured path-support property in all numerical examples we have tested. 
Second, our proof of \cref{thm:ffd_currents_are_paths} relies on features of the zigzag-ladder graph that make cancellations tractable (as shown in~\cref{app:current-paths}).  
In particular, we organize pairs of induced paths using the interval structure of the ladder and then show that some non-path contributions cancel by exchanging segments of them. 
We then show that the remaining non-path structures cancel via the recursion relations of the independence polynomial.
This gives a transparent proof in the FFD case, but it is still essentially a case analysis. 
We do not know of a fundamental obstruction to removing the dependence on this particular graph structure. 
In fact, it is known that all chordal, claw-free graphs admit a representation in terms of non-crossing paths on a clique tree \cite{chaplick2023intersection}.
This representation strictly generalizes the interval structure of the zigzag-ladder, and it forms the basis for the focus of our conjecture on chordal, claw-free graphs.
The difficulty is that a proof for arbitrary such graphs would have to replace the convenient interval ordering by the more general clique-tree representation.
We expect the required cancellations to be analogous, but the bookkeeping becomes substantially more involved and would obscure the main construction. 

If \cref{conj:graphical_current_algebra} is true, it suggests a yet broader organizing principle. 
The chordal, claw-free condition would be one way of guaranteeing that the Krylov tower (the sequence of Krylov modes  \(\{\eta^{(k)}\}_{k \geq 0}\) generated recursively by repeated application of  \(\mc{L}\)) has an induced-path description, but it may not be the only one. 
In other models, such as those involving even-hole or bubble-wand structures~\cite{chapman2023unified,fukai2026solvingb}, the same effective path support can arise only after imposing graphical cancellation relations specific to the frustration graph, or restricting to a symmetric subspace related to a particular graphical structure. 
This suggests that the induced-path support of the Krylov tower, possibly after such cancellations, is the more fundamental signature of free-fermion solvability. 

\begin{conjecture}[
    restate=Krylovconjecture,
    name={Krylov path-support for free fermions},
    label={conj:krylov_path_support_free_fermions}]
    Let  \(H\) be a Hamiltonian with frustration graph  \(G\), let  \(\chi\) be a fiducial operator belonging to the extended graph Clifford algebra of  \(G\), and denote the associated Krylov operators by  \(\{\kryl{k}\coloneqq\mc{L}^{k}[\chi]\}_{k}\). 
    Then  \(H\) admits a free-fermion solution on the Krylov space generated by  \(\chi\) iff  \(\kryl{k}\) admits an anchored path-product expansion for all  \(k\ge0\), possibly after applying the exact graphical cancellation relations of the model.
\end{conjecture}
As mentioned in the beginning of this section, the sufficiency of the anchored path-product expansion is shown in \cite{chapman2023unified}.
We do not know of an example that would contradict necessity, including the ones described in this work.

\cref{conj:krylov_path_support_free_fermions} states that graphical cancellations are necessary to enforce the algebraic closure needed for free-fermion solvability. 
Namely, once the Krylov tower closes, the anticommutator-form  \(\{\kryl{j},\kryl{k}\}\) should reduce to a scalar positive form on the Krylov space, allowing the  \(\kryl{k}\) to be orthonormalised into Majorana operators \(\{\gamma_j\}\). 
The Hamiltonian adjoint action then effectively becomes a linear map on the space of Majorana vectors, giving a free-fermion solution.
A restatement of the proof of sufficiency, together with further discussion on this conjecture as it relates to models in the literature, can be found in \cref{app:kpsconjecture}.

\section{Methods} \label{sec:background}
\subsection{Graph-theoretic approach to Jordan-Wigner}
\label{sec:tfim_seed_JW}

\begin{figure*}[t]
    \centering
    \includegraphics[width=0.8\linewidth]{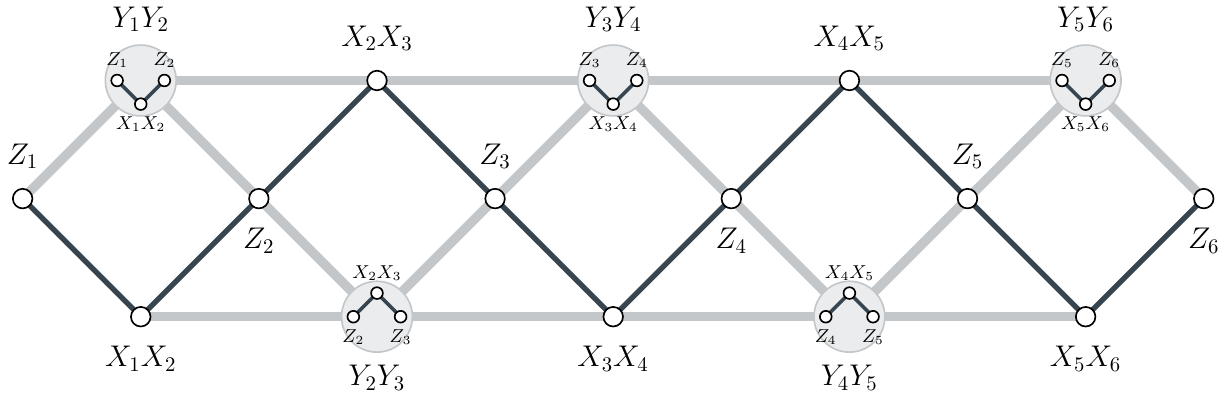}
    \caption{
    The transverse-field Ising model as a seed: from paths in the TFIM, we can construct all other models which admit an exact solution via a linear mapping to free fermions. 
    Here we depict the construction of the XY model from paths in the frustration graph of the TFIM. 
    The TFIM runs zigzag through the XY model as a black line, with the other terms formed from products of paths in this model.}
    \label{fig:TFIM_JW}
\end{figure*}

We introduce the frustration-graph formalism via a motivating argument, by which we demonstrate that all linear models are derived from the open chain transverse-field Ising model (TFIM) \cite{jordan1928uber,schultz1964two-dimensional,Lieb1961}.
Our main results can be viewed as a generalization of this construction from the TFIM to the FFD model.
The key idea is shown in \cref{fig:TFIM_JW}.
Consider the Hamiltonian:
\begin{equation}
    H_{\mrm{TFIM}}=\sum_{j=1}^{n}Z_j +\delta\sum_{j=1}^{n-1} X_j X_{j + 1}
    \label{eq:TFIM}
\end{equation}
on an open chain of  \(n\) qubits.
The solution to this model proceeds via an exact mapping to non-interacting fermions. 
Introduce  \(2n\) Majorana operators via the Jordan-Wigner transform
\begin{equation}
    \mm{2j - 1} = \bigotimes_{1 \leq m<j} Z_m \otimes X_j, \quad \mm{2j} = \bigotimes_{1 \leq m<j} Z_m \otimes Y_j \;.
    \label{eq:JW_majoranas}
\end{equation}
The Pauli strings in \cref{eq:JW_majoranas} satisfy the canonical anti-commutation relations (CAR):
\begin{equation}
    \{\mm{j},\mm{k}\}=2\delta_{jk} I \qquad \forall \; j,\, k \in \{1, \dots, 2n\}\;.
\end{equation}
In particular, this implies  \(\gamma_j^2 = I\) for all  \(j\).
Substituting \cref{eq:JW_majoranas} gives 
\begin{equation}
    \label{eq:tfimsolution}
    H_{\mrm{TFIM}}=-i \sum_{j=1}^{n}\mm{2j-1}\mm{2j} - i \delta \sum_{j=1}^{n-1} 
    \mm{2j}\mm{2j+1}\;
\end{equation}
Thus the interacting spin Hamiltonian has become a quadratic Majorana Hamiltonian.
The CAR imply that Majorana bilinears generate the Lie subgroup SO(\(2n\))  \(<\) U(\(2^n\)), whose dimension is exponentially smaller than that of the full unitary group on  \(n\) qubits.
Accordingly, quadratic Majorana Hamiltonians are well-known to be exactly diagonalizable by efficient classical methods, and many objects derived from them admit efficient exact classical descriptions.
As a result, a mapping to a Hamiltonian of the form in \cref{eq:tfimsolution} is said to constitute an exact solution of the model.

Since both Pauli and Majoranas operators either commute or anti-commute, it is convenient to describe these relations by a graph, called the frustration graph

\begin{definition}[Frustration graph]
    \label{definition:frustrationgraph}
    Let  \(\{\mu^{\bs{j}}\}\) constitute a graph Clifford algebra: a set of Hermitian operators such that  \((\mu^{\bsj})^2 = I\) which pairwise either commute or anticommute (e.g. a subset of Pauli operators  \(\{\sigma^{\bsj}\}\) or Majorana monomials  \(\{\mm{}^{\bsj}\}\)).
    Given
    \begin{equation}
        \label{eq:hamiltonian_sum}
        H=\sum_{\bsj \in V}h_{\bsj}=\sum_{\bsj \in V}b_{\bsj}\mu^{\bsj}
    \end{equation}
    define its frustration graph as the graph  \(G = (V, E)\) with
    \begin{equation}
        E = \{\{\bsj,\bsk\} \mid h_{\bsj}h_{\bsk} = -h_{\bsk}h_{\bsj}\}\;.
    \end{equation}
    Throughout this work, we assume the frustration graph is connected without loss of generality.
\end{definition}

For a given fermionic Hamiltonian  \(H\), the frustration graph is determined by the CAR. 
Furthermore, the existence of a free-fermion solution of the form in \cref{eq:tfimsolution} can be inferred from the frustration graph.
This is the main result of Ref.~\cite[Theorem 2]{Chapman2020}: there is \textit{linear map} from the Pauli strings of  \(H\) to bilinear fermions if and only if frustration graph is the \emph{line graph} of a graph  \(R\).
\begin{definition}[Line graph]
    \label{definition:linegraph}
    Given a graph  \(R=(V,E)\), the \emph{line graph}  \(L(R)=(E,F)\) of a root graph  \(R\) is the graph whose vertices correspond to the edges of  \(R\) and
    \begin{equation}
        F = \{\{e,f\} \mid e,f\in E,\ \abs{e \cap f}=1\}\;.
    \end{equation}
    \(R\) is called the \emph{root graph} of  \(L(R)\).
\end{definition}

It is straightforward to see that the frustration graph  \(G\) of the  \(\mrm{TFIM}\) is a path (see~\cref{fig:TFIM_JW}) and thus a line graph (the root graph is a path with one additional vertex). 
By taking products of terms in  \(H_{\mathrm{TFIM}}\) along sub-paths of  \(G\), we generate new quadratic Majorana operators, as interior Majorana factors cancel, and we can generate any Majorana bilinear  \(\mm{j} \mm{k}\) this way. 
Thus, all models which are solvable by the Jordan-Wigner transformation can be constructed as a path-product expansion deriving from the TFIM.  
For example, we may construct the one-dimensional transverse-field  \(\mrm{XY}\)-model
\begin{equation}
    \begin{split}
        H_{\mrm{XY}}= \sum_{j=1}^{n-1} \Big( &X_j X_{j + 1}+Y_j Y_{j + 1}\Big)+\lambda\sum_{j=1}^{n}Z_j\;,
    \end{split}
\end{equation}
by adding paths of length three to the Hamiltonian for the  \(\mrm{TFIM}\),  \(Y_j Y_{j + 1} = \left(Z_j\right)\left(X_j X_{j + 1}\right)\left(Z_{j + 1}\right)\), as depicted in \cref{fig:TFIM_JW}. 
The TFIM on periodic boundaries, and even the Kitaev honeycomb model~\cite{kitaev2006anyons,kells2009description}, are further examples of line-graph models, and can likewise be constructed as a path-product expansion from the TFIM on open boundaries, together with the inclusion of stabilizer symmetries.

In this regard, we consider the TFIM a `seed' model from which we may generate all models which admit a linear mapping to free fermions. 
We refer to these models as \textit{line-graph models}.

\subsection{Free fermions in disguise}
\label{sec:ffd_solution}

Not all free-fermion-solvable models admit a solution via a linear mapping. 
The central example for our work is the aforementioned four-fermion model, introduced by Fendley and defined in \cref{eq:ffd} \cite{fendley2019free}.
The model admits an exact nonlinear solution via a mapping to free-fermions, and it provably admits no linear solution.
As the solution is generic, it nevertheless also follows from graph theory. 
Furthermore, the solution can be generalized to apply to all Hamiltonian spin models with an (even-hole, claw)-free (ECF) frustration graph \cite{elman2021free}, and we broaden our discussion to include all such \emph{ECF models} in our summary of these results here.

\begin{definition}[Claw and chordal graphs]
    \label{def:clawchordal}
    A \emph{claw} is the graph  \(K_{1,3}\): a vertex adjacent to three pairwise nonadjacent vertices. 
    A graph is \emph{claw-free} if it contains no claw as an induced subgraph. 
    A graph is \emph{chordal} if every induced cycle has length three, or equivalently, if every cycle of length four or greater has a chord.
\end{definition}
In general, we denote the subgraph induced by a vertex set \(U\) by \(G[U]\), so a graph is claw-free if and only if there is no such \(U\) for which \(G[U] \simeq K_{1, 3}\).

\subsubsection{Independent Sets and Transfer Operator}

An important concept for the solutions of ECF models is the independence structure of the frustration graph.
We motivate this by introducing a set of conserved charges.

\begin{definition}[Independent-Set Charges]
    \label{def:indset}
    An independent set in a graph  \(G \coloneqq (V, E)\) is a subset  \(S \subseteq V\) of mutually nonadjacent vertices:  \(E \cap S^{\times 2} = \varnothing\).
    Let  \(\mc{S}^{(k)}\) denote the collection of all independent sets of order  \(k \in \{0, \dots, \alpha(G)\}\) in  \(G\), where  \(\mc{S}^{(0)} \coloneqq \e{\varnothing}\), and  \(\alpha(G)\) is the independence number of  \(G\), the order of its largest independent set.
    Additionally, let  \(\mc{S}_G = \cup_{k = 0}^{\alpha(G)} \mc{S}^{(k)}\).
    
    The \textit{independent-set charges} of the Hamiltonian with frustration graph  \(G\) are defined as
    \begin{equation}
        Q^{(k)} \coloneqq \sum_{S\in \mc{S}^{(k)}}\prod_{\bsj\in S}h_{\bsj} \coloneqq \sum_{S\in \mc{S}^{(k)}}h_{S}\;,
    \end{equation}
    where  \(h_S\) is the product of the Hamiltonian terms in the set  \(S\) and  \(Q^{(0)} = I\) and  \(Q^{(1)}=H\).
\end{definition}
Note that the operator ordering in the product over each independent set  \(S\) is irrelevant, as all factors in  \(S\) commute by construction.
The  \(Q^{(k)}\) satisfy
\begin{proposition}[Theorem 1~\cite{elman2021free}] \label{lem:qconservation}
    Let  \(G\) be a claw-free frustration graph, then the independent-set charges commute,
    \begin{equation}
        [Q^{(j)},Q^{(k)}]=0\;.
    \end{equation}
    This implies the integrability of Hamiltonians whose frustration graphs are claw-free.
\end{proposition}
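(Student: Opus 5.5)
The plan is to expand both charges into independent-set monomials and cancel the commutator terms in pairs. By bilinearity,
\[
[Q^{(j)},Q^{(k)}]=\sum_{S\in\mc{S}^{(j)}}\sum_{T\in\mc{S}^{(k)}}[h_S,h_T].
\]
Since every $h_{\bsj}$ squares to a scalar and any two terms either commute or anticommute, $h_S$ and $h_T$ themselves either commute or anticommute. They anticommute exactly when the number of edges of $G$ joining $S$ to $T$ is odd; in that case $[h_S,h_T]=2h_Sh_T$, and otherwise the commutator vanishes. A vertex in $S\cap T$ has no neighbours in $S\cup T$, because both sets are independent, so it contributes nothing. Hence only the induced subgraph $G[S\triangle T]$ matters. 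I will show that the anticommuting pairs $(S,T)$ can be matched by a sign-reversing, order-preserving involution.

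\emph{Step 1: structure of $G[S\triangle T]$.} This subgraph is bipartite with parts $S\setminus T$ and $T\setminus S$. Take a vertex $v\in S\setminus T$. Its neighbours in $T\setminus S$ are pairwise nonadjacent since $T$ is independent. If it had three of them, $v$ together with these three would induce a claw. So claw-freeness forces maximum degree at most $2$, and $G[S\triangle T]$ is a disjoint union of paths and even cycles. Cycles have an even number of edges. A path has an odd number of edges iff it has an even number of vertices, i.e. iff its endpoints lie in opposite parts, in which case it contains equally many vertices of $S$ and of $T$. The total edge count is therefore odd iff there is an odd number of odd-length path components, and in particular at least one.

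\emph{Step 2: the involution.} Fix a total order on $V$. For an anticommuting pair $(S,T)$, let $P$ be the odd-length path component of $G[S\triangle T]$ containing the smallest vertex among all such components. Set $S'=S\triangle P$ and $T'=T\triangle P$, which swaps the $S$- and $T$-vertices along $P$. Then $|S'|=|S|$ and $|T'|=|T|$, because $P$ is balanced.

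Next, $S'$ is independent. A vertex of $P\cap T$ that moves into $S'$ has all of its neighbours in $S\setminus T$ inside its own component $P$, and those have been moved out. It has no neighbours in $S\cap T$. Symmetrically, $T'$ is independent.

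Finally, $S'\triangle T'=S\triangle T$ with the same components, so the same $P$ is selected. Hence the map is an involution, and it has no fixed points since $P\neq\varnothing$. The pair $(S',T')$ still has an odd edge count, because only the labelling along $P$ changed.

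\emph{Step 3: sign reversal, the main obstacle.} The products $h_{S'}h_{T'}$ and $h_Sh_T$ consist of the same multiset of factors: the $S\cap T$ factors squared to scalars, together with the factors on $S\triangle T$. They differ only in ordering, so $h_{S'}h_{T'}=\pm h_Sh_T$, and I need the sign to be $-1$.

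Write $h_S=A\,h_{P_S}$ and $h_T=B\,h_{P_T}$, where $P_S=P\cap S$ and $P_T=P\cap T$. Then $h_{S'}=A\,h_{P_T}$ and $h_{T'}=B\,h_{P_S}$. Moving the block $h_{P_T}$ past $B$ and the block $h_{P_S}$ past $A$ produces signs determined by edge counts between $P$ and the rest of $S\triangle T$. These counts vanish because $P$ is a whole component. What remains is the sign from exchanging $h_{P_S}$ and $h_{P_T}$. This is governed by the number of edges inside $P$, which is odd, and so gives the required $-1$.

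I will carry out this bookkeeping carefully, including the scalar $S\cap T$ factors, which can be pulled out front. The conclusion is that the terms $2h_Sh_T$ and $2h_{S'}h_{T'}$ cancel, so the whole sum vanishes.
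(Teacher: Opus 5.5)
Your proof is correct, and it is essentially the standard argument behind the cited Theorem~1 of \cite{elman2021free}; the paper itself only cites that result, and reuses the same path-swapping construction $\widetilde{S}=(S\setminus P)\cup(S'\cap P)$ in the proof of \cref{prop:chispan}. In both, claw-freeness makes $G[S\triangle T]$ a disjoint union of paths and even cycles, and swapping the two sets along a canonically chosen odd-edge path component is a size-preserving, sign-reversing involution on the anticommuting pairs. One small slip in your Step~3: moving $h_{P_S}$ past $A$, or $h_{P_T}$ past $B$, costs no sign, since each pair lies in the same independent set; the move that actually needs your component argument is $h_{P_S}$ past $B=h_{T\setminus P}$, and its sign vanishes because $P$ is a whole component of $G[S\triangle T]$ and $S\cap T$ has no neighbours in $S\cup T$.
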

The transfer operator is constructed as the generating function of the independent set charges.
\begin{definition}[Transfer operator] \label{def:transfer}
    Given a Hamiltonian  \(H\) with frustration graph  \(G\), the transfer operator is a function of the spectral parameter  \(u \in \mathds{C}\) as 
    \begin{align}
        T_G(u) &\coloneqq \sum_{k = 0}^{\alpha(G)} (-u)^k Q^{(k)} = \sum_{S \in \mc{S}_G} \prod_{\bs{j} \in S} (-u h_{\bs{j}})\;.
    \end{align}
\end{definition}
In particular, \cref{lem:qconservation} implies that, for  \(G\) claw-free, the transfer operator commutes with itself at distinct values of its spectral parameter  \([T_G(u), T_G(u')] = 0\).

The transfer operator satisfies a useful product relation, which we specify with the help of the following definition
\begin{definition}[Independence polynomial] \label{def:indeppoly}
    Given a Hamiltonian  \(H = \sum_{\bs{j} \in V} b_{\bsj} \mu^{\bsj} \) with frustration graph  \(G\). 
    The vertex-weighted independence polynomial is given by
    \begin{align}
        P_{G}(x) = \sum_{k = 0}^{\alpha(G)} x^k\left(\sum_{S \in \mc{S}^{(k)}_G} \prod_{\bsj \in S} b_{\bs{j}}^2 \right) = \sum_{S \in \mc{S}_G} \prod_{\bsj \in S} (x b_{\bsj}^2) \label{eq:indeppoly}
    \end{align}
    For brevity, we refer to this simply as the independence polynomial.
\end{definition}

In what follows, it is convenient to utilize the following definitions.
We denote the (open) neighborhood of a vertex \(\bs{j}\) in a graph \(G\) by \(N_G(\bs{j}) = \{\bsk \in V| \{\bsj, \bsk\} \in E\}\); we denote the closed neighborhood by \(N_G[\bs{j}] = N_G(\bs{j}) \cup \{\bs{j}\}\).
For a vertex subset \(U\), let \(N_G[U] = \cup_{\bs{j} \in U} N_{G}[\bs{j}]\).

Both the transfer operator and the independence polynomial satisfy recursion relations with respect to a clique, or complete subgraph, in  \(G\).
In particular, we have
\begin{align}
    T_G(u) &= T_{G\setminus K}(u) - u\sum_{\bsj \in K} h_{\bsj} T_{G\setminus N[\bsj]}(u) \;,\\
    P_G(x) &= P_{G\setminus K}(x) + x \sum_{\bsj \in K} b_{\bsj}^2 P_{G\setminus N[\bsj]}(x) \;. \label{eq:independence_polynomial_recursion}
\end{align}
We also have

\begin{proposition}[Product of transfer operators]
    \label{prop:productoftransfer}
    For the transfer operator of a Hamiltonian with ECF frustration graph  \(G\), as given in \cref{def:transfer}, we have
    \begin{align}
        T_G(u)T_G(-u) = P_G(-u^2)I
    \end{align}
    for  \(P_G\) the vertex-weighted independence polynomial given in \cref{def:indeppoly}.
\end{proposition}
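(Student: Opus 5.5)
The plan is to expand both transfer operators over independent sets, match the surviving diagonal terms to the independence polynomial, and cancel all off-diagonal terms with a sign-reversing involution. By \cref{def:transfer},
\begin{align}
    T_G(u)T_G(-u) = \sum_{S, S' \in \mc{S}_G} (-u)^{|S|} u^{|S'|}\, h_S h_{S'}\;.
\end{align}

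\emph{Diagonal terms.} For $S = S'$, the factors of $h_S$ mutually commute and each satisfies $h_{\bsj}^2 = b_{\bsj}^2 I$. Hence
\begin{align}
    h_S h_S = \prod_{\bsj\in S} b_{\bsj}^2\, I\;,
\end{align}
with coefficient $(-u)^{|S|}u^{|S|} = (-u^2)^{|S|}$. Summing over $S$ gives exactly $P_G(-u^2) I$ by \cref{eq:indeppoly}. It therefore remains to show that the terms with $S \neq S'$ cancel.

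\emph{Structure of off-diagonal pairs.} Take $S\neq S'$ and let $D = S \triangle S'$. A vertex of $S\cap S'$ is nonadjacent to every other vertex of $S\cup S'$, so it commutes with everything in the product. The induced subgraph $G[D]$ is bipartite with parts $S\setminus S'$ and $S'\setminus S$, since both sets are independent.
\begin{itemize}
    \item Claw-freeness forces every vertex of $G[D]$ to have at most two neighbours, since its neighbours in $D$ all lie in one independent part and are thus pairwise nonadjacent.
    \item Any cycle component of $G[D]$ would be an induced even cycle of length at least $4$, i.e.\ an even hole, which is excluded.
\end{itemize}
Hence every connected component of $G[D]$ is an induced path whose vertices alternate between $S$ and $S'$.

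\emph{Involution.} Fix a total order on $V$. Let $C$ be the component of $G[D]$ containing the smallest vertex of $D$, and swap the membership of $C$ between $S$ and $S'$. This produces a new pair $(\tilde S,\tilde S')$ of independent sets: $C$ has no edges to the rest of $D$, and no edges to $S\cap S'$. The map preserves $D$ and its components, so it is an involution with no fixed points on off-diagonal pairs.

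I will then show that the two paired terms cancel. Let $C$ have $k$ vertices, $a$ of them in $S$ and $c = k-a$ in $S'$.
\begin{itemize}
    \item \emph{Scalar coefficient.} $|S|$ changes by $c - a \equiv k \pmod 2$, and $|S|+|S'|$ is unchanged. So the coefficient $(-u)^{|S|}u^{|S'|}$ changes by the factor $(-1)^{k}$.
    \item \emph{Operator part.} Factors from other components and from $S\cap S'$ commute with those of $C$, so the operator can be written as $h_{C\cap S}h_{C\cap S'}$ times a common factor. Converting $h_{C\cap S}h_{C\cap S'}$ into $h_{C\cap S'}h_{C\cap S}$ passes each $S$-vertex of $C$ past each $S'$-vertex once. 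This gives one sign per edge of the path, and the path has $k-1$ edges, so the sign is $(-1)^{k-1}$.
\end{itemize}
The total relative sign is $(-1)^k(-1)^{k-1} = -1$, so the two terms cancel. This includes the case $k=1$, where a single vertex moves between the two sides.

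The main obstacle is the structural step: establishing that every component of $G[D]$ is a path. This is exactly where both the claw-free and even-hole-free hypotheses enter. Once it is in place, the sign bookkeeping is routine.
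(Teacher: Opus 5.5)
Your proof is correct. The diagonal terms give $P_G(-u^2)I$, and claw-freeness together with even-hole-freeness makes every component of $G[S\oplus S']$ an induced path. Swapping the component containing the least vertex is then a fixed-point-free involution, under which the coefficient sign $(-1)^k$ and the commutation sign $(-1)^{k-1}$ combine to $-1$.

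The paper states this proposition without proof, as background from Ref.~\cite{elman2021free}. Your argument is, however, the same symmetric-difference, path-swapping mechanism the paper uses in the proof of \cref{prop:chispan}. Here there is no $\chi$ factor, so every off-diagonal pair cancels.
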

Especially, this implies that the transfer operator is invertible when  \(P_G(-u^2) \neq 0\).
It is known that the zeros of the independence polynomial as  \(P_G(x)\) lie on the negative real axis for claw-free graphs \cite{leake2016generalizations} (if the zeros are real, then they must be negative since \(P_G\) has a constant term and positive coefficients).
Furthermore, by construction, we have  \(T_{G}^{\dagger}(iu) = T_G(-iu)\) when  \(u \in \mathds{R}\).
This allows us to define the \emph{unitary transfer operator}:
\begin{definition}[Unitary Transfer Operator]
    \label{def:uto}
    The unitary transfer operator, defined for  \(u \in \mathds{R}\), is given by
    \begin{align}
        V(u) = P_G(u^2)^{-1/2} T_G(iu),
    \end{align}
    and satisfies  \(V^{\dagger}(u)V(u) = I.\)
\end{definition}

\subsubsection{Simplicial Clique and Path Products}

As stated in the previous section, a \emph{clique} is a complete subgraph.
A \textit{simplicial clique}  \(K\) is a clique such that, for all  \({\bs{v}}\in K\), the neighbourhood of  \(\bs{v}\) in  \(G\setminus K\) induces a clique, i.e.  \(G[N_{G\setminus K}(\bs{v})]\) is a clique.
A graph is \emph{simplicial} if it contains a simplicial clique.

\begin{definition}[Simplicial mode]
    Let  \(H\) be a Hamiltonian where its frustration graph  \(G\) contains a simplicial clique  \(K\). 
    A simplicial mode with respect to  \(K\) is an operator  \(\chi=\mu^{\bsjs}\) such that  \(\bsjs\notin V\) and  \(\chi\) satisfies
    \begin{align}
        \chi \mu^{\bsk} = (-1)^{\mathbf{1}(\bsk \in K)} \mu^{\bsk} \chi
    \end{align}
    where \(\mathbf{1}(E)\) is the indicator function for the event \(E\).
\end{definition}

The simplicial extension to the frustration graph is defined as
\begin{definition}[Simplicial extension] 
    Let  \(\chi\) be a \emph{simplicial mode} for  \(H\), a Hamiltonian with simplicial frustration graph  \(G\). 
    Define
    \begin{align}
        G^* \coloneqq (V \cup \{\bs{j}^*\}, E \cup \{(\bs{j}^*, \bs{j})\}_{\bs{j} \in K})
    \end{align}
    as the graph  \(G\) with an additional fiducial vertex  \(\bs{j}^*\), which neighbours only the vertices in  \(K\), and as such corresponds to  \(\chi\). 
\end{definition}

Using the definition of the simplicial extension, we define the path-product operator, which generalizes \cref{eq:hpdef} to arbitrary models

\begin{definition}[Path-Product operator]
    \label{def:indpath}
    Let  \(\mc{P}\) be the set of all induced paths in  \(G\), and let  \(\mc{P}^*\) be the set of all anchored induced paths in  \(G^*\) (i.e. paths with an endpoint at  \(\bs{j}^*\)). 
    Define the anchored path-product operator as
    \begin{align}
        h_P = \chi h_{\Pv{1}} \dots h_{\Pv{\ell - 1}} h_{\Pv{\ell}}
    \end{align}
    for
    \begin{align}
        P \coloneqq \bs{j}^* \md \Pv{1} \md \dots \md \Pv{\ell - 1} \md \Pv{\ell} \in \mc{P}^*
    \end{align}
    We denote the endpoint  \(\Pv{\ell} \neq \bs{j}^*\), for \(\ell > 0\), as the free endpoint of  \(P\), and also write  \(\Pv{0} = \bs{j}^*\).
    General path-product operators are defined analogously.
\end{definition}
Finally, we define the \emph{single-vertex deformation} of a path  \(P\) relative to a vertex  \(\bsk \notin P\) for which 
\begin{align}
    N_{G^*}[\bsk] \cap P = \Pv{i - 1} \md \Pv{i} \md \Pv{i + 1}
\end{align}
as the path  \(P' = (P\setminus \{\Pv{i}\}) \cup \{\bsk\}\). 
The vertex  \(\Pv{i}\) is called the \emph{clone} to  \(\bsk\) in  \(P\), and this relationship is denoted  \(\bsk \prec_{P} \Pv{i}\).
We utilize this construction in the proof of \cref{thm:krylov-restated}, stated below.

\subsubsection{Full solution}

The transfer operator and simplicial mode allow us to express the disguised fermionic eigenmode basis in which the Hamiltonian is diagonalized
\begin{definition}[Incognito modes \cite{elman2021free}]\label{def:incognito}
    Given a Hamiltonian  \(H\) with (even-hole, claw)-free frustration graph  \(G\), the incognito modes are defined as
    \begin{align}
        \psi_j &\coloneqq N_j^{-1} T(-u_j) \chi T(u_j)
    \end{align}
    with  \(N_j\) a normalization factor, and  \(u_j\) the positive roots of the vertex weighted independence polynomial  \(P(-u_j^2) = 0\). 
\end{definition}

The results of refs.~\cite{elman2021free,chapman2023unified,mann2025graph} imply that an ECF Hamiltonian admits a free-fermion solution of the form
\begin{align}
    \label{eq:diaghamiltonian}
    H=\sum_{j=1}^{\alpha(G)}\epsilon_j[\psi_j,\psi_j^\dagger] = -\sum_{j=1}^{\alpha(G)}2\epsilon_j \left(\psi_j^\dagger \psi_j - \frac{1}{2}\right),
\end{align}
where the incognito modes  \(\{\psi_j\}_j \cup \{\psi_j^\dagger\}_j\) of  \(H\) are fermionic ladder operators, obeying the canonical anti-commutation relations. 
They are eigenmodes of the Hamiltonian commutator, i.e.,
\begin{equation}
    [H, \psi_j^{\dagger}] = -2\varepsilon_j \psi_j^{\dagger} \mh{5mm} \text{and} \mh{5mm} [H, \psi_j] = 2\varepsilon_j \psi_j \mh{5mm}\label{eq:eigenmodes}
\end{equation}
for all  \(j \in \{1, \dots, \alpha(G)\}\)\;.

\subsection{Krylov Modes for Chordal, Claw-Free Models}
\label{subsec:krylov}
The known solution to an ECF model constructs fermionic ladder operators from conserved quantities defined by graphical structures, and this solution is generic.
Furthermore, the solution is nonlocal and nonlinear in that it does not generally map monomials in the graph Clifford algebra to fermionic monomials. 
It is thus difficult to identify localized fermions by an approach similar to that outlined in \cref{sec:tfim_seed_JW}.
However, in Ref.~\cite{chapman2023unified} a new solution method for these models was presented (and subsequently developed in Ref.~\cite{fukai2026solvingb}) that relies on the construction of Krylov operators, generated from nested commutators of the Hamiltonian with the simplicial mode. 

\begin{definition}[Krylov modes] 
    \label{def:krylovmode}
    Given a Hamiltonian  \(H\) with simplicial frustration graph, let  \(\chi\) be the corresponding simplicial mode.
    The Krylov modes  \(\eta^{(k)}\) are 
    \begin{align}
        \kryl{k} = \mc{L}^k[\chi] \mh{4mm} \text{where} \mh{4mm} \mc{L}[X] \coloneqq \frac{i}{2}[H, X]\;.
    \end{align}
\end{definition}
These operators are Hermitian. 
In Ref.~\cite{chapman2023unified}, it was shown that when the frustration graph of a model is simplicial, claw-free, the Krylov operators admit a path-product expansion.
This result, restricted to chordal, claw-free graphs, is central to the present work, and we recall the lemma~\cite[Theorem 3]{chapman2023unified} as follows:
\begin{lemma}[{restate=[name=restatement]KrylovModes}]
    \label{thm:krylov-restated}
    Given a Hamiltonian  \(H\) with chordal, claw-free frustration graph  \(G\). 
    The Krylov modes satisfy
    \begin{align}
        \kryl{k} = (-i)^k \sum_{P \in \mc{P}^*} \left(\mbf{A}^{k}\right)_{P, \{\bs{j}^*\}}h_P \label{eq:pathsupport}
    \end{align}
    where  \(\mbf{A}\) is a directed adjacency matrix of the induced path tree \cite{leake2016generalizations} of  \(G^*\) with real weights, and
    \begin{align}
        \left(\mbf{A}^{k}\right)_{P, \{\bs{j}^*\}} = \left(\mbf{A}^{k}\right)_{\widetilde{P}, \{\bs{j}^*\}} \label{eq:deformeq}
    \end{align}
    for all  \(k \geq 0\), for all pairs of induced paths  \(P\),  \(\widetilde{P}\), related by a single-vertex deformation. 
    Furthermore, for  \(j\),  \(k \geq 0\) we have
    \begin{equation}
        \{\kryl{j},\kryl{k}\}=2\mathbf{M}^G_{jk} I
        \label{eq:nocar}
    \end{equation}
    where  \(\mathbf{M}^G\) is real-symmetric and  \(\e*(\mathbf{M}^G_{jk})_{0 \leq j,k \leq d-1}\) is positive definite (\(d\) specified below).
\end{lemma}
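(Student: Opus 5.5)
The statement is a restatement of \cite[Theorem 3]{chapman2023unified}, specialised to chordal, claw-free graphs. I would reprove it by induction on \(k\), tracking how \(\mc{L}\) acts on a single anchored path-product operator \(h_P\). The base case is \(\kryl{0} = \chi = h_{\{\bsjs\}}\), with \(\mbf{A}^0\) the identity.

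For the inductive step I compute \(\mc{L}[h_P] = \frac{i}{2}\sum_{\bsk} b_{\bsk}[\mu^{\bsk}, h_P]\). A term \(h_{\bsk}\) contributes only if it anticommutes with \(h_P\), and then the contribution is \(i\, h_{\bsk} h_P\). Since \(h_{\bsk}\) anticommutes with \(h_P\) exactly when \(\abs{N_{G^*}(\bsk)\cap P}\) is odd, I would use claw-freeness and chordality of \(G^*\) to classify these vertices. (Here I need that \(G^*\) inherits both properties, which I would check using simpliciality of \(K\).) The classification should be as follows.

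\begin{enumerate}[(a)]
    \item \(\bsk\) is adjacent only to the free endpoint \(\Pv{\ell}\). Then \(h_{\bsk}h_P\) is, up to sign, \(h_{P\md\bsk}\), an extension of the path.
    \item \(\bsk = \Pv{\ell}\). Then \(h_{\bsk}h_P \propto h_{P\setminus\Pv{\ell}}\), a retraction of the path.
    \item \(\bsk \notin P\) and its neighbourhood on \(P\) is the consecutive triple \(\Pv{i-1}\md\Pv{i}\md\Pv{i+1}\). This odd-neighbourhood case produces a non-path product \(h_{\bsk}h_P\).
    \item \(\bsk\) has an odd number of non-consecutive neighbours on \(P\). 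Claw-freeness forbids three pairwise nonadjacent neighbours, and chordality forbids a neighbour at two separated positions without creating a long induced cycle, so this case should be excluded. This exclusion is where the chordal, claw-free hypothesis is consumed.
\end{enumerate}

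The main obstacle is case (c). Here \(\bsk \prec_P \Pv{i}\), and \(P' = (P\setminus\{\Pv{i}\})\cup\{\bsk\}\) is again an induced anchored path, with \(\Pv{i} \prec_{P'} \bsk\). The junk term \(h_{\bsk}h_P\) and the corresponding junk term \(h_{\Pv{i}}h_{P'}\) are the same operator up to sign and coupling weights. I would use the deformation-invariance \cref{eq:deformeq}, included in the induction hypothesis at level \(k\), to show the two cancel exactly. The commutation phase, \(h_{\bsk}\) against \(h_{\Pv{i}}\), gives opposite signs, and the equal coefficients then give full cancellation. The surviving terms, extensions and retractions with weights \(b\), then define \(\mbf{A}\) as a weighted directed adjacency matrix on the induced path tree. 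The factor \((-i)^k\) comes from the \(i\) per commutator combined with the ordering signs, possibly after choosing sign conventions for \(h_P\).

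I must also propagate \cref{eq:deformeq} to level \(k+1\). For this I would show that the extension and retraction weights into \(P\) and into a deformation \(\widetilde{P}\) match: the neighbours available for extension are the same because \(\bsk\) and \(\Pv{i}\) have the same neighbourhood on the rest of \(P\). This is a further case check on the deformed vertex's position, either interior or at the free endpoint.

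For the anticommutation relation \cref{eq:nocar}, I would expand \(\{\kryl{j},\kryl{k}\}\) into products \(h_P h_Q\). Using \(\mc{L}\) as a derivation with \(\{\mc{L}[X],Y\} = -\{X,\mc{L}[Y]\}\) up to sign conventions, I would reduce to \(\{\chi,\kryl{j+k}\}\). This equals \(2\) times the coefficient of \(h_{\{\bsjs\}}\), since every nontrivial \(h_P\) with \(P\) containing \(\bsjs\) anticommutes with \(\chi\). Hence \(\mbf{M}^G_{jk}\) is a Hankel moment matrix \(\propto(\mbf{A}^{j+k})_{\bsjs,\bsjs}\), which is real and symmetric. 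Positive definiteness on the first \(d\) indices follows from \(\mbf{M}^G\) being a Gram matrix under the trace inner product, with \(d\) the dimension of the Krylov space.
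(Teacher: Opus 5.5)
Your classification of anticommuting terms and your pairwise cancellation of clone terms are sound. Both $h_{\bsk}h_P$ and $h_{\Pv{i}}h_{P'}$ are products over $P\cup\{\bsk\}$, so $h_{\bsk}h_P=-h_{\Pv{i}}h_{P'}$ with no weight mismatch. The pairing $(P,\bsk)\mapsto(P',\Pv{i})$ is a fixed-point-free involution, so \cref{eq:deformeq} at level $k$ does remove the non-path terms from $\kryl{k+1}$.

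The gap is the step where you propagate \cref{eq:deformeq} to level $k+1$. Write $a_k(P)\coloneqq(\mbf{A}^k)_{P,\{\bsjs\}}$, so that $a_{k+1}(P)=a_k(P\setminus\{\Pv{\ell}\})+\sum_{\bs{v}}b_{\bs{v}}^2\,a_k(P\cup\{\bs{v}\})$, with $\bs{v}$ ranging over extensions of $P$.

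If the cloned vertex is $\Pv{i}$ with $i\le\ell-2$, your matching works. Chordality forces $P$ and $\widetilde P$ to have the same extension vertices, and all paths on the right-hand side pair up as deformations.

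If instead $i=\ell-1$, so that the clone $\bsk$ is adjacent to the free endpoint, the matching fails in two ways:
\begin{itemize}
\item The retracted paths end at $\Pv{\ell-1}$ and at $\bsk$ respectively, and they are \emph{not} deformations of each other.
\item The extension sets of $P$ and $\widetilde P$ differ, by the vertices adjacent to $\Pv{\ell}$ and to exactly one of $\Pv{\ell-1},\bsk$.
\end{itemize}
Concretely, take the zigzag ladder with $N(\bsjs)=\{\vv{1}\}$ and unit weights, labelling vertices by index. Let $P=\bsjs\md 1\md 2\md 4$ and $\widetilde P=\bsjs\md 1\md 3\md 4$; vertex $3$ is a clone of $2$. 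Then
\begin{itemize}
\item $a_5(P)=a_4(\bsjs\md 1\md 2)+a_4(\bsjs\md 1\md 2\md 4\md 5)+a_4(\bsjs\md 1\md 2\md 4\md 6)=4+1+1$,
\item $a_5(\widetilde P)=a_4(\bsjs\md 1\md 3)+a_4(\bsjs\md 1\md 3\md 4\md 6)=5+1$.
\end{itemize}
The totals agree only via the compensation $a_4(\bsjs\md 1\md 3)-a_4(\bsjs\md 1\md 2)=a_4(\bsjs\md 1\md 2\md 4\md 5)$. That identity is not a consequence of deformation invariance at level $4$. So your induction does not close as stated: you would need to discover and propagate a further family of ``endpoint-exchange'' identities.

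The paper avoids this by reversing the logic. It first proves that every $\kryl{k}$ is path-supported by an independent argument:
\begin{itemize}
\item $\kryl{k}$ lies in the span of the incognito modes (\cref{prop:zero_mode}).
\item Each $\psi_j\propto T(-u_j)\chi T(u_j)$ is a combination of the operators $\chi_k$ of \cref{def:chimode}.
\item Each $\chi_k$ is path-supported by \cref{prop:chispan}, which cancels pairs of independent sets whose symmetric difference has a component without an endpoint in $K$.
\end{itemize}
Knowing in advance that the clone terms in $\mc{L}[\kryl{k-1}]$ must sum to zero, the paper pairs them exactly as you do. It then uses linear independence of distinct products $\sigma^S$ in the fiducial representation (\cref{prop:fiducial}) to force each difference $a_{k-1}(P)-a_{k-1}(\widetilde P)$ to vanish. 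Deformation invariance is thus an \emph{output} of path support rather than an inductive input, and no level-by-level combinatorial propagation is needed.

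Your treatment of \cref{eq:nocar} is fine and matches the argument in \cref{app:kpsconjecture}. You reduce to $\{\chi,\kryl{j+k}\}$ via the derivation property, inducting so that each anticommutator is already known to be scalar. Positivity then follows from the Hilbert--Schmidt Gram matrix of the Hermitian $\kryl{k}$.
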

\cref{thm:krylov-restated} establishes that the operators generated by repeated commutators  \(\kryl{k}\) live in a finite-dimensional Krylov subspace. 
Indeed, as there is only a finite number of induced paths, the matrix  \(\mbf{A}\) is finite dimensional.
Therefore, there must be a minimal  \(d \in \mathbb{N}\) -- the dimension of the Krylov subspace -- such that  \(\kryl{d}\) is a \emph{real} linear combination of  \(\{\eta^{(k)}\}_{0 \leq k < d}\). 
Furthermore, the Krylov operators satisfy a nonorthogonal version of the CAR (\cref{eq:nocar}), implying that  \(\mc{L}\) acts as a single-particle matrix on these modes, and therefore that the Hamiltonian is quadratic.

Whereas the original proof of this lemma in the simplicial, claw-free setting requires additional considerations due to the presence of holes in the graph, we provide a simpler proof of this result for chordal, claw-free frustration graphs in \cref{app:induced-paths}. 
This follows by expressing the first Krylov step in the incognito-mode basis and using the fact that these modes diagonalize the adjoint action of the Hamiltonian, so higher Krylov iterates simply attach powers of the eigenvalues. 
Transforming back to the Hamiltonian terms then shows that each  \(\kryl{k}\) is a sum over induced-path operators  \(h_P\).

Since  \(\mbf{M}^G\) is real-symmetric and positive definite, this provides us with another construction for quasi-local fermions in terms of sums of paths operators. 
Indeed, these Krylov fermions  \(\{\kryl{k}\}\) may be orthonormalised to form a set of Majorana operators \cite{chapman2023unified}.

\subsection{Cayley Transform}
The proof of \cref{thm:ffd_currents_are_paths} proceeds by replacing the individual Krylov modes with a single generating function that encodes the entire Krylov tower. 
The Cayley transform \begin{equation}
    \mc{C}(u) = \big(I+u\mc{L}\big)\big(I-u\mc{L}\big)^{-1}
\end{equation}
satisfies, as a formal power series,
\begin{equation}
    \mc{C}(u)[\chi] = \chi+2\sum_{k=1}^\infty u^k\kryl{k}\;, \label{eq:cayley_krylov_generating_function}
\end{equation}
for  \(u \in U_0 \subset \mathds{R}\), a small enough region around  \(u = 0\).
The commutator  \(\mathfrak{S}(u) \coloneqq \e[\mc{C}(u)[\chi], \mc{C}(-u)[\chi]]\) packages all odd Krylov currents into a single generating series. 
After separating the terms involving  \(\chi\), the coefficient of each odd power of  \(u\) is, up to a factor and additional path-supported term, the corresponding current operator  \(J^{(\ell)}\) of \cref{eq:current_operators}. 
It is therefore sufficient to show that this generating commutator is path-supported. 
The key step is the following Cayley-transform identity, which turns the adjoint Hamiltonian flow into conjugation by the unitary  \(V(u)\). \begin{lemma}[
    restate=CayleyTransform,
    name={Cayley transform of Hamiltonian adjoint operator},
    label={lem:cayley_transform}]
    Let  \(u \in U_0 \subset \mathds{R}\) and
    \begin{align}
        \mc{C}(u)[X] \coloneqq \big(I+u\mc{L}\big)\big(I-u\mc{L}\big)^{-1}[X]
        \label{eq:cayley_transform}
    \end{align}
    be the Cayley transform of  \(\mc{L} = \frac{i}{2}\ad_H\).
    For  \(H_{\mathrm{FFD}}\) and  \(X\) an operator in the span of the eigenmodes \(\{\psi_j\}_j \cup \{\psi^{\dagger}_j\}_j\) and zero mode \(\phi_0\), for which 
    \begin{align}
        \mc{L}[\phi_0] = 0 \mh{2mm} \text{and} \mh{2mm} \{\phi_0, \psi_j\} = \{\phi_0, \psi_j^{\dagger}\} = 0
    \end{align} 
    we have 
    \begin{align}
        \mc{C}(u)[X] = V(u)^{\dagger} X V(u) 
        \label{eq:adjointunitary}
    \end{align}
    where  \(V(u)\) is the unitary transfer operator as given in \cref{def:uto}.
\end{lemma}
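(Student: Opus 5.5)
The plan is to exploit linearity. Both sides of \cref{eq:adjointunitary} are linear in \(X\): the left side because \(\mc{C}(u)\) is a linear superoperator, the right side trivially. So it suffices to verify the identity on the spanning set \(\{\phi_0\}\cup\{\psi_j\}_j\cup\{\psi_j^\dagger\}_j\). On this set \(\mc{L}\) acts diagonally by \cref{eq:eigenmodes}:
\begin{align}
    \mc{L}[\psi_j] &= i\varepsilon_j\psi_j\;, &
    \mc{L}[\psi_j^\dagger] &= -i\varepsilon_j\psi_j^\dagger\;, &
    \mc{L}[\phi_0] &= 0\;.
\end{align}
Hence, for \(u\in U_0\) with \(\abs{u}<1/\max_j\abs{\varepsilon_j}\) (so that \(I-u\mc{L}\) is invertible on this span and the formal series \cref{eq:cayley_krylov_generating_function} converges), the left side gives
\begin{align}
    \mc{C}(u)[\psi_j] &= \frac{1+iu\varepsilon_j}{1-iu\varepsilon_j}\,\psi_j\;, &
    \mc{C}(u)[\psi_j^\dagger] &= \frac{1-iu\varepsilon_j}{1+iu\varepsilon_j}\,\psi_j^\dagger\;, &
    \mc{C}(u)[\phi_0] &= \phi_0\;.
\end{align}

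Next I would put \(V(u)\) into eigenmode form. Set \(s_j\coloneqq[\psi_j,\psi_j^\dagger]\), so \(s_j^2=I\) and the \(s_j\) pairwise commute. I would invoke (from Ref.~\cite{elman2021free}) the factorisation
\begin{equation}
    T_G(u)=\prod_{j=1}^{\alpha(G)}\e(I - u\,\varepsilon_j s_j)\;.
\end{equation}
This is consistent with \cref{prop:productoftransfer}: using \(s_j^2=I\) it gives \(T_G(u)T_G(-u)=\prod_j(1-u^2\varepsilon_j^2)\), which equals \(P_G(-u^2)\) because \(\varepsilon_j=1/u_j\) are the inverse roots. If the factorisation is not taken as given, I would derive it as follows. \(T_G(u)\) commutes with \(H\) and is a polynomial in the commuting charges \(Q^{(k)}\), hence is a function of the \(s_j\). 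Its first-order coefficient is \(-H=\sum_j \varepsilon_j s_j\) (matching \cref{eq:diaghamiltonian} up to the sign convention). Matching the remaining coefficients then uses the product relation. Fixing these signs consistently with \cref{eq:diaghamiltonian} is the step I expect to be the main obstacle, together with the branch of the square root in \cref{def:uto}. With the factorisation in hand,
\begin{equation}
    V(u)=\prod_j \frac{I-iu\varepsilon_j s_j}{\sqrt{1+u^2\varepsilon_j^2}}\;.
\end{equation}
This is a product of commuting unitaries.

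Finally I would compute the conjugations. From the CAR one gets \(s_j\psi_j=\psi_j\) and \(\psi_j s_j=-\psi_j\) (and the conjugate relations for \(\psi_j^\dagger\)). The factors with \(k\neq j\) commute with \(\psi_j\), because \(s_k\) is even in the modes. Hence
\begin{equation}
    V^\dagger\psi_jV=\frac{(I+iu\varepsilon_js_j)\,\psi_j\,(I-iu\varepsilon_js_j)}{1+u^2\varepsilon_j^2}
    =\frac{(1+iu\varepsilon_j)^2}{1+u^2\varepsilon_j^2}\,\psi_j
    =\frac{1+iu\varepsilon_j}{1-iu\varepsilon_j}\,\psi_j\;,
\end{equation}
which matches the left side. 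The same computation for \(\psi_j^\dagger\) gives the inverse factor. For \(\phi_0\), anticommutation with every \(\psi_k\) and \(\psi_k^\dagger\) implies \([\phi_0,s_k]=0\), so \(V^\dagger\phi_0V=\phi_0\). This completes the check on the spanning set, and linearity gives \cref{eq:adjointunitary}.
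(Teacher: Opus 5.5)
Your proof is correct and is essentially the paper's argument. You reduce by linearity to \(\phi_0,\psi_j,\psi_j^\dagger\), use the eigenvalues \(\pm i\varepsilon_j\) of \(\mc{L}\), and import the product form \(T(u)=\prod_j(I-u\varepsilon_j[\psi_j,\psi_j^\dagger])\), which the paper takes from Fendley's Eq.~(3.5). The paper instead first rewrites \(V(u)=\exp[-i\sum_j\tan^{-1}(u\varepsilon_j)s_j]\) via Newton's identities and matches \(e^{\pm 2i\tan^{-1}(u\varepsilon_j)}\) to the Cayley eigenvalues, whereas you conjugate directly by the factors using \(s_j\psi_j=\psi_j\) and \(\psi_j s_j=-\psi_j\); this is shorter and avoids the operator logarithm.

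Your fallback derivation of the factorization is not a proof and should be replaced by a citation. Commuting with \(H\) does not make \(T\) a function of the \(s_j\), because \(H_{\mathrm{FFD}}\) is highly degenerate. Also, the first-order coefficient of the product is \(-H=-\sum_j\varepsilon_j s_j\), which is exactly consistent with the diagonal form of \(H\). The square-root branch is likewise fixed, since \(P_G(u^2)=\prod_j(1+u^2\varepsilon_j^2)>0\) for real \(u\). So the sign and branch obstacles you anticipated do not arise.
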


\begin{remark}[Relation to the generating function of Ref.~\cite{fukai2026solvingb}]
    In our notation, the Krylov generating function of Ref.~\cite{fukai2026solvingb} is \(\Phi_G(u)=\sum_j (-iu)^j\kryl{j}\), and one has \(\mc{C}(u)[\chi]=2\,\Phi_G(iu)-\chi\).
    One could therefore work with \(\Phi_G\) directly, which would bypass the Cayley transform in the proof of \cref{thm:ffd_currents_are_paths}.
    We retain the Cayley transform because of its geometric meaning: \(\mc{C}(u)\) is the rational parametrization of the unitary flow generated by \(\mc{L}\), and \cref{lem:cayley_transform} realizes it as conjugation by the unitary transfer operator \(V(u)\).
\end{remark}

Our main argument is the following. 
We aim to characterize the family of models described in \cref{eq:ngmodels} in terms of graph theory, and we view these as generated by induced paths in the FFD model analogously to the relationship between line-graph models and induced paths of the TFIM as described in \cref{sec:tfim_seed_JW}.
Utilizing the free-fermion construction of the FFD model described in \cref{sec:ffd_solution}, we show \cref{lem:cayley_transform} in \cref{app:cayley}. 
Together with \cref{thm:krylov-restated} (\cref{app:induced-paths}) we find a relationship between the Krylov currents and commutators between anchored path-product operators with coefficients given in terms of independence polynomials of zigzag-ladder subgraphs.
This latter expression apparently has precisely the right properties such that all of the non-path-supported terms cancel, and this proves \cref{thm:ffd_currents_are_paths}, as shown in \cref{app:current-paths}.
We prove \cref{thm:ffd_current_algebra} in \cref{app:current_algebra} by an inductive argument in the Krylov order using \cref{eq:nocar}.
In \cref{sec:phase_diagram_details}, we provide more details on the numerical analysis, and in \cref{app:kpsconjecture}, we provide further discussion regarding \cref{conj:krylov_path_support_free_fermions}.
In \cref{app:induced-paths,,app:cayley,,app:structural-graph}, we give some further supporting technical results, as well as other results that may be of independent interest.

\section{Discussion}\label{sec:discuss}
We have introduced a mechanism for generating non-generically free-fermion-solvable Hamiltonians from extensions to the FFD model.
These extensions constitute a complete set of generators for the hidden free-fermion dynamical Lie algebra of the model, and they admit path-product expansions in the Hamiltonian terms.
The FFD model therefore generates a family of path-supported Hamiltonians whose solvability is inherited from its original Krylov modes, much akin to the way that the TFIM generates all line-graph models.

We have also shown that the currents carry a natural algebraic structure. 
Since the Krylov operators satisfy generalized CAR, their bilinears are quadratic Majorana operators after orthonormalisation. 
Furthermore, as the CAR are defined by an alternating Hankel-type matrix, the Krylov currents generate a Lie algebra containing all these quadratic Majorana operators, i.e., the full special orthogonal Lie algebra, and their exponentials generate the corresponding Lie group of Gaussian transformations.
The technical content of \cref{thm:ffd_current_algebra} is therefore the existence of a generating set of Gaussian transformations whose generators admit a path-product expansion in the zigzag-ladder frustration graph of the FFD model.

This gives a new perspective on non-generic free-fermion solvability. 
In the usual graph-theoretic approach, solvability is guaranteed by structural conditions on the frustration graph itself, such as the line-graph condition for linear mappings or the simplicial, claw-free condition for disguised free-fermion solutions. 
The generated family of models described by \cref{eq:ngmodels} need not satisfy those conditions. 
Indeed, the frustration graphs of these models contain many claws, as demonstrated in \cref{fig:zigzag_ladder,,fig:merged_number_plot}. 
Furthermore, these claws do not appear to be resolved by the same mechanism as the models in~\cite{fukai2025free,fendley2019free,sinha2026hidden}.
The solvability instead comes from the fact that the Hamiltonians belong to the quadratic algebra generated by the Krylov fermions of the seed model.

The construction appears to produce a family of dense models, see~\cref{fig:merged_number_plot,sec:numerical_analysis}. 
In the FFD examples that we have checked, both the number of induced-path terms in the generated Hamiltonians and the number of claws in their frustration graphs grow rapidly with the size of the seed. 
In this sense, the models we describe here are perhaps more abstract than previously known free-fermion-solvable systems: they are generally not local, and we have evidence that the number of terms can grow superpolynomially in the number of qubits for these models. 
An interesting open direction for investigation would therefore be to find models for which the locality structure is recovered by suitable choices of  \(\bs{a} \neq 0\) and  \(\bs{b}\) in \cref{eq:ngmodels}.

These results raise several natural open problems. 
The first is to understand the complexity of recognizing Hamiltonians that belong to the family we describe. 
In the line-graph case, the corresponding decomposition into path operators is efficient, since the root graph gives a direct free-fermion representation, and line graph recognition is itself efficient by, e.g., the algorithm of Ref.~\cite{degiorgi1995dynamic}. 
For the FFD currents, \cref{thm:ffd_currents_are_paths} gives existence, but the complexity of constructing the decomposition is not yet clear. 
The proof proceeds through two different descriptions of the same objects: one in terms of the Krylov path-tree adjacency matrix, and another in terms of `transfer operators' and independence polynomials. 
Equating these descriptions should imply nontrivial identities relating the induced-path-tree expansion to the independence-polynomial expansion~ \cite{bencs2018}.

A second open problem is to prove the graphical current-algebra conjecture for all chordal, claw-free seeds (\cref{conj:graphical_current_algebra}). 
Our proof for the FFD model relies on the special interval structure of the zigzag ladder. 
This makes it possible to pair non-path contributions by exchanging sub-ladder chains and to prove their cancellation by an explicit case analysis. 
We do not know of a fundamental obstruction to extending this argument to arbitrary chordal, claw-free graphs. 
The difficulty is rather that the uniform interval structure of the ladder must be replaced by the more general clique-tree representation of chordal, claw-free graphs. 
Cliques of order four already complicate the local picture, although higher-order cliques do not appear to introduce a qualitatively new obstruction. 
Branching in the clique tree also seems manageable: once two paths branch and do not rejoin, the cancellation problem effectively reduces to independent non-branching regions. 

If \cref{conj:graphical_current_algebra} holds, it would provide a systematic way to generate new non-generically free-fermion solvable Hamiltonians from any chordal, claw-free seed. 
The proposed procedure is graphical: identify a suitable chordal, claw-free frustration graph, choose a simplicial mode, construct the Krylov tower, and form the associated currents and their Lie closure.
The representation of chordal, claw-free graphs as intersection graphs of non-crossing paths in a tree~\cite{chaplick2023intersection} gives a structural route to finding and organizing such seeds. 
In this way, the conjecture would turn chordal, claw-free graph structure into an algorithmic source of non-generic free-fermion phases.

We also proposed a broader recognition principle: the essential feature underlying free-fermion solvability may be the confinement of the Krylov tower to induced-path operators, possibly after graphical cancellation relations~(\cref{conj:krylov_path_support_free_fermions}).
This would place line-graph models~\cite{Chapman2020}, chordal, claw-free models~\cite{chapman2023unified}, FFD-generated current Hamiltonians, and other non-generic examples~\cite{fukai2025free,Fendley2024,sinha2026hidden} into a common framework.
The conjectured criterion is that the Hamiltonian admits a free-fermion solution on a Krylov space if and only if that space is finite, path-supported, and preserved by the Hamiltonian adjoint action.

It would also be interesting to connect these results with recent work on FFD-guided Hamiltonian problems. 
Since the current-generated models share a common Krylov fermion basis with their seed model, they may enlarge the class of Hamiltonians whose low-energy structure can be controlled by FFD-like methods.
A natural question is whether the present construction, together with the techniques of Ref.~\cite{fukai2026solvingb}, can help close the BQP-containment question for the FFD-guided local Hamiltonian problem~\cite{Waite2025}.
More broadly, Krylov currents suggest that non-generic free-fermion solvability is not an isolated phenomenon, but a source of structured families of interacting-looking spin models with hidden Gaussian dynamics.

\begin{acknowledgments}
    This work was funded by Phasecraft. 
    We thank Joel Klassen for thought provoking conversations. 
    JR was supported by the Sydney Quantum Academy, Sydney, NSW, Australia.
\end{acknowledgments}

\onecolumngrid 

\appendix
\newpage

\section{Proof of the path support of the Krylov currents}
\label[appendix]{app:current-paths}

In this appendix we prove \cref{thm:ffd_currents_are_paths}.
\KrylovCurrents*

\subsection{Notation and Labeling}
\label{sec:notation}
In this section, we denote vertices as separate from their indices with labeling by square brackets (e.g.  \(\vv{i}\) to mean ``vertex with index \(i\)''). 
We similarly promote the indices for Hamiltonian terms to avoid overloading subscripts (e.g.  \(\hv{i}\)).
We fix an interval-graph representation and cutsets of  \(G\) via the linear order
\begin{align}
    \setcounter{MaxMatrixCols}{20}
    \begin{matrix}
        V &=& \vv{0} \coloneqq \bs{j}^* & > & \vv{1} & > & \dots & > & \vv{m-1} & > & \vv{m}\\
        \Upsilon &=& & & \kv{1} & \md & \dots & \md & \kv{m - 1} & \md & \kv{m}\\
        \Gamma &=& & & \cv{1} & \md & \dots & \md & \cv{m - 1} & \md & \cv{m}
    \end{matrix}
    \label{eq:ppmodel}
\end{align}
where we define 
\begin{align}
    \begin{cases}
        \kv{1} = \{\vv{1}, \vv{2}\} & \\
        \kv{j} = \{\vv{j - 1}, \vv{j}, \vv{j + 1} \} & j \in \{2, \dots, m - 1\} \\
        \kv{m} = \{\vv{m - 1}, \vv{m}\}
    \end{cases} \mh{5mm} \text{and} \mh{5mm}
    \begin{cases}
        \cv{1} = \{\vv{1}\} & \\
        \cv{j} = \{\vv{j - 1}, \vv{j} \} & j \in \{2, \dots, m\}
    \end{cases}
\end{align}
The \emph{simplicial vertex}  \(\boldsymbol{j}^*\) satisfies  \(N(\boldsymbol{j}^*) = \{\vv{1}\}\), and we have
\begin{align}
    \vv{i} \sim \vv{j} \quad\Longleftrightarrow\quad 0<|i-j|\leq 2
    \label{eq:app_zigzag_adjacency}
\end{align}
for all \(i\), \(j \in \{1, \dots, m\}\).
While we do not directly use the relationships between objects in \cref{eq:ppmodel}, they are perhaps helpful for keeping the graph structure in mind: the ordering on vertices in this equation is a \emph{perfect elimination ordering}, and the cliques are ``neighboring'' in their respective paths \(\Upsilon\) or \(\Gamma\) if their intersection is nonempty.

Our proof rests heavily on the following criteria for systematically identifying an induced path in the zigzag ladder graph.
\begin{proposition}[Subset-representation of induced paths] 
    \label{prop:subsetrep}
    A vertex subset \(P = \{\vv{\Pv{i}}\}_{i = 0}^{\ell}\) is an induced path in \(G\) if and only if
    \begin{enumerate}
        \item For all \(j \in \{\Pv{0} + i\}_{i = 1}^{\Pv{\ell} - \Pv{0}}\), we have \(\cv{j} \cap P \neq \varnothing\).
        \item For all \(j \in \{2, \dots, m - 1\}\), we have \(\kv{j} \nsubseteq P\).
    \end{enumerate}
    Furthermore, \(P\) is anchored iff \(\Pv{0} = 0\).
\end{proposition}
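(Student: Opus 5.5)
We prove both directions directly, using the adjacency rule \cref{eq:app_zigzag_adjacency} together with the fact that \(\vv{0} = \bsjs\) is adjacent only to \(\vv{1}\). These give one uniform property: every edge of \(G^*\) joins two vertices whose indices differ by \(1\) or \(2\). Throughout, we list the elements of \(P\) in increasing index order, \(\Pv{0} < \Pv{1} < \dots < \Pv{\ell}\).

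First we translate the two conditions into statements about consecutive gaps \(\Pv{i+1} - \Pv{i}\).

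Condition 1 says that no two consecutive indices \(j-1, j\) with \(\Pv{0} < j \le \Pv{\ell}\) are both missing from \(P\). Equivalently, every gap satisfies \(\Pv{i+1} - \Pv{i} \le 2\). When \(\Pv{0} = 0\), the case \(j = 1\) uses \(\cv{1} = \{\vv{1}\}\). It then forces \(\Pv{1} = 1\), so the first gap equals \(1\).

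Condition 2 says that no three consecutive indices \(j-1, j, j+1\) with \(j-1 \ge 1\) all lie in \(P\). Equivalently, two consecutive gaps both equal to \(1\) never occur among indices \(\ge 1\). The triple \(\{0,1,2\}\) is deliberately not excluded, since it is not a clique of \(G^*\).

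\emph{Necessity.} Suppose \(P\) induces a path. Then \(G^*[P]\) is connected. If some consecutive sorted gap exceeded \(2\), no edge of \(G^*\) could cross it, because edges have index length at most \(2\). That would disconnect \(P\), so condition 1 holds. If \(\Pv{0} = 0\), then \(\vv{0}\) must have a neighbour in \(P\), and its only neighbour is \(\vv{1}\); this gives the \(j = 1\) case. An induced path contains no triangle (for \(\ell \ge 3\) vertices; the degenerate cases with at most two vertices are immediate). Each \(\kv{j}\) with \(2 \le j \le m-1\) is a triangle, so condition 2 holds.

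\emph{Sufficiency.} Assume both conditions hold. By condition 1, each gap \(\Pv{i+1} - \Pv{i} \in \{1, 2\}\), so \(\Pv{i} \sim \Pv{i+1}\) for every \(i\); at \(\Pv{0} = 0\) the gap is \(1\), which matches the adjacency \(\vv{0} \sim \vv{1}\). Now take \(i+2 \le k\) and compare \(\Pv{i}\) with \(\Pv{k}\).
\begin{itemize}
\item If \(\Pv{i} \ge 1\), condition 2 rules out two successive gaps of \(1\), so \(\Pv{i+2} - \Pv{i} \ge 3\). Hence \(\Pv{i}\) and \(\Pv{k}\) are non-adjacent.
\item If \(\Pv{i} = 0\), then \(\Pv{k} \ge 2\). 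Since \(\vv{0}\) neighbours only \(\vv{1}\), the pair is again non-adjacent.
\end{itemize}
Therefore the induced subgraph \(G^*[P]\) is exactly the path \(\Pv{0} \md \Pv{1} \md \dots \md \Pv{\ell}\). Its path order coincides with the index order, as \cref{eq:hpdef} requires.

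\emph{Anchoring.} Since \(\vv{0}\) has the minimal index, it appears in \(P\) only as \(\Pv{0}\), and by the above that is an endpoint. So \(P\) is anchored iff \(\Pv{0} = 0\).

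The only delicate point is the boundary bookkeeping. One must check that the special cutset \(\cv{1}\) and the omission of \(\kv{1}\) and \(\kv{m}\) from condition 2 correctly handle the pendant vertex \(\bsjs\) and the chain ends. Concretely: \(\{0,1,2\}\) is allowed because it is not a triangle, while every genuine triangle of \(G\) is some \(\kv{j}\) with \(2 \le j \le m-1\). We would verify this explicitly, and the rest is routine.
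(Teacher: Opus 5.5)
Your proof is correct and follows essentially the paper's argument. Necessity comes from edges spanning index differences of at most two (equivalently, the \(\cv{j}\) are clique cutsets) together with the \(\kv{j}\) being triangles, and your gap formulation of sufficiency (gaps in \(\{1,2\}\), never two successive gaps of \(1\) above index \(0\)) carries the same content as the paper's observation that every non-final vertex of \(P\) has exactly one higher-index neighbour in \(P\). Your version is slightly more careful at the boundaries than the paper's, since you handle the anchor via \(\cv{1} = \{\vv{1}\}\) and check non-adjacency of non-consecutive vertices explicitly; the closing check that every triangle is some \(\kv{j}\) is not actually needed by your argument.
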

\begin{proof}
    Clearly, an induced path must satisfy both of these: it cannot be connected if 1 is not satisfied, since \(\cv{j}\) is a clique cutset, and it contains the triangle \(\kv{j}\) if 2 is not satisfied.
    Suppose a vertex subset \(P\) satisfies both criteria, then any vertex \(\vv{j} \in P \setminus \{\vv{\Pv{\ell}}\}\) has a unique neighbor in \(P\) with larger index.
    By condition 1, at least one of \(\cv{j + 2} = \{\vv{j + 1}, \vv{j + 2}\}\) must be in \(P\), and by condition 2, both cannot be in \(P\), or \(\kv{j + 1} = \{\vv{j}, \vv{j + 1}, \vv{j + 2}\} \subseteq P\).
    Thus, exactly one vertex in \(\cv{j + 2}\) belongs to \(P\), and since \(\cv{j + 2}\) is the entire set of vertices with larger index than \(j\) which are neighboring \(\vv{j}\), every such vertex \(\vv{j}\) has exactly one neighbor with larger index in \(P\).
\end{proof}

For a path \(P \in \mathcal{P}^*\) with \(|P| > 1\), we write \(\vv{P} \coloneqq \vv{P(\ell)}\) for its free endpoint, i.e. the vertex of \(P\) with the largest index.
We denote complements with an ``overline'': for  \(Y \subseteq V\) we write  \(\overline{Y} \coloneqq V {\setminus} Y\).

\subsection{Proof Strategy}

We first demonstrate the following lemma in \cref{sec:omega_reduction}, whose proof uses generating function identities.
\begin{lemma} 
    \label{lem:app_exact_product_sign}
    Let  \(u \in U_0\), where  \(U_0\) is a small region around  \(0\). 
    If
    \begin{align}
        \Omega(u) &\coloneqq \sum_{\substack{P,Q\in\mc{P}^* \setminus \{\{\bs{j}^*\}\}\\
        |P|+|Q|\ \mrm{odd}}} \Omega_{(P, Q)}(u) 
        \label{eq:app_structural_series} 
    \end{align}
    admits a path-product expansion, where
    \begin{align}
        \Omega_{(P, Q)}(u) &\coloneqq i^{|P|-|Q|}u^{|P|+|Q|-2} P_{G\setminus N[P]}(u^2) P_{G\setminus N[Q]}(u^2)[h_P,h_Q],\label{eq:app_structural_series_terms}
    \end{align}
    then \cref{thm:ffd_currents_are_paths} holds.
\end{lemma}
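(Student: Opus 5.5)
The plan is to encode all odd Krylov currents in the generating commutator \(\mathfrak{S}(u) = \e[\mc{C}(u)[\chi], \mc{C}(-u)[\chi]]\). I will show that, up to a nonvanishing scalar prefactor and manifestly path-supported terms, \(\mathfrak{S}(u)\) equals a constant multiple of \(\Omega(u)\). Equating coefficients of odd powers of \(u\) then yields the theorem.

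\emph{Step 1: relate \(\mathfrak{S}\) to the currents.} Insert \cref{eq:cayley_krylov_generating_function} for \(\mc{C}(\pm u)[\chi]\) and expand. Since \(\mathfrak{S}(-u) = -\mathfrak{S}(u)\), only odd powers survive. For odd \(\ell\), the coefficient of \(u^\ell\) consists of an interior sum \(4\sum_{j=1}^{\ell-1} (-1)^{\ell-j}[\eta^{(j)},\eta^{(\ell-j)}]\) together with terms proportional to \([\chi,\eta^{(\ell)}]\). Comparing with \cref{eq:current_operators}, the \(j=0\) and \(j=\ell\) terms of \(J^{(\ell)}\) are both \([\chi,\eta^{(\ell)}]\), using \((-1)^\ell=-1\). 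Hence \(J^{(\ell)}\) equals a fixed multiple of \([u^\ell]\mathfrak{S}(u)\) plus a multiple of \([\chi,\eta^{(\ell)}]\). By \cref{thm:krylov-restated}, \(\eta^{(\ell)}\) is a combination of anchored \(h_P\). For each such \(P\), \([\chi,h_P]\) is either zero or \(2\chi h_P = \pm 2 h_{P\setminus\{\bs{j}^*\}}\), using \(\chi^2=I\). This is a (non-anchored) path product, so these correction terms are path-supported. It therefore suffices to show that every odd coefficient of \(\mathfrak{S}(u)\) is path-supported.

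\emph{Step 2: closed form of \(\mc{C}(u)[\chi]\).} First, \(\chi\) lies in the span of the incognito modes and a zero mode \(\phi_0\); this is the content of the incognito-mode solution in \cref{sec:ffd_solution}. So \cref{lem:cayley_transform} applies, and
\begin{align}
\mc{C}(u)[\chi] = P_G(u^2)^{-1} T_G(-iu)\,\chi\, T_G(iu).
\end{align}
Next I expand this using the clique recursion for \(T_G\) at the simplicial vertex, where \(\chi\) anticommutes only with \(h_1\). Iterating the recursion along the perfect elimination ordering of \cref{eq:ppmodel}, and using \cref{prop:productoftransfer} on the subgraphs \(G\setminus N[P]\), I aim to obtain
\begin{align}
\mc{C}(u)[\chi] = P_G(u^2)^{-1}\Big(P_{G\setminus N[\bs{j}^*]}(u^2)\,\chi + 2\sum_{P\in\mc{P}^*\setminus\{\{\bs{j}^*\}\}} (iu)^{|P|-1} P_{G\setminus N[P]}(u^2)\, h_P\Big).
\end{align}
As a consistency check, this should reproduce \cref{eq:pathsupport} order by order in \(u\). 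Establishing this formula is the step I expect to be the main obstacle. If the direct recursion becomes unwieldy, I would instead derive it by transforming back from the incognito basis, as in the proof of \cref{thm:krylov-restated}, and matching the independence-polynomial coefficients.

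\emph{Step 3: assemble.} Substitute the expression from Step 2 into \(\mathfrak{S}\). Terms where \(P\) or \(Q\) equals \(\{\bs{j}^*\}\) are scalar series times \([\chi,h_Q]\), which are path-supported by Step 1. For the remaining terms, the phase from \(\mc{C}(u)\) is \(i^{|P|-1}\) and the phase from \(\mc{C}(-u)\) is \((-i)^{|Q|-1}\), so their product is \(i^{|P|-|Q|}\). The power of \(u\) is \(u^{|P|+|Q|-2}\).
\begin{itemize}
\item When \(|P|+|Q|\) is even, the terms for \((P,Q)\) and \((Q,P)\) carry equal coefficients. Since the commutator is antisymmetric, \([h_Q,h_P] = -[h_P,h_Q]\), these pairs cancel.
\item When \(|P|+|Q|\) is odd, the terms assemble exactly into \(4\,\Omega(u)\).
\end{itemize}
This gives \(P_G(u^2)^2\,\mathfrak{S}(u) = 4\,\Omega(u) + (\text{path-supported series})\).

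Because \(P_G(0)=1\), the factor \(P_G(u^2)^{-2}\) is an invertible scalar power series. Each coefficient of \(\mathfrak{S}\) is therefore a finite linear combination of coefficients of \(\Omega\) and of path-supported operators. Path support is closed under linear combination, so if \(\Omega(u)\) admits a path-product expansion, every odd coefficient of \(\mathfrak{S}\) does too, and by Step 1 so does \(J^{(\ell)}\).
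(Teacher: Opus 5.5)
Your proposal is correct and follows the paper's proof essentially step for step: the generating commutator $\mathfrak{S}(u)$, the Cayley-transform identity, the path expansion of $\mc{C}(u)[\chi]$ with weights $P_{G\setminus N[P]}(u^2)$, the parity split (which you justify more explicitly than the paper), and inversion of $P_G(u^2)^2$. The only difference is that the paper does not derive the Step~2 expansion you flag as the main obstacle; it imports it, together with the fact that $\chi$ lies in the mode span, from Ref.~\cite{fukai2026solvingb} (\cref{prop:zero_mode,prop:TchiT}). One small correction: the coefficient of $\chi$ inside your parentheses should be $2P_{G\setminus K}(u^2)-P_G(u^2)$, not $P_{G\setminus N[\bs{j}^*]}(u^2)$, since $\mc{C}(u)[\chi]$ contains $-2b_1^2u^2\chi$. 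Step~3 only uses that this term is a scalar series times $\chi$, so nothing downstream changes.
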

Note that each term \(\Omega_{(P, Q)}(u)\) is symmetric in the anchored induced paths \(P\) and \(Q\), though \(\Omega_{(P, Q)}(u) = 0\) unless \(h_P h_Q = -h_Q h_P\).
We say that a pair of such paths appearing in the sum in \cref{eq:app_structural_series} satisfies the \emph{contribution conditions}.  
\begin{definition}[Contribution Conditions]
    A pair of anchored induced paths \(P\), \(Q \in \mc{P}^* \setminus \{\{\bs{j}^*\}\}\) satisfies the contribution conditions if
    \begin{align}
        \begin{cases}
            h_P h_Q = -h_Q h_P & \\
            |P| + |Q| \text{ is odd} &
        \end{cases}.
        \label{eq:cc}
    \end{align} 
\end{definition}
In particular, we note the following
\begin{observation}
    A term \(\Omega_{(P, Q)}(u)\) is nonzero and included in the sum if and only if \((P, Q)\) satisfies the contribution conditions.
\end{observation}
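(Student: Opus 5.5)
The claim is a bookkeeping statement: $\Omega_{(P,Q)}(u)$, as a formal power series in $u$, is nonzero exactly when $[h_P,h_Q]\neq 0$ and $|P|+|Q|$ is odd, and it is included in the sum \cref{eq:app_structural_series} exactly when $|P|+|Q|$ is odd. I will prove the two conditions separately. First, the parity condition is precisely the summation constraint in \cref{eq:app_structural_series}. So a pair with $|P|+|Q|$ even never appears, and conversely every pair of anchored induced paths other than $\{\bs{j}^*\}$ with odd total length does index a term. Thus ``included in the sum'' is equivalent to the second contribution condition.

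Second, I analyse the factors of $\Omega_{(P,Q)}(u)$. Since $h_P$ and $h_Q$ are, up to the real weights $b$, products of elements of a graph Clifford algebra, they either commute or anticommute. If they commute, $[h_P,h_Q]=0$ and the term vanishes identically. If they anticommute, $[h_P,h_Q]=2h_Ph_Q$. This is a nonzero scalar multiple of a single monomial: products of Paulis or Majorana monomials are invertible, since each squares to a multiple of $I$, and the weights $b_{\bs j}$ are nonzero. The remaining prefactor $i^{|P|-|Q|}u^{|P|+|Q|-2}P_{G\setminus N[P]}(u^2)P_{G\setminus N[Q]}(u^2)$ is a nonzero formal power series. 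The monomial $u^{|P|+|Q|-2}$ has a nonnegative exponent because $|P|,|Q|\ge 1$ once the trivial path is excluded, and each independence polynomial has constant term $1$, coming from the empty independent set in \cref{def:indeppoly}. Their product is therefore invertible in $\mathbb{C}[[u]]$, and in particular nonzero for $u\in U_0$ near $0$.

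The one point needing care is reading ``nonzero'' correctly. For a fixed $u\in U_0$, the independence polynomials cannot vanish, because $U_0$ is a small neighbourhood of $0$ and $P_G(u^2)\to 1$. The factor $u^{|P|+|Q|-2}$ vanishes only at $u=0$, so nonzero should be understood as a formal series, or for $u\neq0$ in $U_0$. Putting the two parts together, $\Omega_{(P,Q)}(u)$ is a nonzero included term if and only if $h_Ph_Q=-h_Qh_P$ and $|P|+|Q|$ is odd, which are exactly the contribution conditions \cref{eq:cc}. I would also note, though it is not needed, that $\Omega_{(P,Q)}=\Omega_{(Q,P)}$: swapping $P$ and $Q$ flips the sign of the commutator and multiplies the phase by $i^{2(|Q|-|P|)}=-1$, because $|P|-|Q|$ is odd.
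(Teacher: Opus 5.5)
Your proof is correct and is essentially the paper's argument. Inclusion in the sum is exactly the parity constraint, by definition. When $h_Ph_Q=-h_Qh_P$, the term is a nonzero polynomial in $u$ (the independence polynomials have constant term $1$) times the nonzero operator $2h_Ph_Q$, while commuting pairs give $[h_P,h_Q]=0$. Your added remarks on reading ``nonzero'' as a formal series (or for $u\neq 0$) and on the symmetry $\Omega_{(P,Q)}=\Omega_{(Q,P)}$ are accurate refinements of the paper's one-line justification.
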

Necessity holds by definition, and sufficiency follows from the fact that if \((P, Q)\) satisfies the contribution conditions, then it evaluates to a nonzero polynomial multiplied by a nonzero operator, and so is necessarily nonzero.

If \(L \coloneqq P \oplus Q \notin \mc{P}\) for a given pair \((P, Q)\) satisfying the contribution conditions, then \(L\) fails to satisfy one of the conditions in \cref{prop:subsetrep}.
We first show in \cref{sec:pairwise_cancellations} that if \(L\) does not satisfy condition 1, then there is a unique additional such pair \((P', Q')\) for which \(\Omega_{(P', Q')}(u) = -\Omega_{(P, Q)}(u)\), and these terms cancel pairwise.
We then show in \cref{sec:recursive_cancellations} that if \(L\) satisfies condition 1 but does not satisfy condition 2, then there is a unique collection of terms \(\mc{C} \coloneqq \{(P_i, Q_i)\}_i\), that all do not satisfy condition 2, and include  \((P, Q)\), for which
\begin{align}
    \sum_i \Omega_{(P_i, Q_i)}(u) = 0.
\end{align}
This leaves only the terms for which \(L\) satisfies the conditions of \cref{prop:subsetrep}, and we finally apply \cref{lem:app_exact_product_sign} to prove the theorem.

In what follows, we may drop the explicit \(u\)-dependence for notational convenience.
Additionally, we define the following useful terminology.
\begin{definition}[Coloring of the symmetric difference]
    \label{def:coloring}
    For a pair \((P, Q)\) with \(L \coloneqq P \oplus Q\), each vertex \(\vv{j} \in L\) is either \emph{\(P\)-colored} (if \(\vv{j} \in P {\setminus} Q\)) or \emph{\(Q\)-colored} (if \(\vv{j} \in Q {\setminus} P\)); we call this partition of \(L\) the \emph{coloring} of \(L\). 
    Note this is not a proper coloring in the graph-theoretic sense, since two adjacent vertices of \(L\) may share a color.
\end{definition}

\subsection{Reduction to \texorpdfstring{ \(\Omega(u)\)}{omegau}}\label{sec:omega_reduction}

\subsubsection{The current generating function}

Recall that  \(\kryl{r}=\mc{L}^r[\chi]\), where  \(\mc{L}[X]=\frac{i}{2}[H,X]\).
The Cayley transform therefore has the formal expansion
\begin{equation}
    \mc{C}(u)[\chi] = (I+u\mc{L})(I-u\mc{L})^{-1}[\chi] = \chi+2\sum_{r\geq 1}u^r\kryl{r}.
    \label{eq:app_cayley_series_corrected}
\end{equation}
Define
\begin{equation}
    \mathfrak{S}(u) \coloneqq [\mc{C}(u)[\chi],\mc{C}(-u)[\chi]].
    \label{eq:app_omega_corrected}
\end{equation}
Clearly \(\mathfrak{S}(u)\) is odd in \(u\).
For odd  \(\ell\), the coefficient of  \(u^\ell\) in this commutator is
\begin{align}
    [u^\ell]\mathfrak{S}(u) &=-4[\chi,\kryl{\ell}] -4\sum_{j=1}^{\ell-1}(-1)^j [\kryl{j},\kryl{\ell-j}].
    \label{eq:app_omega_coefficient_intermediate}
\end{align}
On the other hand, because  \(\ell\) is odd,
\begin{align}
    -i J^{(\ell)} = \sum_{j=0}^{\ell}(-1)^j [\kryl{j},\kryl{\ell-j}] =2[\chi,\kryl{\ell}] +\sum_{j=1}^{\ell-1}(-1)^j [\kryl{j},\kryl{\ell-j}].
\end{align}
Using \cref{eq:current_operators}, we obtain the exact generating identity
\begin{equation} \label{eq:app_omega_currents_corrected}
    \mathfrak{S}(u) = 4\sum_{\substack{\ell\geq1\\ \ell\ \mrm{odd}}} u^{\ell} \left([\chi, \eta^{(\ell )}] + i J^{(\ell)}\right)
\end{equation}
By \cref{thm:krylov-restated}, every  \(\kryl{\ell}\) is a linear combination of anchored path products.
Since  \(\chi\) anticommutes only with \(\hv{1}\), commuting an anchored path product with  \(\chi\) removes the simplicial factor, up to a scalar.
Hence every  \([\chi,\kryl{\ell}]\) admits a path-product expansion in  \(G\).
The idea is to prove that  \(\mathfrak{S}(u)\) is path-supported, and this proves that the Krylov currents are path-supported by equating coefficients.

\subsubsection{Reduction to a two-path identity}
In the following we assume that the roots of the independence polynomial of the FFD model are distinct. 
As in Ref.~\cite{fendley2019free}, if for some vertex weights  \(\bs{b}\) some of the roots are not distinct, one may change infinitesimally as  \(\bs{b} \mapsto \bs{b} + \bs{\delta b}\) so that the roots become distinct, and then take the limit  \(\bs{\delta b} \to \bs{0}\) after the following analysis.

We first use a result of ref.~\cite{fukai2026solvingb} to see that the simplicial mode \(\chi\) satisfies the conditions of \cref{lem:cayley_transform}.

\begin{proposition}[Theorem 3.14 and Corollary 3.15, \cite{fukai2026solvingb}]
    \label{prop:zero_mode}
    For \(G\) a connected ECF graph, we have for a fixed simplicial mode \(\chi\)
    \begin{equation}
        \chi = \phi_0 + \sum_{j=1}^{\alpha(G)} \frac{C_j}{u_j} \e(\psi_j^\dagger + \psi_j), \label{eq:chi_eigenmodes}
    \end{equation}
    where \(\phi_0\) is the zero mode, for which
    \begin{equation}
        [H, \phi_0]=0 \qquad \text{and} \qquad \{\psi_j, \phi_0\} = \{\psi_j^\dagger, \phi_0\} = 0 \qquad \text{for all }j=1,\ldots,\alpha(G).
    \end{equation}
    Additionally, we have  \(\phi_0 = 0\) if  \(\alpha(G\setminus K) < \alpha(G)\).
\end{proposition}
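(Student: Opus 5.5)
The plan is to work entirely inside the Krylov space $\mc{K}\coloneqq\mathrm{span}\{\kryl{k}\}_{k\ge0}$. By \cref{thm:krylov-restated} it is finite-dimensional, invariant under $\mc{L}$, and carries the positive-definite form $\langle X,Y\rangle I\coloneqq\tfrac12\{X,Y\}$ (given by $\mbf{M}^G$). The generating function is the scalar Stieltjes-type function $\langle\chi,(I-u\mc{L})^{-1}[\chi]\rangle$. Reading off its partial-fraction decomposition should give both the eigenmode expansion \cref{eq:chi_eigenmodes} and the criterion for the zero mode $\phi_0$ to vanish.

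\emph{Step 1 (spectral decomposition).} Since $\mc{L}=\tfrac{i}{2}\ad_H$ maps Hermitian operators to Hermitian operators, the Jacobi identity shows that $\mc{L}$ is skew-adjoint with respect to $\langle\cdot,\cdot\rangle$ on the real span of the Krylov modes. Hence $\mc{L}|_{\mc{K}}$ is diagonalizable with spectrum in $i\mathbb{R}$. The eigenspaces for $\pm i\varepsilon_j\neq0$ lie in the span of $\psi_j$ and $\psi_j^\dagger$ by \cref{eq:eigenmodes}. I would use the ECF solution \cref{eq:diaghamiltonian} together with the distinct-root assumption to identify these eigenspaces, up to normalization, with $\psi_j$ and $\psi_j^\dagger$, whose eigenvalues are $\varepsilon_j\propto1/u_j$. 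Decomposing $\chi$ along this orthogonal eigenbasis gives $\chi=\phi_0+\sum_j(c_j\psi_j^\dagger+\bar c_j\psi_j)$, where $\phi_0$ is the kernel component. Here $[H,\phi_0]=0$ by construction, and the anticommutation $\{\phi_0,\psi_j\}=\{\phi_0,\psi_j^\dagger\}=0$ is exactly orthogonality of distinct eigenspaces of a skew-adjoint map. Hermiticity of $\chi$ forces the two coefficients to be conjugate, and the residual phase of $\psi_j$ can be fixed so that $c_j$ is real; I write it as $C_j/u_j$.

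\emph{Step 2 (coefficients via the transfer operator).} To obtain $C_j$, and more importantly the zero-mode weight, I would compute $F(u)\coloneqq\langle\chi,(I-u\mc{L})^{-1}[\chi]\rangle$. By \cref{lem:cayley_transform}, or directly from $\psi_j\propto T(-u_j)\chi T(u_j)$, this reduces to evaluating $\tfrac12\{\chi,T_G(-u)\chi T_G(u)\}$ divided by $P_G(-u^2)$, using \cref{prop:productoftransfer}. To evaluate it, apply the clique recursion $T_G=T_{G\setminus K}-u\sum_{\bsj\in K}h_{\bsj}T_{G\setminus N[\bsj]}$. The mode $\chi$ commutes with $T_{G\setminus K}$ and anticommutes with each $h_{\bsj}$, $\bsj\in K$. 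Cross terms then combine by \cref{eq:independence_polynomial_recursion}, and I expect
\[
F(u)=\frac{P_{G\setminus K}(-u^2)}{P_G(-u^2)}\quad(\text{even part}).
\]

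\emph{Step 3 (partial fractions and the zero mode).} The partial-fraction expansion of $P_{G\setminus K}(x)/P_G(x)$ over the simple roots $x=-u_j^2$ gives $C_j^2$ as residues. Its constant part, the limit as $x\to\infty$, equals $\|\phi_0\|^2$, because the kernel of $\mc{L}$ contributes a $u$-independent term. Since $P_G$ and $P_{G\setminus K}$ have positive leading coefficients and degrees $\alpha(G)$ and $\alpha(G\setminus K)$, this constant vanishes exactly when $\alpha(G\setminus K)<\alpha(G)$. Positive-definiteness of the form then gives $\phi_0=0$.

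The main obstacle is Step 2. The anticommutator $\{\chi,T_G(-u)\chi T_G(u)\}$ must reduce to a scalar, and that requires carefully controlling the cross terms $h_{\bsj}T_{G\setminus N[\bsj]}$ for $\bsj\in K$. Here the simplicial and claw-free structure must be used, as in the proof of \cref{prop:productoftransfer}, to show that the operator-valued pieces cancel.
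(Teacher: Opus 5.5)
There is a genuine gap. It sits mainly in Step 1, not where you locate it.

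\textbf{Step 1 assumes the conclusion.} Equation \cref{eq:eigenmodes} says that $\psi_j$ is an eigenvector of $\mc{L}$ with eigenvalue $i\varepsilon_j$. It does not give the converse. You have not shown that the eigenvectors of $\mc{L}|_{\mc{K}}$ are the $\psi_j$, or even that the nonzero spectrum of $\mc{L}|_{\mc{K}}$ lies in $\{\pm i\varepsilon_j\}$.

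On the full operator algebra the $i\varepsilon_j$-eigenspace of $\mc{L}$ is hugely degenerate. The incognito modes generate a CAR algebra of dimension $4^{\alpha(G)}$, far smaller than the full algebra (for FFD, $\alpha(G)\approx n/3$). So there are many unitaries $\Gamma$ commuting with all the $\psi_k$, and hence with $H$. For any such $\Gamma$, $\psi_j\Gamma$ is again an eigenvector with eigenvalue $i\varepsilon_j$ and even satisfies the same CAR as $\psi_j$. Operators such as $\psi_j[\psi_k,\psi_k^\dagger]$ also qualify, as do multi-mode products whenever $\varepsilon_j$ is a signed sum of other $\varepsilon$'s. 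The distinct-root assumption excludes none of these.

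Showing that the spectral components of $\chi$ are the incognito modes themselves is exactly the content of \cref{eq:chi_eigenmodes}. For the same reason, $\{\phi_0,\psi_j\}=0$ is only obtained for the Krylov eigenvectors, not for the $\psi_j$.

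\textbf{Step 2 cannot repair this as written.} \Cref{lem:cayley_transform} assumes its argument lies in the span of $\psi_j$, $\psi_j^\dagger$ and a zero mode, which is the conclusion here; the paper invokes \cref{prop:zero_mode} precisely to verify that hypothesis for $\chi$. The relation $\psi_j\propto T(-u_j)\chi T(u_j)$ holds only at the roots, and says nothing about $\mc{K}$ unless you already know $\psi_j\in\mc{K}$.

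There is also a circularity in your input to Step 1. The proof of \cref{thm:krylov-restated} in \cref{app:induced-paths} starts from \cref{prop:zero_mode}. You should therefore cite the original argument of \cite{chapman2023unified}, which also covers ECF graphs with odd holes rather than only chordal ones.

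\textbf{What is missing} is an identity valid for all $u$ tying the Krylov resolvent to the transfer operator:
\[
(I+iu\mc{L})^{-1}[\chi]=\tfrac12\bigl(\chi+P_G(-u^2)^{-1}T(-u)\chi T(u)\bigr).
\]
This is Theorem 4.1 of \cite{fukai2026solvingb}, the first equality in \cref{prop:TchiT}. Applying $I+iu\mc{L}$ to the right-hand side and using $[H,T(u)]=0$ reduces it to a polynomial operator identity in $u$. With it:
\begin{enumerate}[(i)]
    \item the residues at $u=\pm u_j$ are proportional to $T(\mp u_j)\chi T(\pm u_j)$, i.e.\ to $\psi_j$ and $\psi_j^\dagger$;
    \item the limit $u\to\infty$ is $\phi_0$;
    \item your Step 3 then goes through.
\end{enumerate}
With your conventions, skew-adjointness puts the poles at imaginary $u$, so $\langle\chi,(I-u\mc{L})^{-1}[\chi]\rangle=P_{G\setminus K}(u^2)/P_G(u^2)$ rather than the version with $-u^2$. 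This does not affect the degree argument.

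\textbf{A route that avoids the Krylov space for the first part.} Take the CAR for the incognito modes and \cref{eq:diaghamiltonian} as given. You then only need $\{\chi,\psi_j\}$ and $\{\chi,\psi_j^\dagger\}$ to be scalars. This is immediate from the path expansion in \cref{prop:TchiT}, since every non-$\chi$ term there anticommutes with $\chi$. It is also the computation you flagged as your main obstacle. Define
\[
\phi_0\coloneqq\chi-\sum_j\bigl(\{\chi,\psi_j^\dagger\}\psi_j+\{\chi,\psi_j\}\psi_j^\dagger\bigr).
\]
By the CAR, $\phi_0$ anticommutes with every incognito mode, so it commutes with every bilinear in them and hence with $H$. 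Moreover $\phi_0^2=\bigl(1-\sum_j|c_j|^2\bigr)I$, where $c_jI=\{\chi,\psi_j\}$. The vanishing of $\phi_0$ is therefore a sum rule for the $|c_j|^2$, and that sum rule is where your partial-fraction and degree argument belongs.
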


\begin{corollary}\label{cor:zero_mode_vanishes}
    If every maximum independent set of  \(G\) contains an element of the simplicial clique  \(K\), then  \(\phi_0 = 0\).
\end{corollary}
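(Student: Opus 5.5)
The plan is to reduce the corollary directly to the last clause of \cref{prop:zero_mode}, which states that \(\phi_0 = 0\) whenever \(\alpha(G\setminus K) < \alpha(G)\). So it is enough to show that the hypothesis, namely that every maximum independent set of \(G\) meets the simplicial clique \(K\), forces the strict inequality \(\alpha(G\setminus K) < \alpha(G)\). This is a purely graph-theoretic statement and needs no operator input.

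First, I would record the trivial inequality \(\alpha(G\setminus K) \le \alpha(G)\). This holds because \(G\setminus K\) is the induced subgraph \(G[\overline{K}]\). Any independent set of \(G\setminus K\) is therefore a vertex subset of \(G\) with no edges of \(G\) inside it, and so it is an independent set of \(G\).

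Second, I would argue by contradiction for strictness. Suppose \(\alpha(G\setminus K) = \alpha(G)\), and pick an independent set \(S\) of \(G\setminus K\) of order \(\alpha(G\setminus K)\). By the first step, \(S\) is independent in \(G\), and it has order \(\alpha(G)\), so it is a maximum independent set of \(G\). By construction \(S \subseteq \overline{K}\), so \(S \cap K = \varnothing\). This contradicts the hypothesis that every maximum independent set of \(G\) contains an element of \(K\). Hence \(\alpha(G\setminus K) < \alpha(G)\), and \cref{prop:zero_mode} gives \(\phi_0 = 0\).

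There is no real obstacle here. The only point to check is that the standing assumptions of \cref{prop:zero_mode} hold, namely that \(G\) is a connected ECF graph and \(\chi\) is a fixed simplicial mode with respect to \(K\); these are inherited verbatim. In the intended application, \(G\) is the zigzag ladder and \(K = \{\vv{1}\}\), and the hypothesis would then be checked separately, since it depends on \(m\).
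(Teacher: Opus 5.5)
Your proposal is correct and matches the paper's proof: the paper simply notes that the hypothesis forces \(\alpha(G\setminus K) < \alpha(G)\) and invokes the last clause of \cref{prop:zero_mode}, and you have written out the elementary justification of that inequality. As a minor aside, the paper applies this corollary in \cref{sec:phase_diagram_details} with \(K\) the full first triangle of the zigzag ladder, where the hypothesis holds for \(m = 3N\), not with \(K=\{\vv{1}\}\).
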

\begin{proof}
    In that case we have  \(\alpha(G\setminus K) < \alpha(G)\), and the claim is the last statement of \cref{prop:zero_mode}.
\end{proof}

\begin{corollary}\label{prop:l1span}
    Assume that the roots  \(\{u_j\}_j\) of  \(P_G(-u^2)\) are distinct, and set  \(\varepsilon_j = u_j^{-1}\).
    The Krylov operators are linear combinations of the fermion modes  \(\{\psi_j, \psi_j^\dagger\}_j\) (except that the simplicial mode might have a non-zero zero mode). 
    More specifically,
    \begin{equation}
        \kryl{k} = \mc{L}\e[\kryl{k-1}] = \delta_{k,0}\,\phi_0 + \frac{i^k}{4} \sum_{j=1}^{\alpha(G)} N_j L_{0j}\, \varepsilon_j^k \e((-1)^k \psi_j^{\dagger} + \psi_j)\;,
    \end{equation}
    where  \(N_j\) is the normalisation of \cref{def:incognito} and  \(L_{0j}\) is the constant coefficient of the  \(j\)th Lagrange polynomial, as in \cref{sec:phase_diagram_details}.
    In particular,  \(\mc{L}[\chi] = -\frac{i}{4} \sum_{j} \varepsilon_j N_jL_{0j} (\psi_j^{\dagger} - \psi_j)\).
\end{corollary}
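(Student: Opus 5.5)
The proof is a direct computation: expand $\chi$ in the eigenmode basis using \cref{prop:zero_mode}, and apply $\mc{L}$ to each term using the eigenmode relations \cref{eq:eigenmodes}. The only nontrivial bookkeeping is matching the coefficients $C_j/u_j$ of \cref{eq:chi_eigenmodes} with the normalisation $N_j L_{0j}/4$.

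First, I compute the action of $\mc{L}=\frac{i}{2}\ad_H$ on the basis elements. From \cref{eq:eigenmodes},
\begin{align}
    \mc{L}[\psi_j] = \tfrac{i}{2}\cdot 2\varepsilon_j\psi_j = i\varepsilon_j\psi_j, \qquad \mc{L}[\psi_j^\dagger] = -i\varepsilon_j\psi_j^\dagger,
\end{align}
and $\mc{L}[\phi_0]=0$ by \cref{prop:zero_mode}. Each $\psi_j$ and $\psi_j^\dagger$ is therefore an eigenvector of the linear map $\mc{L}$, with eigenvalues $\pm i\varepsilon_j$. Hence
\begin{align}
    \mc{L}^k\bigl[\psi_j^\dagger+\psi_j\bigr] = i^k\varepsilon_j^k\bigl((-1)^k\psi_j^\dagger+\psi_j\bigr),
\end{align}
and $\mc{L}^k[\phi_0]=\delta_{k,0}\phi_0$. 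Linearity of $\mc{L}^k$ applied to \cref{eq:chi_eigenmodes} gives
\begin{align}
    \kryl{k}=\delta_{k,0}\phi_0+i^k\sum_j \frac{C_j}{u_j}\varepsilon_j^k\bigl((-1)^k\psi_j^\dagger+\psi_j\bigr).
\end{align}
This already establishes the structural claim: every $\kryl{k}$ lies in the span of the fermion modes, plus the zero mode at $k=0$ only. The recursion $\kryl{k}=\mc{L}[\kryl{k-1}]$ is just the definition.

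Second, I identify the coefficient $C_j/u_j$ with $\tfrac14 N_jL_{0j}$. I would do this by recalling the explicit form of $C_j$ in Theorem~3.14 of Ref.~\cite{fukai2026solvingb}, and the incognito-mode normalisation $N_j$ of \cref{def:incognito}. There, $\chi$ is projected onto $\psi_j=N_j^{-1}T(-u_j)\chi T(u_j)$ via the Lagrange interpolation basis of $P_G(-u^2)$ at the distinct roots $u_j$. Distinctness is the standing assumption, which ensures the Lagrange polynomials, and hence $L_{0j}$, are well-defined. Writing $\varepsilon_j=u_j^{-1}$, the factor $1/u_j$ combines with the eigenvalue powers, and $C_j\varepsilon_j$ should equal $N_jL_{0j}/4$. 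This matching of conventions is the one step I expect to need care: the factor $4$ and the sign and phase conventions must agree with those later fixed in \cref{sec:phase_diagram_details}. I would check it by comparing the $k=0$ case of the formula against \cref{eq:chi_eigenmodes} directly.

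Finally, the special case $k=1$ is immediate: the zero-mode term drops out and $(-1)^1\psi_j^\dagger+\psi_j=-(\psi_j^\dagger-\psi_j)$. This gives
\begin{align}
    \mc{L}[\chi]=-\tfrac{i}{4}\sum_j\varepsilon_jN_jL_{0j}\bigl(\psi_j^\dagger-\psi_j\bigr),
\end{align}
as claimed.
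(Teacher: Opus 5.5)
Your first step is exactly the paper's argument and is correct: you expand $\chi$ via \cref{prop:zero_mode}, use $\mc{L}[\psi_j]=i\varepsilon_j\psi_j$, $\mc{L}[\psi_j^\dagger]=-i\varepsilon_j\psi_j^\dagger$ and $\mc{L}[\phi_0]=0$, and apply $\mc{L}^k$ by linearity. Your $k=1$ specialisation is also correct.

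The gap is the coefficient identity $c_j\coloneqq C_j/u_j=N_jL_{0j}/4$. This is the only part of the statement that is not pure linear algebra, and you only assert that it ``should'' hold. The check you propose is circular. The $k=0$ case of the claimed formula is literally \cref{eq:chi_eigenmodes} with $C_j/u_j$ replaced by $N_jL_{0j}/4$, so comparing the two restates the identity rather than verifying it. Deferring to an explicit formula for $C_j$ in Ref.~\cite{fukai2026solvingb}, which you neither state nor compute, does not close it either. A minor point: $C_j/u_j=C_j\varepsilon_j$ is a $k$-independent constant. It is not ``combined with the eigenvalue powers''; that would turn $\varepsilon_j^k$ into $\varepsilon_j^{k+1}$.

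What is missing is an independent evaluation of $T(-u_j)\chi T(u_j)=N_j\psi_j$ from \cref{def:incognito}. The paper does this via the polynomial expansion of $\psi_j,\psi_j^\dagger$ in $u_j$, identifying $L_{0j}$ as an entry of the inverse Vandermonde matrix. A quick way to carry it out is to insert the expansion of $\chi$ and use the product form in \cref{prop:diagonal_transfer}. Three facts are needed:
\begin{itemize}
\item $[\psi_k,\psi_k^\dagger]^2=I$;
\item $[\psi_k,\psi_k^\dagger]$ commutes with $\psi_j$ and $\psi_j^\dagger$ for $k\neq j$;
\item $[\psi_j,\psi_j^\dagger]\psi_j=\psi_j=-\psi_j[\psi_j,\psi_j^\dagger]$ and $[\psi_j,\psi_j^\dagger]\psi_j^\dagger=-\psi_j^\dagger=-\psi_j^\dagger[\psi_j,\psi_j^\dagger]$.
\end{itemize}
With these one finds
\begin{align*}
T(-u)\,\psi_j\,T(u) &= (1+u\varepsilon_j)^2\prod_{k\neq j}\bigl(1-u^2\varepsilon_k^2\bigr)\,\psi_j,\\
T(-u)\,\psi_j^\dagger\,T(u) &= (1-u\varepsilon_j)^2\prod_{k\neq j}\bigl(1-u^2\varepsilon_k^2\bigr)\,\psi_j^\dagger .
\end{align*}
Since $\phi_0$ anticommutes with all modes, it commutes with every $[\psi_k,\psi_k^\dagger]$, so $T(-u)\phi_0T(u)=P_G(-u^2)\phi_0$ by \cref{prop:productoftransfer}.

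At $u=u_j=\varepsilon_j^{-1}$ every term vanishes except the $\psi_j$ term, giving $N_j=4c_j\prod_{k\neq j}(1-u_j^2/u_k^2)$. The factor $\prod_{k\neq j}(1-u_j^2/u_k^2)^{-1}$ is precisely the constant coefficient $L_{0j}$ of the $j$th Lagrange polynomial on the nodes $-u_k^2$, the roots of $P_G$; this is where distinctness of the roots enters. Hence $c_j=N_jL_{0j}/4$, and your linearity argument then yields the full statement.
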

\begin{proof}
    By \cref{prop:zero_mode},  \(\chi = \phi_0 + \sum_j c_j (\psi_j^\dagger + \psi_j)\) with  \(c_j = C_j/u_j\), which in the normalisation of \cref{def:incognito} equals  \(N_jL_{0j}/4\).
    Since  \([H, \phi_0] = 0\), \cref{eq:eigenmodes} gives  \(\mc{L}[\phi_0] = 0\),  \(\mc{L}[\psi_j] = i\varepsilon_j\psi_j\) and  \(\mc{L}[\psi_j^\dagger] = -i\varepsilon_j\psi_j^\dagger\).
    Applying  \(\mc{L}^k\) to the expansion of  \(\chi\) gives the claim.
    The claim that the coefficient is proportional to the Lagrange coefficient follows straightforwardly from the polynomial expansions of \(\psi_j\) and \(\psi_j^{\dagger}\), together with the defining property of this coefficient as an element of the inverse Vandermonde matrix.
\end{proof}

Using  \(V(u)=P_G(u^2)^{-1/2}T(iu)\) in \cref{lem:cayley_transform} thus gives
\begin{equation}
    \mc{C}(u)[\chi] = P_G(u^2)^{-1}T(-iu)\chi T(iu).
    \label{eq:app_cayley_transfer_corrected}
\end{equation}
Consequently,
\begin{equation}
    P_G(u^2)^2\mathfrak{S}(u) = [T(-iu)\chi T(iu),T(iu)\chi T(-iu)].
    \label{eq:app_omega_transfer_corrected}
\end{equation}
To obtain a path-product expansion of \(\mc{C}(u)[\chi]\), we apply the following additional results, restated in our notation
\begin{proposition}[Theorems 3.8 and 4.1, \cite{fukai2026solvingb}]\label{prop:TchiT}
    For \(G\) a connected ECF graph, we have for a fixed simplicial mode \(\chi\)
    \begin{align}
        &P_G(-u^2)^{-1}T(-u) \chi T(u) \overset{4.1}{=} 2\left[\sum_{k = 0}^{\infty} (-i u)^k \eta^{(k)} \right] - \chi \\ 
        &\mbox{\hspace{5mm}}\overset{3.8}{=} \left(2 P_G(-u^2)^{-1} P_{G\setminus K}(-u^2) - 1 \right) \chi + 2\sum_{P \in \mathcal{P}^* \setminus \{\{\bs{j}^*\}\}} (-u)^{|P| - 1} P_G(-u^2)^{-1} P_{G\setminus N[P]}(-u^2) \hv{P} 
    \end{align}
\end{proposition}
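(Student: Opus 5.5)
The proposition has two equalities, and I would prove them separately: the first (labelled 4.1) analytically from the Cayley transform, the second (labelled 3.8) combinatorially from the clique recursion for the transfer operator.

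\textbf{Equality 4.1.} By \cref{prop:zero_mode}, $\chi$ lies in the span of $\phi_0$ and $\{\psi_j,\psi_j^\dagger\}_j$. Hence \cref{lem:cayley_transform} applies with $X=\chi$ and gives
\begin{align}
\mc{C}(v)[\chi]=P_G(v^2)^{-1}T(-iv)\chi T(iv)
\end{align}
for real $v$ near $0$. I would then check that both sides are rational functions of $v$ with operator coefficients. The left side is rational because, by \cref{prop:l1span}, $\mc{L}$ acts diagonally with eigenvalues $0,\pm i\varepsilon_j$ on that span. The right side is rational because $T$ is polynomial in its argument and $P_G(v^2)$ is a polynomial with $P_G(0)=1$. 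Two rational functions that agree on an interval agree identically, so I may substitute $v=-iu$. Finally I would use the series \cref{eq:cayley_krylov_generating_function}, $\mc{C}(-iu)[\chi]=\chi+2\sum_{k\ge1}(-iu)^k\kryl{k}=2\sum_{k\ge0}(-iu)^k\kryl{k}-\chi$, which gives exactly the first equality.

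\textbf{Equality 3.8.} Multiply through by $P_G(-u^2)$. The claim becomes
\begin{align}
T(-u)\chi T(u)=\bigl(2P_{G\setminus K}(-u^2)-P_G(-u^2)\bigr)\chi+2\sum_{P}(-u)^{|P|-1}P_{G\setminus N[P]}(-u^2)h_P .
\end{align}
The plan has three steps.

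\begin{enumerate}[(a)]
\item Expand with respect to the simplicial clique $K=N(\bs{j}^*)$ using $T_G(u)=T_{G\setminus K}(u)-u\sum_{\bsj\in K}h_{\bsj}T_{G\setminus N[\bsj]}(u)$. Since $\chi$ commutes with every term outside $K$ and anticommutes with the terms in $K$, moving $\chi$ to the left through $T(-u)$ gives $T(-u)\chi=\chi\,\widetilde T(u)$. Here $\widetilde T(u)=T_{G\setminus K}(-u)-u\sum_{\bsj\in K}h_{\bsj}T_{G\setminus N[\bsj]}(-u)$, that is, $T(-u)$ with the sign of its $K$-part flipped.

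\item Write $\widetilde T(u)=2T_{G\setminus K}(-u)-T_G(-u)$, using $T(-u)=T_{G\setminus K}(-u)+u\sum_{\bsj\in K}h_{\bsj}T_{G\setminus N[\bsj]}(-u)$. Then \cref{prop:productoftransfer} handles the second piece: $\chi T_G(-u)T_G(u)=P_G(-u^2)\chi$, which produces the $-P_G\chi$ term.

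\item Compute $2\chi\,T_{G\setminus K}(-u)T_G(u)$. Expanding $T_G(u)$ once more by the recursion gives $T_{G\setminus K}(-u)T_{G\setminus K}(u)=P_{G\setminus K}(-u^2)$, which is the $2P_{G\setminus K}\chi$ term. It also gives cross terms $-u\,\chi\,T_{G\setminus K}(-u)\,h_{\bsj}\,T_{G\setminus N[\bsj]}(u)$ for $\bsj\in K$.
\end{enumerate}

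The key observation for the cross terms is the following. Take $P=\bs{j}^*\md\bsj$, the anchored path of length one ending at $\bsj\in K$. Then $h_P=\chi h_{\bsj}$ plays the role of a simplicial mode for $G\setminus K$, with simplicial clique $N(\bsj)\setminus K$. This needs the subgraph to be ECF, which is hereditary, and chordality/claw-freeness to guarantee that $N(\bsj)\setminus K$ is a clique. Chordality here plays the role of simpliciality.

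I would therefore prove a more general statement by induction on $|V|$, for an arbitrary anchored path $P$ and the operator
\begin{align}
T_{G\setminus N[P\setminus \vv{P}]}(-u)\,h_P\,T_{G\setminus N[P\setminus \vv{P}]}(u) .
\end{align}
Applying steps (a)--(c) to this operator peels off one vertex $\bsk\in N(\vv{P})$ at a time, extending $P$ to $P\md\bsk$ and contributing a factor of $-u$. The extension $P\md\bsk$ is again an induced path precisely because $\bsk$ avoids $N[P\setminus\vv{P}]$. Summing over the recursion should telescope so that each anchored induced path appears with coefficient $2(-u)^{|P|-1}P_{G\setminus N[P]}(-u^2)$.

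\textbf{Expected main obstacle.} The hard part is the bookkeeping in step (c). One has to check that the factors $T_{G\setminus K}(-u)$ to the left of $h_{\bsj}$ and $T_{G\setminus N[\bsj]}(u)$ to the right recombine, via the product relation on the correct subgraph, into exactly $P_{G\setminus N[P]}$. No stray terms may be left over from vertices of $K\setminus\{\bsj\}$; these commute with $h_P$ only up to sign, and the signs must cancel. I would first verify the case $|K|=1$, which is the FFD setting with $\chi$ anticommuting only with $h_1$, by explicit induction along the ladder, and only then treat general $K$.
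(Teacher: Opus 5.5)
Your route to the first equality is how the paper itself obtains \cref{eq:app_cayley_transfer_corrected}, which is that equality after $v=-iu$. The one caveat is that \cref{lem:cayley_transform} is stated only for $H_{\mathrm{FFD}}$, because its proof uses \cref{prop:diagonal_transfer}. For a general connected ECF graph you therefore need the analogue of that factorization of $T$ (cf.\ \cite{elman2021free}).

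The paper does not prove the second equality; it quotes it from \cite{fukai2026solvingb}, so here you are giving an independent argument. Your steps (a)--(c) are correct. The inductive step, which you yourself flag as the main obstacle, does not close as you set it up.

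Put $G_1\coloneqq G\setminus K$, $\chi_1\coloneqq\chi h_{\bsj}$ and $K_1\coloneqq N(\bsj)\setminus K$, so that $N[\bsj]=K\cup K_1$. The cross term in (c) is then $-2u\,T_{G_1}(-u)\,\chi_1\,T_{G_1\setminus K_1}(u)$. This is an asymmetric sandwich: the right-hand transfer operator lives on the smaller graph. Your generalized statement, by contrast, concerns the symmetric sandwich $T_{G_1}(-u)\chi_1T_{G_1}(u)$. Applying (a)--(c) to a symmetric sandwich just produces asymmetric ones again, with a factor $2$ and an extra scalar multiple of $\chi_1$. So the asserted telescoping to $2(-u)^{|P|-1}P_{G\setminus N[P]}(-u^2)$ is exactly what is missing. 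The product relation alone cannot collapse an asymmetric sandwich, since it still contains every longer path through $\bsj$.

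The missing link is the right-hand mirror of your step (a). Among the terms of $G_1$, $\chi_1$ anticommutes exactly with those in the clique $K_1$. Hence $\chi_1T_{G_1}(u)=\bigl(2T_{G_1\setminus K_1}(u)-T_{G_1}(u)\bigr)\chi_1$, and the product relation on $G_1$ gives
\begin{equation*}
T_{G_1}(-u)\,\chi_1\,T_{G_1\setminus K_1}(u)=\tfrac12\Bigl(T_{G_1}(-u)\,\chi_1\,T_{G_1}(u)+P_{G_1}(-u^2)\,\chi_1\Bigr).
\end{equation*}
Now insert the proposition for the smaller instance $(G_1,K_1,\chi_1)$. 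The $\tfrac12$ and the added $P_{G_1}\chi_1$ remove the factor $2$ and the term $-P_{G_1}\chi_1$. What remains is $P_{G\setminus N[\bsj]}(-u^2)\chi_1+\sum_{P'}(-u)^{|P'|-1}P_{G_1\setminus N[P']}(-u^2)h_{P'}$, summed over nontrivial anchored induced paths $P'$ of $G_1^*$ (anchor adjacent to $K_1$).

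Multiply by $-2u$ and use the bijection $P\leftrightarrow P'$ between anchored induced paths of $G^*$ with $P(1)=\bsj$ and those of $G_1^*$. Under it, $h_P=h_{P'}$, $|P|=|P'|+1$ and $G\setminus N[P]=G_1\setminus N[P']$. Every such path then receives exactly the coefficient $2(-u)^{|P|-1}P_{G\setminus N[P]}(-u^2)$. The proof is thus a plain induction on $|V|$ of the proposition itself, and no separate statement about general paths is needed.

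A few points about that induction:
\begin{enumerate}[(i)]
\item You need $K_1$ to be a simplicial clique of $G_1$. It is a clique because $K$ is simplicial. It is simplicial in $G_1$ by claw-freeness: a vertex $\bsk\in K_1$ with two nonadjacent neighbours outside $N[\bsj]$ would be the centre of a claw together with $\bsj$.
\item Chordality is neither implied by ECF (odd holes are allowed) nor needed.
\item $i\chi_1/b_{\bsj}$ is a genuine simplicial mode, and the identity is linear in $\chi$, so the normalization of $\chi_1$ is harmless.
\item $G_1$ may be disconnected, but this equality uses only the clique recursion and \cref{prop:productoftransfer}, not connectivity.
\item Your worry about stray terms from $K\setminus\{\bsj\}$ is moot, because those vertices appear in neither transfer operator.
\end{enumerate}
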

We apply \cref{prop:TchiT} to \cref{eq:app_omega_transfer_corrected} with its formal variable replaced by  \(iu\).
Since  \(-(iu)^2=u^2\), every independence-polynomial argument becomes  \(u^2\); the substitution is made in formal power series and requires no analytic continuation.
Separating terms according to parity then gives
\begin{equation}
    P_G(u^2)^2\mathfrak{S}(u) =4\e( \sum_{\substack{P\in\mc{P}^*\\ |P|>1, \ |P|\ \mrm{even}}} (iu)^{|P|-1} \bigl(2P_{G\setminus K}(u^2)-P_G(u^2)\bigr) P_{G\setminus N[P]}(u^2) [\chi, \hv{P}]) -4\,\Omega(u),
    \label{eq:app_transfer_reduction_corrected}
\end{equation}
where
\begin{align}
    \Omega(u) \coloneqq \sum_{\substack{P, Q \in \mathcal{P}^* \setminus \{\{\bs{j}^*\}\} \\ |P| + |Q|\ \mrm{odd}}} &i^{|P|-|Q|}u^{|P|+|Q|-2} P_{G\setminus N[P]}(u^2) P_{G\setminus N[Q]}(u^2) [\hv{P},\hv{Q}].
\end{align}
The first term in the parentheses of \cref{eq:app_transfer_reduction_corrected} is path-supported, because  \([\chi,\hv{P}]\) is proportional to the path product obtained by deleting \(\bs{j}^*\) from  \(P\).

\begin{remark}
    As mentioned in the main text, the results Ref.~\cite[Theorems~3.8 and~4.1]{fukai2026solvingb} already give a direct route to the path-product form of  \(P_G(u^2)\,\mc{C}(u)[\chi]\) in terms of  \(\Phi_G\).
    We nevertheless go through the Cayley transform and the transfer operator, so that \cref{eq:app_omega_transfer_corrected} is manifestly a statement about conjugation of  \(\chi\) by a unitary, in keeping with the aforementioned geometric interpretation of the Cayley transform.
\end{remark}

\subsubsection{Coefficient Comparison}

We can now prove \cref{lem:app_exact_product_sign}, completing the first step: If  \(\Omega(u)\) is path supported, then  \(P_G(u^2)^2\mathfrak{S}(u)\) is path supported, that is
\begin{equation}
    P_G(u^2)^2\mathfrak{S}(u) \in \operatorname{span}_{\mbb{C}[[u]]} \{h_R:R\text{ is an induced path of }G\}\;.
    \label{eq:app_scaled_omega_path_span}
\end{equation}
The scalar series  \(P_G(u^2)^2\) has the constant term  \(1\) and is therefore invertible in  \(\mbb{R}[[u]]\), for  \(u \in U_0\), where  \(U_0\) was a small region around  \(0\).
Multiplying \cref{eq:app_scaled_omega_path_span} by this inverse gives
\begin{equation}
    \mathfrak{S}(u) \in \operatorname{span}_{\mbb{C}[[u]]} \{h_R:R\text{ is an induced path of }G\}.
    \label{eq:app_omega_path_span_final}
\end{equation}
Comparing the coefficient of  \(u^\ell\) in \cref{eq:app_omega_currents_corrected}, and recalling that  \([\chi,\kryl{\ell}]\) is already path-supported, yields
\begin{equation}
    J^{(\ell)} \in \operatorname{span}_{\mbb{C}} \{h_R:R\text{ is an induced path of }G\}
\end{equation}
for every odd  \(\ell\).
This proves \cref{thm:ffd_currents_are_paths}, and we continue proving the cancelling relations when the conditions in \cref{prop:subsetrep} are not fulfilled by the symmetric difference \(P \oplus Q\).

\subsection{Cancellations}

\subsubsection{Pairwise Cancellation}\label{sec:pairwise_cancellations}

\begin{proposition}
    \label{prop:symdiff_cond1}
    Let \(P\), \(Q \in \mc{P}\) be induced paths with common endpoints \(\vv{\Pv{0}} = \vv{\Qv{0}}\) and \(\vv{\Pv{|P|-1}} = \vv{\Qv{|Q|-1}}\), symmetric difference \(L \coloneqq P \oplus Q\), and \(N \coloneqq |P \cap Q| - 1\).
    If \(L\) satisfies condition 1 of \cref{prop:subsetrep}, then we must have that \(P \cap Q\) is an independent set in \(G\).
    Defining indices \(\ell_0 < \ell_1 < \dots < \ell_N\) such that \(\{\vv{\ell_i}\}_{i = 0}^N \coloneqq P \cap Q\), we have
    \begin{align}
        L = \bigcup_{i = 0}^{N - 1} L_i \mh{10mm} \text{with} \mh{10mm} \begin{cases}
            L_i = \{\vv{\ell_i + j}\}_{j = 1}^{\ell_{i + 1} - \ell_i - 1} & (a) \\ 
            \ell_{i} = \Pv{0} + i + \sum_{j < i} |L_j| & (b) \\
            \ell_{N} = \Pv{|P|-1} = \Qv{|Q|-1} & (c)
        \end{cases}.
        \label{eq:symdiff_structure}
    \end{align}
    We must additionally have that \(|L_i| > 1\) for all \(i \in \{0, \dots, N - 1\}\).
    Letting \(\{U, W\} = \{P{\setminus} Q, Q {\setminus} P\}\) denote the two color classes of \(L\) (\cref{def:coloring}), then if \(\vv{\ell_0 + 1} \in U\), we have for all  \(i \in \e{0, \ldots, N-1}\) that
    \begin{align}
        \begin{cases}
            \vv{(\ell_i + 2j - 1) + (\bigoplus_{k = 0}^{i - 1} |L_k|)} \in U & \\
            \vv{(\ell_i + 2j - 1) + (1 \oplus \bigoplus_{k = 0}^{i - 1} |L_k|)} \in W &
        \end{cases}
        \label{eq:symdiff_paths}
    \end{align}
    for all \(j \geq 1\) such that the corresponding vertex index is less than \(\ell_{i + 1}\).
\end{proposition}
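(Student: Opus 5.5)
The plan is to work entirely through the interval structure of the zigzag ladder, using \cref{prop:subsetrep} both for $P$ and $Q$ and for $L$. First I will record the \emph{local step structure} of an induced path: by the proof of \cref{prop:subsetrep}, each non-terminal vertex $\vv{j}$ of an induced path has exactly one successor, namely $\vv{j+1}$ or $\vv{j+2}$. Consequently, two consecutive indices of the path differ by $1$ or $2$, and the path never contains three consecutive indices. So $P$ and $Q$ are both ``monotone walks'' from $\Pv{0}$ to the common endpoint with steps in $\{1,2\}$ and no two consecutive unit steps.

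The first claim is that condition 1 for $L$ forces $P\cap Q$ to be independent. Suppose $\vv{a},\vv{b}\in P\cap Q$ are adjacent, with $0<b-a\le 2$. If $b=a+1$, then $\cv{a+1}=\{\vv{a},\vv{a+1}\}$ is disjoint from $L$, because $\vv{a},\vv{a+1}\in P\cap Q$; this violates condition 1. If $b=a+2$, then $\vv{a+1}$ lies in neither $P$ nor $Q$: if it lay in either, that path would contain $\vv{a},\vv{a+1},\vv{a+2}$, i.e.\ the triangle $\kv{a+1}$. Hence $\cv{a+1}=\{\vv{a},\vv{a+1}\}$ again misses $L$. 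I need to be careful about the index range in condition 1, namely $j$ running from $\Pv{0}+1$ to the endpoint. The interior common vertices fall in this range. The endpoints $\ell_0,\ell_N$ also need checking, but the same argument applies because $a+1\ge \Pv{0}+1$.

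Next I will derive \cref{eq:symdiff_structure}. Order $P\cap Q=\{\vv{\ell_i}\}$ and take the gaps $L_i$ as the index intervals strictly between consecutive common vertices. The key point is that every index strictly between $\ell_i$ and $\ell_{i+1}$ lies in $L$. An index $j$ in the gap is not in $P\cap Q$, so it suffices to rule out $j\notin P\cup Q$. Both paths have steps of at most $2$, so a missing $j$ forces both $\vv{j-1}$ and $\vv{j+1}$ to lie in $P$ and in $Q$, contradicting independence of $P\cap Q$ (or the fact that $j$ lies in the gap). This gives (a). Identity (b) is then a counting statement: consecutive $L_i$ and the single common vertices tile the interval $[\Pv{0},\ell_N]$, which fixes $\ell_i$ inductively. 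Identity (c) is the common endpoint. For $|L_i|>1$: the case $|L_i|=0$ contradicts independence. If $|L_i|=1$, both paths must pass through the unique middle vertex, since the step $\ell_i\to\ell_i+2$ would make $\ell_i,\ell_{i+1}$ adjacent common vertices. The middle vertex would then be common, a contradiction.

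Finally, the alternating coloring \cref{eq:symdiff_paths}. Within a gap $L_i$, $P$ and $Q$ each traverse from $\vv{\ell_i}$ to $\vv{\ell_{i+1}}$ using only gap vertices, which are partitioned between them. Since each vertex of the gap belongs to exactly one path and steps are at most $2$, the two paths must strictly alternate: $U,W,U,W,\dots$ If two consecutive gap vertices had the same color, the other path would have to step over both, a step of size $3$. This shows that the coloring alternates inside each gap, starting from whichever class contains $\vv{\ell_i+1}$. The first vertex of gap $i+1$ is determined by which path enters $\vv{\ell_{i+1}}$ from $\vv{\ell_{i+1}-1}$ and which from $\vv{\ell_{i+1}-2}$. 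A path entering from $\ell_{i+1}-1$ must continue with a step of $2$, since unit steps cannot be consecutive, while the other path continues with a step of $1$. This gives the parity shift by $|L_i|$ (mod $2$) encoded as $\bigoplus_k |L_k|$.

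I expect the main obstacle to be this bookkeeping across a common vertex, matching the XOR offset exactly. For that step I will track the parity of $|L_i|$ explicitly and argue by induction on $i$.
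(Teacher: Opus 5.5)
Your proposal is correct and follows essentially the same route as the paper. The paper also derives independence of $P\cap Q$ from condition 1 for $L$ combined with the triangle obstruction. It then fills every index strictly between consecutive common vertices, shows the colors alternate within each gap, and flips the color across each common vertex by the same triangle argument; you only rephrase conditions 1 and 2 for $P$ and $Q$ as "steps in $\{1,2\}$ with no two consecutive unit steps". One simplification: your $|L_i|=1$ case needs no argument about paths, since independence of $P\cap Q$ already forces $\ell_{i+1}-\ell_i\geq 3$, which is exactly how the paper concludes $|L_i|>1$.
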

\begin{proof}
    First, note that if \(P \cap Q\) is not an independent set, then we have a vertex \(\vv{j} \in P \cap Q\) such that \(\{\vv{j + 1}, \vv{j + 2}\} \cap (P \cap Q) \neq \varnothing\).
    We cannot have \(\vv{j + 1} \in P \cap Q\), as then \(L\) would not satisfy condition 1 for \(\cv{j + 1}\).
    Thus, \(\{\vv{j}, \vv{j + 2}\} \subseteq P \cap Q\) with \(\vv{j + 1} \in L\).
    However, if \(\vv{j + 1} \in P {\setminus} Q\), then \(P\) fails condition 2 of \cref{prop:subsetrep} for \(\kv{j + 1}\), and if \(\vv{j + 1} \in Q {\setminus} P\), then \(Q\) fails condition 2 for \(\kv{j + 1}\).
    This gives that \(P \cap Q\) must be an independent set.
    
    If there exists a vertex \(\vv{j} \notin P \cup Q\) with \(\ell_0 < j < \ell_N\), then both \(\vv{j - 1}\), \(\vv{j + 1} \in L\), or \(L\) fails condition 1 for either \(\cv{j}\) or \(\cv{j + 1}\).
    If \(\vv{j - 1} \in P {\setminus} Q\), then \(Q\) fails condition 1 for \(\cv{j}\).
    Similarly, if \(\vv{j - 1} \in Q {\setminus} P\), then \(P\) fails condition 1 for \(\cv{j}\).
    Thus, if any vertex \(\vv{j} \notin L \) with \(\ell_0 \leq j \leq \ell_N\), then \(j \in \{\ell_i\}_{i = 0}^N\). 
    Defining the \(\{L_i\}_{i = 0}^{N - 1}\) by Eq.~(\ref{eq:symdiff_structure}.a), we see that \cref{eq:symdiff_structure} follows immediately, and we have \(|L_i| > 1\) for all \(i \in \{0, \dots, N - 1\}\) as \(P \cap Q\) is an independent set.
    We next show \cref{eq:symdiff_paths}.
    
    Without loss of generality, suppose \(U = P {\setminus} Q\).
    If there exists a pair \(\{\vv{j}, \vv{j + 1}\} \subseteq U\), then \(Q\) does not satisfy condition 1 of \cref{prop:subsetrep}.
    By an identical argument, we cannot have a pair \(\{\vv{j}, \vv{j + 1}\} \subseteq W\). 
    This gives that, for any \(L_i\), the colors alternate along \(L_i\) with increasing index. 
    For a pair \(\{L_i, L_{i + 1}\}\), suppose without loss of generality that \(\vv{\ell_{i + 1} - 1} \in U\).
    If \(\vv{\ell_{i + 1} + 1} \in U\), then \(P\) does not satisfy condition 2 of \cref{prop:subsetrep}. 
    Thus, we must have \(\vv{\ell_{i + 1} + 1} \in W\), and an identical argument follows upon exchange of \(U\) and \(W\).
    From these considerations, \cref{eq:symdiff_paths} immediately follows.
\end{proof}
\begin{corollary}
    \label{cor:PQ_anticommuting}
    Let \(P\), \(Q\), and \(L\) satisfy the conditions of \cref{prop:symdiff_cond1}, then \(\hv{P}\hv{Q} = (-1)^{|L| + 1}\hv{Q}\hv{P}\).
\end{corollary}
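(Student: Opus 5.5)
The plan is to reduce the commutation sign to a parity count of edges, and then evaluate that count using the structure from \cref{prop:symdiff_cond1}.

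\textbf{Step 1 (reduction to an edge count).} Each of \(\hv{P}\) and \(\hv{Q}\) is, up to a scalar, an ordered product of elements of the graph Clifford algebra. Any two such elements either commute or anticommute, so moving every factor of \(\hv{Q}\) past every factor of \(\hv{P}\) gives
\[
\hv{P}\hv{Q} = (-1)^{e(P,Q)}\hv{Q}\hv{P}, \qquad e(P,Q) \coloneqq \#\{(a,b) \in P\times Q : a \sim b\}.
\]
A pair with \(a=b\) commutes, so it contributes nothing. This holds regardless of the internal orderings of the two products. If an endpoint is \(\bs{j}^*\), then \(\chi\) is simply treated as the vertex \(\bs{j}^*\) of \(G^*\). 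It therefore suffices to show \(e(P,Q) \equiv |L|+1 \pmod 2\).

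\textbf{Step 2 (splitting the count).} Write \(C \coloneqq P\cap Q\), \(U \coloneqq P\setminus Q\) and \(W \coloneqq Q\setminus P\). By \cref{prop:symdiff_cond1}, \(C\) is an independent set, so no pair inside \(C\times C\) contributes. This gives
\[
e(P,Q) = e(C, W) + e(U, C) + e(U,W).
\]
For \(e(C,W)\): every \(Q\)-neighbour of a vertex \(c \in C\) lies in \(W\), because \(C\) is independent. Hence
\[
e(C,W)=\sum_{c\in C}\deg_{G[Q]}(c).
\]
Since \(Q\) is an induced path, this degree is \(2\) for interior vertices and \(1\) for the two endpoints. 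Both endpoints are common to \(P\) and \(Q\), so both lie in \(C\). The sum is therefore \(2(N+1)-2\), which is even.

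The same argument with \(P\) in place of \(Q\) shows \(e(U,C)\) is even. Thus \(e(P,Q) \equiv e(U,W) \pmod 2\).

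\textbf{Step 3 (counting \(U\)--\(W\) edges).} This step uses the segment decomposition \(L=\bigcup_i L_i\) of \cref{eq:symdiff_structure}, the alternating colouring \cref{eq:symdiff_paths}, and the adjacency rule \cref{eq:app_zigzag_adjacency}: two vertices are adjacent exactly when their indices differ by \(1\) or \(2\).

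Within a segment \(L_i\), the indices are consecutive and the colours alternate. So index-distance-\(1\) pairs are \(U\)--\(W\) edges, while distance-\(2\) pairs have the same colour. This contributes \(|L_i|-1\) edges.

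Between consecutive segments \(L_i\) and \(L_{i+1}\), they are separated by the single common vertex \(\vv{\ell_{i+1}}\). The only adjacent cross pair is \(\vv{\ell_{i+1}-1}\), \(\vv{\ell_{i+1}+1}\), at distance \(2\). The proof of \cref{prop:symdiff_cond1} shows these two vertices have opposite colours, so each of the \(N-1\) boundaries contributes exactly one edge. Non-consecutive segments are at distance at least \(3\) and contribute nothing.

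Summing,
\[
e(U,W)=\sum_{i=0}^{N-1}(|L_i|-1)+(N-1)=|L|-1,
\]
which is congruent to \(|L|+1 \pmod 2\). This proves the corollary.

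\textbf{Main obstacle.} The delicate point is Step 3: checking that the cross-boundary pair really has opposite colours, and that no other \(U\)--\(W\) adjacencies exist. Both facts depend on the precise alternation statement \cref{eq:symdiff_paths} and on the zigzag adjacency rule.

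The degenerate cases also need a quick check. If \(N=0\), then \(P=Q\) is a single vertex, \(L=\varnothing\), \(e(P,Q)=0\), and the claimed sign \((-1)^{|L|+1}\) would be \(-1\). So the statement should be read with \(N\ge 1\) (distinct endpoints); under that reading the boundary count \(N-1\) is valid.
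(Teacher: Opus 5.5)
Your proof is correct, and it takes a different route from the paper's.

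\textbf{The paper's route.} The paper proves the corollary by induction on \(N\).
\begin{enumerate}[(a)]
    \item In the base case \(N=1\), it first removes \(\hv{\ell_0}\hv{\ell_1}\), which commutes with \(\hv{Q}\). It then lists the neighbours in \(Q\) of each vertex of \(P\setminus Q\): each has two, and the last one also neighbours \(\vv{\ell_1}\) exactly when \(|L_0|\) is even.
    \item In the inductive step, it cuts \(P\) and \(Q\) at \(\vv{\ell_1}\) and expands the commutation sign multiplicatively. It argues that exactly one of the two cross factors between the first block and the remainder is \(-1\). It then re-inserts \(\hv{\ell_1}\) so that the base case can be reused on the first block.
\end{enumerate}

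\textbf{Your route.} You count all anticommuting factor pairs in a single pass, split according to \(P\cap Q\), \(P\setminus Q\) and \(Q\setminus P\).
\begin{itemize}
    \item Your Step~2 does not depend on the ladder at all. The edges incident to \(P\cap Q\) contribute \(2N\) on each side only because \(P\cap Q\) is independent and contains both endpoints of both induced paths.
    \item The zigzag geometry therefore enters only through the \(U\)--\(W\) edges. These form a single path through \(L\) in index order, with \(|L|-1\) edges.
\end{itemize}

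\textbf{Comparison.} Your count makes the origin of the sign transparent and avoids the product-rule bookkeeping. It also isolates exactly where the ladder adjacency rule is used. The paper's block-by-block induction has a different advantage: it mirrors how \(P\) and \(Q\) are later cut at common vertices and recombined in the pairwise-cancellation argument.

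Your remark about \(N=0\) is also right. When \(P=Q\) is a single vertex, the statement as written fails, and the paper's induction tacitly starts at \(N=1\). In every application in the paper, the two paths share the anchor and a distinct free endpoint, so \(N\ge 1\) holds there.
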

\begin{proof}
    Suppose without loss of generality that \(\vv{\ell_0 + 1} \in P\).
    We prove the corollary by induction on \(N\).
    For \(N = 1\), \cref{eq:symdiff_paths} gives that every vertex in \(P \setminus Q\) is of the form \(\vv{\ell_0 + 2j - 1}\) for \(0 < j < (\ell_1 - \ell_0 + 1)/2\).
    We have
    \begin{align}
        N_{Q}(\vv{\ell_0 + 2j - 1}) &= \{\vv{\ell_0 + 2j - 2}, \vv{\ell_0 + 2j}\} \cup (\{\vv{\ell_0}, \vv{\ell_1}\} \cap \{\vv{\ell_0 + 2j - 3},\vv{\ell_0 + 2j + 1}\}) \\
        &= \{\vv{\ell_0 + 2j - 2}, \vv{\ell_0 + 2j}\} \cup (\{\vv{\ell_1}\} \cap \{\vv{\ell_0 + 2j + 1}\}).
    \end{align}
    The first line follows from the fact that the vertices with index differing from \(\ell_0 + 2j - 1\) by one are in \(Q\) (possibly in \(P \cap Q\)) for all such \(j\), and vertices with index differing by two are only in \(Q\) if they are boundaries of \(L_0\) with indices in \(\{\ell_0, \ell_1\}\).
    In the second line, we used the fact that \(\vv{\ell_0}\) cannot be in the intersection with a set of vertices whose indices differ from \(\ell_0\) by an odd integer as well as the fact that \(\ell_0 + 2j - 3 \neq \ell_1\) if \(\ell_0 + 2j - 1 < \ell_1\).
    This gives that the neighborhood in \(Q\) of \(\vv{\ell_0 + 2j - 1} \in P {\setminus} Q\) consists of two vertices unless \(\ell_1 = \ell_0 + 2j + 1\), which is only possible if \(|L_0| = \ell_1 - \ell_0 - 1\) is even.
    Since \(P \cap Q = \{\ell_0, \ell_1\}\) consists of the endpoints of \(Q\), we have \(\hv{\ell_0} \hv{\ell_1} \hv{Q} = \hv{Q} \hv{\ell_0} \hv{\ell_1}\), and
    \begin{align}
        \scomm{\hv{P}}{\hv{Q}} = \scomm{\hv{P {\setminus} Q}}{\hv{Q}} = \prod_{0 < j < (\ell_1 - \ell_0 + 1)/2} \scomm{\hv{\ell_0 + 2j - 1}}{\hv{Q}} = (-1)^{|L_0| + 1}
    \end{align}
    proving the base case.
    
    Next, we assume the corollary holds for all orders \(|P \cap Q| - 1 < N\), and we define \(P' = P \cap \{\vv{j}\}_{j = \ell_1}^{\ell_N}\), \(Q' = Q \cap \{\vv{j}\}_{j = \ell_1}^{\ell_N}\), and \(L' = P' \oplus Q'\).
    We have
    \begin{align}
        &\scomm{\hv{P}}{\hv{Q}} = \scomm{\hv{\ell_0}\hv{P \cap L_0} \hv{P'}}{\hv{\ell_0}\hv{Q \cap L_0} \hv{Q'}} \\
        &\mh{5mm} =\scomm{\hv{\ell_0}\hv{P \cap L_0}}{\hv{\ell_0}\hv{Q \cap L_0}} \underbrace{\scomm{\hv{\ell_0}\hv{P \cap L_0}}{\hv{Q'}} \scomm{\hv{P'}}{\hv{\ell_0}\hv{Q \cap L_0}}}_{=-1} \underbrace{\scomm{\hv{P'}}{\hv{Q'}}}_{=(-1)^{|L'| + 1}} \\
        &\mh{5mm} =(-1)^{|L'|} \scomm{\hv{\ell_0}\hv{P \cap L_0}}{\hv{\ell_0}\hv{Q \cap L_0}} \scomm{\hv{\ell_1}}{\hv{\ell_0}\hv{Q \cap L_0}} \scomm{\hv{\ell_0}\hv{P \cap L_0}}{\hv{\ell_1}} \\
        &\mh{5mm} =(-1)^{|L'|} \scomm{\hv{\ell_0}\hv{P \cap L_0} \hv{\ell_1}}{\hv{\ell_0}\hv{Q \cap L_0} \hv{\ell_1}} \\
        &\mh{5mm} =(-1)^{|L'| + |L_0| + 1} = (-1)^{|L| + 1}.
    \end{align}
    In the second line, we used the inductive hypothesis for \(P'\) and \(Q'\) together with the fact that precisely one of the middle two factors is \(-1\) by \cref{eq:symdiff_paths} and \cref{eq:app_zigzag_adjacency}.
    In the third line, we inserted two factors of \(-1\), and then to obtain the fourth line, we used the product rule together with the fact that \(\scomm{\hv{\ell_1}}{\hv{\ell_1}} = 1\).
    We again applied the base-case statement in the fourth line to obtain the fifth line, and the corollary follows.  
\end{proof}

Let  \(M\) be the set of (sorted) pairs \((P, Q)\) satisfying the contribution conditions but for which \(P \oplus Q\) fails to satisfy condition 1 of \cref{prop:subsetrep}. 
We use the above to show that we can partition  \(M\) into two sets  \(M_1, M_2\) and define a mapping  \(f: M \to M\) such that  \(f\vert_{M_1} \to M_2\) is bijective and  \(\Omega_x = -\Omega_{f(x)}\) for all  \(x \in M_1\). 
This allows us to pair the elements in  \(M\) such that their contributions in \cref{eq:app_structural_series_terms} cancel.

Let  \((P, Q) \in M\) and put  \((P, Q)\) into  \(M_1\).
Let \(P_0\) be the maximal connected-path component of \(P \cap Q\) containing \(\bs{j}^*\), and let \(\vv{\ell_0}\) be the vertex with maximal index belonging to \(P_0\).
Further, suppose \(\ell_N\) is the minimal index such that \(P \oplus Q\) fails to satisfy condition 1 for \(\cv{\ell_N + 1}\), in particular,  \(\vv{\ell_N} \notin P \oplus Q\).
By the minimality of \(\ell_N\), we must have that \(\vv{\ell_N - 1} \in P \oplus Q\), and without loss of generality, suppose it is in \(P {\setminus} Q\).
If \(\vv{\ell_N} \notin P \cap Q\), then  \(\vv{\ell_N} \notin Q\) as  \(\vv{\ell_N} \notin P \oplus Q\), and then \(Q\) fails condition 1 of \cref{prop:subsetrep} for \(\cv{\ell_N}\).
Thus, we have \(\vv{\ell_N} \in P \cap Q\).
Letting \(\widetilde{P} \coloneqq P \cap \{\vv{j}\}_{j = \ell_0}^{\ell_N}\), \(\widetilde{Q} \coloneqq Q \cap \{\vv{j}\}_{j = \ell_0}^{\ell_N}\), and \(\widetilde{L} = \widetilde{P} \oplus \widetilde{Q}\), we have that \(\widetilde{P}\), \(\widetilde{Q}\), and \(\widetilde{L}\) satisfy the conditions of \cref{prop:symdiff_cond1}.
In particular, \(\widetilde{L}\) satisfies condition 1 of \cref{prop:subsetrep} again by the minimality of \(\ell_N\).

We define
\begin{align}
    P' &\coloneqq (P {\setminus} \widetilde{P}) \cup \widetilde{Q} \\
    Q' &\coloneqq (Q {\setminus} \widetilde{Q}) \cup \widetilde{P},
\end{align}
and we show that \((P', Q')\) satisfies the contribution conditions.
Since \(\ell_0 \geq 1\), we have \(\bs{j}^* \in P {\setminus} \widetilde{P}\), and \(\bs{j}^* \in Q {\setminus} \widetilde{Q}\). 
Thus, \(\bs{j}^* \in P'\), and \(\bs{j}^* \in Q'\). 

We next show that \(P'\) and \(Q'\) are induced paths.
If \(P'\) fails condition 1 of \cref{prop:subsetrep}, then it must do so for either \(\cv{\ell_0}\), or \(\cv{\ell_N + 1}\).
Otherwise, this would imply that either \(P\) or \(\widetilde{Q}\) fails this condition.
However, we have that \(\vv{\ell_0} \in \cv{\ell_0} \cap \widetilde{Q}\) and \(\vv{\ell_N} \in \cv{\ell_N + 1} \cap \widetilde{Q}\), so \(P'\) satisfies condition 1.
An identical argument holds for \(Q'\).

If \(P'\) fails condition 2 of \cref{prop:subsetrep}, then it must again do so for a clique among \(\kv{\ell_0 - 1}\), \(\kv{\ell_0}\), \(\kv{\ell_N}\), and \(\kv{\ell_N + 1}\), or either \(P\) or \(\widetilde{Q}\) also fails condition 2.
If \(\vv{\ell_0 - 1} \in P \cap Q\), then either \(P\) or \(Q\) fails condition 2 for \(\kv{\ell_0}\). 
Thus, \(\vv{\ell_0 - 1} \notin P \cup Q\) by the assumption on \(\vv{\ell_0}\), so \(\vv{\ell_0 - 1} \notin P'\).
Thus, \(P'\) satisfies condition 2 for \(\kv{\ell_0 - 1}\) and \(\kv{\ell_0}\).
Similarly, if \(\vv{\ell_N + 1} \in P \cap Q\), then either \(P\) or \(Q\) fails condition 2.
Thus, \(\vv{\ell_N + 1} \notin P \cup Q\) since \(\vv{\ell_N + 1} \notin P \oplus Q\) by the assumption on \(\ell_N\). 
Thus, \(P'\) satisfies condition 2 for \(\kv{\ell_N}\) and \(\kv{\ell_N + 1}\).
An identical argument holds for \(Q'\), and together these give \(P'\), \(Q' \in \mc{P}^*\).

We have
\begin{align}
    \scomm{\hv{P'}}{\hv{Q'}} &= \scomm{\hv{P {\setminus }\widetilde{P}}\hv{\widetilde{Q}}}{\hv{Q {\setminus} \widetilde{Q}} \hv{\widetilde{P}}} \\
    &= \scomm{\hv{P {\setminus} \widetilde{P}}}{\hv{Q {\setminus} \widetilde{Q}}}\scomm{\hv{P {\setminus} \widetilde{P}}}{\hv{\widetilde{P}}} \scomm{\hv{\widetilde{Q}}}{\hv{Q {\setminus} \widetilde{Q}}} \scomm{\hv{\widetilde{Q}}}{\hv{\widetilde{P}}} \\
    &= \scomm{\hv{P {\setminus} \widetilde{P}}}{\hv{Q {\setminus} \widetilde{Q}}} \scomm{\hv{Q {\setminus} \widetilde{Q}}}{\hv{\widetilde{P}}} \scomm{\hv{\widetilde{Q}}}{\hv{P {\setminus} \widetilde{P}}} \scomm{\hv{\widetilde{Q}}}{\hv{\widetilde{P}}} \\
    &= \scomm{\hv{P}}{\hv{Q}} = -1.
\end{align}
where, in the third line, we used the fact that \(P = (P {\setminus} \widetilde{P}) \cup \widetilde{P} \in \mc{P}^*\) and \(P' = (P {\setminus} \widetilde{P}) \cup \widetilde{Q} \in \mc{P}^*\) with \(\widetilde{P}\), \(\widetilde{Q} \in \mc{P}\), so \(\scomm{\hv{P{\setminus} \widetilde{P}}}{\hv{\widetilde{P}}} = \scomm{\hv{P {\setminus} \widetilde{P}}}{\hv{\widetilde{Q}}}\).
Similarly \(\scomm{\hv{Q{\setminus} \widetilde{Q}}}{\hv{\widetilde{Q}}} = \scomm{\hv{Q {\setminus} \widetilde{Q}}}{\hv{\widetilde{P}}}\).
We then apply the product rule and the fact that \((P, Q)\) satisfies the contribution conditions.
Additionally, we have
\begin{align}
    |P'| + |Q'| = (|P| - |\widetilde{P}| + |\widetilde{Q}|) + (|Q| - |\widetilde{Q}| + |\widetilde{P}|) = |P| + |Q| = 1 \text{ (mod 2)}
\end{align}
so \((P', Q')\) satisfies the contribution conditions. 
Furthermore, as  \(P' \oplus Q' = P \oplus Q\),  \(P' \oplus Q'\) fails condition 1 in \cref{prop:subsetrep} with the same  \(\vv{\ell_0}\) and  \(\vv{\ell_N}\), and with analogous constructions, we get  \(\widetilde{P'} = \widetilde{Q}\) and  \(\widetilde{Q'} = \widetilde{P}\), which gives  \(P'' = P\) and  \(Q'' = Q\).

Next, we calculate the  \(\Omega\) contribution of  \((P', Q')\). 
We have
\begin{align}
    i^{|P'| - |Q'|} = i^{(|P| - |\widetilde{P}| + |\widetilde{Q}|) - (|Q| - |\widetilde{Q}| + |\widetilde{P}|)} = (-1)^{|\widetilde{P}| + |\widetilde{Q}|} i^{|P| - |Q|} = (-1)^{|\widetilde{L}|} i^{|P| - |Q|} 
\end{align}
and
\begin{align}
    [\hv{P'}, \hv{Q'}] &= 2\left(\hv{P_0 {\setminus} \{\vv{\ell_0}\}}\hv{\widetilde{Q}}\hv{P {\setminus} (\widetilde{P} \cup P_0)}\right) \left(\hv{P_0 {\setminus} \{\vv{\ell_0}\}}\hv{\widetilde{P}}\hv{Q {\setminus} (\widetilde{Q} \cup P_0)}\right) \\
    &= (-1)^{|\widetilde{L}| + 1} 2\left(\hv{P_0 {\setminus} \{\vv{\ell_0}\}}\hv{\widetilde{P}}\hv{P {\setminus} (\widetilde{P} \cup P_0)}\right) \left(\hv{P_0 {\setminus} \{\vv{\ell_0}\}}\hv{\widetilde{Q}}\hv{Q {\setminus} (\widetilde{Q} \cup P_0)}\right) \\
    &= (-1)^{|\widetilde{L}| + 1} [\hv{P}, \hv{Q}].
\end{align}
Since \(\vv{P'} = \vv{P}\) we have \(G{\setminus} N[P'] = G{\setminus} N[P]\), and similarly for \(Q\).
Therefore
\begin{align}
    \Omega_{(P', Q')} &= i^{|P'| - |Q'|} u^{|P'| + |Q'| - 2} P_{G{\setminus} N[P']} P_{G {\setminus} N[Q']} [\hv{P'}, \hv{Q'}] \\
    &= ((-1)^{|\widetilde{L}|} i^{|P| - |Q|}) u^{|P| + |Q| - 2} P_{G{\setminus} N[P]} P_{G {\setminus} N[Q]} ((-1)^{|\widetilde{L}| + 1}[\hv{P}, \hv{Q}]) \\
    &= -i^{|P| - |Q|} u^{|P| + |Q| - 2} P_{G{\setminus} N[P]} P_{G {\setminus} N[Q]} [\hv{P}, \hv{Q}] = -\Omega_{(P, Q)}\;,
\end{align}
which allows us to put  \((P', Q')\) into  \(M_2\), and we define  \(f((P, Q)) \coloneqq (P', Q')\). 
The calculation above also shows that  \(x \neq f(x)\) for  \(x \in M\): assuming the contrary, we have, because of the symmetry of  \(\Omega\) in the path-tuples, that  \(\Omega_{f(x)} = \Omega_x = -\Omega_{f(x)}\), in contrast to  \(f(x)\) fulfilling the contribution conditions.

It remains to show that this partitioning process --- that is, we initialise  \(M' = M\) and \emph{take} sets  \(\e{x, f(x)}\) \emph{out} of  \(M'\) one-by-one and put  \(x\) into  \(M_1\) and  \(f(x)\) into  \(M_2\) --- is well-defined and that  \(f\) is bijective. 
Specifically, we have to show that if we pick an  \(x\) out of  \(M'\) that  \(f(x) \in M'\) (this is clear for the first  \(x\) we pick, but not for the ones afterwards as  \(M'\) becomes smaller) and that  \(f\) is injective (it is surjective per construction). 
Let  \(x \in M\). 
It holds that  \(f(f(x)) = x\), per construction as shown above, and therefore  \(f\) is bijective as a mapping from  \(M\) to  \(M\). 
Now, during the partitioning process, assume we pick an  \(x \in M'\) such that  \(f(x) \notin M'\). 
Then we must have already put  \(f(x)\) either into  \(M_1\) or into  \(M_2\). 
The latter would be a direct violation to  \(f\) being bijective, so let us assume  \(f(x) \in M_1\), but then we already put  \(x = f(f(x))\) into  \(M_2\), and therefore we could not have picked  \(x \in M'\) in the first place; contradiction.

This shows that partitioning process is well-defined, and that  \(f\) is bijective, and the conclusion follows.

\subsubsection{Recursive Cancellation}\label{sec:recursive_cancellations}

Next let  \((P, Q)\) fulfill the contribution conditions such that \(L = P \oplus Q\) satisfies condition 1 of \cref{prop:subsetrep} but does not satisfy condition 2.
Let \(\ell\) be the maximal index for which 
\begin{align}
    \kv{\ell} \coloneqq \{\vv{\ell - 1}, \vv{\ell}, \vv{\ell + 1}\} \subseteq L.
    \label{eq:kldef}
\end{align}
We fix  \(L\) proceed to characterize the possible path pairs \((P, Q)\) for which \(P \oplus Q = L\).
\begin{claim}
    \label{cl:clique_prefix_coloring}
    For the aforementioned collection of \(P\), \(Q\), \(L\), and \(\kv{\ell}\), we must have that precisely one of \(\{\vv{\ell - 1}, \vv{\ell}\}\) belongs to \(P {\setminus} Q\), and the other belongs to \(Q {\setminus} P\) (i.e. these two vertices receive different colors).
\end{claim}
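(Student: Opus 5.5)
The proof will be a short contradiction argument. It uses only the fact that both \(P\) and \(Q\) are induced paths, via the two criteria of \cref{prop:subsetrep}, together with the observation that \(\kv{\ell} \subseteq L\) means none of the three vertices \(\vv{\ell-1}, \vv{\ell}, \vv{\ell+1}\) lies in \(P \cap Q\). Each of the three vertices is therefore either \(P\)-colored or \(Q\)-colored. Suppose, for contradiction, that \(\vv{\ell-1}\) and \(\vv{\ell}\) receive the same color, say both lie in \(P{\setminus}Q\); the case \(Q{\setminus}P\) follows by exchanging the roles of \(P\) and \(Q\).

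The key step is to look at the cutset \(\cv{\ell} = \{\vv{\ell-1}, \vv{\ell}\}\). Under our assumption neither vertex belongs to \(Q\), so \(\cv{\ell} \cap Q = \varnothing\). Condition 1 of \cref{prop:subsetrep} then says \(Q\) cannot contain vertices on both sides of this clique cutset. That is, \(\ell\) cannot lie strictly between the anchor \(\Qv{0} = 0\) and the free endpoint index of \(Q\). Since \(Q\) is anchored, and \(\ell \geq 2\) because \(\vv{\ell-1}\) must be an actual vertex with \(\ell - 1 \geq 1\), we get that every vertex of \(Q\) has index at most \(\ell - 2\). In particular \(\vv{\ell+1} \notin Q\). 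But \(\vv{\ell+1} \in L\), so \(\vv{\ell+1} \in P{\setminus}Q\). Then \(\kv{\ell} = \{\vv{\ell-1}, \vv{\ell}, \vv{\ell+1}\} \subseteq P\), so \(P\) contains a triangle and fails condition 2 of \cref{prop:subsetrep}, contradicting \(P \in \mc{P}^*\). This yields the claim.

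The main subtlety I expect is at the boundaries, which I will check separately. First, when \(\ell = 2\), the cutset \(\cv{2} = \{\vv{1}, \vv{2}\}\) separates \(\bs{j}^*\) from everything else. In that case \(Q\) would reduce to \(\{\bs{j}^*\}\) alone, which is excluded from the sum. Second, the case \(\ell = m\) does not arise, since \(\kv{\ell}\) in \cref{eq:kldef} is a three-vertex clique, forcing \(\ell \leq m-1\). I will also verify that the cutset argument correctly uses the anchoring \(\Qv{0} = 0 < \ell\), which is what makes condition 1 applicable to every \(\cv{j}\) with \(1 \leq j \leq\) the endpoint index of \(Q\). Note that the argument never uses the maximality of \(\ell\); maximality is presumably needed only for later claims, so the proof holds for any triangle contained in \(L\).
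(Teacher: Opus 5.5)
Your proof is correct and uses the same two ingredients as the paper's: condition 1 of \cref{prop:subsetrep} for \(Q\) at the cutset \(\cv{\ell}\), and condition 2 for \(P\) at \(\kv{\ell}\). The paper phrases this as a case split on the color of \(\vv{\ell+1}\), whereas you first use condition 1 to force \(\vv{\ell+1}\in P{\setminus}Q\); your boundary checks (e.g.\ \(\ell=2\), \(Q=\{\bs{j}^*\}\)) are harmless but unnecessary, and you are right that maximality of \(\ell\) is not used.
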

\begin{proof}
    We first suppose without loss of generality that \(\cv{\ell} = \{\vv{\ell - 1}, \vv{\ell}\} \subset P {\setminus} Q\). 
    If \(\vv{\ell + 1} \in P {\setminus} Q \), we have \(\kv{\ell} \subseteq P\), in contradiction with condition 2 of \cref{prop:subsetrep} for \(P\).
    If \(\vv{\ell + 1} \in Q {\setminus} P \), then we have \(\cv{\ell} \cap Q = \varnothing \) and \(\vv{Q}\) has index larger than \(\ell\), in contradiction with condition 1 for \(Q\).
    An identical argument applies if \(\cv{\ell} \subset Q {\setminus} P\), so we have that precisely one of \(\{\vv{\ell - 1}, \vv{\ell}\}\) belongs to \(P{\setminus} Q\) and the other belongs to \(Q {\setminus} P\).
\end{proof}
\begin{claim}
    \label{cl:max_vert}
    For the aforementioned collection of \(P\), \(Q\), \(L\), and \(\kv{\ell}\), let \(j\) be the maximal index such that \(\vv{j} \in L\), then we cannot have \(\vv{k} \in P \cup Q\) for any \(k > j\) if \(P\) and \(Q\) satisfy the contribution conditions. 
\end{claim}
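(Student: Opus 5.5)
The plan is to argue by contradiction. Suppose some vertex \(\vv{k}\) with \(k > j\) lies in \(P \cup Q\). Since \(\vv{j}\) is the largest-index vertex of \(L\), every vertex of index larger than \(j\) in \(P \cup Q\) lies in \(P \cap Q\). Hence both paths extend strictly beyond \(j\), and they share a common tail \(T \coloneqq P \cap \{\vv{i}\}_{i > j} = Q \cap \{\vv{i}\}_{i > j}\).

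\textbf{Local structure near \(j\).} Without loss of generality take \(\vv{j} \in P {\setminus} Q\).
\begin{itemize}
    \item Condition 1 of \cref{prop:subsetrep} for \(Q\) at \(\cv{j+1}\) forces \(\vv{j+1} \in Q\). Since \(\vv{j+1} \notin L\), this gives \(\vv{j+1} \in P \cap Q\).
    \item Condition 1 for \(Q\) at \(\cv{j}\) forces \(\vv{j-1} \in Q\).
    \item Condition 2 for \(P\) at \(\kv{j}\) forbids \(\vv{j-1} \in P\), so \(\vv{j-1} \in Q {\setminus} P\).
    \item Similarly \(\vv{j+2} \notin P\), hence \(\vv{j+2} \notin T\).
\end{itemize}
So near the right end of \(L\), the two paths leave their last distinct vertices \(\vv{j}\) (for \(P\)) and \(\vv{j-1}\) (for \(Q\)) and merge at \(\vv{j+1}\).

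\textbf{Reduction to the truncated pair.} Write \(P = P' \cup T\) and \(Q = Q' \cup T\), with \(P'\) ending at \(\vv{j}\) and \(Q'\) ending at \(\vv{j-1}\). Both are anchored induced paths by \cref{prop:subsetrep}.

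Using the product rule for \(\scomm{\cdot}{\cdot}\) as in \cref{cor:PQ_anticommuting}, the commutation sign factorises as
\begin{align}
    \scomm{\hv{P}}{\hv{Q}} = \scomm{\hv{P'}}{\hv{Q'}}\,\scomm{\hv{P'}}{\hv{T}}\,\scomm{\hv{T}}{\hv{Q'}},
\end{align}
using \(\scomm{\hv{T}}{\hv{T}} = 1\).

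By \cref{eq:app_zigzag_adjacency} and the local structure above:
\begin{itemize}
    \item The only neighbour of \(T\) in \(P'\) is \(\vv{j}\), adjacent to \(\vv{j+1}\) only.
    \item The only neighbour of \(T\) in \(Q'\) is \(\vv{j-1}\).
\end{itemize}
Each cross factor is therefore \(-1\), and \(\scomm{\hv{P}}{\hv{Q}} = \scomm{\hv{P'}}{\hv{Q'}}\). The parity is also unchanged: \(|P| + |Q| \equiv |P'| + |Q'| \pmod 2\). Thus \((P', Q')\) satisfies the contribution conditions whenever \((P, Q)\) does, and \(P' \oplus Q' = L\).

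\textbf{The main obstacle.} The contradiction must now come from showing that a pair \((P', Q')\) with symmetric difference \(L\), whose endpoints are the two top vertices \(\vv{j-1}, \vv{j}\) of \(L\) with different colours, cannot simultaneously anticommute and have \(|P'| + |Q'|\) odd. I expect this step to be the hard one.

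What I would try first:
\begin{enumerate}
    \item Show, as in the proof of \cref{prop:symdiff_cond1}, that the colours of \(L\) alternate on every run between shared vertices. Use \cref{cl:clique_prefix_coloring} to pin down the colouring on \(\kv{\ell}\). Use the maximality of \(\ell\) to show that \(L\) above \(\ell\) is a single alternating run ending at \(\vv{j}\).
    \item Compute \(\scomm{\hv{P'}}{\hv{Q'}}\) by induction over the runs of \(L\), in the manner of \cref{cor:PQ_anticommuting}. Each run contributes \((-1)^{|L_i|+1}\) or its analogue.
    \item Track how \(\kv{\ell}\) modifies this count, since there a same-coloured adjacency occurs between \(\vv{\ell+1}\) and one of \(\vv{\ell-1}, \vv{\ell}\).
    \item Compare the result with the parity of \(|P'| + |Q'| = 2|P' \cap Q'| + |L|\). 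Here \(|P' \cap Q'|\) is controlled by the number of runs.
\end{enumerate}
The aim is that anticommutation forces \(|L|\) even, contradicting oddness.

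If this parity bookkeeping fails, the fallback is a different contradiction: show that the pair \((P', Q')\), with these endpoints and this colouring at the top, cannot satisfy condition 1 of \cref{prop:subsetrep}.
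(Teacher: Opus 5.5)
\textbf{Verdict: genuine gap.} Your local analysis at the top of \(L\) and your stripping of the common tail are correct. However, the reduced statement is no easier than the original, and the step you call the main obstacle (which is the entire content of the claim) is left as a plan.

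\textbf{The missing idea.} The sign-versus-parity computation is already done by \cref{cor:PQ_anticommuting}, and your truncation removes exactly the vertex that makes that corollary applicable. Under the contradiction hypothesis, the top vertex of \(P \cup Q\) has index \(> j\). It therefore lies in \(P \cap Q\), so \(\vv{P} = \vv{Q}\). Cut at \(\vv{j+1}\) instead of below it. Then \(P'' \coloneqq P' \cup \{\vv{j+1}\}\) and \(Q'' \coloneqq Q' \cup \{\vv{j+1}\}\) are anchored induced paths with common free endpoint \(\vv{j+1}\), and \(P'' \oplus Q'' = L\). Your own junction argument for the cross factors carries over: the remaining tail meets \(P''\) and \(Q''\) only through \(\vv{j+1} \sim \vv{j+3}\). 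Hence \((P'', Q'')\) has the same commutation sign and parity as \((P, Q)\). Also strip the shared anchored prefix. It contains the adjacent vertices \(\bs{j}^*\) and \(\vv{1}\), so it must go before \cref{prop:symdiff_cond1} applies literally; this again changes neither sign nor parity. Then \cref{cor:PQ_anticommuting} gives \(\scomm{\hv{P}}{\hv{Q}} = (-1)^{|L|+1}\), while \(|P| + |Q| = |L| + 2|P \cap Q|\). Anticommutation forces \(|L|\), hence \(|P| + |Q|\), to be even, which contradicts the contribution conditions. Note that \cref{prop:symdiff_cond1} and \cref{cor:PQ_anticommuting} use only condition 1 for \(L\), so the triangle \(\kv{\ell} \subseteq L\) is no obstruction. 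This is the paper's proof, which applies the corollary to \((P, Q)\) directly once \(\vv{P} = \vv{Q}\) is observed.

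\textbf{Problems with your plan for the obstacle.} It would not work as written.
\begin{enumerate}
\item The triangle needs no correction term. Colours alternate within each block of \(L\), so \(\vv{\ell-1}, \vv{\ell}, \vv{\ell+1}\) are just three consecutive vertices of an alternating block of length at least three. The same-coloured adjacency \(\vv{\ell-1} \sim \vv{\ell+1}\) is how a path ordinarily crosses such a block, and it is already counted in the factor \((-1)^{|L_i|+1}\).
\item Your step 1 is false. Maximality of \(\ell\) gives \(\vv{\ell+2} \notin L\), so for \(j > \ell + 1\) the part of \(L\) above \(\ell\) splits into several runs separated by shared vertices. For instance, \(P = \bs{j}^*\md\vv{1}\md\vv{2}\md\vv{4}\md\vv{5}\md\vv{7}\md\vv{8}\) and \(Q = \bs{j}^*\md\vv{1}\md\vv{3}\md\vv{5}\md\vv{6}\md\vv{8}\) are anchored induced paths with \(L = \{\vv{2}, \vv{3}, \vv{4}, \vv{6}, \vv{7}\}\), \(\ell = 3\) and \(j = 7\).
\item Your fallback is vacuous. \(P'\) and \(Q'\) are induced paths by construction, and \(L\) satisfies condition 1 by hypothesis.
\end{enumerate}
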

\begin{proof}
    For all \(\vv{k} \in P \cup Q\) with \(k > j\), we have \(\vv{k} \in P \cap Q\).
    In particular, this implies that \(\vv{P} = \vv{Q}\) if this common free endpoint has index larger than \(j\), so \(P\), \(Q\), and \(L\) satisfy the conditions of \cref{prop:symdiff_cond1}.
    By \cref{cor:PQ_anticommuting}, we have \(\hv{P}\hv{Q} = (-1)^{|L| + 1} \hv{Q}\hv{P}\), and \(|P| + |Q| = |L| + 2 |P \cap Q|\), so it is not possible for \((P, Q)\) to satisfy the contribution conditions.
\end{proof}

Next, we define in our setting the maximal integer \(N\) for which
\begin{align}
    A &\coloneqq \bigcup_{j = 0}^N \{\vv{\ell_j - 1}, \vv{\ell_j + 1}\} \subset L \mh{5mm} \text{with} \mh{5mm} \ell_j = \ell + 2 + 3j, \mh{5mm} \text{and} \label{eq:A_def} \\
    S &\coloneqq \{\vv{\ell_j}\}_{j = 0}^N \label{eq:S_def}.
\end{align}
Note that, in the case of \(S \cap \overline{P \cup Q} = \varnothing \), this is analogous to the decomposition of \cref{prop:symdiff_cond1}, with \(S\) playing the role of \(P \cap Q\): the vertices of \(L\) between two consecutive elements \(\vv{\ell_j}\), \(\vv{\ell_{j+1}} = \vv{\ell_j + 3}\) of \(S\) form the block \(L_j = \{\vv{\ell_j + 1}, \vv{\ell_j + 2}\}\) in the notation of \cref{eq:symdiff_structure}, so that \(|L_j| = 2\) for each block.
Note also that \(S \cap L = \varnothing\), that is, \(S \subseteq (P \cap Q) \cup \overline{P \cup Q}\): if \(\vv{\ell_j} \in L\) for some \(j\), then \(\kv{\ell_j} = \{\vv{\ell_j - 1}, \vv{\ell_j}, \vv{\ell_j + 1}\} \subseteq L\) with \(\ell_j > \ell\), in contradiction with the maximality of \(\ell\).
With these defined, we show the following
\begin{claim}
    \label{cl:A_S_nonempty}
    For the aforementioned collection of \(P\), \(Q\), \(L\), \(\kv{\ell}\), \(A\), \(S\), and \(N\), we have that \(N \geq 0\) (i.e. \(\varnothing \notin \{A, S\}\)) if \(P\) and \(Q\) satisfy the contribution conditions. 
\end{claim}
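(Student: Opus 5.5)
We have to show that both \(\vv{\ell+1}\) and \(\vv{\ell+3}\) lie in \(L\); then the \(j=0\) instance of \cref{eq:A_def} holds and \(N \geq 0\). Since \(\kv{\ell} \subseteq L\), we already have \(\vv{\ell+1} \in L\), so everything reduces to showing \(\vv{\ell+3} \in L\). The plan is to argue by contradiction, combining the maximality of \(\ell\), condition 1 for \(L\), and the contribution conditions, and to separate a purely structural case from a parity case.

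\emph{Step 1 (structural).} First, \(\vv{\ell+2} \notin L\). Otherwise \(\kv{\ell+1} = \{\vv{\ell}, \vv{\ell+1}, \vv{\ell+2}\} \subseteq L\), contradicting the maximality of \(\ell\) in \cref{eq:kldef}. Now suppose \(\vv{\ell+3} \notin L\). Then \(\cv{\ell+3} \cap L = \varnothing\). Since \(L\) satisfies condition 1 of \cref{prop:subsetrep} on its index range, \(L\) cannot contain any vertex of index \(\geq \ell+3\). Hence \(\vv{\ell+1}\) is the vertex of \(L\) with maximal index. By \cref{cl:max_vert}, neither \(P\) nor \(Q\) contains a vertex of index larger than \(\ell+1\). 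So both free endpoints \(\vv{P}\) and \(\vv{Q}\) have index at most \(\ell+1\).

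\emph{Step 2 (local coloring).} By \cref{cl:clique_prefix_coloring}, \(\vv{\ell-1}\) and \(\vv{\ell}\) receive different colors; without loss of generality \(\vv{\ell-1} \in P {\setminus} Q\) and \(\vv{\ell} \in Q {\setminus} P\). The vertex \(\vv{\ell+1}\) cannot be \(Q\)-colored, because \(\{\vv{\ell-1},\vv{\ell},\vv{\ell+1}\}\) would then violate condition 1 or 2 for \(Q\), exactly as in the proof of \cref{cl:clique_prefix_coloring}. Therefore \(\vv{\ell+1} \in P\). This forces \(\vv{P} = \vv{\ell+1}\) and \(\vv{Q} = \vv{\ell}\). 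The configuration near the top is thus completely rigid: \(P\) ends with \(\vv{\ell-1}\)–\(\vv{\ell+1}\), and \(Q\) ends at \(\vv{\ell}\).

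\emph{Step 3 (parity, the main obstacle).} It remains to show that such a pair violates the contribution conditions, i.e.\ that \(\hv{P}\hv{Q} = (-1)^{|L|}\hv{Q}\hv{P}\), so that anticommutation would force \(|P|+|Q|\) even. I would proceed as follows. Let \(\vv{k}\) be the largest vertex of \(P\cap Q\) below \(\ell-1\). Consider the truncations \(\widetilde P = P\cap\{\vv{j}\}_{j\leq \ell-1}\) and \(\widetilde Q = Q\cap\{\vv{j}\}_{j\leq \ell-2}\) extended appropriately, chosen so that the two pieces share endpoints and \cref{prop:symdiff_cond1} together with \cref{cor:PQ_anticommuting} apply to them. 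Then I would account for the extra top vertices \(\vv{\ell}\) and \(\vv{\ell+1}\) directly from \cref{eq:app_zigzag_adjacency}:
\begin{itemize}
\item \(\hv{\ell+1}\) anticommutes with \(\hv{\ell}\) and with \(\hv{\ell-1}\) only;
\item \(\hv{\ell}\) anticommutes with \(\hv{\ell-1}\), \(\hv{\ell-2}\) and \(\hv{\ell+1}\).
\end{itemize}
I would multiply the resulting signs with the product rule for \(\scomm{\cdot}{\cdot}\), as in the inductive step of \cref{cor:PQ_anticommuting}, and compare the total with \((-1)^{|L|}\).

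The delicate point is the bookkeeping near \(\vv{\ell-2}\), i.e.\ whether \(\vv{\ell-2}\) lies in \(P\cap Q\), in \(Q\) only, or in neither. I expect this to require a short case split mirroring \cref{eq:symdiff_paths}, and this is where I anticipate the main difficulty. Once the sign is shown to be \((-1)^{|L|}\) in every case, the contribution conditions fail, which is the desired contradiction. Hence \(\vv{\ell+3} \in L\) and \(N \geq 0\).
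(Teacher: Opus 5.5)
Your Step 1 is correct, and your plan (fix the top of the two paths, then get a parity contradiction) is the paper's. However, Steps 2 and 3 leave a genuine gap.

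\textbf{Step 2.} You claim that \(\vv{\ell+1}\) cannot be \(Q\)-colored, and this is false. The argument of \cref{cl:clique_prefix_coloring} excludes two equally colored consecutive vertices only when a further vertex of \(L\) lies above them, and here \(\vv{\ell+2},\vv{\ell+3}\notin L\). For instance, with \(\ell=3\), the paths \(P=\bs{j}^*\md\vv{1}\md\vv{2}\) and \(Q=\bs{j}^*\md\vv{1}\md\vv{3}\md\vv{4}\) are anchored induced paths with \(L=\kv{3}\), \(\vv{2}\) \(P\)-colored and \(\vv{3},\vv{4}\) \(Q\)-colored. So there are two terminal configurations: \(\vv{P}=\vv{\ell-1},\ \vv{Q}=\vv{\ell+1}\), and \(\vv{Q}=\vv{\ell},\ \vv{P}=\vv{\ell+1}\). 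You treat only the second; the paper's proof handles both.

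\textbf{Step 3.} Since \(|P|+|Q|\equiv|L| \pmod 2\), your target \(\hv{P}\hv{Q}=(-1)^{|L|}\hv{Q}\hv{P}\) would make anticommutation force \(|P|+|Q|\) \emph{odd}. That is compatible with the contribution conditions, not a contradiction. What you need is \(\scomm{\hv{P}}{\hv{Q}}=(-1)^{|L|+1}\). Your stated identity is actually false: for \(P=\bs{j}^*\md\vv{1}\md\vv{2}\md\vv{4}\) and \(Q=\bs{j}^*\md\vv{1}\md\vv{3}\) (your configuration with \(\ell=3\)), the two products commute while \(|L|=3\). Apart from the sign, Step 3 is only a plan: the truncated pieces are ``extended appropriately'' and the case split at \(\vv{\ell-2}\) is left open. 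So the actual content of the claim remains unproved.

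\textbf{Missing idea.} The fix, which avoids both truncation and the split at \(\vv{\ell-2}\), is to append \(\vv{\ell+1}\) to whichever path lacks it. Take \(\widetilde P=P\cup\{\vv{\ell+1}\}\) in the first configuration and \(\widetilde Q=Q\cup\{\vv{\ell+1}\}\) in the second.
\begin{enumerate}[(i)]
\item The only neighbour of \(\vv{\ell+1}\) in the extended path is the old free endpoint (\(\vv{\ell-1}\), resp.\ \(\vv{\ell}\)). So the extended path is again an anchored induced path, and the two paths now share both endpoints.
\item Hence \cref{prop:symdiff_cond1} and \cref{cor:PQ_anticommuting} apply with \(\widetilde L=L{\setminus}\{\vv{\ell+1}\}\), which still satisfies condition 1. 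The extended pair therefore has commutation sign \((-1)^{|\widetilde L|+1}\).
\item Peeling \(\hv{\ell+1}\) back off multiplies this sign by \(-1\). This is because \(\hv{\ell+1}\) anticommutes with exactly one factor of the other path: \(\hv{\ell}\) in \(Q\) in the first case, \(\hv{\ell-1}\) in \(P\) in the second.
\end{enumerate}
Thus \(\scomm{\hv{P}}{\hv{Q}}=(-1)^{|\widetilde L|}=(-1)^{|L|+1}\), while \(|P|+|Q|\equiv|L|\). Anticommutation would then force \(|P|+|Q|\) even, contradicting the contribution conditions in both configurations. Hence \(\vv{\ell+3}\in L\) and \(N\geq 0\).
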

\begin{proof}
    If \(\vv{\ell_0} = \vv{\ell + 2} \in L\), then we contradict the maximality of \(\ell\) since then we would have \(\kv{\ell + 1} \subset L\).
    Thus, \(\vv{\ell_0} \notin L\). 
    If \(\vv{\ell_0 + 1} = \vv{\ell + 3} \notin L\), then \(L\) fails to satisfy condition 1 of \cref{prop:subsetrep} for \(\cv{\ell + 3}\) unless \(\ell + 1\) is the maximal index of any vertex in \(L\).
    Suppose this is the case.
    Since the pair is unordered, we may label the paths such that \(\vv{\ell - 1} \in P {\setminus} Q\), and then \(\vv{\ell} \in Q {\setminus} P\) by \cref{cl:clique_prefix_coloring}.
    By \cref{cl:max_vert}, this gives that \(\vv{\ell - 1} = \vv{P}\) and \(\vv{\ell + 1} = \vv{Q}\), or \(\vv{\ell} = \vv{Q}\) and \(\vv{\ell + 1} = \vv{P}\).
    Suppose that the first case holds, and let \(\widetilde{P} = P \cup \{\vv{\ell + 1}\}\) and \(\widetilde{L} = \widetilde{P} \oplus Q\).
    We see that \(\widetilde{P} \in \mc{P}^*\), as it satisfies the conditions of \cref{prop:subsetrep}, and we see that \(\widetilde{P}\), \(Q\), and \(\widetilde{L}\) satisfy the conditions of \cref{prop:symdiff_cond1}.
    Thus, by \cref{cor:PQ_anticommuting}
    \begin{align}
        (-1)^{|\widetilde{L}| + 1} = \scomm{\hv{\widetilde{P}}}{\hv{Q}} = \scomm{\hv{P}}{\hv{Q}} \scomm{\hv{\ell + 1}}{\hv{Q}} = -\scomm{\hv{P}}{\hv{Q}},
    \end{align}
    so \(\scomm{\hv{P}}{\hv{Q}} = (-1)^{|\widetilde{L}|}\).
    Similarly, 
    \begin{align}
        |P| + |Q| = (|\widetilde{P}| - 1) + |Q| = |\widetilde{L}| + 2|\widetilde{P} \cap Q| - 1 = |\widetilde{L}| + 1 \text{ (mod 2)}.
    \end{align}
    Thus, it is not possible for \((P, Q)\) to satisfy the contribution conditions in this case.
    The second case, in which \(\vv{\ell} = \vv{Q}\) and \(\vv{\ell + 1} = \vv{P}\), follows in the same way upon setting \(\widetilde{Q} = Q \cup \{\vv{\ell + 1}\}\) and \(\widetilde{L} = P \oplus \widetilde{Q}\), so we must have \(\vv{\ell_0 + 1} \in L\).
    We therefore have \(\{\vv{\ell_0 - 1}, \vv{\ell_0 + 1}\} \subset L\), \(N \geq 0\), \(A \neq \varnothing\), and \(S \neq \varnothing\).
\end{proof}

\begin{claim}
    \label{cl:ordering}
    For the aforementioned collection of \(P\), \(Q\), \(L\), \(\kv{\ell}\), \(A\), \(S\), and \(N\), suppose that both \(S \cap \overline{P \cup Q}\) and \(S \cap (P \cap Q)\) are nonempty.
    The index of every vertex in \(S \cap \overline{P \cup Q}\) is larger than the index of any vertex in \(S \cap (P \cap Q)\).
\end{claim}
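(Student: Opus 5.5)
The plan is to prove the contrapositive in a strong form. I will show that as soon as some \(\vv{\ell_j} \in S\) lies in \(\overline{P \cup Q}\), one of the two paths must terminate before index \(\ell_j\). Once that is established, no vertex of \(S\) with larger index can lie in \(P \cap Q\), since \(P \cap Q\) is contained in that shorter path. Hence every element of \(S \cap (P\cap Q)\) has index smaller than \(\ell_j\), for every \(\vv{\ell_j} \in S \cap \overline{P\cup Q}\), which is exactly the claim.

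\emph{Step 1 (the coloring around \(\vv{\ell_j}\)).} Fix \(j \in \{0,\dots,N\}\) with \(\vv{\ell_j} \notin P \cup Q\). By the definition of \(A\) in \cref{eq:A_def}, both \(\vv{\ell_j - 1}\) and \(\vv{\ell_j + 1}\) lie in \(L\). Since the pair is unordered, label the paths so that \(\vv{\ell_j - 1} \in P {\setminus} Q\).

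Consider the clique cutset \(\cv{\ell_j} = \{\vv{\ell_j - 1}, \vv{\ell_j}\}\). Neither of its two vertices belongs to \(Q\). Condition 1 of \cref{prop:subsetrep} applied to \(Q\) therefore forces the free endpoint of \(Q\) to have index at most \(\ell_j - 1\). Because \(\vv{\ell_j-1}\notin Q\), the index is in fact at most \(\ell_j - 2\). In particular \(Q\) contains no vertex of index \(\geq \ell_j\).

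\emph{Step 2 (conclusion).} Every \(\vv{\ell_k} \in S\) with \(k > j\) has index \(\ell_k > \ell_j\). By Step 1 it is not in \(Q\), hence not in \(P \cap Q\). Running this over every \(\vv{\ell_j} \in S\cap\overline{P\cup Q}\) shows that all elements of \(S \cap (P \cap Q)\) precede all elements of \(S \cap \overline{P\cup Q}\) in index. The hypothesis that both sets are nonempty is only needed for the statement to be non-vacuous.

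As a consistency check, Step 1 also forces the remaining vertices of \(L\) after \(\ell_j\) into \(P {\setminus} Q\). In particular \(\vv{\ell_j+1}\in P{\setminus} Q\), and if \(j<N\) then also \(\vv{\ell_j+2}\in P{\setminus} Q\). This is compatible with the alternating pattern of \cref{prop:symdiff_cond1} only because \(Q\) has already ended, which matches the picture behind the claim. I will not need this for the proof itself.

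\emph{Expected obstacle.} The only delicate point is in Step 1. I must make sure that the index of the free endpoint of \(Q\) really exceeds \(\ell_j\) whenever \(Q\) contains a vertex of index \(\geq \ell_j\), so that condition 1 genuinely applies to \(\cv{\ell_j}\). This holds because induced paths in the zigzag ladder are ordered by index and the anchor \(\bs{j}^*\) sits at index \(0 < \ell_j\). The range \(j\in\{\Pv{0}+i\}\) in condition 1 therefore covers \(\ell_j\) as soon as \(Q\) extends to index \(\ell_j\) or beyond. Beyond that check, the argument uses only the definitions of \(A\), \(S\) and \cref{prop:subsetrep}, and does not need \cref{cl:max_vert} or the contribution conditions.
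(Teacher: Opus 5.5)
Your proof is correct, and it takes a more direct route than the paper's.

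The paper argues by contradiction. It first uses \(S \subseteq (P\cap Q)\cup\overline{P\cup Q}\) to reduce to a consecutive pair \(\vv{\ell_j}\in\overline{P\cup Q}\), \(\vv{\ell_{j+1}}\in P\cap Q\). It then runs a four-way case analysis on the colors of \(\vv{\ell_j+1}\) and \(\vv{\ell_j+2}\):
\begin{itemize}
\item in the two monochromatic cases, condition 2 of \cref{prop:subsetrep} fails for \(\kv{\ell_j+2}\), using \(\vv{\ell_{j+1}}\in P\cap Q\);
\item in the two mixed cases, condition 1 fails for \(\cv{\ell_j+1}\).
\end{itemize}

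You look to the left of \(\vv{\ell_j}\) instead. Since \(\vv{\ell_j-1}\in A\subset L\) lies in exactly one of the two paths, the other path misses the whole cutset \(\cv{\ell_j}\). Being anchored at index \(0<\ell_j\), that path must end before \(\ell_j\). This is the same cutset mechanism the paper itself uses later in the proof of \cref{prop:symdiff_cond2}, for \(\cv{\ell_m+3}=\cv{\ell_{m+1}}\).

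Your route buys three things:
\begin{itemize}
\item it uses only condition 1;
\item it needs neither the reduction to consecutive indices nor \(S\cap L=\varnothing\);
\item it gives a stronger conclusion: \(P\cap Q\) contains no vertex at all of index \(\geq\ell_j\), and every vertex of \(L\) beyond \(\ell_j\) carries a single color.
\end{itemize}
The paper's version stays in the local color-by-color style of the surrounding claims, but yours is shorter and makes the underlying reason transparent.
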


\begin{proof}
    Suppose by contradiction that there exists \(j\), \(k\) with \(j < k\) such that \(\vv{\ell_j} \in \overline{P \cup Q}\) and \(\vv{\ell_k} \in P \cap Q\).
    If such a pair exists, then, since \(S \subseteq (P \cap Q) \cup \overline{P \cup Q}\), there is also a pair for which \(k = j + 1\), so we assume this is the case.
    We have \(\{\vv{\ell_j + 1}, \vv{\ell_j + 2}\} \subset L\), as  \(\ell_j + 2 = \ell_k - 1\).
    If both of these vertices are in \(P {\setminus} Q\), then \(P\) fails condition 2 of \cref{prop:subsetrep} for \(\kv{\ell_j + 2}\) since \(\vv{\ell_j + 3} = \vv{\ell_k} \in P \cap Q\).
    If both vertices are in \(Q {\setminus} P\), then \(Q\) fails the same condition.
    If \(\vv{\ell_j + 1} \in P {\setminus} Q\) and \(\vv{\ell_j + 2} \in Q {\setminus} P\), then \(Q\) fails condition 1 of \cref{prop:subsetrep} for \(\cv{\ell_j + 1}\) since \(\vv{\ell_j} \notin P \cup Q\).
    Similarly, if \(\vv{\ell_j + 1} \in Q {\setminus} P\) and \(\vv{\ell_j + 2} \in P {\setminus} Q\), then \(P\) fails the same condition.
    Thus no such pair \(j\), \(k\) exists, which is exactly the claim.
\end{proof}

\begin{claim}
    \label{cl:tail_max_index}
    With \(\ell\), \(A\), \(S\), and \(N\) as defined in \cref{eq:kldef,,eq:A_def,,eq:S_def}, if \(\vv{\ell_N + 3} \notin L\), then the maximal index of any vertex in \(L\) is \(\ell_N + 1\) or \(\ell_N + 2\).
\end{claim}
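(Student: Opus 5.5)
The plan is to use the maximality of \(N\) in the definition of \(A\) together with condition 1 of \cref{prop:subsetrep} for \(L\), and to rule out larger indices by a short case analysis. Recall that \(\{\vv{\ell_N - 1}, \vv{\ell_N + 1}\} \subset L\) by definition of \(N\), so the maximal index of \(L\) is at least \(\ell_N + 1\). It remains to show that no vertex of index \(\geq \ell_N + 3\) lies in \(L\) whenever \(\vv{\ell_N + 3} \notin L\).

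First I would observe that \(\vv{\ell_N + 3} = \vv{\ell_{N+1}}\), so this vertex is precisely the next candidate element of \(S\). The maximality of \(N\) states that \(\{\vv{\ell_{N+1} - 1}, \vv{\ell_{N+1} + 1}\} = \{\vv{\ell_N + 2}, \vv{\ell_N + 4}\} \not\subset L\). Now suppose that some vertex of \(L\) has index \(k \geq \ell_N + 3\); since \(\vv{\ell_N+3}\notin L\), in fact \(k \geq \ell_N + 4\). The set \(L\) satisfies condition 1 of \cref{prop:subsetrep}, interpreted as the requirement that every clique cutset \(\cv{j}\) strictly between the minimal and maximal indices of \(L\) meets \(L\). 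This is the form in which it was used in \cref{cl:A_S_nonempty}. Applying this to \(\cv{\ell_N + 3} = \{\vv{\ell_N + 2}, \vv{\ell_N + 3}\}\) forces \(\vv{\ell_N + 2} \in L\). Applying it to \(\cv{\ell_N + 4} = \{\vv{\ell_N + 3}, \vv{\ell_N + 4}\}\) forces \(\vv{\ell_N + 4} \in L\), which is legitimate because the maximal index is at least \(\ell_N + 4\). Hence \(\{\vv{\ell_N + 2}, \vv{\ell_N + 4}\} \subset L\), contradicting the maximality of \(N\).

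Therefore every vertex of \(L\) has index at most \(\ell_N + 2\), and the maximum is attained at \(\ell_N + 1\) or \(\ell_N + 2\), since \(\vv{\ell_N+1} \in L\).

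The only delicate point is the boundary bookkeeping in condition 1. I must confirm that the cutsets \(\cv{\ell_N+3}\) and \(\cv{\ell_N+4}\) lie strictly inside the index range of \(L\), so that condition 1 applies to them. This holds because the minimum index of \(L\) is at most \(\ell - 1 < \ell_N\) and the putative maximum is at least \(\ell_N + 4\). I should also note that \(\ell_N + 4 \leq m\) whenever such a vertex exists, so all the indices used are valid. I expect no further obstacle, since the claim is essentially a direct consequence of the maximality of \(N\).
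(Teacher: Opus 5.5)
Your proof is correct and is essentially the paper's argument. Both apply condition 1 of \cref{prop:subsetrep} to \(\cv{\ell_N+3}\) and \(\cv{\ell_N+4}\), which forces \(\{\vv{\ell_N+2},\vv{\ell_N+4}\}\subset L\) and so contradicts the maximality of \(N\). The paper splits into the cases \(\vv{\ell_N+2}\notin L\) and \(\vv{\ell_N+2}\in L\), whereas you fold both into a single contradiction.
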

\begin{proof}
    Suppose \(\vv{\ell_N + 3} \notin L\).
    If \(\vv{\ell_N + 2} \notin L\), then \(\ell_N + 1\) is the maximal index of any vertex in \(L\). 
    Otherwise, some \(\vv{j} \in L\) with \(j > \ell_N + 1\) would require \(j \ge \ell_N + 4\) (since \(\vv{\ell_N+2}, \vv{\ell_N+3} \notin L\)), and \(L\) would fail condition 1 of \cref{prop:subsetrep} for \(\cv{\ell_N + 3}\), contrary to assumption.
    If instead \(\vv{\ell_N + 2} \in L\), then it is the maximal index of any vertex in \(L\): otherwise we require \(\vv{\ell_N + 4} \in L\) so that \(L\) satisfies condition 1 of \cref{prop:subsetrep} for \(\cv{\ell_N + 4}\), but then this produces an additional subset \(\{\vv{\ell_{N+1} - 1}, \vv{\ell_{N+1} + 1}\} = \{\vv{\ell_N + 2}, \vv{\ell_N + 4}\} \subset L \), contradicting the maximality of \(N\).
\end{proof}

\begin{proposition}[Unique path-coloring]
    \label{prop:symdiff_cond2}
    If \(L = P \oplus Q\) satisfies condition 1 of \cref{prop:subsetrep} but does not satisfy condition 2 for \(\kv{\ell}\) of maximal index \(\ell\) as defined in \cref{eq:kldef}, let \(N\) denote the maximal index defining \(A\) and \(S\) (nonempty by \cref{cl:A_S_nonempty}) as given in \cref{eq:A_def,,eq:S_def}.
    Since the pair \((P, Q)\) is unordered, we label the two paths such that \(\vv{\ell - 1} \in P {\setminus} Q\), and hence \(\vv{\ell} \in Q {\setminus} P\) by \cref{cl:clique_prefix_coloring}.
    If \(S \cap (P \cap Q) = \varnothing\), then the color of \(\vv{\ell + 1}\) fully determines the path-membership assignment to all vertices in \((P \cup Q) \cap \{\vv{j}\}_{j \geq \ell - 1}\) (and hence the coloring of \(L \cap \{\vv{j}\}_{j \geq \ell - 1}\)).
    If \(S \cap (P \cap Q) \neq \varnothing\), then letting \(\ell_m\) be the maximal index of any vertex in this set, the value of \(m\) fully determines the path-membership assignment to all of the vertices in \((P \cup Q) \cap \{\vv{j}\}_{j \geq \ell - 1}\) (and hence the coloring of \(L \cap \{\vv{j}\}_{j \geq \ell - 1}\)). 
    In this case \(\vv{\ell_j - 1} \in P {\setminus} Q\) and \(\vv{\ell_j + 1} \in Q {\setminus} P\) for all \(j \leq m\), and \(\{\vv{j}\}_{j > \ell_m + 1} \cap L \subseteq P\) if \(m < N\).
\end{proposition}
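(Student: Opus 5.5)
The plan is to propagate colors (path membership) left to right, starting from the fixed labeling \(\vv{\ell - 1} \in P {\setminus} Q\), \(\vv{\ell} \in Q {\setminus} P\) given by \cref{cl:clique_prefix_coloring}. Only the tools already set up will be used: the two conditions of \cref{prop:subsetrep} applied separately to \(P\) and to \(Q\), the facts \(S \cap L = \varnothing\) and \(A \subset L\), the ordering of \cref{cl:ordering}, and the tail description of \cref{cl:tail_max_index}. The local rules I expect to use repeatedly are the following. Two consecutive vertices \(\vv{j}, \vv{j+1}\) of the same color force the other path to miss \(\cv{j+1}\), unless that path has already terminated. Three consecutive members of one path violate condition 2. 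A vertex of \(S\) that lies in \(P \cap Q\) forces its two neighbours in \(A\) to carry opposite colors; otherwise one path would contain a triangle \(\kv{\ell_j}\).

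First I treat the window \(\{\vv{\ell - 1}, \vv{\ell}, \vv{\ell + 1}\} = \kv{\ell}\) together with \(\vv{\ell_0} = \vv{\ell+2}\). Since \(\vv{\ell+2} \notin L\), it lies either in \(P \cap Q\) or outside \(P \cup Q\).

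Case \(S \cap (P \cap Q) = \varnothing\). Every \(\vv{\ell_j}\) is absent from both paths. Condition 1 for \(\cv{\ell_j}\) and \(\cv{\ell_j + 1}\) then forces each path to contain one of \(\vv{\ell_j - 1}, \vv{\ell_j + 1}\), since it must cross the cutset. In addition, condition 1 for \(\cv{\ell_j-1}\) and \(\cv{\ell_j}\) within each path forces the colors on the block \(\{\vv{\ell_j+1}, \vv{\ell_j+2}\}\) to be complementary, in the manner of \cref{eq:symdiff_paths}. Once \(\vv{\ell+1}\) is colored, \(\vv{\ell_0 - 1}\) has a fixed color, so \(\vv{\ell_0+1}\) receives the other color. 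Inside each block, repeated vertices are excluded because the relevant \(\vv{\ell_j}\) are missing. I induct over \(j \le N\) in this way. At the tail, \cref{cl:tail_max_index} together with \cref{cl:max_vert} pins down which vertex is each free endpoint, so nothing of index larger than the last vertex of \(L\) belongs to \(P \cup Q\).

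Case \(S \cap (P \cap Q) \neq \varnothing\). By \cref{cl:ordering}, the elements of \(S\) in \(P \cap Q\) form an initial segment \(\vv{\ell_0}, \dots, \vv{\ell_m}\), and the remaining elements of \(S\) are outside \(P \cup Q\). For \(j \le m\), \(\vv{\ell_j}\) belongs to both paths, so its \(A\)-neighbours receive opposite colors. I then show \(\vv{\ell_j - 1} \in P\) by induction starting from \(\vv{\ell - 1} \in P\). For the step, note that \(\vv{\ell+1}\) must avoid the color that would create a triangle with \(\vv{\ell}, \vv{\ell+2}\) in \(Q\), so \(\vv{\ell+1} = \vv{\ell_0 - 1} \in P\). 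Alternation inside the block of length two then gives \(\vv{\ell_j+2} = \vv{\ell_{j+1}-1} \in P\).

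Beyond \(\ell_m\), the vertex \(\vv{\ell_m+1}\) lies in \(Q\) and \(\vv{\ell_{m+1}}\) lies outside \(P \cup Q\). If \(m<N\), I intend to show that \(Q\) must terminate at \(\vv{\ell_m + 1}\). Continuing \(Q\) would require \(\vv{\ell_m+2}\) or \(\vv{\ell_m+3}\). The first choice is impossible, since together with \(\vv{\ell_m}, \vv{\ell_m+1}\) it would give a triangle in \(Q\). The second lies in \(S\) and is excluded. Hence all later vertices of \(L\) lie in \(P\), which gives the final assertion, and \(P\) is then forced stepwise by condition 1. Membership on \(\{\vv{j}\}_{j\ge\ell-1}\) is therefore a function of \(m\).

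The main obstacle is the boundary bookkeeping. This includes the region just after \(\ell_m\) and the tail beyond \(\ell_N\), where paths may terminate and condition 1 stops applying to the terminated path. It also includes the contribution conditions, which are needed through \cref{cl:max_vert} to exclude a common endpoint beyond \(L\). I would handle these by an explicit small case check on the few vertices near the end of \(L\), using \cref{cl:tail_max_index}.
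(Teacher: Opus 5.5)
Your argument for \(S \cap (P \cap Q) \neq \varnothing\), covering \(j \le m\) and the subcase \(m < N\), is correct and follows the paper. The paper obtains \(\vv{\ell+1} \in P {\setminus} Q\) from condition 1 for \(P\) at \(\cv{\ell+1}\) rather than from a triangle in \(Q\), which is equivalent.

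The case \(S \cap (P \cap Q) = \varnothing\), however, rests on a false propagation rule. When \(\vv{\ell_j} \notin P \cup Q\), condition 1 at \(\cv{\ell_j}\) and at \(\cv{\ell_j+1}\) forces any path with vertices beyond \(\ell_j\) to contain \emph{both} \(\vv{\ell_j - 1}\) and \(\vv{\ell_j + 1}\). Since both lie in \(L\), only one of \(P\), \(Q\) can continue past \(\vv{\ell_j}\), and that path carries both vertices. So the two vertices receive the \emph{same} color.

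Your claim that each path contains one of them, so that \(\vv{\ell_0+1}\) gets the color opposite to \(\vv{\ell+1}\), is therefore backwards. For example, if \(\vv{\ell+1} \in P {\setminus} Q\) and \(\vv{\ell+3} \in Q {\setminus} P\), then \(Q\) contains \(\vv{\ell}\) and \(\vv{\ell+3}\) but misses \(\cv{\ell+2} = \{\vv{\ell+1}, \vv{\ell+2}\}\). Likewise, the blocks \(\{\vv{\ell_j+1}, \vv{\ell_j+2}\}\) are monochromatic in this case, not complementary. The exchange pattern of \cref{eq:symdiff_paths} arises only across vertices of \(P \cap Q\), that is, in your second case.

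The missing idea is the termination caveat you listed among your local rules but never applied: one of the paths ends immediately.
\begin{itemize}
\item If \(\vv{\ell+1} \in Q {\setminus} P\), then \(P\) misses \(\cv{\ell+1} = \{\vv{\ell}, \vv{\ell+1}\}\), so \(\vv{P} = \vv{\ell-1}\).
\item If \(\vv{\ell+1} \in P {\setminus} Q\), then \(Q\) misses \(\cv{\ell+2}\) because \(\vv{\ell_0} \notin P \cup Q\), so \(\vv{Q} = \vv{\ell}\).
\end{itemize}
Either way, \((P \cup Q) \cap \{\vv{j}\}_{j > \ell}\) lies in a single path, coincides with \(L \cap \{\vv{j}\}_{j > \ell}\), and contains no vertex of \(P \cap Q\). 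This is what makes the assignment a function of the color of \(\vv{\ell+1}\). It also excludes vertices of \(P \cap Q\) inside the range of \(L\) past \(\ell_N + 1\), such as \(\vv{\ell_N+2}\) when \(\vv{\ell_N+3} \in L\). Your block-by-block propagation does not address these.

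Separately, the subcase \(m = N\) is only promised. It is short but needs to be carried out:
\begin{itemize}
\item First, \(\vv{\ell_N+1} \in Q {\setminus} P\).
\item If \(\vv{\ell_N+3} \in L\), then \(\vv{\ell_N+2} \notin L\) by maximality of \(\ell\). Also \(\vv{\ell_N+2} \notin P \cap Q\), since otherwise \(\kv{\ell_N+1} \subseteq Q\). Hence \(\vv{P} = \vv{\ell_N}\), and everything later lies in \(Q\).
\item If \(\vv{\ell_N+3} \notin L\), then \cref{cl:tail_max_index} leaves \(\ell_N+1\) or \(\ell_N+2\) as the maximal index of \(L\). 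In the latter case the triangle \(\kv{\ell_N+1}\) forces \(\vv{\ell_N+2} \in P {\setminus} Q\). In both cases \cref{cl:max_vert} excludes anything beyond, which fixes both endpoints.
\end{itemize}
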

\begin{proof}
    With the labeling fixed in the statement, we have \(\vv{\ell - 1} \in P {\setminus} Q\), and thus \(\vv{\ell} \in Q {\setminus} P\).
    If \(S \cap (P \cap Q)\) is empty, then if \(\vv{\ell + 1} \in Q {\setminus} P\), then \(\vv{\ell - 1} = \vv{P}\), or \(P\) fails condition 1 of \cref{prop:subsetrep} for \(\cv{\ell+ 1}\).
    This gives \(\{\vv{j}\}_{j > \ell - 1} \cap L \subseteq Q\).
    If \(\vv{\ell + 1} \in P {\setminus} Q\), then \(\vv{\ell} = \vv{Q}\) or \(Q\) fails condition 1 of \cref{prop:subsetrep} for \(\cv{\ell + 2} = \cv{\ell_0}\).
    This gives \(\{\vv{j}\}_{j > \ell} \cap L \subseteq P\), so we see that the path to which \(\vv{\ell + 1}\) belongs fully constrains the path-membership assignment to all vertices in \((P \cup Q) \cap \{\vv{j}\}_{j \geq \ell - 1}\); the colors of \(\vv{\ell - 1}\) and \(\vv{\ell}\) are fixed by the labeling convention and \cref{cl:clique_prefix_coloring}, and not by any further choice.
    
    Next suppose \(S \cap (P \cap Q)\) is nonempty, and let \(\ell_m\) be the maximal index of any vertex in this set according to the proposition.
    That is \(S \cap \{\vv{\ell_j}\}_{j = 0}^m \subseteq (P \cap Q)\), and \(S \cap \{\vv{\ell_j}\}_{j = m + 1}^N \subseteq \overline{P \cup Q}\) by \cref{cl:ordering}.
    We must have that \(\vv{\ell + 1} \in P {\setminus} Q\) or \(P\) fails condition 1 of \cref{prop:subsetrep} for \(\cv{\ell + 1}\).
    For any \((P, Q)\) satisfying these conditions such that \(P \oplus Q = L\), we let \(P' \coloneqq P \cap \{\vv{j}\}_{j \leq \ell_m}\), \(Q' \coloneqq Q \cap \{\vv{j}\}_{j \leq \ell_m}\), and \(L' = P' \oplus Q'\).
    Since \(P'\), \(Q'\), and \(L'\) satisfy the conditions of \cref{prop:symdiff_cond1}, we have that the path-membership assignment of the vertices in \(P' \cap \{\vv{j}\}_{j \geq \ell - 1}\) and \(Q' \cap \{\vv{j}\}_{j \geq \ell - 1}\) are fixed by \cref{eq:symdiff_paths} and the fact that \(\vv{\ell + 1} \in P {\setminus} Q\).
    Explicitly, colors alternate along each block of \(L'\) and are exchanged across each vertex of \(P' \cap Q'\), so \(\vv{\ell_0 - 1} = \vv{\ell + 1} \in P {\setminus} Q\) implies \(\vv{\ell_j - 1} \in P {\setminus} Q\) and \(\vv{\ell_j + 1} \in Q {\setminus} P\) for all \(j \leq m\).
    
    Finally, we have two cases for the vertices in \(L \cap \{\vv{j}\}_{j \geq \ell_m}\): either \(m < N\) or \(m = N\).
    If \(m < N\), then we cannot have that both \(\{\vv{\ell_m + 1}, \vv{\ell_m + 2}\}\) belong to \(P {\setminus} Q\) or \(P\) fails condition 2 of \cref{prop:subsetrep} for \(\kv{\ell_m + 1}\), and an identical argument holds if both of these vertices belong to \(Q {\setminus} P\).
    Since \(\vv{\ell_m - 1}\) belongs to \(P {\setminus} Q\) as shown above, \(P\) again fails condition 2 for \(\kv{\ell_m}\) if \(\vv{\ell_m + 1}\) belongs to \(P {\setminus} Q\), so we must have \(\vv{\ell_m + 1} \in Q {\setminus} P\) and \(\vv{\ell_m + 2} \in P {\setminus} Q\).
    We must additionally have that \(\vv{\ell_m + 1} = \vv{Q}\) or \(Q\) will fail condition 1 for \(\cv{\ell_m + 3}\) (indeed, \(\vv{\ell_m + 2} \in P {\setminus} Q\), and \(\vv{\ell_m + 3} = \vv{\ell_{m+1}} \notin Q\) since \(S \subseteq (P \cap Q) \cup \overline{P \cup Q}\) and \(m\) is maximal).
    This gives that \(\{\vv{j}\}_{j > \ell_m + 1} \cap L \subset P\), and this fixes the path-membership assignment of all vertices in \((P \cup Q) \cap \{\vv{j}\}_{j \geq \ell - 1}\) in the case that \(m < N\)
    
    If \(m = N\), then we again have two cases: either \(\vv{\ell_N + 3} \notin L\) or \(\vv{\ell_N + 3} \in L\).
    If \(\vv{\ell_N + 3} \notin L\), then by \cref{cl:tail_max_index} the maximal index of any vertex in \(L\) is \(\ell_N + 1\) or \(\ell_N + 2\).
    Let this maximal index be \(k\); recall that \(\vv{\ell_N - 1} \in P {\setminus} Q\) as shown above.
    If \(\vv{\ell_N + 1} \in P {\setminus} Q\), then \(P\) fails condition 2 of \cref{prop:subsetrep} for \(\kv{\ell_N}\), so \(\vv{\ell_N + 1} \in Q {\setminus} P\).
    If \(k = \ell_N + 1\), then by \cref{cl:max_vert}, this gives \(\vv{\ell_N} = \vv{P}\) and \(\vv{\ell_N + 1} = \vv{Q}\). 
    If \(k = \ell_N + 2\), then if \(\vv{\ell_N + 2} \in Q {\setminus} P\), then \(Q\) fails condition 2 for \(\kv{\ell_N + 1}\), so \(\vv{\ell_N + 2} \in P {\setminus} Q\).
    This gives \(\vv{\ell_N + 1} = \vv{Q}\) and \(\vv{\ell_N + 2} = \vv{P}\).
    In both cases, the path-membership assignment to all vertices in \((P \cup Q) \cap \{\vv{j}\}_{j \geq \ell - 1}\) is completely determined.
    
    If \(\vv{\ell_N + 3} \in L\), then we cannot have \(\vv{\ell_N + 2} \in L\), or we contradict the assumed maximality of \(\ell\) because \(\kv{\ell_N + 2} = \{\ell_N + 1, \ell_N + 2, \ell_N + 3\} \subset L\).
    As shown above, \(\vv{\ell_N - 1} \in P {\setminus} Q\) and \(\vv{\ell_N + 1} \in Q {\setminus} P\).
    If \(\vv{\ell_N + 2} \in P \cap Q\), then \(Q\) fails condition 2 of \cref{prop:subsetrep} for \(\kv{\ell_N + 1}\), so we must have \(\vv{\ell_N + 2} \notin P \cup Q\).
    If \(\vv{\ell_N} \neq \vv{P}\), then \(P\) fails condition 1 of \cref{prop:subsetrep} for \(\cv{\ell_N + 2}\).
    Thus, \(\{\vv{j}\}_{j > \ell_N} \subset Q\), and the path-membership assignment to all vertices in \((P \cup Q) \cap \{\vv{j}\}_{j \geq \ell - 1}\) is completely determined in this case.
\end{proof}

\Cref{prop:symdiff_cond2} allows us to partition the set of pairs \((P, Q)\) with \(P \oplus Q = L\) into collections which recursively cancel.
In particular, we let 
\begin{align}
    \mc{C}[P_{\leq \ell}, Q_{\leq \ell}, L] = \{(P, Q) \in \mc{P}^* \times \mc{P}^* \mid P \cap \{\vv{j}\}_{j \leq \ell} = P_{\leq \ell},\ Q \cap \{\vv{j}\}_{j \leq \ell} = Q_{\leq \ell},\ P \oplus Q = L\}
\end{align}
where \(L\) is a fixed vertex set that satisfies condition 1 but not condition 2 of \cref{prop:subsetrep}, \(\ell\) is determined by \(L\) as specified in \cref{eq:kldef}, and \(P_{\leq \ell}\), \(Q_{\leq \ell} \subseteq \{\vv{j}\}_{j \leq \ell}\) are fixed vertex sets with \(\vv{\ell - 1} \in P_{\leq \ell}\) (the labeling convention of \cref{prop:symdiff_cond2}), and the sum ranges over pairs of anchored induced paths.
Every unordered pair of paths with \(P \oplus Q = L\) belongs to exactly one such class, and the requirement \(P \oplus Q = L\) already encodes that condition 1 holds and condition 2 fails.
We prove that 
\begin{align}
    \sum_{(P, Q) \in \mc{C}[P_{\leq \ell}, Q_{\leq \ell}, L]} \Omega_{(P, Q)} = 0.
\end{align}

Using the definitions and the labeling convention of \cref{prop:symdiff_cond2}, \(\vv{\ell - 1}\) is \(P\)-colored (and thus \(\vv{\ell}\) is \(Q\)-colored by \cref{cl:clique_prefix_coloring}) for every pair in \(\mc{C}[P_{\leq \ell}, Q_{\leq \ell}, L]\), since \(P_{\leq \ell}\) and \(Q_{\leq \ell}\) are fixed, and this convention only names the two paths of an unordered pair.
\Cref{prop:symdiff_cond2} gives that there is a unique pair for which \(S \cap (P \cap Q) = \varnothing\) and \(\vv{\ell + 1}\) is \(Q\)-colored; let this be \((P_a, Q_a)\).
Similarly, there is a unique pair for which \(S \cap (P \cap Q) = \varnothing\) and \(\vv{\ell + 1}\) is \(P\)-colored; let this be \((P_b, Q_b)\).
Finally, let the unique pair for which \(S \cap (P \cap Q) = \{\vv{\ell_j}\}_{j = 0}^m\) be \((P_m, Q_m)\).
In particular, we denote \((P_{-1}, Q_{-1}) \coloneqq (P_b, Q_b)\) in the following expressions, and we have
\begin{align}
    P_b &= P_a \cup (Q_a \cap \{\vv{j}\}_{j > \ell}) \label{eq:pbdef} \\
    Q_b &= Q_a \setminus (Q_a \cap \{\vv{j}\}_{j > \ell}), \label{eq:qbdef}
\end{align}
and
\begin{align}
    \label{eq:pmqmdef}
    \begin{cases}
        \begin{cases}
            P_m = (P_{m - 1} \setminus \{\vv{\ell_m + 1}\}) \cup \{\vv{\ell_m}\} & \\
            Q_m = (Q_{m - 1} \cup \{\vv{\ell_m + 1}\}) \cup \{\vv{\ell_m}\} &
        \end{cases} & 0 \leq m < N \text{ or } m = N \text{ and } \vv{\ell_N + 3} \notin L \\
        \begin{cases}
            P_m = [P_{m - 1} \setminus (P_{m - 1} \cap \{\vv{j}\}_{j \geq \ell_m + 1})] \cup \{\vv{\ell_m}\} & \\
            Q_m = [Q_{m - 1} \cup (P_{m - 1} \cap \{\vv{j}\}_{j \geq \ell_m + 1})] \cup \{\vv{\ell_m}\} & 
        \end{cases} & m = N \text{ and } \vv{\ell_N + 3} \in L
    \end{cases}.
\end{align}
\begin{proposition}
    \label{prop:allornothing}
    All of the elements of \(\mc{C}[P_{\leq \ell}, Q_{\leq \ell}, L]\) satisfy the contribution conditions if and only if \((P_a, Q_a)\) does.
\end{proposition}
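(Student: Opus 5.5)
Since every pair in \(\mc{C}[P_{\leq \ell}, Q_{\leq \ell}, L]\) has the same symmetric difference \(L\), the plan is to show that both contribution conditions are constant across the class. The parity condition is handled by the identity \(|P| + |Q| = |L| + 2|P \cap Q|\): all pairs in the class share \(L\), so \(|P| + |Q| \equiv |L| \pmod 2\) for every element. The parity condition therefore holds for one element iff it holds for all, in particular iff it holds for \((P_a, Q_a)\). It remains to show that \(\scomm{\hv{P}}{\hv{Q}}\) is the same for every pair in the class. I would compare each pair with its predecessor in the chain \((P_a,Q_a) \to (P_b,Q_b) = (P_{-1},Q_{-1}) \to (P_0,Q_0) \to \dots\) given by \cref{eq:pbdef,eq:qbdef,eq:pmqmdef}.

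\emph{Step 1: \((P_a,Q_a)\) versus \((P_b,Q_b)\).} Write \(X \coloneqq Q_a \cap \{\vv{j}\}_{j > \ell}\), so \(P_b = P_a \cup X\) and \(Q_b = Q_a \setminus X\). Since \(P_a \cap X = \varnothing\), the product rule for \(\scomm{\cdot}{\cdot}\) gives
\begin{align}
    \scomm{\hv{P_b}}{\hv{Q_b}} = \scomm{\hv{P_a}}{\hv{Q_a}}\,\scomm{\hv{P_a}}{\hv{X}}\,\scomm{\hv{X}}{\hv{Q_a}}\,\scomm{\hv{X}}{\hv{X}}^{-1},
\end{align}
up to the identity \(\hv{Q_a} = \hv{Q_b}\hv{X}\) (with ordering signs that cancel inside \(\scomm{\cdot}{\cdot}\)). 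I would count the edges between \(X\) and \(P_a\) and between \(X\) and \(Q_b\) using \cref{eq:app_zigzag_adjacency}. By \cref{prop:symdiff_cond2}, \(\vv{P_a} = \vv{\ell-1}\), \(X\) starts at \(\vv{\ell+1}\), and \(\vv{Q_b} = \vv{\ell}\), so only the vertices \(\vv{\ell-1}, \vv{\ell}, \vv{\ell+1}, \vv{\ell+2}\) can contribute. The aim is to check that the number of anticommuting pairs changes by an even amount. If necessary, I would use \cref{cor:PQ_anticommuting} on sub-ladders, as in the proof of \cref{cl:A_S_nonempty}.

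\emph{Step 2: \((P_{m-1},Q_{m-1})\) versus \((P_m,Q_m)\).} Here \(\vv{\ell_m}\) is added to both paths and \(\vv{\ell_m+1}\) is moved from \(P\) to \(Q\). Adding a common vertex \(\vv{\ell_m}\) to both paths multiplies \(\scomm{\hv{P}}{\hv{Q}}\) by \(\scomm{\hv{\ell_m}}{\hv{Q_{m-1}}}\scomm{\hv{P_{m-1}}}{\hv{\ell_m}}\). The move of \(\vv{\ell_m+1}\) contributes similar local factors. All neighbours of these vertices lie in \(\{\vv{\ell_m-2},\dots,\vv{\ell_m+3}\}\), and their memberships are fixed by \cref{prop:symdiff_cond2}: \(\vv{\ell_m-1} \in P\), \(\vv{\ell_m-2} \in Q\), the colours alternate, and \(\vv{\ell_m+2} \in P\) when \(m < N\). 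Each factor can therefore be tabulated explicitly, and I expect the total number of sign flips to be even. The case \(m = N\) with \(\vv{\ell_N+3} \in L\), where a whole tail is transferred, is handled like Step 1: the tail is a path segment meeting the rest only near \(\vv{\ell_N}\).

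The main obstacle is the bookkeeping of these local sign counts, especially at the boundary cases \(m = N\) and near \(\ell_m + 3\). A uniform alternative I would try first is the following. Each move replaces, on a bounded window, a pair of subpaths with identical endpoints by another such pair, so the argument used in the pairwise-cancellation section (segment exchange preserving \(\scomm{\cdot}{\cdot}\)) might apply directly. This would reduce each step to \cref{cor:PQ_anticommuting} applied to windows of fixed \(|L|\).
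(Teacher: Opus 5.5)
Your treatment of the commutation sign follows the paper. Each step of the chain \((P_a,Q_a)\to(P_b,Q_b)\to(P_0,Q_0)\to\dots\to(P_N,Q_N)\) has the form \(P_m = P_{m-1}\oplus R_m\), \(Q_m = Q_{m-1}\oplus R_m\) for a common vertex set \(R_m\). Here \(R_m\) is your \(X\) in Step~1, \(\{\vv{\ell_m},\vv{\ell_m+1}\}\) in the generic step, and \(\vv{\ell_N}\) together with the tail of \(P_{N-1}\) beyond it in the last case. So the sign is preserved iff \(\scomm{\hv{P_{m-1}}}{\hv{R_m}} = \scomm{\hv{Q_{m-1}}}{\hv{R_m}}\), which is exactly what the paper checks.

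Your parity argument is genuinely different and cleaner. The paper verifies \(|P_b|+|Q_b| = |P_a|+|Q_a|\) and then an increment of \(2\) at each later step. Your identity \(|P|+|Q| = |L|+2|P\cap Q|\), with \(L\) fixed, settles the parity condition for the whole class at once, without reference to the chain.

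As written, your Step~2 is only an expectation, but the count does close, and uniformly: both factors equal \(-1\) in every case.

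On the \(Q\) side, \(Q_{m-1}\) has free endpoint \(\vv{\ell_m-2}\) (with \((P_{-1},Q_{-1})=(P_b,Q_b)\), which ends at \(\vv{\ell}\)). The set \(Q_{m-1}\cup R_m\) is an induced path extending it there, so there is exactly one edge. In Step~1, \(P_b = P_a\cup X\) and \(Q_a = Q_b\cup X\) are induced paths with \(X\) as a terminal segment, giving one edge each.

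On the \(P\) side, the number of adjacencies between \(R_m\) and \(P_{m-1}\) is always odd:
\begin{enumerate}[(i)]
\item \(3+2\) for \(m<N\): \(\vv{\ell_m}\) is adjacent to \(\vv{\ell_m-1},\vv{\ell_m+1},\vv{\ell_m+2}\), and \(\vv{\ell_m+1}\) is interior to \(P_{m-1}\).
\item \(2+1\) or \(3+2\) when \(m=N\) and \(\vv{\ell_N+3}\notin L\).
\item \(2+1\) in the tail case: the tail's internal edges are counted twice, and it meets the rest of \(P_{N-1}\) in a single edge.
\end{enumerate}

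Two cautions. First, in \((P_{m-1},Q_{m-1})\) the colours do \emph{not} alternate: every vertex of \(L\) with index larger than \(\ell_m-2\) lies in \(P_{m-1}\). In particular \(\vv{\ell_m-1}\), \(\vv{\ell_m+1}\), and \(\vv{\ell_m+2}\) when present, are all \(P\)-coloured, so tabulating with alternating colours would miscount.

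Second, the ``uniform alternative'' you would try first does not apply. For \(m\geq 0\) each step adds \(\vv{\ell_m}\) to \(P\cap Q\) and extends \(Q\) past its free endpoint, and Step~1 moves an unbounded tail from \(Q\) to \(P\). The segment exchange of the pairwise-cancellation argument instead swaps two subpaths between common vertices and leaves \(P\cap Q\) unchanged. Hence applying \cref{cor:PQ_anticommuting} to a window is not what makes these steps work; the common factor \(\hv{R_m}\) is.
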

\begin{proof}
    By \cref{eq:pbdef,,eq:qbdef}, we have \(|P_b| + |Q_b| = |P_a| + |Q_a|\).
    Additionally, \(|P_m| + |Q_m| = |P_{m - 1}| + |Q_{m - 1}| + 2\) for \(m \geq 0\).
    
    Let \(R_a = Q_a \cap \{\vv{j}\}_{j > \ell} \in \mc{P}\), then
    \begin{align}
        \scomm{\hv{P_a}}{\hv{R_a}} = \scomm{\hv{Q_a}}{\hv{R_a}} = -1
    \end{align}
    since \(P_a \cup R_a \in \mc{P}^*\) and \(Q_a \setminus R_a \in \mc{P}^*\).
    Thus
    \begin{align}
        \scomm{\hv{P_b}}{\hv{Q_b}} = \scomm{\hv{P_a}\hv{R_a}}{\hv{Q_a}\hv{R_a}} = \scomm{\hv{P_a}}{\hv{Q_a}}
    \end{align}
    If \(N > 0\), let \(R_0 = \{\vv{\ell_0}, \vv{\ell_0 + 1}\}\).
    We have \(N_{P_b}(\vv{\ell_0}) = \{\vv{\ell_0 - 1}, \vv{\ell_0 + 1}, \vv{\ell_0 + 2}\}\) and \(Q_b \cup R_0 \in \mc{P}^*\).
    Thus,
    \begin{align}
        \scomm{\hv{P_b}}{\hv{R_0}} = \scomm{\hv{Q_b}}{\hv{R_0}} = -1
    \end{align}
    and
    \begin{align}
        \scomm{\hv{P_0}}{\hv{Q_0}} = \scomm{\hv{P_b}\hv{R_0}}{\hv{Q_b}\hv{R_0}} = \scomm{\hv{P_b}}{\hv{Q_b}}.     
    \end{align}
    If \(0 < m < N\), then we similarly let \(R_m = \{\vv{\ell_m}, \vv{\ell_m + 1}\}\).
    Recall from the proof of \cref{prop:symdiff_cond2} that \(\{\vv{j}\}_{j > \ell_{m - 1} + 1} \cap L \subseteq P_{m - 1}\) in this case.
    We have \(N_{P_{m - 1}}(\vv{\ell_m}) = \{\vv{\ell_m - 1}, \vv{\ell_m + 1}, \vv{\ell_m + 2}\}\) and \(Q_{m - 1} \cup R_{m} \in \mc{P}^*\).
    Thus, 
    \begin{align}
        \scomm{\hv{P_m}}{\hv{Q_m}} = \scomm{\hv{P_{m - 1}}\hv{R_m}}{\hv{Q_{m - 1}}\hv{R_m}} = \scomm{\hv{P_{m - 1}}}{\hv{Q_{m - 1}}}\label{eq:temp_62}
    \end{align}
    
    If \(N = 0\) and \(\vv{\ell_N + 3} \notin L\), then by \cref{cl:tail_max_index}, we have that either \(\ell_N + 1\) or \(\ell_N + 2\) is the maximal index in \(L\).
    In the former case, \(N_{P_b}(\vv{\ell_N + 1}) = \{\vv{\ell_N - 1}\}\) and \(N_{P_b}(\vv{\ell_N}) = \{\vv{\ell_N - 1}, \vv{\ell_N + 1}\}\) (recall that \(N = 0\) here, so \(\vv{\ell_N - 2} = \vv{\ell} \in Q_b {\setminus} P_b\)). 
    Thus, again setting \(R_0 = \{\vv{\ell_N}, \vv{\ell_N + 1}\}\), we have
    \begin{align}
        \scomm{\hv{P_0}}{\hv{Q_0}} = \scomm{\hv{P_b}\hv{R_0}}{\hv{Q_b}\hv{R_0}} = \scomm{\hv{P_b}}{\hv{Q_b}}.
    \end{align}
    In the latter case, \(N_{P_b}(\vv{\ell_N + 1}) = \{\vv{\ell_N - 1}, \vv{\ell_N + 2}\}\) and \(N_{P_b}(\vv{\ell_N}) = \{\vv{\ell_N - 1}, \vv{\ell_N + 1}, \vv{\ell_N + 2}\}\).
    This gives
    \begin{align}
        \scomm{\hv{P_0}}{\hv{Q_0}} = \scomm{\hv{P_b}\hv{R_0}}{\hv{Q_b}\hv{R_0}} = \scomm{\hv{P_b}}{\hv{Q_b}}.
    \end{align}
    If \(\vv{\ell_N + 3} \in L\), then again \(\vv{\ell_N + 2} \notin L\), and let \(R_0 = (P_b \cap \{\vv{j}\}_{j \geq \ell_0 + 1}) \cup \vv{\ell_0}\). 
    Since \(N_{P_b}(\vv{\ell_0}) = \{\vv{\ell_0 - 1}, \vv{\ell_0 + 1}\}\) and \(\vv{Q_b} = \vv{\ell_0 - 2} = \vv{\ell}\), we have \(\scomm{\hv{P_b}}{\hv{R_0}} = \scomm{\hv{Q_b}}{\hv{R_0}} = -1\), and
    \begin{align}
        \scomm{\hv{P_0}}{\hv{Q_0}} = \scomm{\hv{P_b}\hv{R_0}}{\hv{Q_b}\hv{R_0}} = \scomm{\hv{P_b}}{\hv{Q_b}}.
    \end{align}
    Finally, if \(m = N > 0\), and \(\vv{\ell_N + 3} \notin L\), then by an identical argument, we set \(R_m = \{\vv{\ell_N}, \vv{\ell_N + 1}\}\), and 
    \begin{align}
        \scomm{\hv{P_m}}{\hv{Q_m}} = \scomm{\hv{P_{m - 1}} \hv{R_m}}{\hv{Q_{m - 1}} \hv{R_m}} = \scomm{\hv{P_{m - 1}}}{\hv{Q_{m - 1}}}.
    \end{align}
    If \(\vv{\ell_N + 3} \in L\), then again by an identical argument, let \(R_m = (P_{m - 1} \cap \{\vv{j}\}_{j \geq \ell_m + 1}) \cup \vv{\ell_m}\).
    Since \(N_{P_{m - 1}}(\vv{\ell_m}) = \{\vv{\ell_m - 1}, \vv{\ell_m + 1}\}\) and \(\vv{Q_{m - 1}} = \vv{\ell_m - 2}\), we have \(\scomm{\hv{P_{m - 1}}}{\hv{R_{m}}} = \scomm{\hv{Q_{m - 1}}}{\hv{R_m}} = -1\), and
    \begin{align}
        \scomm{\hv{P_m}}{\hv{Q_m}} = \scomm{\hv{P_{m - 1}}\hv{R_m}}{\hv{Q_{m - 1}}\hv{R_m}} = \scomm{\hv{P_{m - 1}}}{\hv{Q_{m - 1}}}.
    \end{align}
    This therefore gives that every element of \(\mc{C}[P_{\leq \ell}, Q_{\leq \ell}, L]\) satisfies the contribution conditions if and only if \((P_a, Q_a)\) does.
\end{proof}

We next prove the following claim
\begin{claim}
    \label{cl:tail_parity}
    With the path pair definitions given in \cref{eq:pbdef,,eq:qbdef,,eq:pmqmdef}, let \(Q_{> \ell} \coloneqq Q_a \cap \{\vv{j}\}_{j > \ell}\), then if these path pairs, which all belong to \(\mc{C}[P_{\leq \ell}, Q_{\leq \ell}, L]\), satisfy the contribution conditions, we have \(|Q_{> \ell}|\) even.
\end{claim}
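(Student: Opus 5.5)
The plan is to reduce the claim to \cref{cor:PQ_anticommuting} applied to a truncated pair with common endpoints. By \cref{prop:allornothing} it is enough to work with \((P_a, Q_a)\). Recall from the proof of \cref{prop:symdiff_cond2} that for this pair \(\vv{\ell+1}\) is \(Q\)-colored, \(\vv{P_a} = \vv{\ell - 1}\), and \(L \cap \{\vv{j}\}_{j > \ell - 1} \subseteq Q_a\). Hence \(Q_{>\ell} = L \cap \{\vv{j}\}_{j > \ell}\) and \(P_a \cap Q_a \subseteq \{\vv{j}\}_{j < \ell - 1}\). Writing \(\widehat{L} \coloneqq L \cap \{\vv{j}\}_{j \leq \ell}\), this gives the parity identity
\begin{align}
    |P_a| + |Q_a| = |L| + 2|P_a \cap Q_a| \equiv |\widehat{L}| + |Q_{>\ell}| \pmod 2 .
\end{align}
So it suffices to show that the anticommutation condition forces \(|\widehat{L}|\) to be odd.

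To compute the sign, split \(h_{Q_a} = h_{Q_b} h_{Q_{>\ell}}\) using \cref{eq:qbdef}. Then
\begin{align}
    \scomm{\hv{P_a}}{\hv{Q_a}} = \scomm{\hv{P_a}}{\hv{Q_b}}\scomm{\hv{P_a}}{\hv{Q_{>\ell}}} .
\end{align}
The second factor equals \(-1\), since \(P_a \cup Q_{>\ell} = P_b\) is an anchored induced path; this is the same fact already used in \cref{prop:allornothing}.

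For the first factor, I would append \(\vv{\ell+1}\) to both paths. Set \(\widehat{P} \coloneqq P_a \cup \{\vv{\ell+1}\}\) and \(\widehat{Q} \coloneqq Q_b \cup \{\vv{\ell+1}\}\). Using \cref{prop:subsetrep}, I would check that both are anchored induced paths, using \(\vv{\ell} \notin P_a\) and \(\vv{\ell-1} \notin Q_b\). They share both endpoints, and \(\widehat{P} \oplus \widehat{Q} = \widehat{L}\). I would verify that \(\widehat{L}\) satisfies condition 1 of \cref{prop:subsetrep}. This follows because \(L\) does, and the cutsets \(\cv{j}\) with \(j \leq \ell\) only involve indices up to \(\ell\), while \(\cv{\ell+1}\) contains \(\vv{\ell} \in \widehat{L}\).

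Then \cref{prop:symdiff_cond1} and \cref{cor:PQ_anticommuting} give \(\scomm{\hv{\widehat{P}}}{\hv{\widehat{Q}}} = (-1)^{|\widehat{L}|+1}\). Peeling off the common factor \(\hv{\ell+1}\) by the product rule introduces \(\scomm{\hv{\ell+1}}{\hv{Q_b}}\scomm{\hv{P_a}}{\hv{\ell+1}}\). Each of these is \(-1\), because within \(Q_b\) the only neighbor of \(\vv{\ell+1}\) is \(\vv{\ell}\), and within \(P_a\) the only neighbor is \(\vv{\ell-1}\). Hence \(\scomm{\hv{P_a}}{\hv{Q_b}} = (-1)^{|\widehat{L}|+1}\), and so \(\scomm{\hv{P_a}}{\hv{Q_a}} = (-1)^{|\widehat{L}|}\).

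The contribution conditions require this sign to be \(-1\), so \(|\widehat{L}|\) is odd. They also require \(|P_a|+|Q_a|\) to be odd, and with the parity identity above this forces \(|Q_{>\ell}|\) to be even.

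The main obstacle I expect is the verification step: that \(\widehat{P}\), \(\widehat{Q}\) are genuinely induced paths and that the hypotheses of \cref{prop:symdiff_cond1} hold. In particular, the neighborhoods of \(\vv{\ell+1}\) inside \(P_a\) and \(Q_b\) must be exactly \(\{\vv{\ell-1}\}\) and \(\{\vv{\ell}\}\). This needs care near the left boundary, for small \(\ell\) or when \(\vv{\ell-1}\) is adjacent to \(\bs{j}^*\). I would handle it directly from \cref{eq:app_zigzag_adjacency} and the coloring of \(\kv{\ell}\) given by \cref{cl:clique_prefix_coloring}.
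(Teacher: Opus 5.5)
Your proposal is correct and follows the paper's proof essentially step for step. You split off \(Q_{>\ell}\) using that \(P_b\) is an anchored induced path, append \(\vv{\ell+1}\) to \(P_a\) and \(Q_b\), and apply \cref{cor:PQ_anticommuting} to the resulting common-endpoint pair. The only difference is bookkeeping: you track \(\scomm{\hv{P_a}}{\hv{Q_a}} = (-1)^{|\widehat{L}|}\) together with a parity identity, whereas the paper first deduces \(\scomm{\hv{P_a}}{\hv{Q_b}} = 1\) and concludes that \(|P_a|+|Q_b|\) is odd.
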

\begin{proof}
    Since \(P_b \in \mc{P}^*\), we have that \(\hv{Q_{>\ell}}\) anticommutes with \(\hv{P_a}\) by \cref{eq:pbdef}, and thus \(\scomm{\hv{P_a}}{\hv{Q_b}} = 1\) by the fact that \((P_a, Q_a)\) satisfies the contribution conditions.
    Consider the anchored induced paths given by \(\widetilde{P} = P_a \cup \{\vv{\ell + 1}\}\) and \(\widetilde{Q} = Q_b \cup \{\vv{\ell + 1}\}\) with shared endpoint \(\vv{\ell + 1}\), and let \(\widetilde{L} = \widetilde{P} \oplus \widetilde{Q}\).
    We see that \(\widetilde{P}\), \(\widetilde{Q}\), and \(\widetilde{L}\) satisfy the conditions of \cref{prop:symdiff_cond1}, and since \(\scomm{\hv{\widetilde{P}}}{\hv{\widetilde{Q}}} = 1\) with \(\widetilde{L} = |P_a| + |Q_b|\) (mod 2), we have \(|P_a| + |Q_b|\) odd by \cref{cor:PQ_anticommuting}.
    Thus, \(|Q_{> \ell}|\) is even by the contribution conditions for \((P_a, Q_a)\).
\end{proof}
\Cref{cl:tail_parity} additionally implies the following corollary, which can be viewed as a refinement of \cref{cl:tail_max_index} to incorporate the contribution conditions.
\begin{corollary}
    \label{cor:max_index}
    For the aforementioned collection of \(P\), \(Q\), \(L\), \(\kv{\ell}\), \(A\), \(S\), and \(N\), if \(P\) and \(Q\) satisfy the contribution conditions and \(\vv{\ell_N + 3} \notin L\), then the maximal index of any vertex in \(L\) is \(\ell_N + 1\).
\end{corollary}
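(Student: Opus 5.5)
The plan is to combine \cref{cl:tail_max_index} with the parity constraint of \cref{cl:tail_parity}. Under the hypothesis \(\vv{\ell_N + 3} \notin L\), \cref{cl:tail_max_index} already shows that the maximal index of a vertex in \(L\) is either \(\ell_N + 1\) or \(\ell_N + 2\). It therefore suffices to rule out \(\vv{\ell_N + 2} \in L\). I will do this by counting \(|Q_{>\ell}|\) for the distinguished pair \((P_a, Q_a)\) of the class \(\mc{C}[P_{\leq \ell}, Q_{\leq \ell}, L]\) that contains \((P, Q)\).

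\textbf{Step 1: pass to \((P_a, Q_a)\).} Fix \((P, Q)\) satisfying the contribution conditions and the hypotheses, and let \(\mc{C}\) be its class. By \cref{prop:symdiff_cond2} the pair \((P_a, Q_a)\) exists in \(\mc{C}\). By \cref{prop:allornothing}, since \((P, Q) \in \mc{C}\) satisfies the contribution conditions, every element of \(\mc{C}\) does, in particular \((P_a, Q_a)\). Hence \cref{cl:tail_parity} applies and \(|Q_{>\ell}|\) is even, where \(Q_{>\ell} = Q_a \cap \{\vv{j}\}_{j > \ell}\).

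\textbf{Step 2: identify \(Q_{>\ell}\) with the tail of \(L\).} For \((P_a, Q_a)\) we have \(S \cap (P_a \cap Q_a) = \varnothing\) and \(\vv{\ell + 1}\) is \(Q\)-colored. The first part of the proof of \cref{prop:symdiff_cond2} then gives \(\vv{\ell - 1} = \vv{P_a}\) and \(\{\vv{j}\}_{j > \ell - 1} \cap L \subseteq Q_a\). Any vertex of \(Q_a\) with index \(> \ell\) outside \(L\) would lie in \(P_a \cap Q_a\), which is impossible since \(P_a\) ends at \(\vv{\ell - 1}\). Hence \(Q_{>\ell} = L \cap \{\vv{j}\}_{j > \ell}\).

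\textbf{Step 3: count the tail of \(L\).} The indices \(\ell_j = \ell + 2 + 3j\) belong to \(S\), and \(S \cap L = \varnothing\). The vertices \(\vv{\ell_j \pm 1}\) for \(0 \le j \le N\) are pairwise distinct, since \(\ell_j + 1 < \ell_{j+1} - 1\). Together with \(\vv{\ell_N+3}\notin L\) and \cref{cl:tail_max_index}, this means that \(L \cap \{\vv{j}\}_{j>\ell}\) is exactly \(A\), possibly together with \(\vv{\ell_N + 2}\). Thus
\(|Q_{>\ell}| = 2(N+1) + \mathbf{1}(\vv{\ell_N + 2} \in L)\).
Evenness from Step 1 forces \(\vv{\ell_N + 2} \notin L\). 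By \cref{cl:tail_max_index} the maximal index is then \(\ell_N + 1\).

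The main obstacle I expect is Step 2: justifying that no vertex of \(Q_a\) beyond \(\ell\) lies outside \(L\), and that \(L\) has no stray vertices with index \(> \ell\) other than \(A \cup \{\vv{\ell_N+2}\}\). I would handle this directly with \cref{prop:subsetrep}, using that \(P_a\) terminates at \(\vv{\ell-1}\) and that \(S \subseteq \overline{P_a \cup Q_a}\).
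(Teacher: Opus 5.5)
Your proposal is correct and is essentially the paper's proof: you use \cref{cl:tail_max_index} to reduce to excluding \(\vv{\ell_N+2}\in L\), and then observe that in that case \(Q_{>\ell} = A \cup \{\vv{\ell_N+2}\}\) has odd cardinality \(2(N+1)+1\), contradicting \cref{cl:tail_parity}. Your Steps 1 and 2 (transferring the contribution conditions to \((P_a,Q_a)\) via \cref{prop:allornothing}, and identifying \(Q_{>\ell}\) with \(L\cap\{\vv{j}\}_{j>\ell}\)) only spell out details that the paper leaves implicit.
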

\begin{proof}
    If \(\vv{\ell_N + 3} \notin L\), we have that either \(\ell_N + 1\) or \(\ell_N + 2\) is the maximal index in \(L\) by \cref{cl:tail_max_index}.
    If \(\vv{\ell_N + 2} \in L\), then \(Q_{> \ell} = \{\vv{\ell_j - 1}, \vv{\ell_j + 1}\}_{j = 0}^N \cup \{\vv{\ell_N + 2}\}\), so \(|Q_{> \ell}|\) is odd, in contradiction with \cref{cl:tail_parity}.
    Thus, \(\vv{\ell_N + 2} \notin L\), and this proves the corollary.
\end{proof}

Thus, we assume that \((P_a, Q_a)\) satisfies the contribution conditions, or the terms in \(\mc{C}[P_{\leq \ell}, Q_{\leq \ell}, L]\) are not included in the sum by \cref{prop:allornothing}, and the sum is trivially zero.

Next, we show the following claim, recalling that \((P_{-1}, Q_{-1}) \coloneqq (P_b, Q_b)\).
\begin{claim}
    \label{cl:partial_sum}
    For all \(m\) with \(-1 \leq m < N\), we have
    \begin{align}
        &\Omega_{(P_a, Q_a)} + \sum_{i = -1}^m \Omega_{(P_i, Q_i)} \nonumber \\
        &\mh{5mm} = -i^{|P_m| - |Q_m|} u^{|P_m| + |Q_m|} \bvs{\ell_{m + 1}} P_{G{\setminus} N[P_m]} P_{(G {\setminus} N[Q_m]){\setminus} N[\vv{\ell_{m + 1}}]} [\hv{P_m}, \hv{Q_m}]
    \end{align}
\end{claim}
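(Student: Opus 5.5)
The plan is to prove \cref{cl:partial_sum} by induction on \(m\), starting at \(m=-1\). Each step adds one term \(\Omega_{(P_m,Q_m)}\) and collapses the new sum using the single-vertex case of the recursion \cref{eq:independence_polynomial_recursion}:
\begin{align}
P_{G'}(x)=P_{G'\setminus \vv{v}}(x)+x\,\bvs{v}\,P_{G'\setminus N[\vv{v}]}(x),
\end{align}
with \(x=u^2\).

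The first step is a preliminary observation. For an anchored induced path \(R\) in the zigzag ladder, \(N[R]\) consists of all vertices of index at most \(i+2\), where \(\vv{i}=\vv{R}\) is the free endpoint. This is already used in the pairwise cancellation through \(G\setminus N[P']=G\setminus N[P]\). So every polynomial \(P_{G\setminus N[R]}\) is the independence polynomial of a tail \(G_{>i+2}\coloneqq G[\{\vv{j}\}_{j>i+2}]\) of the ladder. In such a tail the smallest-index vertex \(\vv{v}\) is simplicial, and its closed neighborhood is \(\{\vv{v},\vv{v+1},\vv{v+2}\}\). The recursion above then reads \(P_{G_{\geq v}}=P_{G_{>v}}+u^2\bvs{v}P_{G_{>v+2}}\). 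From \cref{prop:symdiff_cond2} and \cref{eq:pbdef,,eq:qbdef,,eq:pmqmdef} I will record the free endpoints. These are \(\vv{P_a}=\vv{\ell-1}\), \(\vv{Q_b}=\vv{\ell}\), and \(\vv{P_b}=\vv{Q_a}\), the latter lying beyond \(\ell\). For \(m\ge 0\) with \(m<N\), they are \(\vv{Q_m}=\vv{\ell_m+1}\) and \(\vv{P_m}=\vv{P_{m-1}}\).

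\emph{Base case \(m=-1\).} The sum is \(\Omega_{(P_a,Q_a)}+\Omega_{(P_b,Q_b)}\), and I will compare the two terms factor by factor.
\begin{itemize}
\item The \(u\)-exponents agree, since \(|P_b|+|Q_b|=|P_a|+|Q_a|\).
\item The phases agree, since \(i^{|P_b|-|Q_b|}=(-1)^{|Q_{>\ell}|}i^{|P_a|-|Q_a|}\) and \(|Q_{>\ell}|\) is even by \cref{cl:tail_parity}.
\item For the operators, write \(h_{P_b}=h_{P_a}h_{R_a}\) and \(h_{Q_a}=h_{Q_b}h_{R_a}\) with \(R_a=Q_{>\ell}\). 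Commuting \(h_{R_a}\) through, using \(\scomm{h_{P_a}}{h_{R_a}}=\scomm{h_{Q_a}}{h_{R_a}}=-1\) from the proof of \cref{prop:allornothing}, should give \([h_{P_a},h_{Q_a}]=-[h_{P_b},h_{Q_b}]\).
\end{itemize}
The common factor is then \(P_{G\setminus N[Q_a]}=P_{G\setminus N[P_b]}\), multiplied by the difference \(P_{G_{>\ell+1}}-P_{G_{>\ell+2}}\). By the recursion at \(\vv{\ell+2}=\vv{\ell_0}\), this difference equals \(u^2\bvs{\ell_0}P_{G_{>\ell+4}}\). Finally \(G_{>\ell+4}=(G\setminus N[Q_b])\setminus N[\vv{\ell_0}]\), which is exactly the claimed right-hand side for \(m=-1\), including the overall minus sign.

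\emph{Inductive step \(m-1\to m\), for \(m<N\).} I add \(\Omega_{(P_m,Q_m)}\) to the inductive expression.
\begin{itemize}
\item The powers of \(u\) match, because \(|P_m|+|Q_m|=|P_{m-1}|+|Q_{m-1}|+2\) absorbs the extra \(u^2\).
\item The phases match, because \(|P_m|-|Q_m|\) and \(|P_{m-1}|-|Q_{m-1}|\) differ by an even number.
\item For the operators, \(h_{P_m}=\pm h_{P_{m-1}}h_{R_m}\) and \(h_{Q_m}=\pm h_{Q_{m-1}}h_{R_m}\) with \(R_m=\{\vv{\ell_m},\vv{\ell_m+1}\}\). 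Here \(\vv{\ell_m+1}\) moves from \(P\) to \(Q\), while \(\vv{\ell_m}\) is added to both paths. Expanding, \(h_{\ell_m}^2=\bvs{\ell_m}\) appears as a scalar, so I expect \([h_{P_m},h_{Q_m}]=\mp\bvs{\ell_m}[h_{P_{m-1}},h_{Q_{m-1}}]\), up to a sign to be fixed from the neighborhoods computed in \cref{prop:allornothing}.
\item The polynomial factors combine as follows. \(P_{G\setminus N[P_m]}=P_{G\setminus N[P_{m-1}]}\) is common to both terms. \(P_{G\setminus N[Q_m]}=P_{G_{>\ell_m+3}}\) and \((G\setminus N[Q_{m-1}])\setminus N[\vv{\ell_m}]=G_{>\ell_m+2}\) differ by the vertex \(\vv{\ell_m+3}=\vv{\ell_{m+1}}\). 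Applying the recursion at \(\vv{\ell_{m+1}}\), their difference is \(u^2\bvs{\ell_{m+1}}P_{G_{>\ell_{m+1}+2}}\), and \(G_{>\ell_{m+1}+2}=(G\setminus N[Q_m])\setminus N[\vv{\ell_{m+1}}]\).
\end{itemize}
The \(\bvs{\ell_m}\) coming from the operator is absorbed into \(\bvs{\ell_m}[h_{P_{m-1}},h_{Q_{m-1}}]\leftrightarrow[h_{P_m},h_{Q_m}]\), leaving \(\bvs{\ell_{m+1}}\) as required.

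The main obstacle is the sign bookkeeping in the operator identities. In the base case I must show that the relative sign between \(\Omega_a\) and \(\Omega_b\) is exactly \(-1\), so that the polynomials subtract rather than add. In the inductive step I must get the sign in \([h_{P_m},h_{Q_m}]=\mp\bvs{\ell_m}[h_{P_{m-1}},h_{Q_{m-1}}]\) right, so that the new term cancels the leading part of the inductive right-hand side. I would carry this out by writing each commutator as \(2h_Ph_Q\) (valid under the contribution conditions, which hold by \cref{prop:allornothing}). I would then reorder the factors \(h_{\ell_m},h_{\ell_m+1}\) explicitly, using \cref{eq:app_zigzag_adjacency} together with the endpoint and coloring data from \cref{prop:symdiff_cond2}, counting anticommuting transpositions by hand.
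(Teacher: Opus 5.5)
Your route is the paper's. You use induction on \(m\), the same factor-by-factor comparison in the base case, and the single-vertex recursion at \(\vv{\ell_0}\) and then at \(\vv{\ell_{m+1}}\), with the same ladder-tail neighborhood bookkeeping. The base case is handled correctly, including using \cref{cl:tail_parity} to make the phase shift \(2|Q_{>\ell}|\) a multiple of \(4\).

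The inductive step, however, has a sign error in exactly the bookkeeping you called the crux. Since \(|P_m|-|Q_m| = |P_{m-1}|-|Q_{m-1}|-2\) exactly, ``differ by an even number'' is not enough. In fact \(i^{|P_m|-|Q_m|} = -\,i^{|P_{m-1}|-|Q_{m-1}|}\), so the phases do \emph{not} match. If they did match, \(\Omega_{(P_m,Q_m)}\) would combine with the inductive expression into the needed difference only if \([\hv{P_m},\hv{Q_m}] = +\bvs{\ell_m}[\hv{P_{m-1}},\hv{Q_{m-1}}]\). The true relation has the opposite sign:
\[
\bvs{\ell_m}\,\hv{P_{m-1}}\hv{Q_{m-1}} = -\,\hv{P_m}\hv{Q_m}.
\]
So if you did your planned reordering correctly while keeping matching phases, the two terms would add and the claim would appear to fail. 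What actually happens is that the phase sign and the operator sign cancel. This produces \(-i^{|P_m|-|Q_m|}u^{|P_m|+|Q_m|-2}P_{G\setminus N[P_m]}\bigl(P_{(G\setminus N[Q_{m-1}])\setminus N[\vv{\ell_m}]}-P_{G\setminus N[Q_m]}\bigr)[\hv{P_m},\hv{Q_m}]\), and the recursion at \(\vv{\ell_{m+1}}\) then applies.

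To get the operator sign, split \(\hv{P_{m-1}} = \hv{P'}\,\hv{\ell_m+1}\,\hv{P''}\), where \(P' = P_{m-1}\cap\{\vv{j}\}_{j<\ell_m}\) and \(P'' = P_{m-1}\cap\{\vv{j}\}_{j>\ell_m+1}\). Write \(\bvs{\ell_m} = \hvs{\ell_m}\) and use three facts:
\begin{enumerate}[(i)]
\item \(\hv{\ell_m}\) and \(\hv{\ell_m+1}\) each anticommute with exactly one factor of \(\hv{P''}\), namely \(\hv{\ell_m+2}\), because \(\vv{\ell_m+3}=\vv{\ell_{m+1}}\notin P_{m-1}\).
\item \(\hv{\ell_m}\) anticommutes with exactly one factor of \(\hv{Q_{m-1}}\), the one for its endpoint \(\vv{\ell_m-2}\).
\item \(\hv{\ell_m+1}\) commutes with all of \(\hv{Q_{m-1}}\).
\end{enumerate}
Move \(\hv{\ell_m+1}\) to the right end (sign \(-1\)), then move one copy of \(\hv{\ell_m}\) to just before it (sign \(+1\)). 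Since \(\hv{P'}\hv{\ell_m}\hv{P''}=\hv{P_m}\) and \(\hv{Q_{m-1}}\hv{\ell_m}\hv{\ell_m+1}=\hv{Q_m}\), this gives the displayed identity.

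Your factorization \(\hv{P_m} = \pm\hv{P_{m-1}}\hv{R_m}\) is also not literally correct. Because \(\vv{\ell_m+1}\in P_{m-1}\cap R_m\), one actually has \(\hv{P_{m-1}}\hv{R_m} = \pm\bvs{\ell_m+1}\hv{P_m}\). The factorization is fine for commutation signs, which is how \cref{prop:allornothing} uses it. Used for scalars, though, it would introduce a spurious \(\bvs{\ell_m+1}\). The scalar \(\bvs{\ell_m}\) you ultimately state is the correct one.
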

\begin{proof}
    For the base case, let \(m = -1\).
    We have
    \begin{align}
        \label{eq:base_case_relative}
        |P_b| - |Q_b| &= (|P_a| + |Q_{>\ell}|) - (|Q_a| - |Q_{>\ell}|) = |P_a| - |Q_a| + 2|Q_{>\ell}| 
    \end{align}
    and
    \begin{align}
        \hv{P_a} \hv{Q_a} = \hv{P_a} \hv{Q_b} \hv{Q_{>\ell}} = -\hv{P_a} \hv{Q_{>\ell}} \hv{Q_b} = -\hv{P_b} \hv{Q_b}.
    \end{align}
    This gives
    \begin{align}
        \Omega_{(P_a, Q_a)} + \Omega_{(P_b, Q_b)} &= i^{|P_a|-|Q_a|}u^{|P_a|+|Q_a|-2} P_{G{\setminus} N[P_a]} P_{G{\setminus} N[Q_a]} [\hv{P_a},\hv{Q_a}] \nonumber \\
        &\mh{5mm} + i^{|P_b|-|Q_b|}u^{|P_b|+|Q_b|-2} P_{G{\setminus} N[P_b]} P_{G{\setminus} N[Q_b]} [\hv{P_b},\hv{Q_b}] \\
        &= -i^{|P_b|-|Q_b|}u^{|P_b|+|Q_b|-2}\left(P_{G{\setminus} N[P_a]} P_{G{\setminus} N[Q_a]}- P_{G{\setminus} N[P_b]}P_{G{\setminus} N[Q_b]}\right)[\hv{P_b},\hv{Q_b}]  
    \end{align}
    where we used \cref{cl:tail_parity} and \cref{eq:base_case_relative} to apply the identity \(i^{|P_a| - |Q_a|} = i^{|P_b| - |Q_b|}\).
    We have \(N[Q_a] = N[P_b]\).
    Since \(\vv{P_a} = \vv{\ell - 1}\) and \(\vv{Q_b} = \vv{\ell}\), we have \(N[P_a] = N[Q_b] \setminus \{\vv{\ell + 2}\}\).
    This gives
    \begin{align}
        \Omega_{(P_a, Q_a)} + \Omega_{(P_b, Q_b)} &= -i^{|P_b|-|Q_b|}u^{|P_b|+|Q_b|-2}P_{G{\setminus} N[P_b]}\left( P_{(G{\setminus} N[Q_b]) \cup \{\vv{\ell+2}\}}-P_{G{\setminus} N[Q_b]}\right) [\hv{P_b},\hv{Q_b}] \\
        &= -i^{|P_b|-|Q_b|}u^{|P_b|+|Q_b|} \bvs{\ell_0} P_{G{\setminus} N[P_b]}P_{(G{\setminus} N[Q_b])\setminus N[\vv{\ell_0}]}[\hv{P_b},\hv{Q_b}]\;,
    \end{align}
    where we used \cref{eq:independence_polynomial_recursion} with the clique \(K = \{\vv{\ell + 2}\}\), applied to the graph \((G {\setminus} N[Q_b]) \cup \{\vv{\ell + 2}\}\).
    This proves the base case.
    
    Next, assume the statement is true for \(m - 1\), and we show that it holds for \(m\).
    We have
    \begin{align}
        |P_m| - |Q_m| &= |P_{m - 1}| - |Q_{m - 1}| - 2 \\
        |P_m| + |Q_m| &= |P_{m - 1}| + |Q_{m - 1}| + 2
    \end{align}
    and
    \begin{align}
        \bvs{\ell_m} \hv{P_{m - 1}}\hv{Q_{m - 1}} &= \bvs{\ell_m} \hv{P_{m - 1} \cap \{\vv{j}\}_{j < \ell_m}} \hv{\ell_m + 1} \hv{P_{m - 1} \cap \{\vv{j}\}_{j > \ell_m + 1}}\hv{Q_{m - 1}} \\ 
        &= -\hv{P_{m - 1} \cap \{\vv{j}\}_{j < \ell_m}} \hv{\ell_m} \hv{P_{m - 1} \cap \{\vv{j}\}_{j > \ell_m + 1}}\hv{Q_{m - 1}} \hv{\ell_m} \hv{\ell_m + 1} \\
        &= -\hv{P_m}\hv{Q_m}
    \end{align}
    since \(\bvs{\ell_m} = \hvs{\ell_m}\); since both \(\hv{\ell_m}\) and \(\hv{\ell_m + 1}\) anticommute with exactly one factor of \(\hv{P_{m - 1} \cap \{\vv{j}\}_{j > \ell_m + 1}}\), namely \(\hv{\ell_m + 2}\) (recall that \(\{\vv{j}\}_{j > \ell_{m - 1} + 1} \cap L \subseteq P_{m - 1}\) and \(\vv{\ell_m + 3} = \vv{\ell_{m+1}} \notin P_{m - 1}\)); and since \(\hv{\ell_m}\) anticommutes with exactly one factor of \(\hv{Q_{m - 1}}\), namely the factor of its last vertex \(\vv{Q_{m - 1}} = \vv{\ell_m - 2}\), while \(\hv{\ell_m + 1}\) commutes with all of \(\hv{Q_{m - 1}}\).
    This gives
    \begin{align}
        &\Omega_{(P_a, Q_a)} + \sum_{i = -1}^m \Omega_{(P_i, Q_i)} = \Omega_{(P_a, Q_a)} + \sum_{i = -1}^{m - 1} \Omega_{(P_i, Q_i)} + \Omega_{(P_m, Q_m)} \\
        &\mh{5mm}=-i^{|P_{m-1}| - |Q_{m - 1}|} u^{|P_{m -1}| + |Q_{m - 1}|} \bvs{\ell_{m}} P_{G{\setminus} N[P_{m - 1}]} P_{(G {\setminus} N[Q_{m -1}])\setminus N[\vv{\ell_{m}}]} [\hv{P_{m - 1}}, \hv{Q_{m - 1}}] \\
        &\mh{10mm} + i^{|P_m| - |Q_m|} u^{|P_m| + |Q_m| - 2} P_{G{\setminus} N[P_m]}P_{G {\setminus} N[Q_m]} [\hv{P_m}, \hv{Q_m}] \\
        &\mh{5mm}=-i^{|P_{m}| - |Q_{m}|}u^{|P_{m}| + |Q_{m}| - 2}( P_{G{\setminus} N[P_{m - 1}]} P_{(G {\setminus} N[Q_{m -1}])\setminus N[\vv{\ell_{m}}]} - P_{G{\setminus} N[P_m]}P_{G {\setminus} N[Q_m]})[\hv{P_{m}}, \hv{Q_{m}}].
    \end{align}
    Using \(N[P_{m}] = N[P_{m - 1}]\) and \(N[Q_{m - 1}] \cup N[\vv{\ell_m}] = N[Q_{m}] \setminus \{\vv{\ell_{m + 1}}\}\) gives
    \begin{align}
        &\Omega_{(P_a, Q_a)} + \sum_{i = -1}^m \Omega_{(P_i, Q_i)} \nonumber \\
        &\mh{5mm}= -i^{|P_{m}| - |Q_{m}|}u^{|P_{m}| + |Q_{m}| - 2}P_{G {\setminus} N[P_m]}(P_{(G {\setminus} N[Q_{m}])\cup \{\vv{\ell_{m + 1}}\}} - P_{G {\setminus} N[Q_m]})[\hv{P_{m}}, \hv{Q_{m}}] \\
        &\mh{5mm}= -i^{|P_{m}| - |Q_{m}|}u^{|P_{m}| + |Q_{m}|} \bvs{\ell_{m + 1}} P_{G {\setminus} N[P_m]} P_{(G {\setminus} N[Q_m]) \setminus N[\vv{\ell_{m + 1}}]}[\hv{P_{m}}, \hv{Q_{m}}],
    \end{align}
    where the last step is again \cref{eq:independence_polynomial_recursion} with \(K = \{\vv{\ell_{m+1}}\}\), applied to the graph \((G {\setminus} N[Q_m]) \cup \{\vv{\ell_{m+1}}\}\), which produces the factor \(u^2 \bvs{\ell_{m+1}}\).
    This proves the claim.
\end{proof}

Finally, if \(\vv{\ell_N + 3} \notin L\), then the maximal index of any vertex in \(L\) is \(\ell_N + 1\) by \cref{cor:max_index}, and we have
\begin{align}
    \bvs{\ell_N}\hv{P_{N - 1}}\hv{Q_{N - 1}} &= \bvs{\ell_{N}} \hv{P_{N - 1} \cap \{\vv{j}\}_{j < \ell_N}} \hv{\ell_N + 1} \hv{Q_{N - 1}} \\
    &= -\hv{P_{N - 1} \cap \{\vv{j}\}_{j < \ell_N}} \hv{\ell_N} \hv{Q_{N - 1}} \hv{\ell_N} \hv{\ell_N + 1} \\
    &= -\hv{P_N} \hv{Q_N}
\end{align} 
for which 
\begin{align}
    |P_N| - |Q_N| &= |P_{N - 1}| - |Q_{N - 1}| - 2 \\
    |P_N| + |Q_N| &= |P_{N - 1}| + |Q_{N - 1}| + 2
\end{align}
with \(N[P_{N}] = N[Q_{N - 1}] \cup N[\vv{\ell_N}]\), and \(N[Q_N] = N[P_{N - 1}]\).
Therefore,
\begin{align}
    &\Omega_{(P_a, Q_a)} + \sum_{i = -1}^N \Omega_{(P_i, Q_i)} \nonumber \\
    &\mh{5mm}= -i^{|P_{N - 1}| - |Q_{N - 1}|} u^{|P_{N - 1}| + |Q_{N - 1}|} \bvs{\ell_{N}} P_{G{\setminus} N[P_{N - 1}]} P_{(G {\setminus} N[Q_{N - 1}])\setminus N[\vv{\ell_{N}}]} [\hv{P_{N - 1}}, \hv{Q_{N - 1}}] \nonumber \\
    &\mh{10mm} + i^{|P_N| - |Q_N|} u^{|P_N| + |Q_N| - 2} P_{G {\setminus} N[P_N]} P_{G {\setminus} N[Q_N]}[\hv{P_N}, \hv{Q_N}] \\
    &\mh{5mm}= -i^{|P_{N}| - |Q_{N}|} u^{|P_{N}| + |Q_{N}| - 2}P_{G{\setminus} N[Q_N]} P_{G {\setminus} N[P_N]} [\hv{P_{N}}, \hv{Q_{N}}] \nonumber \\
    &\mh{10mm} + i^{|P_N| - |Q_N|} u^{|P_N| + |Q_N| - 2} P_{G {\setminus} N[P_N]} P_{G {\setminus} N[Q_N]}[\hv{P_N}, \hv{Q_N}] \\
    &\mh{5mm}= 0
\end{align}
proving the recursive cancellation in this case.

If \(\vv{\ell_N + 3} \in L\), we have
\begin{align}
    \bvs{\ell_N}\hv{P_{N - 1}}\hv{Q_{N - 1}} &= \bvs{\ell_{N}} \hv{P_{N - 1} \cap \{\vv{j}\}_{j < \ell_N}} \hv{P_{N - 1} \cap \{\vv{j}\}_{j \geq \ell_N + 1}}\hv{Q_{N - 1}} \\
    &= -\hv{P_{N - 1} \cap \{\vv{j}\}_{j < \ell_N}} \hv{\ell_{N}} \hv{Q_{N - 1}} \hv{\ell_N} \hv{P_{N - 1} \cap \{\vv{j}\}_{j \geq \ell_N + 1}} \\
    &= -\hv{P_N} \hv{Q_N}
\end{align}
where the last step uses \(\bvs{\ell_N} = \hvs{\ell_N}\); that \(\hv{\ell_N}\) anticommutes with exactly one factor of \(\hv{Q_{N - 1}}\), namely the factor of its last vertex \(\vv{Q_{N - 1}} = \vv{\ell_N - 2}\); and that \(\hv{Q_{N - 1}}\) commutes with \(\hv{P_{N - 1} \cap \{\vv{j}\}_{j \geq \ell_N + 1}}\), since no vertex of the latter is adjacent to a vertex of \(Q_{N-1}\).
This gives
\begin{align}
    |P_N| - |Q_N| &= (|P_{N - 1}| - |P_{N - 1} \cap \{\vv{j}\}_{j \geq \ell_{N} + 1}| + 1) - (|Q_{N - 1}| + |P_{N - 1} \cap \{\vv{j}\}_{j \geq \ell_N + 1}| + 1)\\
    &= (|P_{N - 1}| - |Q_{N - 1}|) - 2|P_{N - 1} \cap \{\vv{j}\}_{j \geq \ell_{N} + 1}| \\
    &= (|P_{N - 1}| - |Q_{N - 1}|) - 2(|Q_{> \ell}| - 2(N + 1) + 1),
\end{align}
and
\begin{align}
    |P_N| + |Q_N| &= (|P_{N - 1}| - |P_{N - 1} \cap \{\vv{j}\}_{j \geq \ell_{N} + 1}| + 1) + (|Q_{N - 1}| + |P_{N - 1} \cap \{\vv{j}\}_{j \geq \ell_N + 1}| + 1)\\
    |P_N| + |Q_N| &= |P_{N - 1}| + |Q_{N - 1}| + 2,
\end{align}
as well as \(N[P_N] = N[Q_{N - 1}] \cup N[\vv{\ell_N}]\), \(N[Q_N] = N[P_{N - 1}]\).
This gives
\begin{align}
    &\Omega_{(P_a, Q_a)} + \sum_{i = -1}^N \Omega_{(P_i, Q_i)} \nonumber \\
    &\mh{5mm}= -i^{|P_{N - 1}|-|Q_{N - 1}|}u^{|P_{N - 1}|+|Q_{N - 1}|} \bvs{\ell_N} P_{G{\setminus} N[P_{N - 1}]}P_{(G{\setminus} N[Q_{N - 1}])\setminus N[\vv{\ell_N}]} [\hv{P_{N - 1}},\hv{Q_{N - 1}}] \nonumber \\
    &\mh{10mm} + i^{|P_N| - |Q_N|} u^{|P_N| + |Q_N| - 2} P_{G {\setminus} N[P_N]} P_{G {\setminus} N[Q_N]}[\hv{P_N}, \hv{Q_N}] \\
    &\mh{5mm}= i^{|P_N|-|Q_N| + 2(|Q_{> \ell}| - 2(N + 1) + 1)}u^{|P_N|+|Q_N| - 2} P_{G{\setminus} N[Q_N]}P_{G{\setminus} N[P_N]} [\hv{P_N},\hv{Q_N}] \nonumber \\
    &\mh{10mm} + i^{|P_N| - |Q_N|} u^{|P_N| + |Q_N| - 2} P_{G {\setminus} N[P_N]} P_{G {\setminus} N[Q_N]}[\hv{P_N}, \hv{Q_N}] \\
    &\mh{5mm}= 0
\end{align}
where we used the fact that \(|Q_{> \ell}|\) is even (\cref{cl:tail_parity}) in the last line.
This proves the recursive cancellation in this case, and therefore proves recursive cancellation in general.

\section{Currents Lie Algebra}
\label[appendix]{app:current_algebra}

In this section we provide a proof for a more general version of~\cref{thm:ffd_current_algebra}, which describes the algebra generated by the current operators. 
We write  \(d_K\) for the dimension of the Krylov space, as we use  \(d\) as an index.
\begin{theorem}\label{.current_lie_algebra}
    Let  \(s = 2j+1\) for some  \(j \in \mathbb{N}\), set  \(r \coloneqq \min\{s,d_K-1\}+1\) and let  \(\liek\) be the real Lie algebra generated by the odd currents  \(\cur{1}, \cur{3}, \ldots, \cur{s-2}, \cur{s},\) with Lie bracket
    \begin{equation}
        [x, y]_L \coloneqq i[x, y], \qquad x, y \in \liek .
    \end{equation}
    If either  \(r \leq 3\) or
    \begin{equation}
        C_{11}^2\left(C_{00}C_{22}-C_{11}^2\right) - C_{00}^2\left(C_{11}C_{33}-C_{22}^2\right) \neq 0,
    \end{equation}
    then
    \begin{align}
        \liek &= \set{ \frac{i}{2}\sum_{a,b=0}^{s} c_{ab} \e[\kryl{a}, \kryl{b}] }{ c_{ab}=-c_{ba}\in\mathbb{R} \text{ for } a,b \in \{0,\ldots,s\}} \label{eq:sumkappa_m}\\
        &= \set{ \frac{i}{2}\sum_{a,b=0}^{r-1} c_{ab} \e[\kryl{a}, \kryl{b}] }{ c_{ab}=-c_{ba}\in\mathbb{R}\text{ for } a,b \in \{0,\ldots,r-1\}} \label{eq:sumkappa_r}\\
        &\cong \lieso{r}. \label{eq:so(r)}
    \end{align}
    Here  \(\lieso{r}\) denotes the Lie algebra induced by the special orthogonal group in  \(r\) dimensions, equivalently the space of real skew-symmetric  \(r \times r\) matrices with the commutator as Lie bracket.
\end{theorem}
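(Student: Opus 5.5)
The proof works with the Krylov bilinears
\begin{equation*}
  \bih{ab}\coloneqq \tfrac{i}{2}\e[\kryl{a},\kryl{b}]=i\,\kryl{a}\kryl{b}-iM^G_{ab}I ,
\end{equation*}
which are antisymmetric in \(a,b\). The only input is the generalized CAR \cref{eq:nocar}, \(\{\kryl{j},\kryl{k}\}=2\mathbf{M}^G_{jk}I\), together with the relations coming from \(\kryl{k+1}=\mc{L}[\kryl{k}]\). Because \(\mc{L}\) is skew with respect to the trace form, \(M^G_{j,k+1}=-M^G_{j+1,k}\). Hence \(M^G_{jk}\) depends only on \(j+k\) up to a sign, and it vanishes for \(j+k\) odd. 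I read the constants \(C_{aa}\) in the statement as \(\pm M^G_{aa}\) in this Hankel parametrisation. The plan has three parts: two easy inclusions and then a generation argument by induction on \(s\).

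First, the inclusion of \(\liek\) in the set \cref{eq:sumkappa_m}. Expanding \(J^{(\ell)}\) gives \(J^{(\ell)}=2\sum_{j}(-1)^j\bih{j,\ell-j}\), so every generator lies in that span. I then check that the span of \(\{\bih{ab}\}_{a,b\le s}\) is closed under \(i[\cdot,\cdot]\). The identity
\begin{equation*}
  [AB,CD]=A\{B,C\}D-AC\{B,D\}+\{A,C\}DB-C\{A,D\}B
\end{equation*}
together with the CAR shows that \(i[\bih{ab},\bih{cd}]\) is a real combination of the form \(M^G_{bc}\bih{ad}-M^G_{bd}\bih{ac}-M^G_{ac}\bih{bd}+M^G_{ad}\bih{bc}\), up to normalisation. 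This is the \(\lieso{}\) bracket for the bilinear form \(\mathbf{M}^G\).

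Second, the equality of \cref{eq:sumkappa_m} and \cref{eq:sumkappa_r}. For \(k\ge d_K\), \(\kryl{k}\) is a real combination of \(\kryl{0},\dots,\kryl{d_K-1}\), so the bilinears with indices up to \(s\) reduce to those with indices up to \(r-1\). For the isomorphism \cref{eq:so(r)}, the \(r\times r\) block of \(\mathbf{M}^G\) is positive definite by \cref{thm:krylov-restated}. A real Gram--Schmidt step therefore produces Majoranas \(\mm{0},\dots,\mm{r-1}\) spanning the same real space as the first \(r\) Krylov modes. The \(\bih{ab}\) become a real invertible change of basis of the \(i\mm{a}\mm{b}\), which span a copy of \(\lieso{r}\) of dimension \(r(r-1)/2\).

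Third, the reverse inclusion, by induction on odd \(s\). The base case is \(s=1\): \(J^{(1)}=4\bih{01}\) spans \(\lieso{2}\). For the step, assume \(\Lie\{J^{(1)},\dots,J^{(s-2)}\}\) contains every \(\bih{ab}\) with \(a,b\le s-1\) (truncated at \(r-1\)). The new generator is
\begin{equation*}
  J^{(s)}\equiv 2\bigl(\bih{0s}-\bih{1,s-1}\bigr)\pmod{\text{known}} ,
\end{equation*}
since all other terms \(\bih{j,s-j}\) have both indices at most \(s-2\). Next I take brackets of this element with known bilinears such as \(\bih{01}\), \(\bih{0b}\) and \(\bih{1b}\). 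The vanishing of \(M^G\) at odd index sum kills many terms. What survives is two linearly independent combinations of the new bilinears \(\bih{0s}\), \(\bih{1s}\), \(\bih{0,s-1}\), \(\bih{1,s-1}\), with coefficients that are \(2\times2\) minors of the Hankel data. If this system can be inverted to isolate \(\bih{0s}\) and \(\bih{0,s-1}\) separately, then bracketing with \(\bih{0b}\), using \(M^G_{00}>0\), yields \(\bih{sb}\) and \(\bih{s-1,b}\) for every \(b\). That closes the induction. When \(s\ge d_K\) the new modes are dependent on old ones, and the claim follows from the second part.

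I expect the main obstacle to be the separation step, where a single generator \(J^{(s)}\) has to produce two new families of bilinears. My first attempt is the case \(s=3\) (\(r=4\)). The relevant determinant there should be exactly
\begin{equation*}
  C_{11}^2\left(C_{00}C_{22}-C_{11}^2\right)-C_{00}^2\left(C_{11}C_{33}-C_{22}^2\right),
\end{equation*}
which is the hypothesis of the theorem; for \(r\le 3\) no separation is needed, which matches the alternative \(r\le 3\) in the statement. For larger \(s\), once all \(\bih{ab}\) with indices up to \(3\) are available, I would use brackets with these low bilinears to shift indices, rather than needing a new nondegeneracy condition at each step. If the shift does not work cleanly, I would fall back on showing that the generated algebra acts irreducibly on the span of \(\kryl{0},\dots,\kryl{r-1}\) and contains a rank-two element, which already forces the whole of \(\lieso{r}\).
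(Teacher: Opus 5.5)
\textbf{Verdict: there is a genuine gap.} Your architecture matches the paper's. You have closure of the span of the \(\bih{a,b}\) under the bilinear bracket identity, and reduction of indices through the Krylov dependence relation. You have a real change of basis to Majoranas; your Gram--Schmidt on the first \(r\) modes is, if anything, cleaner. And you have induction on odd \(s\) using \(\cur{s}\equiv 4\bigl(\bih{0,s}-\bih{1,s-1}\bigr)\) modulo the algebra already generated. Two small corrections there: the induction hypothesis should cover indices \(\le s-2\), not \(\le s-1\), and the prefactor is \(4\), not \(2\).

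The gap is that the only nontrivial step, separating \(\bih{0,s}\) from \(\bih{1,s-1}\), is never carried out, and the mechanism you sketch does not close. With \(u=\bih{0,s}-\bih{1,s-1}\) one finds \(i[\bih{0,1},u]=C_{00}\bih{1,s}+C_{11}\bih{0,s-1}\), which brings in two further unknowns. Brackets with \(\bih{1,b}\) enlarge the system in the same way, so two combinations of four new bilinears cannot be ``inverted''.

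Your instinct that low bilinears suffice for \(s\ge 5\) is right. The missing idea is to bracket twice with \(\bih{0,2}\):
\begin{itemize}
\item Because \(C_{ab}=0\) for \(a+b\) odd, \(i[\bih{0,2},\bih{1,s-1}]=C_{2,s-1}\bih{0,1}-C_{0,s-1}\bih{2,1}\) is already known.
\item \(i[\bih{0,2},\,\cdot\,]\) maps the span of \(\bih{0,s}\) and \(\bih{2,s}\) into itself by a traceless \(2\times 2\) matrix with determinant \(C_{00}C_{22}-C_{02}^2\).
\item Applying it twice to \(u\) therefore gives \(-(C_{00}C_{22}-C_{02}^2)\,\bih{0,s}\) modulo known elements.
\item Positivity of this principal minor isolates \(\bih{0,s}\), then \(\bih{1,s-1}\); brackets with \(\bih{a,1}\) and \(\bih{a,0}\) give the rest.
\end{itemize}

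For \(s=3\) you only assert that the determinant ``should be'' the stated \(\Delta\). There the only available bilinear is \(\bih{0,1}\), and brackets with it merely cycle: \(v=i[\bih{0,1},u]=C_{00}\bih{1,3}+C_{11}\bih{0,2}\), and \(i[\bih{0,1},v]\) is proportional to \(u\) modulo \(\bih{0,1}\). One must bracket \(u\) with \(v\), and then \(u\) with the result. This produces \(x=A\,\bih{1,3}+B\,\bih{0,2}\) with \(A,B\) cubic in the \(C_{jj}\). Linear independence of \((C_{00},C_{11})\) and \((A,B)\) is precisely \(AC_{11}-BC_{00}=\Delta\neq 0\). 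As written, the hypothesis of the theorem never enters a proven step.

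Two further steps fail as stated.

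First, your dismissal of \(s\ge d_K\) (``the new modes are dependent on old ones'') is wrong for \(s=d_K\) with \(d_K\) odd. Then \(\kryl{s-1}=\kryl{d_K-1}\) is an independent mode, so the \(\bih{a,s-1}\) with \(a\le s-2\) are genuinely new. The reduction of the index range only identifies the target span, not what \(\cur{s}\) generates. You must rewrite \(\bih{0,s}\) through the dependence relation, obtaining \(u\equiv b\,\bih{0,s-1}-\bih{1,s-1}\), and then separate. The paper brackets with \(\bih{0,1}\) and uses \(b^2C_{00}+C_{11}>0\). Alternatively, even- and odd-index Krylov modes are \(\mathbf{M}^G\)-orthogonal and \(\mathbf{M}^G\) is positive definite on the Krylov space. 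Hence for odd \(d_K\) the relation expressing \(\kryl{d_K}\) involves only odd-index modes, so \(b=0\). This is what actually justifies your unargued remark that \(r\le 3\) needs no separation.

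Second, your fallback is false. An irreducible subalgebra of \(\lieso{r}\) containing a rank-two element need not be all of \(\lieso{r}\). For example, \(\mathfrak{u}(r/2)\subset\lieso{r}\) acts irreducibly on \(\mathbb{R}^r\) and contains the generator of rotations in a single complex coordinate plane.
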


Note that in this section, the results are not restricted to the FFD model (except where expressly noted), but hold more generally if one has Hermitian operators whose anticommutators are scalars defined by a matrix with properties as in \cref{.anti_properties}. 
For example, this section holds in the case where the frustration graph is simplicial, claw-free, and even-hole free, while the latter condition can be dropped by introducing appropriate projectors as in~\cite{chapman2023unified}.

Throughout this section, let the Krylov dimension be non-zero, i.e.,  \(d_K > 0\) (if  \(d_K = 0\) the Hamiltonian must be identically zero) and set  \(C = 2\bs{M}^G\). 
From Ref.~\cite{chapman2023unified}, we have the following proposition:
\begin{proposition}[Symmetry of  \(C\)~\cite{chapman2023unified}]\label{.anti_properties}
    The matrix  \(C\) is symmetric with the following \emph{alternating Hankel} property: for all  \(a, b \in \mathbb{N}_0\) with  \(a > 0\), we have
    \begin{equation}
        C_{a,b} = - C_{a-1, b+1}\;.
    \end{equation}
    Furthermore,  \(C'\coloneqq \e*(C_{jk})_{0 \leq j,k \leq d_K-1}\) is (strictly) positive definite.
\end{proposition}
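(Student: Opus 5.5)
The plan is to derive all three properties from two facts. First, \(\mc{L} = \frac{i}{2}\ad_H\) acts as a derivation that is skew with respect to the anticommutator. Second, the anticommutators of Krylov modes are scalars, as given by \cref{eq:nocar} in \cref{thm:krylov-restated}, so \(C\) is a Gram matrix. Throughout I write \(C_{jk}I = \{\kryl{j},\kryl{k}\}\), consistent with \(C = 2\mbf{M}^G\).

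\emph{Symmetry.} This is immediate, since \(\{\kryl{a},\kryl{b}\} = \{\kryl{b},\kryl{a}\}\). Realness also follows directly: each \(\kryl{k}\) is Hermitian, so \(\{\kryl{a},\kryl{b}\}\) is Hermitian, and it is proportional to \(I\).

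\emph{Alternating Hankel property.} Expanding \([H, xy+yx]\) term by term gives the Leibniz identity
\begin{equation*}
    \mc{L}[\{x,y\}] = \{\mc{L}[x],y\} + \{x,\mc{L}[y]\}.
\end{equation*}
Take \(x = \kryl{a-1}\) and \(y = \kryl{b}\). The left-hand side is \(\mc{L}[C_{a-1,b} I] = \frac{i}{2}C_{a-1,b}[H,I] = 0\). The right-hand side equals \((C_{a,b} + C_{a-1,b+1})I\), because \(\mc{L}[\kryl{a-1}] = \kryl{a}\) and \(\mc{L}[\kryl{b}] = \kryl{b+1}\) by \cref{def:krylovmode}. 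Hence \(C_{a,b} = -C_{a-1,b+1}\) for all \(a>0\) and \(b\ge0\).

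\emph{Positive definiteness of \(C'\).} Use the real Hilbert–Schmidt inner product \(\langle x,y\rangle \coloneqq 2^{-n}\operatorname{tr}(xy)\) on Hermitian operators, where \(2^n\) is the Hilbert-space dimension. Taking the trace of \(\{\kryl{j},\kryl{k}\} = C_{jk}I\) gives \(\langle \kryl{j},\kryl{k}\rangle = C_{jk}/2\). So \(C'\) is twice the Gram matrix of \(\kryl{0},\dots,\kryl{d_K-1}\), and is therefore positive semidefinite. It is strictly positive definite if and only if these operators are linearly independent.

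Independence follows from the definition of \(d_K\) as the minimal index with \(\kryl{d_K} \in \operatorname{span}\{\kryl{k}\}_{k<d_K}\). Suppose instead that some nontrivial relation held among \(\kryl{0},\dots,\kryl{d_K-1}\), and let \(k_0\) be the largest index appearing with a nonzero coefficient. Then \(k_0 \ge 1\): otherwise the relation reads \(c\,\chi = 0\) with \(c \neq 0\), impossible since \(\chi^2 = I\). Solving for \(\kryl{k_0}\) expresses it in the span of the earlier modes, with \(k_0 < d_K\). This contradicts the minimality of \(d_K\).

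The only step requiring care is this last one: pinning down that the paper's \(d_K\) is exactly the minimal-dependence index, so that linear independence, and not merely semidefiniteness, holds. The rest are one-line manipulations.
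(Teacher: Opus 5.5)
Your proof is correct, and it is more self-contained than the paper's, which proves this proposition only by pointing to Section~6 of Ref.~\cite{chapman2023unified}.

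\emph{Hankel relation.} The closest argument the paper writes out is the induction in \cref{app:kpsconjecture}. There the Jacobi identity is applied to $[\kryl{j},\kryl{k}]$ to obtain $\{\kryl{j},\kryl{k}\}=-2\mathbf{M}^G_{j-1,k+1}I$. The input is the base case $\{\chi,\kryl{k}\}\propto I$, which follows from anchored path support, and scalarity and the alternating Hankel relation then come out together. Your Leibniz identity $\mc{L}[\{x,y\}]=\{\mc{L}[x],y\}+\{x,\mc{L}[y]\}$ is the one-line version of that computation. You take scalarity from \cref{eq:nocar}, but the same identity would also propagate scalarity from the base case, so nothing is lost.

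\emph{Definiteness.} The paper only asserts this, in \cref{thm:krylov-restated} and as ``positive definite after restriction to a linearly independent Krylov basis'' in \cref{app:kpsconjecture}. Your Hilbert--Schmidt Gram-matrix argument, combined with the minimality of $d_K$, supplies the missing justification. The step you flagged is fine: the main text defines $d$ exactly as the minimal index with $\kryl{d}$ a real combination of the earlier modes. Your argument needs only that the $\kryl{k}$ are Hermitian with scalar anticommutators, with no spectral, Vandermonde, or path-tree input. It thus shows that, for Krylov-generated operators, the properties the appendix treats as hypotheses are automatic.

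\emph{What your route does not give.} It yields no formula for the entries. For that, combine the Hankel relation, $C_{a,b}=(-1)^aC_{0,a+b}$, with the fact that $C_{0,k}$ is twice the coefficient of $\chi$ in $\kryl{k}$. That coefficient is where the path-tree data of \cref{thm:krylov-restated} enter, and it is what is needed to evaluate concrete conditions such as the one in \cref{rem:lie_condition}.
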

\begin{proof}
    This is all in Ref.~\cite[Section 6, Krylov Subspaces]{chapman2023unified} and in the proofs therein.
\end{proof}

We want to show that the Lie algebra generated by the current operators~(\cref{eq:current_operators}) is isomorphic to the special orthogonal Lie algebra.
We do this by introducing hermitian Krylov bilinears that are equivalent to Majorana bilinears and can be used to express the current operators, and then show that the Lie bracket closure of the currents generates all these Hermitian Krylov bilinears.
\begin{definition}[Hermitian Krylov bilinears]
    For all  \(a, b \in \mathbb{N}_0\), we define the \emph{Hermitian} Krylov bilinears as
    \begin{equation}
        \bih{a,b} = i\e(\bil{a}{b} - C_{ab}/2) = \frac{i}{2}\e(\bil{a}{b} - \bil{b}{a}) = \frac{i}{2} \e[\kryl{a}, \kryl{b}]\;.
    \end{equation}
\end{definition}
\begin{proposition}\label{.bilinear_span}
    Let  \(a, b \in \mathbb{N}_0\). 
    We have
    \begin{equation}
        \bih{a,b} = \sum_{c,d = 0}^{d_K-1} \alpha_{abcd} \bih{c,d}\;,
    \end{equation}
    for appropriate coefficients  \(\alpha_{abcd} \in \mathbb{R}\) with  \(\alpha_{abcd} \propto \delta_{ac}\) if  \(a < d_K\) and  \(\alpha_{abcd} \propto \delta_{bd}\) if  \(b < d_K\).
\end{proposition}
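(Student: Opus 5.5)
The plan is to reduce everything to linear algebra on the Krylov modes and then pass to bilinears. Since \(d_K\) is the Krylov dimension, by definition (cf.\ the discussion after \cref{thm:krylov-restated}) there are real coefficients \(\beta_c\) with \(\kryl{d_K} = \sum_{c=0}^{d_K-1}\beta_c\kryl{c}\). Reality follows because \(\mc{L}\) maps Hermitian operators to Hermitian operators and the \(\kryl{c}\), \(c<d_K\), are linearly independent; independence holds because their Gram matrix \(C'\) is positive definite by \cref{.anti_properties}. Applying \(\mc{L}\) repeatedly and substituting the relation at each step, induction on \(a\) shows that every \(\kryl{a}\) with \(a\ge d_K\) is a real linear combination \(\kryl{a} = \sum_{c<d_K}\beta^{(a)}_c\kryl{c}\). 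For \(a<d_K\) we simply take \(\beta^{(a)}_c=\delta_{ac}\).

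Next I will use bilinearity of \(\bih{a,b} = \frac{i}{2}[\kryl{a},\kryl{b}]\). Expanding both slots gives
\begin{equation*}
\bih{a,b} = \sum_{c,d=0}^{d_K-1}\beta^{(a)}_c\beta^{(b)}_d\,\bih{c,d},
\end{equation*}
so I set \(\alpha_{abcd}\coloneqq\beta^{(a)}_c\beta^{(b)}_d\in\mathbb{R}\). The locality claims then come for free. If \(a<d_K\), then \(\beta^{(a)}_c=\delta_{ac}\), so \(\alpha_{abcd}=\delta_{ac}\beta^{(b)}_d\propto\delta_{ac}\). The case \(b<d_K\) is symmetric, giving \(\alpha_{abcd}\propto\delta_{bd}\). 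When both indices are below \(d_K\), \(\alpha_{abcd}=\delta_{ac}\delta_{bd}\).

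The only step needing care is the reality of \(\beta\), i.e.\ that \(\mc{L}\) preserves Hermiticity. This holds because \(\frac{i}{2}[H,X]\) is Hermitian whenever \(H\) and \(X\) are. I also need the linear independence of \(\{\kryl{c}\}_{c<d_K}\), which the positive definiteness of \(C'\) supplies. No genuine obstacle is expected; this proposition is a bookkeeping lemma.
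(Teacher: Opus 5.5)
Your proposal is correct and follows the paper's proof: write every \(\kryl{a}\) as a real combination of \(\kryl{0},\dots,\kryl{d_K-1}\) by repeatedly applying \(\mc{L}\) to the defining relation for \(\kryl{d_K}\) (with identity coefficients for \(a<d_K\)), then expand \(\bih{a,b}\) bilinearly to get \(\alpha_{abcd}=\beta^{(a)}_c\beta^{(b)}_d\). Your extra argument for why the coefficients are real (\(\mc{L}\) preserves Hermiticity, and positive definiteness of \(C'\) gives linear independence) is a valid supplement to what the paper builds into the definition of \(d_K\).
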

\begin{proof}
    By definition of the Krylov dimension,  \(\kryl{d_K}\) is a real linear combination of  \(\kryl{0}, \ldots, \kryl{d_K-1}\), that is,  \(\mc{L}\) satisfies a polynomial relation of degree  \(d_K\) on the Krylov space (its minimal polynomial).
    Applying  \(\mc{L}\) repeatedly to this relation shows that every  \(\kryl{a}\) is a real linear combination  \(\kryl{a} = \sum_{c=0}^{d_K-1} \lambda_{ac}\kryl{c}\), with  \(\lambda_{ac} = \delta_{ac}\) for  \(a < d_K\).
    Since  \(\bih{a,b} = \frac{i}{2}\e[\kryl{a}, \kryl{b}]\) is bilinear, we obtain  \(\bih{a,b} = \sum_{c,d=0}^{d_K-1} \lambda_{ac}\lambda_{bd}\,\bih{c,d}\), so  \(\alpha_{abcd} = \lambda_{ac}\lambda_{bd}\) has the stated properties.
\end{proof}
\begin{lemma}
    \label{.bilinear_commutator}
    Let  \(a, b, c, d \in \mathbb{N}_0\). 
    We have
    \begin{align}
        i\bk{\bih{a,b}, \bih{c,d}} &= -i\bk{\bil{a}{b}, \bil{c}{d}}\\
        &= i\e(C_{ac}\bil{b}{d} + C_{bd}\bil{a}{c} - C_{ad}\bil{b}{c} - C_{bc}\bil{a}{d} -C_{ac}C_{bd} + C_{ad}C_{bc})\\
        &= C_{ac}\bih{b, d} + C_{bd}\bih{a, c} - C_{ad}\bih{b, c} - C_{bc}\bih{a, d}.
    \end{align}
\end{lemma}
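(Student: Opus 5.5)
The computation is purely algebraic and uses only the generalized CAR of \cref{thm:krylov-restated}, i.e.\ \(\{\kryl{a},\kryl{b}\} = C_{ab} I\) with \(C = 2\mathbf{M}^G\) real symmetric. There are three steps.

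\textbf{Reduce to monomials.} By definition, \(\bih{a,b} = i\kryl{a}\kryl{b} - \tfrac{i}{2}C_{ab} I\). The scalar shift commutes with everything, so
\[
\bk{\bih{a,b},\bih{c,d}} = (i)^2\bk{\bil{a}{b},\bil{c}{d}} .
\]
Multiplying by \(i\) gives the first equality, \(i\bk{\bih{a,b},\bih{c,d}} = -i\bk{\bil{a}{b},\bil{c}{d}}\).

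\textbf{Expand the commutator.} I would use the standard identity
\[
[AB,CD] = A\{B,C\}D - AC\{B,D\} + \{A,C\}DB - C\{A,D\}B .
\]
It is verified by expanding both sides; all mixed terms \(ACBD\), \(ACDB\), \(CADB\) cancel. Each anticommutator is a scalar, so this yields
\[
[\bil{a}{b},\bil{c}{d}] = C_{bc}\bil{a}{d} - C_{bd}\bil{a}{c} + C_{ac}\bil{d}{b} - C_{ad}\bil{c}{b} .
\]
The reversed products are then rewritten using \(\bil{d}{b} = C_{bd} - \bil{b}{d}\) and \(\bil{c}{b} = C_{bc} - \bil{b}{c}\). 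After this substitution the right-hand side becomes
\[
C_{bc}\bil{a}{d} - C_{bd}\bil{a}{c} - C_{ac}\bil{b}{d} + C_{ad}\bil{b}{c} + C_{ac}C_{bd} - C_{ad}C_{bc} .
\]
Multiplying by \(-i\) gives the second line of the statement.

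\textbf{Recombine into Hermitian bilinears.} Substitute \(\bil{x}{y} = -i\bih{x,y} + \tfrac12 C_{xy}\) for each of the four bilinears on the second line. The linear parts give \(C_{ac}\bih{b,d} + C_{bd}\bih{a,c} - C_{ad}\bih{b,c} - C_{bc}\bih{a,d}\). The scalar parts contribute
\[
\tfrac{i}{2}\left(2C_{ac}C_{bd} - 2C_{ad}C_{bc}\right) ,
\]
which cancels exactly against \(i(-C_{ac}C_{bd} + C_{ad}C_{bc})\). This uses the symmetry of \(C\).

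The main obstacle is only sign and index bookkeeping, particularly reordering \(\bil{d}{b}\) and \(\bil{c}{b}\) and confirming that the scalar terms cancel. To guard against a slip I would check the special case \(a=c\), \(b\neq d\), with orthonormal Majoranas (\(C = 2\id\)). There the known relation \([i\gamma_a\gamma_b, i\gamma_a\gamma_d] \propto \gamma_b\gamma_d\) should be reproduced.
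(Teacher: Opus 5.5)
Your proposal is correct and is exactly the elementary computation the paper has in mind; the paper only states that the lemma ``follows by elementary calculations.'' The four-term identity for \([AB,CD]\), the reordering via \(\{\kryl{x},\kryl{y}\} = C_{xy} I\) with \(C\) symmetric, and the cancellation of the scalar terms \(C_{ac}C_{bd} - C_{ad}C_{bc}\) all check out.
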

\begin{proof}
    This follows by elementary calculations.
\end{proof}

\begin{proposition}\label{.characterise_anti_comm}
    Let  \(a, b \in \mathbb{N}_0\). 
    If  \(a + b\) is odd, then  \(C_{ab} = 0\) and if  \(a + b\) is even, and  \(a + b \leq 2(d_K-1)\), then  \(C_{ab} \neq 0\).
\end{proposition}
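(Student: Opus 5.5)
The plan is to use only the alternating Hankel property of \cref{.anti_properties} together with positive definiteness of \(C'\). By repeatedly applying \(C_{a,b} = -C_{a-1,b+1}\), every entry can be reduced to one with a fixed index sum \(a+b\). Concretely, for any \(a,b\) we get
\begin{equation}
    C_{a,b} = (-1)^{a-k} C_{k,\,a+b-k} \qquad \text{for all } 0 \le k \le a+b .
\end{equation}
This says \(C\) is, up to a sign \((-1)^{a}\), a Hankel matrix in \(a+b\).

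For the odd case, suppose \(a+b = 2t+1\). We compare the entries \((t, t+1)\) and \((t+1, t)\). The Hankel relation applied once gives \(C_{t+1,t} = -C_{t,t+1}\), while symmetry of \(C\) gives \(C_{t+1,t} = C_{t,t+1}\). Hence \(C_{t,t+1} = 0\), and by the reduction formula \(C_{a,b} = \pm C_{t,t+1} = 0\). This is immediate.

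For the even case, suppose \(a+b = 2t\) with \(t \le d_K - 1\). The reduction formula gives \(C_{a,b} = \pm C_{t,t}\), so it suffices to show \(C_{t,t} \neq 0\) for \(0 \le t \le d_K-1\). This follows from positive definiteness of \(C' = (C_{jk})_{0\le j,k\le d_K-1}\): its diagonal entries are strictly positive, so \(C_{t,t} > 0\). One can also see this directly, since \(C_{t,t} = 2\mathbf{M}^G_{tt}\) with \((\eta^{(t)})^2 = \mathbf{M}^G_{tt} I\), and \(\eta^{(t)}\neq 0\) is Hermitian for \(t<d_K\) by minimality of \(d_K\).

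The only point needing care is the range of the reduction. All intermediate indices stay nonnegative because we move between \((a,b)\) and \((t,t)\) monotonically, and the bound \(a+b \le 2(d_K-1)\) is exactly what places the diagonal representative \((t,t)\) inside the block \(C'\). I expect no real obstacle here. The main thing to verify is that the Hankel relation from \cref{.anti_properties} is valid for all \(a,b\in\mathbb{N}_0\), including indices \(\ge d_K\), which the proposition asserts. It also holds directly, because \(\mathcal{L}\) is skew with respect to the anticommutator pairing: \(\{\mathcal{L}[x],y\} = -\{x,\mathcal{L}[y]\}\).
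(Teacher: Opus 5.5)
Your proof is correct and follows essentially the same route as the paper. For odd $a+b$ you reduce to $(t,t+1)$ and use one Hankel step plus symmetry, whereas the paper reflects $(a,b)\to(b,a)$ in $a-b$ steps and then uses symmetry; for even $a+b$, both arguments shift to the diagonal entry $C_{tt}$ with $t\le d_K-1$ and invoke positive definiteness of $C'$ (your additional justification via Hermiticity and nonvanishing of $\eta^{(t)}$ is valid but not needed).
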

\begin{proof}
    First, consider the case where  \(a+b\) is odd. 
    Without loss of generality let  \(a > b\). 
    Using the alternating Hankel property in \cref{.anti_properties}  \(a-b\) times, we get
    \begin{equation}
        C_{a,b} = (-1)^{a-b}C_{b,a} = -C_{b,a} = -C_{a,b}\;,
    \end{equation}
    where we used the symmetry of  \(C\) in the last equality; but then it must be that  \(C_{a,b} = 0\).
    
    Now let  \(a+b\) be even with  \(a + b \leq 2(d_K-1)\) and assume that  \(C_{a,b} = 0\). 
    If  \(a = b \leq d_K-1\), then it follows that  \(e_a^T C' e_a = 0\), which contradicts the (strictly positive) definiteness of  \(C'\). 
    So without loss of generality, let  \(a > b\) and set  \(k = (a-b)/2\). 
    Then, using the alternating Hankel relation again, we get  \(0 = C_{a,b} = (-1)^k C_{a-k,b+k}\), but  \(a-k = b+k\) and so we have a  \(c = a-k \leq d_K-1\) such that  \(C'_{c,c} = 0\), which again contradicts the definiteness of  \(C'\). 
    Hence  \(C_{a,b} \neq 0\).
\end{proof}
\begin{definition}[Orthonormalised Krylov Majoranas{~\cite[Section 6]{chapman2023unified}}]
    Define the orthogonal and diagonal matrices  \(U, D \in \mathbb{R}^{d_K \times d_K}\), respectively, such that  \(U^T D U = C'/2\) (which is possible as  \(C'\) is symmetric and (strictly) positive definite). 
    For  \(a \in \e{0, \ldots, d_K-1}\), define the \emph{Majorana} operators  \(\mm_a\) as
    \begin{equation}
        \mm_a = \frac{1}{\sqrt{D_{aa}}}\sum_{b = 0}^{d_K-1} U_{ab}\kryl{b}\;,
    \end{equation}
    which have canonical anticommutation relations.
\end{definition}
\begin{proposition}\label{.bilinear_majorana_relation}
    Let  \(a, b \in \e{0, \ldots, d_K-1}\). 
    Then
    \begin{equation}
        i \mm_a \mm_b = \frac{1}{\sqrt{D_{aa}D_{bb}}}\sum_{c,d = 0}^{d_K-1} U_{ac}U_{bd}\bih{c,d} + i\delta_{a,b}
    \end{equation}
    and
    \begin{equation}
        \bih{a,b} = \sum_{c,d = 0}^{d_K-1} \sqrt{D_{cc}D_{dd}} U^T_{ac}U^T_{bd}i\mm_c \mm_d - i\frac{C_{ab}}{2} = \sum_{\substack{c,d = 0\\c\neq d}}^{d_K-1} \sqrt{D_{cc}D_{dd}} U^T_{ac}U^T_{bd}i\mm_c \mm_d
    \end{equation}
\end{proposition}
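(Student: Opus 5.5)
The plan is a direct computation in vector notation, using only the defining relation $U^T D U = C'/2$, the definition of $\bih{a,b}$, and the generalised anticommutation relations \cref{eq:nocar}. Since $a,b \in \e{0,\ldots,d_K-1}$ throughout, we have $C_{ab} = C'_{ab}$. Because $U$ is orthogonal, the defining relation is equivalent to
\begin{equation}
U\,(C'/2)\,U^T = D,
\end{equation}
and this identity drives both parts.

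For the first identity, I would expand
\begin{equation}
\mm_a\mm_b = (D_{aa}D_{bb})^{-1/2}\sum_{c,d=0}^{d_K-1} U_{ac}U_{bd}\,\kryl{c}\kryl{d}.
\end{equation}
Rearranging the definition $\bih{c,d} = i(\kryl{c}\kryl{d} - C_{cd}/2)$ gives $i\kryl{c}\kryl{d} = \bih{c,d} + iC_{cd}/2$. Substituting this, the $\bih{}$-part is exactly the claimed sum. The scalar part is
\begin{equation}
i\,(D_{aa}D_{bb})^{-1/2}\,\bigl(U (C'/2) U^T\bigr)_{ab} = i\,(D_{aa}D_{bb})^{-1/2} D_{ab} = i\delta_{a,b}.
\end{equation}
The same calculation applied to the symmetric part gives $\{\mm_a,\mm_b\} = 2\delta_{ab}$. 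This confirms the canonical anticommutation relations asserted in the definition, which we use in the next step.

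For the inverse identity, I would invert the definition to get $\kryl{a} = \sum_c U^T_{ac}\sqrt{D_{cc}}\,\mm_c$. Then
\begin{equation}
\bih{a,b} = i\kryl{a}\kryl{b} - iC_{ab}/2 = \sum_{c,d} \sqrt{D_{cc}D_{dd}}\,U^T_{ac}U^T_{bd}\, i\mm_c\mm_d - iC_{ab}/2,
\end{equation}
which is the first equality. For the second equality, I would split off the diagonal terms $c=d$ and use $\mm_c^2 = I$. They contribute
\begin{equation}
i\sum_c D_{cc}U^T_{ac}U^T_{bc} = i\,(U^T D U)_{ab} = iC_{ab}/2,
\end{equation}
which cancels the scalar term exactly and leaves the off-diagonal sum.

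There is no real obstacle here. The only care needed is tracking index placement ($U$ versus $U^T$) and restricting to $a,b < d_K$, so that $C$ may be replaced by the positive-definite block $C'$.
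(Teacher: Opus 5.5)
Your proposal is correct and is the same argument as the paper's, which only says the result ``follows by elementary calculations.'' Your computation supplies exactly those details: substitute $i\eta^{(c)}\eta^{(d)} = \kappa^{(c,d)} + iC_{cd}/2$ and use $U(C'/2)U^T = D$ for the first identity, then invert to $\eta^{(a)} = \sum_c U^T_{ac}\sqrt{D_{cc}}\,\gamma_c$ and use $\gamma_c^2 = I$ so the diagonal terms cancel $-iC_{ab}/2$ in the second.
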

\begin{proof}
    This follows by elementary calculations.
\end{proof}

We are now able to prove \Cref{.current_lie_algebra}.
\begin{proof}[Proof of \cref{.current_lie_algebra}]
    Firstly, note that the anti-symmetry of the coefficients  \(c_{ab}\) in the sets follows from the anti-symmetry of  \(\bih{a,b}\) and  \(i\mm_a\mm_b\) (similarly,  \(c_{aa} = 0\) is enforced). 
    Furthermore, the first two sets are obviously closed under the Lie bracket due to~\cref{.bilinear_commutator}, which can be specialised (by setting  \(C = 2\id\)) to Majoranas.
    
    We prove the equations in order. 
    Equality in~\cref{eq:sumkappa_m} follows inductively with respect to  \(s\): The case  \(s = 1\) is clear. 
    Assume for now that the case  \(s = 3\) also holds (will be proven separately below).
    Let the hypothesis hold for  \(s-2\), with  \(s \geq 5\). 
    We must show that by including  \(\cur{s}\) into the generators we get the operators  \(\e{\bih{0,s-1}, \ldots, \bih{s-2,s-1}, \bih{0,s}, \ldots, \bih{s-1,s}}\) into the algebra.
    
    Firstly assume that  \(s \leq d_K-1\). 
    By the induction hypothesis we have  \(\cur{s} = 4\e(\bih{0,s} - \bih{1,s-1}) + y\) for some  \(y \in \liek\). 
    Therefore, the element  \(u = \bih{0,s} - \bih{1,s-1}\) is in  \(\liek\). 
    Now we apply the Lie bracket with  \(\bih{0,2}\) to  \(u\), using \cref{.bilinear_commutator,.characterise_anti_comm}:
    \begin{align}
        i\e[\bih{0,2}, u] &= i\e[\bih{0,2}, \bih{0,s}] - i\e[\bih{0,2}, \bih{1,s-1}]\\
        &= C_{00}\bih{2,s} - C_{20}\bih{0,s} + y \quad \text{with  \(y \in \liek\)}\;.
    \end{align}
    Therefore,  \(v = C_{00}\bih{2,s} - C_{20}\bih{0,s}\) is in  \(\liek\). 
    Applying again the Lie bracket with  \(\bih{0,2}\) gives:
    \begin{align}
        i\e[\bih{0,2}, v] &= C_{00}i\e[\bih{0,2}, \bih{2,s}] - C_{20}i\e[\bih{0,2}, \bih{0,s}]\\
        &= -\e(C_{00}C_{22} - C_{20}^2)\bih{0,s}\;.
    \end{align}
    No further correction term is needed here:  \(\e(C_{00}C_{22} - C_{20}^2)\bih{0,s} \in \liek\) exactly, since the terms proportional to  \(\bih{0,2}\) vanish because  \(C_{0s} = C_{2s} = 0\) for odd  \(s\) and the  \(\bih{2,s}\) terms cancel each other.
    The value  \(C_{00}C_{22} - C_{20}^2\) is the determinant of the  \(2\times 2\) submatrix of  \(C\) with rows and columns  \(0\) and  \(2\), i.e., it is a principal minor of  \(C'\), and since  \(C'\) is (strictly) positive definite, this value is strictly positive; in particular, it is non-zero.
    Therefore  \(w = \bih{0,s}\) is in  \(\liek\). 
    Furthermore, this gives us  \(\bih{1,s-1} = w - u \in \liek\). 
    That is, we have shown that  \(\bih{0,s}, \bih{1,s-1} \in \liek\). 
    These two elements let us generate the remaining terms that we need: Let  \(a, b, c \in \mathbb{N}_0\), then we have in general
    \begin{equation}
        i\e[\bih{a,b}, \bih{b,c}] = C_{ab}\bih{b, c} + C_{bc}\bih{a, b} - C_{bb}\bih{a, c}
    \end{equation}
    Setting  \(b = 1\),  \(c = s-1\) and  \(a \in \e{0, \ldots, s-2}\) gives
    \begin{equation}
        i\e[\bih{a,1}, \bih{1,s-1}] = -C_{11}\bih{a,s-1} + y \quad \text{with  \(y \in \liek\)}\;.
    \end{equation}
    Therefore, since  \(C_{11} \neq 0\), we have  \(\bih{a, s-1} \in \liek\) for all  \(a \in \e{0, \ldots, s-2}\). 
    Analogously, setting  \(b = 0\),  \(c = s\) and  \(a \in \e{0, \ldots, s-1}\) gives us  \(\bih{a, s} \in \liek\) for all  \(a \in \e{0, \ldots, s-1}\). 
    This concludes the induction step for  \(s \leq d_K-1\).
    
    Now let  \(s > d_K-1\). 
    If  \(d_K-1\) is odd, then we are done, because the induction hypothesis, and the previous case  \(s \leq d_K-1\), already tell us that the full algebra is generated by  \(\cur{1}, \ldots, \cur{d_K-1}\). 
    Similarly, if  \(d_K-1\) is even, we see that we are done if we can show the statement for  \(s = d_K\). 
    Therefore, let  \(s = d_K\) and we only have to show that by including  \(\cur{s}\) into the generators we get the operators  \(\e{\bih{0,s-1}, \ldots, \bih{s-2,s-1}}\) (as these are the only independent bilinears that are missing and the dependent ones are then automatically in  \(\liek\) according to \cref{.bilinear_span}). 
    As  \(\bih{0,s} = \sum_{d = 0}^{s-1} \alpha_{d} \bih{0,d}\) for appropriate  \(\alpha_{d} \in \mathbb{R}\) (\cref{.bilinear_span}), we have  \(\cur{s}/4 = b\bih{0,s-1} - \bih{1,s-1} + y\) with  \(y \in \liek\) and  \(b = \alpha_{s-1}\), i.e.,  \(u = b\bih{0,s-1} - \bih{1,s-1} \in \liek\). 
    Calculating the commutator between  \(u\) and  \(\bih{0,1}\) gives
    \begin{align}
        i\e[\bih{0,1}, u] &= b\,i\e[\bih{0,1}, \bih{0,s-1}] - i\e[\bih{0,1}, \bih{1,s-1}]\\
        &=bC_{00}\bih{1,s-1} + C_{11}\bih{0,s-1} + bC_{0,s-1}\bih{0,1}\;,
    \end{align}
    where we used that  \(C_{01} = C_{1,s-1} = 0\) (alternating Hankel structure with  \(s\) odd).
    Since  \(s \geq 3\), we have  \(\bih{0,1} \in \liek\) (it is generated by  \(\cur{1}\)), so the last term lies in  \(\liek\).
    Therefore,  \(v = bC_{00}\bih{1,s-1} + C_{11}\bih{0,s-1}\) is in  \(\liek\). 
    If the two vectors  \(\e(b, -1)\) and  \(\e(C_{11}, bC_{00})\) are linearly independent we can clearly generate  \(\bih{0,s-1}\) and  \(\bih{1,s-1}\) with  \(u\) and  \(v\). 
    This can be checked by calculating the determinant of the matrix formed by these two vectors, but this determinant is given by  \(b^2C_{00} + C_{11} > 0\) according to \cref{.anti_properties}. 
    Thus we have  \(\bih{0,s-1}, \bih{1,s-1} \in \liek\) and the remaining elements  \(\bih{a,s-1}\), with  \(a \in \e{2, \ldots, s-2}\) can be generated as in the case  \(s \leq d_K-1\).
    
    All that remains is for us to consider the case  \(s = 3\). 
    If  \(r-1=1\) we have nothing to show and if  \(r-1 = 2\) we are in the induction case above with  \(s = d_K > d_K - 1\) which works for this special case. 
    Therefore, let  \(r-1 = 3\). 
    Unlike the induction step above with  \(s \leq d_K-1\), we cannot use  \(\bih{0,2}\) to generate the  \(\bih{0,s}\), since  \(\bih{0,2}\) is not in the algebra of the case  \(s-2 = 1\). 
    Instead, we have to consider higher order commutators with  \(\cur{3}\). 
    Let us denote  \(u = \cur{3}/4 \in \liek\) and calculate the commutator with  \(\bih{0,1}\):
    \begin{align}
        i\e[\bih{0,1}, u] &= i\e[\bih{0,1}, \bih{0,3}] - i\e[\bih{0,1}, \bih{1,2}]\\
        &= C_{00}\bih{1,3} + C_{11}\bih{0,2}\;.
    \end{align}
    Therefore, we have  \(v = C_{00}\bih{1,3} + C_{11}\bih{0,2} \in \liek\). 
    Now let us take the commutator with  \(u\) again:
    \begin{align}
        i\e[u, v] &= C_{00}\e(i\e[\bih{0,3}, \bih{1,3}] - i\e[\bih{1,2}, \bih{1,3}]) + C_{11}\e(i\e[\bih{0,3}, \bih{0,2}] - i\e[\bih{1,2}, \bih{0,2}])\\
        &= \e(C_{11}C_{02} - C_{00}C_{13})\e(\bih{0,3} + \bih{1,2}) - 2 C_{00}C_{11}\bih{2,3} + y \quad \text{with  \(y \in \liek\)}\;.
    \end{align}
    Therefore, with  \(a = C_{00}C_{22} - C_{11}^2 > 0\) (principal minor of  \(C'\)) and  \(b = C_{00}C_{11} > 0\), we have  \(w = a\e(\bih{0,3} + \bih{1,2}) - 2b\bih{2,3} \in \liek\). 
    Again, we take the commutator with  \(u\):
    \begin{align}
        i\e[u, w]/2 &= ai\e[\bih{0,3}, \bih{1,2}] - b\e(i\e[\bih{0,3}, \bih{2,3}] - i\e[\bih{1,2}, \bih{2,3}])\\
        &= \e(aC_{02} - bC_{22})\bih{1,3} - \e(bC_{33} + aC_{13})\bih{0,2}\;.
    \end{align}
    Therefore, we have  \(x = A \bih{1,3} + B \bih{0,2} \in \liek\) with
    \begin{equation}
        A = aC_{11} + bC_{22} = C_{11}\e(2C_{00}C_{22} - C_{11}^2)
    \end{equation}
    and
    \begin{equation}
        B = bC_{33} - aC_{22} = C_{00}C_{11}C_{33} - C_{00}C_{22}^2 + C_{11}^2C_{22}\;.
    \end{equation}
    If the two vectors  \((C_{00}, C_{11})\) and  \((A, B)\) are linearly independent, then it is again clearly possible to find a linear combination of  \(v, x\) that generates either  \(\bih{1,3}\) or  \(\bih{0,2}\). 
    This can be checked by calculating the determinant of the matrix formed by these two vectors:
    \begin{equation}
        \Delta = AC_{11} - BC_{00} = C_{11}^2 \e(C_{00} C_{22} - C_{11}^2) - C_{00}^2 \e(C_{11} C_{33} - C_{22}^2)\;.
    \end{equation}
    According to the conditions in the statement of the theorem, this value is indeed non-zero. 
    Hence, we can generate  \(\bih{1,3}\) or  \(\bih{0,2}\) isolated, and then conclude the case analogously to how it is done in the case  \(s \geq 5\). 
    This concludes the proof of \cref{eq:sumkappa_m}.
    
    \Cref{eq:sumkappa_r} is clear if  \(s \leq d_K-1\) and otherwise follows simply from \cref{.bilinear_span}. 
    Using \cref{.bilinear_majorana_relation} we can perform a basis change from the Krylov bilinears onto Majorana bilinears, and \cref{eq:so(r)} is then simply the well-known Lie isomorphism between the Majorana Lie algebra and the special orthogonal Lie algebra.
\end{proof}
\begin{remark}\label{rem:lie_condition}
    The condition  \(\Delta = C_{11}^2 \e(C_{00} C_{22} - C_{11}^2) - C_{00}^2 \e(C_{11} C_{33} - C_{22}^2) \neq 0\) is a difference of two positive terms as the terms in the parentheses are principal minors of  \(C'\). 
    For the FFD model, the explicit numerical evaluations show that  \(\Delta\) is indeed non-zero (at least in the checked regime it is always negative), but it would be interesting to see if one could show that in general analytically (or maybe find a counterexample, which is unlikely to be the FFD model; literature on Hankel matrices might be helpful, however, we believe that the explicit form of  \(C\) in terms of  \(\mbf{A}\) may be required). 
    Potentially it is possible to find other such conditions that allow the proof to hold for the  \(s=3\) case by calculating higher order commutators based on  \(i\e[v, w]\) (it holds  \(i\e[\bih{0,1}, v] \propto u\) and  \(i\e[\bih{0,1}, w]\) leads to the same  \(\Delta\) condition).
\end{remark}

\begin{corollary}[FFD specific]
    For the FFD model there exists a set of operator weights and a simplicial mode, e.g., fix  \(b_1 = \ldots = b_6 = 1\) and set  \(\chi = X_1\), such that the Lie algebra generated by the odd currents is isomorphic to the special orthogonal Lie algebra, i.e., the condition on  \(C\) in \cref{.current_lie_algebra} is satisfied and its statement holds.
\end{corollary}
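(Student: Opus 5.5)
Since \cref{.current_lie_algebra} has already been proven for a general matrix \(C = 2\mbf{M}^G\), the corollary reduces to checking its hypothesis for one concrete FFD instance. I will fix \(b_1 = \ldots = b_6 = 1\) and \(\chi = X_1\). This \(\chi\) anticommutes with \(h_1 = Z_1Z_2X_3\) only, so it is a valid simplicial mode attached to the simplicial vertex \(h_1\) of the zigzag ladder. Taking \(m=6\) (eight qubits) is the smallest nontrivial choice. Longer chains would only append terms \(h_7,\dots\), and the entries \(C_{jk}\) with \(j+k\le 6\) depend only on the first few Hamiltonian terms because of the path-support of the Krylov modes.

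\textbf{Step 1: compute the needed entries of \(C\).} First I will confirm \(d_K \ge 4\), so that \(r\ge 4\) and the nondegeneracy condition is actually required. By \cref{.anti_properties} and \cref{.characterise_anti_comm}, the only entries that enter \(\Delta\) are the diagonal ones \(C_{00}, C_{11}, C_{22}, C_{33}\). The alternating Hankel property gives
\[
C_{jj} = (-1)^j\,C_{0,2j},
\]
so I need the scalar parts of \(\{\kryl{0},\kryl{2j}\}\) for \(j\le 3\). Equivalently,
\[
C_{jj} = 2\,\mathrm{tr}\bigl((\kryl{j})^2\bigr)/2^n .
\]
I will compute \(\kryl{1},\kryl{2},\kryl{3}\) explicitly by iterating \(\mc{L}[X]=\frac{i}{2}[H,X]\). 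By \cref{thm:krylov-restated}, \(\kryl{k}\) is a combination of anchored path products along induced paths of length at most \(k\) starting at \(\chi\). So \(\kryl{3}\) only involves \(h_1,\dots,h_4\) (paths \(\chi\md h_1 \md h_2 \md h_3\), \(\chi\md h_1\md h_2\md h_4\), \(\chi \md h_1 \md h_3\), etc.). Since distinct path products are orthogonal Pauli strings, \(C_{jj}\) is twice the sum of the squared coefficients.

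\textbf{Step 2: evaluate \(\Delta\).} With \(C_{00}=2\), I will compute \(C_{11}, C_{22}, C_{33}\) as small integers or half-integers and substitute them into
\[
\Delta = C_{11}^2(C_{00}C_{22}-C_{11}^2) - C_{00}^2(C_{11}C_{33}-C_{22}^2).
\]
I expect \(\Delta<0\), consistent with \cref{rem:lie_condition}. Once \(\Delta\neq 0\) and \(d_K \ge 4\) are established, \cref{.current_lie_algebra} with \(s\) large enough gives \(\liek\cong\lieso{r}\) with \(r=d_K\), which is the statement.

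\textbf{Main obstacle.} The only delicate point is computing \(\kryl{3}\) correctly, in particular its coefficients on induced paths that share an endpoint, where cancellations can occur. As a cross-check I will confirm the identity
\[
C_{11} = -C_{02} = \tfrac12 \sum_k b_k^2\,(\text{number of } \chi\text{-neighbours}) = \tfrac{1}{2}b_1^2\cdot 2 ,
\]
verify \(C_{22} = C_{13}\) up to sign, and recompute the values by direct numerical matrix multiplication on eight qubits.
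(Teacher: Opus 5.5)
Your strategy is the same as the paper's. The paper's proof simply records the $4\times4$ block of $C$ for this instance, with diagonal $(2,2,6,24)$, and evaluates $\Delta=-16$. However, your Step 1 misdescribes $\kryl{3}$, and it does so exactly where the check is decided. Counting weighted walks in the induced-path tree, or iterating $\mc{L}$ directly, gives
\[
\kryl{3}=i\bigl[(b_1^2+b_2^2+b_3^2)\,\chi h_1+\chi h_1h_2h_4+\chi h_1h_3h_4+\chi h_1h_3h_5\bigr].
\]
This corrects your list in four ways:
\begin{enumerate}[(i)]
\item $\chi\md h_1\md h_2\md h_3$ does not occur, because it is not induced ($h_1,h_2,h_3$ form a triangle).
\item $\chi\md h_1\md h_3$ does not occur, because odd Krylov modes contain only paths with an odd number of edges.
\item The largest coefficient is the backtracking term on $\chi h_1$.
\item $h_5$ does enter, contrary to your claim that only $h_1,\dots,h_4$ appear.
\end{enumerate}
With $C_{11}=2b_1^2$, $C_{22}=2b_1^2(b_1^2+b_2^2+b_3^2)$ and $C_{33}=2b_1^2\bigl[(b_1^2+b_2^2+b_3^2)^2+b_2^2b_4^2+b_3^2b_4^2+b_3^2b_5^2\bigr]$, one finds
\[
\Delta=16\,b_1^4\bigl[(b_1^2-b_4^2)(b_2^2+b_3^2)-b_3^2b_5^2\bigr].
\]
At unit couplings the first term vanishes, so $\Delta\neq0$ comes entirely from the single path $\chi\md h_1\md h_3\md h_5$. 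If you truncate to $h_1,\dots,h_4$, as your support claim suggests (equivalently, take $m=4$), you get $C_{33}=22$ and $\Delta=0$, and the argument fails. With $h_5$ present you recover $C_{33}=24$ and $\Delta=-16$. Note also that $m=5$, not $m=6$, is already the smallest chain that works.

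Your cross-check for $C_{11}$ is off by a factor of two. Since $\kryl{1}=\tfrac{i}{2}[h_1,\chi]=-b_1Y_1Z_2X_3$, you get $C_{11}=\{\kryl{1},\kryl{1}\}=2b_1^2=-C_{02}=2$; the value $1$ you wrote is $\mbf{M}^G_{11}$. Because $\Delta$ is homogeneous of degree four, rescaling all entries uniformly is harmless, but combining $C_{00}=2$ with $C_{11}=1$ gives a wrong $\Delta$. Finally, you do not need to establish $d_K\ge4$. The theorem's hypothesis is ``$r\le3$ or $\Delta\neq0$'', so $\Delta\neq0$ alone suffices.
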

\begin{proof}
    Set  \(b_1 = \ldots = b_6 = 1\), and  \(\chi = X_1\), then one can calculate explicitly that
    \begin{equation}
        (C_{ij})_{0\leq i,j \leq3} =\begin{pmatrix}
            2 & 0 & -2 & 0\\
            0 & 2 & 0 & -6\\
            -2 & 0 & 6 & 0\\
            0 & -6 & 0 & 24
        \end{pmatrix}
    \end{equation}
    which then gives  \(C_{11}^2 \e(C_{00} C_{22} - C_{11}^2) - C_{00}^2 \e(C_{11} C_{33} - C_{22}^2) = -16\).
\end{proof}

\section{More Details on the Phase Diagrams}\label[appendix]{sec:phase_diagram_details}

\noindent In this section we provide more detailed analysis of the numerical data presented in the main text.

The Hamiltonian is free fermion, therefore it is possible to express in terms of quadratic Majorana operators, by diagonalizing the model in this reference frame, we obtain the gap from the smallest single particle energy.

Intuitively, the first approach to do this is to transform the  \((\eta^{(j)})_j\) operators into  \((\mm_j)_j\) Majorana operators as described in Ref. \cite[Section 6 Krylov Subspaces, Equation (129)]{chapman2023unified} by diagonalizing the matrix  \(\bs{M}\) defined by the anticommutation relations of the  \((\eta^{(j)})_j\) operators in \cref{thm:krylov-restated}. 
Then we can express the currents in those Majorana operators (by a similar transformation) and find the orthogonal decomposition of  \(H_{\mrm{FFD}}\) in the space of the  \(\mm\) bilinears via projection. 
However, this effectively fails numerically, as the diagonalization  \(\bs{M}\) is extremely unstable.

Our second, more successful approach is to decompose  \(H\) into Majorana operators defined by the canonical fermionic modes in \cref{eq:diaghamiltonian}. 
This can be done based on the weighted independence polynomial  \(P_N\) of the FFD model.
\begin{lemma}[\cite{fendley2019free}]
    The weighted independence polynomial  \(P_N = P_N[\alpha, \beta, \gamma]\) of  \(H_{\mrm{FFD}}[\alpha, \beta, \gamma]\) can be calculated recursively via
    \begin{equation}
        P_N(x) = (1+x(\alpha + \beta + \gamma)) P_{N-1}(x) - x^2 (\alpha \beta + \alpha \gamma + \beta \gamma) P_{N-2} + x^3 \alpha \beta \gamma P_{N-3}\;,
    \end{equation}
    with  \(P_0 = 1\) and  \(P_{-1} = P_{-2} = 0\). 
    The roots of  \(P_N\),  \((r_j)_{1\leq j \leq \alpha(G)}\), are all real and negative and it holds
    \begin{equation}
        P_N(x) = \prod_{j=1}^{\alpha(G)} \left(1 - \frac{x}{r_j}\right)\;.
    \end{equation}
\end{lemma}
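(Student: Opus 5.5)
The plan is to derive the recursion from the clique recursion \cref{eq:independence_polynomial_recursion}, applied to the last vertex of the zigzag ladder, and then to obtain the root statement from the known real-rootedness for claw-free graphs.

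\textbf{Three-term linear recursion.} Let \(G_k\) denote the subgraph induced by \(\vv{1},\dots,\vv{k}\), and set \(p_k \coloneqq P_{G_k}(x)\), with weights \(w_k \coloneqq b_k^2 \in \{\alpha,\beta,\gamma\}\) according to \(k \bmod 3\). By \cref{eq:app_zigzag_adjacency}, the last vertex has \(N_{G_k}[\vv{k}] = \{\vv{k-2},\vv{k-1},\vv{k}\}\). Applying \cref{eq:independence_polynomial_recursion} with the single-vertex clique \(K=\{\vv{k}\}\) then gives
\begin{equation*}
p_k = p_{k-1} + x\,w_k\,p_{k-3}, \qquad p_0 = p_{-1} = p_{-2} = 1 .
\end{equation*}
The initial values \(p_0=p_{-1}=p_{-2}=1\) correspond to the empty graph.

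\textbf{Transfer matrices and Cayley--Hamilton.} I would write the recursion in matrix form, \((p_k,p_{k-1},p_{k-2})^T = T_k\,(p_{k-1},p_{k-2},p_{k-3})^T\), where \(T_k\) has rows \((1,0,x w_k)\), \((1,0,0)\) and \((0,1,0)\). The product \(M = T_{3}T_{2}T_{1}\) does not depend on the block index, by periodicity of the weights. It therefore advances the state by one triangle, so \(P_N \coloneqq p_{3N}\) is one component of \(M^N v_0\). By Cayley--Hamilton, every component sequence of \(M^N v_0\) satisfies
\begin{equation*}
P_N = \operatorname{tr}(M)\,P_{N-1} - e_2(M)\,P_{N-2} + \det(M)\,P_{N-3}.
\end{equation*}
I would then compute the three invariants explicitly:
\begin{itemize}
\item \(\det T_k = x w_k\), hence \(\det M = x^3\alpha\beta\gamma\).
\item Multiplying out the three matrices should give \(\operatorname{tr} M = 1 + x(\alpha+\beta+\gamma)\).
\item The sum of the principal \(2\times 2\) minors should give \(e_2(M) = x^2(\alpha\beta+\alpha\gamma+\beta\gamma)\).
\end{itemize}
The last item is the routine but error-prone step.

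\textbf{Initial conditions.} The stated initial data \(P_0=1\), \(P_{-1}=P_{-2}=0\) differ from the natural graph values \(p_{-3}=p_{-6}=1\). So the \(P_{-1}\) and \(P_{-2}\) must be read as auxiliary values of a different solution of the same three-term recursion. To justify this, I would compute \(P_1 = 1+x(\alpha+\beta+\gamma)\) (a single triangle) and \(P_2\) directly from the linear recursion. I would then check that the stated recursion, started from \((1,0,0)\), reproduces \(P_1\) and \(P_2\); matching three consecutive values fixes the solution. I expect this bookkeeping, together with the minor computation, to be the main obstacle. Everything else is mechanical.

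\textbf{Roots and factorization.} The zigzag ladder is claw-free, so by \cite{leake2016generalizations} all roots of \(P_N\) are real. Since \(P_N\) has positive coefficients, all roots are negative. Its degree equals the independence number \(\alpha(G)=N\), because a maximum independent set takes one vertex per triangle and the top coefficient \((\alpha\beta\gamma)^{\cdot}\)-type weight sum is positive. Its constant term is \(1\). Factoring over the roots \(r_j\) and normalizing the constant term gives \(P_N(x) = \prod_j (1 - x/r_j)\).
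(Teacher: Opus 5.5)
Your proof is correct. The paper gives no derivation of this lemma; it only cites \cite{fendley2019free}. Your argument is therefore a self-contained replacement rather than an alternative route. For the root statement you use the same fact the paper invokes in \cref{sec:ffd_solution}: real-rootedness of the positively weighted independence polynomial of a claw-free graph \cite{leake2016generalizations}. Together with positive coefficients and $P_N(0)=1$, this gives negative roots and the normalized factorization.

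The computations you flagged as error-prone do work out. With $a_1=x\alpha$, $a_2=x\beta$, $a_3=x\gamma$ one finds
\[
M=T_3T_2T_1=\begin{pmatrix}1+a_3&a_2&a_1\\ 1&a_2&a_1\\ 1&0&a_1\end{pmatrix}.
\]
As a check, $M(1,1,1)^T=(p_3,p_2,p_1)^T$. The principal $2\times 2$ minors are $a_2a_3$, $a_1a_3$ and $a_1a_2$, so $e_2(M)=x^2(\alpha\beta+\alpha\gamma+\beta\gamma)$. Together with $\operatorname{tr}M=1+x(\alpha+\beta+\gamma)$ and $\det M=x^3\alpha\beta\gamma$, this is exactly the stated recursion.

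Started from $(P_{-2},P_{-1},P_0)=(0,0,1)$, the recursion gives $P_1=1+x(\alpha+\beta+\gamma)$ and $P_2=(1+x(\alpha+\beta+\gamma))^2-x^2(\alpha\beta+\alpha\gamma+\beta\gamma)$. This matches a direct count of the six independent pairs in the $6$-vertex ladder, so your matching argument closes.

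One correction to your reading of the initial data. $P_{-1}=P_{-2}=0$ are not auxiliary values of a different solution; they are the backward continuation of the same sequence. Running $p_k=p_{k-1}+xw_kp_{k-3}$ backwards from $p_0=p_{-1}=p_{-2}=1$, with periodically extended weights, forces $p_{-3}=(p_0-p_{-1})/(xw_0)=0$, $p_{-4}=0$ and $p_{-6}=(p_{-3}-p_{-4})/(xw_{-3})=0$. Equivalently, the first components of $M^{-1}v_0$ and $M^{-2}v_0$ vanish. Since $M$ is invertible for $x\neq 0$, the Cayley--Hamilton relation then holds for all integer $N$, and the separate checks of $P_1$ and $P_2$ become unnecessary. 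The claimed ``natural'' values $p_{-3}=p_{-6}=1$ are not forced by anything.

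Finally, your degree sentence is garbled but the point is simple. The coefficient of $x^N$ is a nonempty sum, over maximum independent sets (one vertex per triangle), of products of positive weights. Hence $\deg P_N=\alpha(G)=N$.
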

Note that we choose the simplicial mode  \(\chi\) to be connected to the first three vertices, i.e., if  \(K\) is the according simplicial clique, then the independence polynomial of the model without the clique is  \(P_{N-1}\). 
Furthermore, the zero mode in \cref{prop:zero_mode} vanishes as every maximum independent set of  \(G\) contains an element of the simplicial clique (\cref{cor:zero_mode_vanishes}).

If we set  \((\varepsilon_j = 1 / \sqrt{-r_j})_j\), then it holds in terms of the canonical incognito modes in (cf. \cref{eq:diaghamiltonian} and \cite{chapman2023unified,elman2021free})
\begin{equation}
    H_{\mrm{FFD}} = \sum_{j=1}^{\alpha(G)} \varepsilon_j [\psi_j,\psi_j^\dagger]\;.
\end{equation}
Defining the canonical Majorana operators  \((\emm_i)_{1 \leq i \leq 2 \alpha(G)}\) via
\begin{align}
    \begin{cases}
        \emm{2j - 1} = \psi_j + \psi_j^{\dagger} & \\
        \emm{2j} = -i \left(\psi_j -\psi_j^{\dagger}\right) & 
    \end{cases}\;,
\end{align}
for  \(j \in \{1, \ldots, \alpha(G)\}\), then gives
\begin{equation}
    H_{\mrm{FFD}} = \frac{i}{2} \emm^{T} h_{\mrm{FFD}} \emm
\end{equation}
with
\begin{equation}
    h_{\mrm{FFD}} = \bigoplus_{j=1}^{\alpha(G)}\begin{pmatrix}
        0 & -\varepsilon_j\\
        \varepsilon_j & 0
    \end{pmatrix}\;.
\end{equation}
We need to find a similar expression of the current operators  \(J^{(\ell)}\). 
From \cref{prop:l1span} we have
\begin{equation}
    \eta^{(j)} = \frac{i^j}{4} \sum_{m=1}^{\alpha(G)} N_m L_{0m} \varepsilon_m^j \left((-1)^j \psi_m^{\dagger} + \psi_m\right)\;.
\end{equation}
where \(L_{0m}\) is the constant coefficient of the \(m\)th Lagrange polynomial.
For the \emph{odd} currents, i.e.,  \(\ell\) odd, we then get with elementary calculation that
\begin{align}
    {J}^{(\ell)} &=i\sum_{j=0}^\ell (-1)^j[\kryl{j},\kryl{\ell-j}]\\
    &= \frac{i^{\ell + 1}}{16} \sum_{j=0}^\ell (-1)^j \sum_{m, m'=1}^{\alpha(G)} N_m N_{m'} L_{0m} L_{0m'} \varepsilon_m^j \varepsilon_{m'}^{\ell-j} \left[(-1)^j \psi_m^{\dagger} + \psi_m, (-1)^{\ell-j} \psi_{m'}^{\dagger} + \psi_{m'}\right]\\
    &= \frac{i^{\ell + 1}}{16} \sum_{m, m'=1}^{\alpha(G)} N_m N_{m'} L_{0m} L_{0m'} \sum_{j=0}^\ell (-1)^j \varepsilon_m^j \varepsilon_{m'}^{\ell-j} \left[i^{(j\bmod2)} \emm_{2m - (j+1\bmod2)}, i^{j+1\bmod2} \emm_{2m' - (j\bmod2)}\right] \\
    &= -\frac{i^{\ell}}{16} \sum_{m, m'=1}^{\alpha(G)} N_m N_{m'} L_{0m} L_{0m'} \left( \sum_{j = 0\, \text{even}}^{\ell} \varepsilon_m^j \varepsilon_{m'}^{\ell-j} \left[\emm_{2m - 1}, \emm_{2m'}\right] + \sum_{j = 0 \, \text{odd}}^{\ell} \varepsilon_{m'}^j \varepsilon_{m}^{\ell-j} \left[\emm_{2m-1}, \emm_{2m'}\right] \right)\\
    &= -\frac{i}{2} \sum_{m, m'=1}^{\alpha(G)} (\emm_{2m-1} \emm_{2m'} - \emm_{2m'} \emm_{2m-1}) \mu^{(\ell)}_{m, m'}
\end{align}
with
\begin{equation}
    \mu^{(\ell)}_{m, m'} = 2i^{\ell - 1} \sqrt{P_{N-1}(r_m) L_{0m}} \sqrt{P_{N-1}(r_{m'}) L_{0m'}} \left( \sum_{j = 0\, \text{even}}^{\ell} \varepsilon_m^j \varepsilon_{m'}^{\ell-j} + \sum_{j = 0 \, \text{odd}}^{\ell} \varepsilon_{m'}^j \varepsilon_{m}^{\ell-j} \right)\;.
\end{equation}
This gives
\begin{equation}
    J^{(\ell)} = -\frac{i}{2} \emm^T h^{(\ell)} \emm\;,
\end{equation}
with
\begin{equation}
    h^{(\ell)}_{2m-1, 2m'} = - h^{(\ell)}_{2m', 2m-1} = \mu^{(\ell)}_{m, m'}
\end{equation}
for all  \(m, m' \in \e{1, \ldots, \alpha(G)}\) and the other entries being  \(0\). 
In practice, the norms of the currents can become quite large with increasing  \(\ell\), potentially magnitudes larger than the norm of the underlying FFD Hamiltonian. 
Therefore we renormalise (by effectively adjusting the  \(a_{\ell}\) parameters)
\begin{equation}
    h^{(\ell)} \mapsto \frac{\norm{h_\mrm{FFD}}_1}{\norm{h^{(\ell)}}_1} h^{(\ell)}\;,
\end{equation}
where the  \(\norm{\cdot}_1\) is the sum over the absolute values of the matrix elements. 
In total, we then have
\begin{equation}
    H[\alpha, \beta, \gamma, \bs{a}] = \frac{i}{2} \emm^T h \emm \quad \text{with} \quad h = h[\alpha, \beta, \gamma, \bs{a}] = h_{\mrm{FFD}} + \sum_{\ell = 1,\text { odd}}^{\alpha(G)} \frac{a_\ell \norm{h_\mrm{FFD}}_1}{\norm{h^{(\ell)}}_1} h^{(\ell)}\;. \label{e"renormalised_currents_model}
\end{equation}
Here the sum over odd  \(\ell\) is truncated at  \(\ell \leq \alpha(G)\), for numerical reasons (we checked for low resolutions that running up to the Krylov dimension  \(d\) as in \cref{eq:hjdef} gives qualitatively the same results).
The numerical difficulty for this calculation lies in finding the roots of  \(P_N\). 
Specifically, the algorithm must be precise enough so that the roots are all real, negative and distinct.
Furthermore, when the roots are sorted, the evaluation  \(P_{N-1}(r_m)\) must have alternating signs with  \(m\) (so that  \(N^2_m\) is positive as  \(L_{0m}\) has alternating signs). 
To fulfill this for larger  \(\alpha(G)\) we restricted the parameters  \(\alpha\),  \(\beta\) and  \(\gamma\) to integer data types and used Mathematica to calculate the  \(\varepsilon_m\) and  \(\left(\sqrt{P_{N-1}(r_m) L_{0m}}\right)_m\) values.

Given the real and antisymmetric matrix  \(h\), one can then diagonalise  \(H\), for example as described in \cite[Section 2.2]{Chapman2020}, by skew-diagonalising  \(h\) as
\begin{equation}
    h = O^T \left(\bigoplus_{j=1}^{\alpha(G)}\begin{pmatrix}
        0 & -\lambda_j\\
        \lambda_j & 0
    \end{pmatrix}\right) O\;,
\end{equation}
where  \(O\) is an orthogonal matrix and  \(\lambda_j \in \mathbb{R}\). 
The gap is then given by  \(\Delta = 2 \min_{j} \abs{\lambda_j}\). 
By normalising the matrix  \(h\) for this calculation, the numerics are fairly stable and well behaved in practice.

\begin{figure*}[t]
    \centering
    \includegraphics[width=\linewidth]{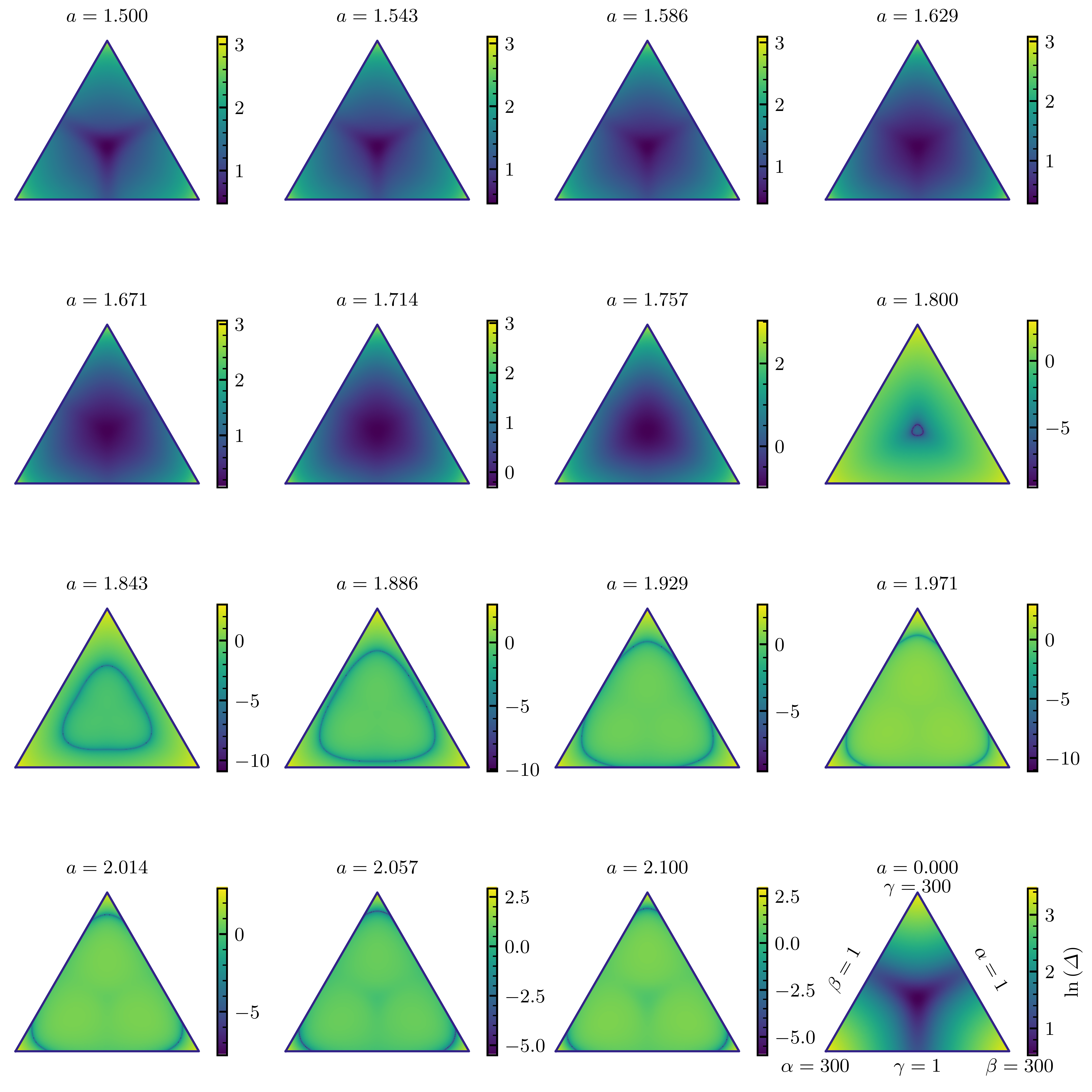}
    \caption{\Cref{fig:triangle_phase_main} but with more samples.}
    \label{fig:triangle_phase_dip1}
\end{figure*}
\begin{figure*}[t]
    \centering
    \includegraphics[width=\linewidth]{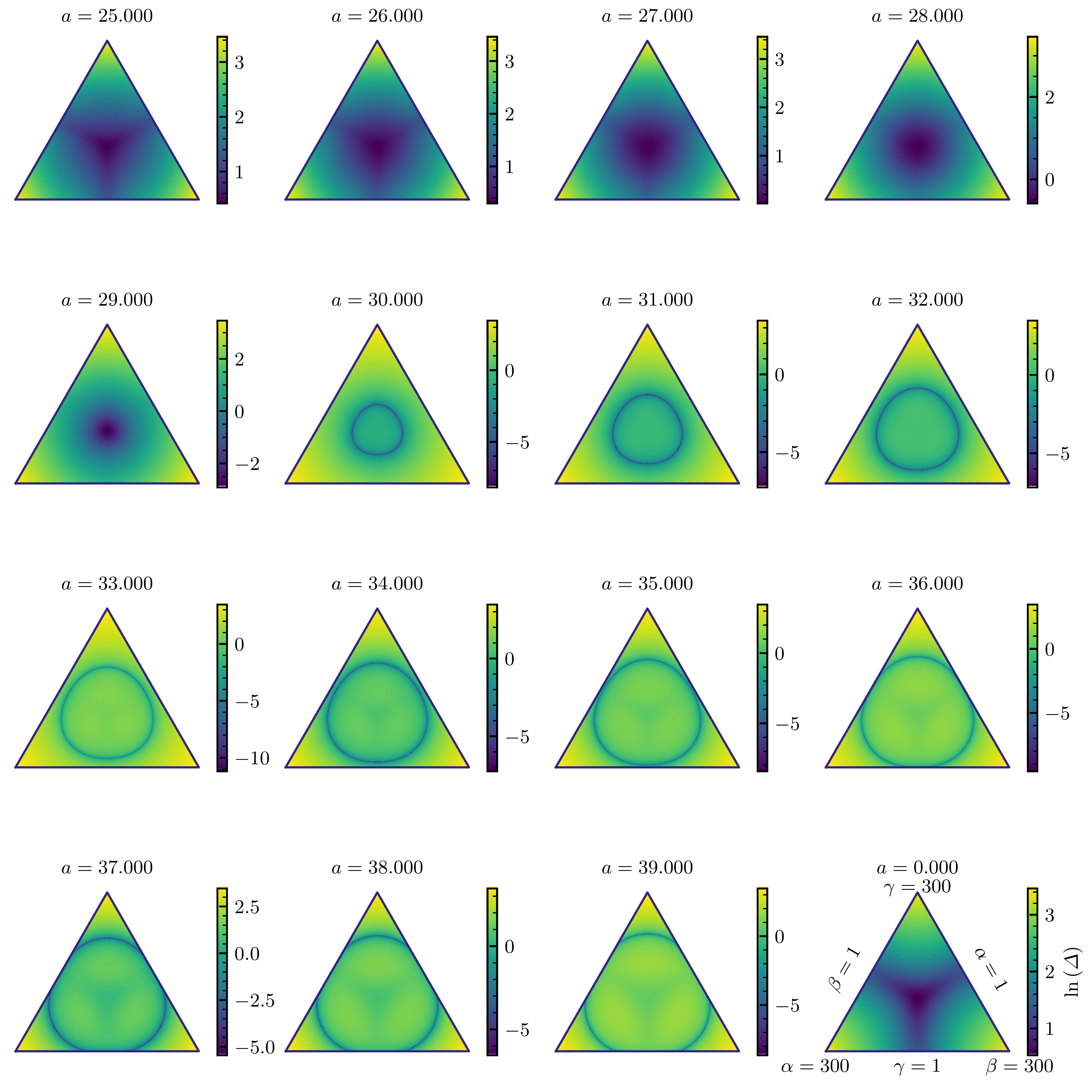}
    \caption{Like \cref{fig:triangle_phase_main,fig:triangle_phase_dip1} but around the second minimum (\(a \approx 30\)) for  \(\alpha(G)=20\).}
    \label{fig:triangle_phase_dip2}
\end{figure*}

\Cref{fig:triangle_phase_dip1,fig:triangle_phase_dip2} show plots complementing the phase diagram shown in \cref{fig:triangle_phase_main}.

\section{Further Discussion Regarding \cref{conj:krylov_path_support_free_fermions}}
\label[appendix]{app:kpsconjecture}

In this section, we provide further motivation and evidence for \cref{conj:krylov_path_support_free_fermions}, which we restate for convenience.

\Krylovconjecture*

The conjecture proposes that the representation of the fermionic modes as anchored induced paths in the graph-theoretic constructions is fundamentally more than just a convenient representation.
The qualification ``on the Krylov space generated by  \(\chi\)'' implies an operator subspace
\begin{equation}
    \mc{K}_{\chi}\coloneqq\operatorname{span}_{\mbb{R}}\e*{\kryl{k}\mid k\ge0}
\end{equation}
that is invariant under the adjoint Hamiltonian action. 
This space carries a linear description of the adjoint action of  \(H\).
The fiducial operator  \(\chi\) is required to be an element of the extended graph Clifford algebra of  \(G\).

The sufficiency direction of \cref{conj:krylov_path_support_free_fermions} is essentially contained in Ref.~\cite{chapman2023unified}, though not isolated there as a stand-alone result.
In general, we have
\begin{align}
    \{\eta^{(j)}, \eta^{(k)}\} = [\eta^{(j)}, \eta^{(k)}] + 2 \eta^{(k)} \eta^{(j)}
\end{align}
where
\begin{align}
    [\eta^{(j)}, \eta^{(k)}] &= \frac{i}{2} [[H, \eta^{(j - 1)}], \eta^{(k)}] \\ 
    &= -\frac{i}{2} ([[\eta^{(j - 1)}, \eta^{(k)}], H] - [[H, \eta^{(k)}], \eta^{(j - 1)}]) \\
    &= -\frac{i}{2} [[\eta^{(j - 1)}, \eta^{(k)}], H] + [\eta^{(k + 1)}, \eta^{(j - 1)}] \\
    &= -\frac{i}{2} [[\eta^{(j - 1)}, \eta^{(k)}], H] + \{\eta^{(k + 1)}, \eta^{(j - 1)}\} - 2 \eta^{(j - 1)} \eta^{(k + 1)}
\end{align}
If we assume by way of induction that \(\{\eta^{(j - 1)},\eta^{(k)}\} = 2M_{j-1,k}^{G} I\) for all \(k \geq 0\) (so that \([\eta^{(j - 1)},\eta^{(k)}] = 2\eta^{(j-1)}\eta^{(k)} - 2M_{j-1,k}^{G} I\)), then
\begin{align}
    -\frac{i}{2} [[\eta^{(j - 1)}, \eta^{(k)}], H] = \frac{i}{2} [H, 2\eta^{(j - 1)}\eta^{(k)}] = 2\eta^{(j)}\eta^{(k)} + 2 \eta^{(j - 1)} \eta^{(k + 1)}
\end{align}
and 
\begin{align}
    [\eta^{(j)}, \eta^{(k)}] &= 2\eta^{(j)}\eta^{(k)} + 2 \eta^{(j - 1)} \eta^{(k + 1)} + 2 M_{j - 1, k + 1}^{G} I - 2 \eta^{(j - 1)}\eta^{(k + 1)} \\ 
    \{\eta^{(j)}, \eta^{(k)}\} &= -2M_{j - 1, k + 1}^{G} I.
\end{align}
Since  \(\bs{M}^G\) is an alternating Hankel matrix,  \(M^G_{j,k} = -M^G_{j-1,k+1}\), and this is exactly the statement  \(\{\eta^{(j)}, \eta^{(k)}\} = 2M^G_{jk} I\) for index  \(j\).
This follows only from the definition of the Krylov modes and the inductive hypothesis, so if the Krylov modes admit anchored path-product expansions, then we are able to additionally claim the base case that \(\{\chi, \eta^{(k)}\} = 2 M_{0, k} I\), since every term in \(\eta^{(k)}\) either is \(\chi\) or anticommutes with \(\chi\).
Consequently, the anticommutator form on the Krylov space takes the form
\begin{equation}
    \e*{ \eta^{(j)},\eta^{(k)} } = 2M_{jk}\id,
\end{equation}
where  \(M\) is real, symmetric, and positive definite after restriction to a linearly independent Krylov basis.
The Krylov operators may therefore be orthonormalised to obtain Majorana operators  \(\{\gamma_a\}_{a=0}^{d-1}\) satisfying the canonical anticommutation relations. 
Because  \(\mc{K}_{\chi}\) is invariant under  \(\mc{L}\), the Hamiltonian adjoint action is represented by a real antisymmetric single-particle matrix on these Majorana modes. 
Decomposing the Hamiltonian into the basis of Majorana strings, this relation then shows that the Hamiltonian must be bilinear in the  \(\e{\gamma_a}_a\) operators. 
Thus, path support implies free-fermion solvability.

The evidence for the converse begins with the generic models. 
Every generic family for which we presently have a graph-theoretic solution method, namely the Hamiltonians with simplicial, claw-free frustration graphs, has this property~\cite{chapman2023unified}.
For these models no model-specific identities are required: the iterated commutators of the simplicial mode are directly supported on anchored induced paths.
The simplicial, claw-free condition may therefore be understood as a purely graphical sufficient condition ensuring that the Krylov evolution never leaves the anchored path-product span.
In this setting, the path description and the generalized canonical anticommutation relations arise from the same combinatorial structure.

More significant evidence is provided by solvable models that lie outside this generic graph class.
We have explicitly examined the Fukai--Vona--Pozsgay model~\cite{fukai2025free} and the Fendley--Pozsgay model~\cite{Fendley2024}, as well as the model in Ref.~\cite{sinha2026hidden}.
Their frustration graphs contain obstructions that permit non-path products to appear in a naive expansion of the Krylov operators.
Nevertheless, these contributions are not independent elements of the operator algebra.
In the Fukai--Vona--Pozsgay model, identities equating a Hamiltonian term  \(h_{\bar{k}}\) with a product of a string of other Hamiltonian terms allow apparently non-path products to be replaced by path-supported products.
In the Fendley--Pozsgay model, products containing forbidden pairs can instead be exchanged for equivalent products with induced-path support.
After imposing these model-specific relations, the Krylov operators again admit anchored path-product expansions.

Crucially, the algebraic relations that restore path support are exact identities of the model's graph Clifford algebra, rather than arbitrary rewritings chosen to enforce a path representation.
Depending on the model, they may follow from relations among Pauli products, from the identification of different products with the same operator, or from a restriction to a symmetry sector in which otherwise distinct products become equivalent.
The support of  \(\eta^{(k)}\) in terms of Hamiltonian terms is consequently not invariant under the choice of presentation of the operator algebra.
The relevant object is instead its support after quotienting by the exact relations of the model.
One may make this interpretation schematic (not necessarily formal) by denoting by  \(\mathfrak{A}_{H}\) the algebra generated by the Hamiltonian terms and by  \(\mathfrak{R}_{H}\) the collection of exact product and cancellation relations.
The path-support condition is then more properly written as
\begin{equation}
    \mc{K}_{\chi} \subseteq \operatorname{span} \bigl\{ h_{P}:P\in\mc{P}^{*} \bigr\} \qquad \text{in } \mathfrak{A}_{H}/\mathfrak{R}_{H}, \label{eq:krylov-path-quotient}
\end{equation}
where  \(\mc{P}^{*}\) is the set of anchored induced paths in the extended frustration graph, and elements in  \(\mathfrak{A}_H/\mathfrak{R}_H\) are equivalent if they transform into each other under operations allowed by  \(\mathfrak{R}_H\).
Because the condition is evaluated in the quotient  \(\mathfrak{A}_{H}/\mathfrak{R}_{H}\), it is manifestly independent of the presentation of the Hamiltonian: rewriting  \(H\) so that a given cancellation is no longer explicit merely relocates the corresponding identity into  \(\mathfrak{R}_{H}\), without changing the quotient in which \cref{eq:krylov-path-quotient} is tested.
When the relevant cancellations hold only on a symmetry sector, the relations in  \(\mathfrak{R}_{H}\) are to be understood as relations of the algebra restricted to that sector---equivalently,  \(\mc{K}_{\chi}\) and the path-support condition are evaluated after projection onto the invariant subspace on which the model is defined (cf.\ the claw-pairing construction of \cref{app:structural-graph}).
For generic simplicial, claw-free models, the relevant quotient is effectively trivial for this purpose.
For the non-generic models above, the relations in  \(\mathfrak{R}_{H}\) are essential and encode information that is not present in the frustration graph alone.

Our computational investigations support this interpretation. 
We have implemented a procedure that expands successive Krylov operators in products of Hamiltonian terms and tests whether the corresponding vertex sequences form anchored induced paths.
Terms failing this test are isolated as candidate obstructions.
For the non-generic models considered above, the resulting non-path terms identify precisely where the additional operator identities must be applied.
After the appropriate substitutions are imposed, every Krylov operator tested reduces to a sum of anchored path products.
We have not found a free-fermion-solvable example for which an irreducible non-path contribution remains.

Finally, the model investigated in this paper,  \(H_{\mathrm{NG}}\), also supports the conjecture. 
While the Krylov generators that we analyse are constructed by the adjoint action of the  \(H_{\mathrm{FFD}}\) Hamiltonian, it turns out that the adjoint action of the  \(H_{\mathrm{NG}}\) Hamiltonian on the same simplicial mode  \(\chi\) constructs the same Krylov space as the currents are quadratic in the original Krylov generators; more specifically, let  \(k, \ell \in \e{1, \ldots, d_K}\) with  \(\ell\) being odd, then we have
\begin{align}
    \e[\eta^{(k)}, J^{(\ell)}] &= i \sum_{j = 0}^{\ell} (-1)^j \e[\eta^{(k)}, \e[\eta^{(j)}, \eta^{(\ell - j)}]] \\
    &= i 2\sum_{j = 0}^{\ell} (-1)^j \e[\eta^{(k)}, \kryl{j}\kryl{\ell - j} - M^G_{j,\ell-j}] \\
    &= -4i \sum_{j = 0}^{\ell} (-1)^j \e(\kryl{j}M^G_{\ell-j,k} - \kryl{\ell-j}M^G_{j,k}) \\
    &= -8i \sum_{j = 0}^{\ell} (-1)^j \kryl{j}M^G_{\ell-j,k}\;.
\end{align}
Therefore, the new generators must also be path-supported, and they are even path-supported in the FFD model.

\cref{conj:krylov_path_support_free_fermions} is a statement about a fixed triple  \((H,G,\chi)\): given a fiducial operator  \(\chi\) in the extended graph Clifford algebra of  \(G\), the space  \(\mc{K}_{\chi}\) carries a Majorana representation of the adjoint action of  \(H\) if and only if the path-support condition of \cref{eq:krylov-path-quotient} holds.
The restriction on  \(\chi\) is not incidental.
The frustration graph, the anchored paths, and the path-support condition are all defined relative to a fixed presentation of  \(H\); conjugating a free-fermion model by a unitary that scrambles its terms produces an equivalent model whose natural fiducial operator need not lie in the extended graph Clifford algebra of the original presentation, and in such a rotated frame the path-support property need not be visible.
The conjecture makes no claim about fiducial operators of this kind.
For the same reason, the conjecture does not assert that every invariant Majorana space of  \(H\) arises as  \(\mc{K}_{\chi}\) for some admissible  \(\chi\): the existence of a quadratic invariant space does not, so far as we know, imply the existence of a generating fiducial operator within the extended graph Clifford algebra.
\cref{conj:krylov_path_support_free_fermions} is therefore not an exhaustive criterion by which every free-fermion solution of every model can be found.
It is instead an exact characterisation of free-fermion solvability for the Krylov spaces accessible from the graph Clifford algebra---a class that includes every graph-theoretic solution currently known---and its resolution would constitute a substantial step even within this scope.

The necessity direction is nevertheless substantially more difficult to establish than sufficiency, even within the scope just described. 
A free-fermion solution on  \(\mc{K}_{\chi}\) may be presented in a basis bearing no manifest relation to individual Hamiltonian terms or to paths in the frustration graph.
Proving necessity would therefore require reconstructing a graphical path-product representation from the mere existence of an invariant Majorana space.
Moreover, the reconstruction need not be unique: different products of Hamiltonian terms may represent the same operator, and cancellations may occur only after several Krylov contributions are combined.
A direct proof must consequently control both the combinatorics of the frustration graph and the ideal of model-dependent relations in the corresponding operator algebra.
Within this scope, necessity amounts to the following statement: every operator Krylov space generated by an admissible fiducial operator, on which the Hamiltonian adjoint action is quadratic, must possess a basis whose elements are represented by anchored induced-path products, modulo exact graphical relations.
At present we do not know how to exclude a more general mechanism by which non-path products could satisfy the canonical anticommutation relations without admitting such a reduction.
We likewise know of no example in which this occurs. 
The non-generic models considered here are particularly informative in this respect. 
Although their naive Krylov expansions initially appear to violate the proposed path-support criterion, the same algebraic relations responsible for their exceptional solvability remove the apparent obstruction.

\cref{conj:krylov_path_support_free_fermions}~therefore suggests a practical route towards discovering and certifying further non-generic free-fermion models.
Starting from a candidate fiducial operator in the extended graph Clifford algebra, one may generate the Krylov tower, detect the first non-path contributions, and search for exact relations that return these contributions to the anchored path span.
Once the finite-dimensional Krylov space has closed, it is sufficient to verify the reduction for a basis of that space rather than for infinitely many values of  \(k\). 
The conjecture predicts that success of this procedure is equivalent to the existence of a free-fermion solution on the resulting Krylov space.
In this sense, graphical cancellation relations do not merely simplify an already known solution: they expose the hidden induced-path structure through which quadratic fermionic closure is realised.

\section{A New Proof of Induced Paths of Krylov Modes}
\label[appendix]{app:induced-paths}

In this section we provide a new proof for the first key lemma of our paper. 
This lemma was previously proven in Ref.~\cite{chapman2023unified} as well as in Ref.~\cite{fukai2026solvingb}. 
We give a separate argument here for completeness; it uses less graph theory, is more intuitive, and is specialized to the chordal, claw-free setting needed in this work.
\KrylovModes*

\begin{definition}[``Chi'' modes] \label{def:chimode}
    Define
    \begin{align}
        \chi_k = \sum_{\ell = 0}^{k} (-1)^{\ell} Q^{(\ell)} \chi Q^{(k - \ell)}
    \end{align}
    for  \(k = 0, \ldots, \alpha(G)\), such that  \(\chi_0 = \chi\), and  \(\chi_1 = -[H, \chi]\).
\end{definition}
\begin{proposition}\label{prop:chispan}
    The operators  \(\chi_k\) are linear combinations of  \(h_P\) for  \(P \in \mc{P}^*\) as given in \cref{def:indpath}.
\end{proposition}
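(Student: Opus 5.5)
The natural plan is to compute \(\chi_k\) directly from the definition of the charges \(Q^{(\ell)}\) and show that every surviving term is an anchored path product. Expanding gives
\begin{align}
    \chi_k = \sum_{\ell=0}^{k} (-1)^{\ell} \sum_{S\in\mc{S}^{(\ell)}}\sum_{S'\in\mc{S}^{(k-\ell)}} h_S\,\chi\, h_{S'}.
\end{align}
Since \(\chi\) anticommutes exactly with the terms in the simplicial clique \(K\), we can move \(\chi\) to the front. This gives \(h_S\chi = (-1)^{|S\cap K|}\chi h_S\), and \(|S\cap K|\le 1\) because \(S\) is independent. Hence \(\chi_k = \chi\sum_{S,S'}(-1)^{|S|+|S\cap K|} h_S h_{S'}\). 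Equivalently, \(\chi_k\) is the degree-\(k\) coefficient of \(T(-u)\chi T(u)\) up to signs and powers of \(u\). This is the same object that \cref{prop:TchiT} expands in path products, so one option is to cite Theorem 3.8 of \cite{fukai2026solvingb} and match coefficients. To keep the argument self-contained, I would instead give the following direct cancellation argument.

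\emph{Step 1: multiset reduction.} Fix an ordered pair \((S,S')\) of independent sets and consider \(X=S\cap S'\). Every \(h_{\bsj}\) with \(\bsj\in X\) appears twice. Commuting one copy past the intervening factors contributes a sign, and the square is \(b_{\bsj}^2\). The product therefore reduces to a scalar times \(h_{S\oplus S'}\) with an explicit sign. I would then use the standard fact that in a claw-free graph the union of two independent sets induces a subgraph of maximum degree at most two. So \(G[S\oplus S']\) is a disjoint union of paths and even cycles, and by chordality every component is an induced path with alternating \(S\)/\(S'\) colours.

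\emph{Step 2: sign-reversing involution.} For a component \(C\) of \(G[S\oplus S']\) not adjacent to \(\bs{j}^*\) (that is, not touching \(K\)), swap the colours of \(C\), i.e.\ exchange \(C\cap S\) and \(C\cap S'\). This keeps both sets independent, because \(C\) is a full component of the symmetric difference and neighbours of \(C\) in \(S\cup S'\) lie in \(S\cap S'\) only if they are non-adjacent. If \(|C|\) is odd, the swap changes \(|S|\) by \(\pm1\) and so flips the sign \((-1)^{\ell}\). The product \(h_Sh_{S'}\) changes by the commutation sign of \(h_{C\cap S}\) past \(h_{C\cap S'}\), which is \((-1)^{|C|-1}=+1\) for \(|C|\) odd. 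The two terms therefore cancel.

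For \(|C|\) even, the sign analysis must also include the commutation of \(C\) with the remaining factors and the duplicated vertices in \(X\). A claw-freeness argument (each vertex of \(X\) sees at most two vertices of \(C\), consecutive ones) shows these cancel too, or at least pair up. Choosing the lowest-indexed such component, in a fixed ordering, makes the map an involution. After these cancellations, the surviving terms have \(S\oplus S'\) equal to a single induced path component adjacent to \(K\), together with duplicated vertices \(X\) that are non-adjacent to it. Summing over such \(X\) would give exactly the independence-polynomial coefficients \(P_{G\setminus N[P]}\) seen in \cref{prop:TchiT}, so \(\chi\,h_{S\oplus S'}\) is \(\pm h_P\) for an anchored induced path \(P\).

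\emph{Main obstacle.} The hard part is the sign bookkeeping in Step 2. The relative sign between \((S,S')\) and the swapped pair must be \(-1\) in every case. This involves the parity of \(|C|\), the factor \((-1)^{|S\cap K|}\) when \(C\) touches \(K\), and the commutation of \(C\) with the duplicated vertices in \(X\). If the sign fails for even components not touching \(K\), my fallback is to avoid the involution and match coefficients against \cref{prop:TchiT}. There, \(T(-u)\chi T(u)=\sum_k u^k\chi_k\) up to sign, and that proposition already expresses the left side as \(\chi\) plus a sum of \(h_P\) with polynomial coefficients. Comparing powers of \(u\) would then give the claim immediately.
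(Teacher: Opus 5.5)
Your strategy matches the paper's: pull \(\chi\) to the front, reduce \(h_Sh_{S'}\) to \(h_{S\setminus S'}h_{S'\setminus S}\), note that \(G[S\oplus S']\) is a disjoint union of alternately coloured induced paths, and cancel terms by swapping colours on one component. The gap is that your involution acts only on components that avoid \(K\), and this is not enough when \(|K|\ge 2\). A component \(C\) can contain two vertices of \(K\). These are adjacent, so one lies in \(S\) and the other in \(S'\). Such terms survive your cancellation, and your concluding claim \(\chi h_{S\oplus S'}=\pm h_P\) with \(P\in\mc{P}^*\) is false for them: \(\bs{j}^*\) is adjacent to two consecutive vertices of \(C\), so \(\{\bs{j}^*\}\cup C\) contains a triangle. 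For example, with \(K=\{a,b\}\) the term \(\chi h_a h_b\) in \(\chi_2\) is not a path product; it cancels only against its colour-swapped partner. The paper therefore swaps colours on every component with \(|K\cap C|\in\{0,2\}\). When \(|K\cap C|=2\), the swap exchanges one \(K\)-vertex of \(S\) for one of \(S'\), so \((-1)^{|K\cap S|}\) is unchanged and the same parity argument applies. You also omit the step that, when \(|K\cap C|=1\), the \(K\)-vertex is an endpoint of \(C\). Otherwise its two \(C\)-neighbours lie outside \(K\), so they are adjacent by simpliciality of \(K\), contradicting that \(C\) is induced. Without this step, \(\{\bs{j}^*\}\cup C\) need not be a path.

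Second, you leave the even-\(|C|\) case open because of a worry that does not apply. The factors of an independent set commute, so exactly
\(h_Sh_{S'}=h_{S\setminus C}\,h_{S\cap C}\,h_{S'\cap C}\,h_{S'\setminus C}\) and \(h_{\tilde S}h_{\tilde S'}=h_{S\setminus C}\,h_{S'\cap C}\,h_{S\cap C}\,h_{S'\setminus C}\). Hence neither the duplicated vertices in \(S\cap S'\) nor the other components ever contribute a sign; the same holds in your Step 1. The only relative sign is \((-1)^{|C|-1}\) from the middle factors, which is what you already used for odd \(|C|\). For even \(|C|\) this sign is \(-1\), while \(|S|\), \(|S'|\) and \(|K\cap S|\) are unchanged, so the pair cancels with no claw argument. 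Your fallback is logically sound: \(T(-u)\chi T(u)=\sum_k(-u)^k\chi_k\), and you match coefficients against \cref{prop:TchiT}. But it imports the path-support theorem of Ref.~\cite{fukai2026solvingb} rather than proving it, which is exactly what this independent argument is meant to avoid.
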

\begin{proof}
    Expanding the definition of  \(\chi_k\) in terms of the independent set charges, we write:
    \begin{align}
        \chi_k &= \sum_{\ell = \mathrm{max}(0, k - \alpha(G))}^{\mathrm{min}(k, \alpha(G))} (-1)^{\ell} Q^{(\ell)} \chi Q^{(k - \ell)} \label{eq:chikline1}\\
        &=\sum_{\ell = \mathrm{max}(0, k - \alpha(G))}^{\mathrm{min}(k, \alpha(G))} (-1)^{\ell} \sum_{\substack{S\in \mc{S}^{(\ell)} \label{eq:chikline2}\\
        S'\in \mc{S}^{(k-\ell)}}}h_S\chi h_{S'} \\
        \chi_k &=\chi\left(\sum_{\ell = \mathrm{max}(0, k - \alpha(G))}^{\mathrm{min}(k, \alpha(G))} \sum_{\substack{S\in \mc{S}^{(\ell)}\\
        S'\in \mc{S}^{(k-\ell)}}}(-1)^{\lvert S \rvert + |K\cap S|}h_S h_{S'}\right) \label{eq:chikline3}
    \end{align}
    From the second to the third line, we brought in the factor of  \((-1)^{\ell} = (-1)^{|S|}\), and we also commute  \(\chi\) through  \(h_S\) with a factor of  \((-1)^{|K \cap S|}\).
    We next use the fact that
    \begin{align}
        h_S h_{S'} = \left(\prod_{\bs{j} \in S \cap S'} b_{\bs{j}}^2\right) h_{S \setminus S'} h_{S' \setminus S}
        \label{eq:symdiffssp}
    \end{align}
    For  \(G\) a chordal, claw-free graph,  \(S \oplus S' = (S \setminus S') \cup (S' \setminus S)\) induces a set of disjoint paths.
    Furthermore, any path component  \(P \in \mc{P}\) of  \(G[S \oplus S']\) can share at most two vertices with  \(K\) since  \(K\) is a clique, and at most one path component of  \(G[S \oplus S']\) can share any vertices with  \(K\).
    If  \(P\) shares two vertices with  \(K\), then they must be neighbouring.
    Thus, one belongs to  \(S\setminus S'\), and the other belongs to  \(S' \setminus S\).
    Therefore, if  \(\lvert K \cap P \rvert\) is even (i.e.\ either 0 or 2), then  \(\lvert K \cap S \cap P \rvert = \lvert K \cap S' \cap P \rvert \in \{0, 1\}\), and if  \(\lvert K \cap P \rvert\) is odd, then  \(\lvert K \cap P \rvert = 1\).
    Finally, if  \(|K \cap P|\) is odd, then  \(\boldsymbol{j} = K \cap P\) must be an endpoint of  \(P\), or we will have a contradiction with the assumption that  \(K\) is simplicial. 
    That is, the two neighbours to  \(\boldsymbol{j}\) in  \(P\), which are not in  \(K\) by assumption, must belong to the same clique so are neighbouring to each other, but this contradicts the assumption that  \(P\) is an induced path. 
    
    In the following, let  \(P \in \mc{P}\) be a path component of  \(G[S \oplus S']\), and assume that  \(|K \cap P|\) is even. 
    For such a path, we define a unique pair of corresponding independent sets 
    \begin{align}
        \begin{cases}
            \widetilde{S} \coloneqq (S \setminus P) \cup (S' \cap P) & \\
            \widetilde{S}' \coloneqq (S' \setminus P) \cup (S \cap P)
        \end{cases}.
    \end{align}
    By the above,  \(|K \cap S \cap P| = |K \cap S' \cap P|\). 
    Since  \(\widetilde{S}\) differs from  \(S\) only inside  \(P\), this gives  \(|K \cap \widetilde{S}| = |K \cap S|\).
    Suppose that  \(P\) has odd length.
    In this case, we have  \(\lvert\widetilde{S}\rvert = \lvert S \rvert = \ell\) and  \(|\widetilde{S}'| = |S'| = k - \ell\) because  \(\lvert P \cap S \rvert = \lvert P \cap S'\rvert\).   
    This gives
    \begin{align}
        (-1)^{\lvert S \rvert + \lvert K \cap S \rvert} h_S h_{S'} + (-1)^{\lvert \widetilde{S} \rvert + \lvert K \cap \widetilde{S} \rvert} h_{\widetilde{S}} h_{\widetilde{S}'} &= (-1)^{\ell + \lvert K \cap S \rvert} \left(\prod_{\bs{j} \in S \cap S'} b_{\bs{j}}^2 \right) h_{S \setminus P} (h_{S \cap P} h_{S' \cap P} + h_{S' \cap P} h_{S \cap P}) h_{S' \setminus P} \\
        &= 0
    \end{align}
    where we arrive at the last line because  \(h_{S \cap P} h_{S' \cap P} = -h_{S' \cap P} h_{S \cap P}\), due to  \(P\) having odd length.
    Thus, if  \(G[S \oplus S']\) has a path component  \(P\) of odd length, then  \(P\) must have an endpoint in  \(K\).
    
    If  \(P\) has even length, then suppose that both endpoints (possibly the same vertex) of  \(P\) are in  \(S\).
    In this case, we have  \(\lvert \widetilde{S} \rvert = \lvert S \rvert - 1 = \ell - 1\), and  \(\lvert \widetilde{S}' \rvert = \lvert S' \rvert + 1 = k - \ell + 1\).
    Now we have 
    \begin{align}
        (-1)^{\lvert S \rvert + \lvert K \cap S \rvert} h_S h_{S'} + (-1)^{\lvert \widetilde{S} \rvert + \lvert K \cap \widetilde{S} \rvert} h_{\widetilde{S}} h_{\widetilde{S}'} &= (-1)^{\ell + \lvert K \cap S \rvert} \left(\prod_{\bs{j} \in S \cap S'} b_{\bs{j}}^2 \right) h_{S \setminus P} (h_{S \cap P} h_{S' \cap P} - h_{S' \cap P} h_{S \cap P}) h_{S' \setminus P} \\
        &= 0
    \end{align}
    where the last line follows because  \(h_{S \cap P} h_{S' \cap P} = h_{S' \cap P} h_{S \cap P}\), due to  \(P\) having even length.
    A corresponding argument follows if  \(k - \ell \geq 1\) and both endpoints of  \(P\) belong to  \(S'\).
    
    Thus, all path components of  \(G[S \oplus S']\) must have an endpoint in  \(K\), which further requires that  \(G[S \oplus S'] = P\), a single connected path in  \(\mc{P}\).
    Since  \(\chi_k\) includes a factor of  \(\chi\), corresponding to the vertex  \(\bs{j}^*\) neighbouring the unique vertex in  \(K\), we have that  \(\chi_k\) is in-fact a sum of induced paths in  \(\mc{P}^*\), as required.
\end{proof}

\begin{definition}[Fiducial representation]
    Given a general frustration graph  \(G \in (V, E)\), choose an arbitrary orientation  \(\tau\) of  \(G\) and associated directed graph  \(\vec{G} = (V, D)\).  
    Define the fiducial representation of  \(G\) with respect to  \(\tau\) as the Hamiltonian  \(H = \sum_{\bs{j} \in V} \sigma^{\bs{j}}\) on a system of  \(|E|\) qubits such that each qubit is associated to an edge of  \(G\), with
    \begin{align}
        \sigma^{\bs{j}} = \left(\bigotimes_{e \coloneqq (\bs{j} \rightarrow \bs{k}) \in D} X_e \right) \otimes \left(\bigotimes_{e \coloneqq (\bs{j} \leftarrow \bs{k}) \in D} Z_e \right).
    \end{align}
    Namely, each term acts as  \(X\) on each outgoing edge qubit and  \(Z\) on each incoming edge qubit.
\end{definition}
\begin{proposition}\label{prop:fiducial} 
    Let  \(\sigma^S \coloneqq \prod_{\bs{j} \in S} \sigma^{\bs{j}}\) for  \(S \subseteq V\) in the fiducial representation. 
    If  \(\sigma^S \propto I\), then  \(S = \varnothing\). 
\end{proposition}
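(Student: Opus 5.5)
Since every Pauli operator \(\sigma^{\bs{j}}\) in the fiducial representation is a tensor product of single-qubit factors, the plan is to work qubit by qubit. First, compute the local action of \(\sigma^S\) on a single edge qubit \(e\). Only the two endpoints of \(e\) act nontrivially on qubit \(e\), and all other terms act as the identity there. Write \(e = (\bs{j}\rightarrow\bs{k})\) for the fixed orientation \(\tau\). Then \(\sigma^{\bs{j}}\) acts on qubit \(e\) as \(X_e\), and \(\sigma^{\bs{k}}\) acts on it as \(Z_e\). Consequently, up to a phase coming from reordering, the restriction of \(\sigma^S\) to qubit \(e\) is
\begin{align}
    X_e^{\mathbf{1}(\bs{j}\in S)}\, Z_e^{\mathbf{1}(\bs{k}\in S)}.
\end{align}

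Second, use the fact that the single-qubit Paulis \(I, X, Z, XZ \propto Y\) are linearly independent. The operator \(\sigma^S\) is a product of tensor-product operators, so it is itself a tensor product over qubits times a global phase. It is proportional to \(I\) exactly when every single-qubit factor is proportional to \(I_e\). By the display above, this forces \(\mathbf{1}(\bs{j}\in S) = \mathbf{1}(\bs{k}\in S) = 0\) for every edge \(e=(\bs{j}\rightarrow\bs{k})\). In other words, no vertex incident to any edge lies in \(S\).

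Third, conclude using the standing assumption in \cref{definition:frustrationgraph} that the frustration graph is connected. If \(G\) has at least one edge, every vertex is incident to some edge, so \(S=\varnothing\). The degenerate case is a single isolated vertex with no edges: there are then zero qubits, \(\sigma^{\bs{j}} = I\), and the statement would fail. So the claim should be read for \(|V|\geq 2\), or equivalently for graphs without isolated vertices. I would note this as an explicit hypothesis and not try to prove it away. Apart from that, the argument is routine, since the phases from reordering never affect proportionality to \(I\).

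The only point I expect to need care is justifying that the ordering of factors within the product does not matter. Reordering only produces signs, because the \(\sigma^{\bs{j}}\) pairwise commute or anticommute. So the support pattern \((X_e^{a}Z_e^{b})_e\) is well defined independently of the order. I would state this once and then apply the local computation uniformly to every edge.
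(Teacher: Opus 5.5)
Your proposal is correct and follows essentially the same route as the paper: both arguments look at the qubit of an edge incident to a vertex of \(S\), note that only the two endpoints of that edge act there (one as \(X\), the other as \(Z\)), so \(\sigma^S\) restricts to \(X\), \(Z\), or \(XZ\) on that qubit and cannot be proportional to the identity, and both use connectedness to guarantee that every vertex of \(S\) has such an edge. Your caveat about the single-vertex graph is also accurate: the paper's step ``since \(G\) is connected \(N(\bs{v})\) is not empty'' tacitly assumes \(|V| \geq 2\).
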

\begin{proof}
    Let  \(S \subseteq V\) such that  \(\sigma^S \propto I\). 
    Now assume that  \(S \neq \varnothing\). 
    Then there exists a  \(\bs{v} \in S\) and since  \(G\) is connected  \(N(\bs{v})\) is not empty. 
    Without loss of generality let  \(\bs{v}\) have an outgoing edge, labeled  \(e \in D\). 
    Then it holds  \(\sigma^{\bs{v}}\) acts as \(X\) on \(e\). 
    The only other possible factor acting on \(e\) is \(\sigma^{\bs{w}}\), where \(\bs{w}\) is the other endpoint of \(e\) (i.e. \(e = (\bs{v} \to \bs{w}) \)), where it acts as \(Z\). 
    Thus, regardless of whether \(\bs{w} \in S\), \(\sigma^S\) does not act as the identity on \(e\), a contradiction.
\end{proof}

\begin{proof}[Proof of Lemma 1]
    The statement is clearly true for  \(k \in \{0, 1\}\). 
    We first show that  \(\kryl{k}\) is a sum over induced-path operators. 
    By \cref{prop:zero_mode}, suppose
    \begin{align}
        \kryl{1} \coloneqq \mc{L}[\chi] = \sum_{j = 1}^{\alpha(G)} \left(v_j \psi_j^{\dagger} + v^*_j \psi_j \right)
    \end{align}
    for a  \(\mbf{v} \coloneqq \{v_j \}_{j = 1}^{\alpha(G)} \in \mathds{C}^{\times \alpha(G)}\), and thus
    \begin{align}
        \kryl{k} \coloneqq \mc{L}^{k - 1}[\kryl{1}] = i^{k - 1} \sum_{j = 1}^{\alpha(G)} \left((-\varepsilon_j)^{k - 1} v_j \psi_j^{\dagger} + \varepsilon_j^{k - 1} v^*_j \psi_j \right)
    \end{align}
    by \cref{eq:eigenmodes}. 
    By \cref{def:incognito,def:chimode,prop:chispan}, we thus have that  \(\kryl{k}\) is a sum of  \(h_P\) with  \(P \in \mc{P}^*\) for all  \(k \geq 0\).
    
    We show the remaining parts of the lemma by induction. 
    Namely, assume \cref{eq:pathsupport} is true for all  \(1 \leq j < k\) and \cref{eq:deformeq} is true for all  \(1 \leq j < k - 1\). 
    We have
    \begin{align}
        \kryl{k} &= \mc{L}[\kryl{k-1}] \\
        &= \frac{i}{2}(-i)^{k-1} \sum_{P \in \mc{P}^*} \left(\mbf{A}^{k - 1}\right)_{P, \{\bs{j}^*\}} [H, h_P] \\
        \kryl{k} &= -\frac{(-i)^k}{2} \sum_{P \in \mc{P}^*} \left(\mbf{A}^{k - 1}\right)_{P, \{\bs{j}^*\}} \sum_{\bs{j} \in V} [h_{\bs{j}}, h_P] 
    \end{align}
    Here  \(h_P = \chi\, h_{\Pv{1}}\cdots h_{\Pv{|P|-1}}\) is the ordered product of \cref{eq:hpdef}, with the factors in path order starting from  \(\chi\).
    For  \(G\) a chordal, claw-free graph, there are only three ways that a term  \(h_{\bs{j}}\) in the  \(H\) can anticommute with  \(h_P\): 
    \begin{enumerate}[label=(\roman*)]
        \item  \(\bs{j} \in P\) and  \(P \setminus \{\bs{j}\} \in \mc{P}^*\)
        \item  \(\bs{j} \notin P\) and  \(P \cup \{\bs{j}\} \in \mc{P}^*\), and
        \item  \(\bs{j} \notin P\) and there exists a clone  \(\bs{k} \in P\) to  \(\bs{j}\) (denoted  \(\bs{j} \prec_P \bs{k}\)) such that  \(P \setminus \{\bs{k}\} \cup \{\bs{j}\} \in \mc{P}^*\).
    \end{enumerate}
    In each of these cases  \(h_{\bs{j}}\) anticommutes with  \(h_P\), so  \([h_{\bs{j}}, h_P] = 2h_{\bs{j}}h_P\).
    In case (i),  \(\bs{j} = \vv{P}\) is the free endpoint,  \(h_P = h_{P \setminus \{\bs{j}\}} h_{\bs{j}}\), and  \(h_{\bs{j}}\) anticommutes with exactly one factor of  \(h_{P \setminus \{\bs{j}\}}\) (its predecessor on the path, possibly  \(\chi\)). 
    In case (ii),  \(h_{\bs{j}}\) anticommutes with exactly one factor of  \(h_P\), namely  \(h_{\vv{P}}\), and  \(h_{P \cup \{\bs{j}\}} = h_P h_{\bs{j}}\). 
    Hence
    \begin{align}
        h_{\bs{j}} h_P = -b_{\bs{j}}^2\, h_{P \setminus \{\bs{j}\}} \quad \text{in case (i)}, \qquad h_{\bs{j}} h_P = -h_{P \cup \{\bs{j}\}} \quad \text{in case (ii)}.
    \end{align}
    This gives
    \begin{align}
        \kryl{k} &= -(-i)^k \sum_{P \in \mc{P}^*} \left(\mbf{A}^{k - 1}\right)_{P, \{\bs{j}^*\}} \left(\sum_{\bs{j} \in P, P \setminus \{\bs{j}\} \in \mc{P}^*} + \sum_{\bs{j} \notin P, P \cup \{\bs{j}\} \in \mc{P}^*} + \sum_{\bs{j} \notin P, \bs{j} \prec_P \bs{k}} \right) h_{\bs{j}} h_P \\
        \kryl{k} &= (-i)^k \sum_{P \in \mc{P}^*} \left(\mbf{A}^{k - 1}\right)_{P, \{\bs{j}^*\}} \left(\sum_{\bs{j} \in P, P \setminus \{\bs{j}\} \in \mc{P}^*} b_{\bs{j}}^2 h_{P \setminus \{\bs{j}\}} + \sum_{\bs{j} \notin P, P \cup \{\bs{j}\} \in \mc{P}^*} h_{P \cup \{\bs{j}\}} - \sum_{\bs{j} \notin P, \bs{j} \prec_P \bs{k}} h_{\bs{j}} h_P \right).
    \end{align}
    Regarding the first and second sums inside the parentheses (cases i and ii), we have  \(\mbf{A}_{P \setminus \{\bs{j}\}, P}~=~b_{\bs{j}}^2\) and  \(\mbf{A}_{P \cup \{\bs{j}\}, P}~=~1\), with all other amplitudes in  \(\mbf{A}\) zero. 
    This gives
    \begin{align}
        \kryl{k} &= (-i)^k \sum_{P \in \mc{P}^*} \left[\left(\mbf{A}^{k}\right)_{P, \{\bs{j}^*\}} h_P - \sum_{\bs{j} \notin P, \bs{j} \prec_P \bs{k}} \left(\mbf{A}^{ k - 1}\right)_{P, \{\bs{j}^*\}} h_{\bs{j}} h_P \right].
    \end{align}
    We know that the right-hand-side of the equation above must be a sum over induced-path operators. 
    Since  \(P \cup \{\bs{j}\} \notin \mc{P}^*\) for every term inside the sum \emph{inside} square brackets above (case iii), we have that this sum, when summed over  \(P \in \mc{P}^*\), must vanish. 
    This proves the inductive step for \cref{eq:pathsupport}, and we have
    \begin{align}
        \sum_{P \in \mc{P}^*} \sum_{\bs{j} \notin P, \bs{j} \prec_P \bs{k}} \left(\mbf{A}^{ k - 1}\right)_{P, \{\bs{j}^*\}} h_{\bs{j}} h_P = 0.
    \end{align}
    For each term indexed by  \((P, \bs{j})\) above, there is a unique term indexed by  \((\widetilde{P}, \bs{k})\) with  \(\widetilde{P}\) the deformation of  \(P\) by  \(\bs{j}\). 
    Moreover
    \begin{align}
        h_{\bs{j}} h_{P} = -h_{\bs{k}} h_{\widetilde{P}}.
    \end{align}  
    Thus, we pair uniquely corresponding terms in the sum to obtain
    \begin{align}
        0 &= \frac{1}{2}\sum_{\substack{(P, \bs{j}), P \in \mc{P}^* \\ \bs{j} \notin P, \bs{j} \prec_P \bs{k}}} \left[\left(\mbf{A}^{ k - 1}\right)_{P, \{\bs{j}^*\}} h_{\bs{j}} h_P + \left(\mbf{A}^{ k - 1}\right)_{P \setminus \{\bs{k}\} \cup \{\bs{j}\}, \{\bs{j}^*\}} h_{\bs{k}} h_{P \setminus \{\bs{k}\} \cup \{\bs{j}\}} \right] \\
        0 &= \sum_{\substack{(P, \bs{j}), P \in \mc{P}^* \\ \bs{j} \notin P, \bs{j} \prec_P \bs{k}}} \left[\left(\mbf{A}^{ k - 1}\right)_{P, \{\bs{j}^*\}} - \left(\mbf{A}^{ k - 1}\right)_{P \setminus \{\bs{k}\} \cup \{\bs{j}\}, \{\bs{j}^*\}} \right] h_{\bs{j}} h_P.
    \end{align}
    There is no other term in the sum, indexed by  \((P', \bs{j}')\), with  \(\bs{j}' \notin P'\) such that  \(\bs{j}' \prec_{P'} \bs{k}'\) for  \(\bs{k}' \in P'\) with  \(P' \cup \{\bs{j}'\} = P \cup \{\bs{j}\} = \widetilde{P} \cup \{\bs{k}\}\). 
    Choosing the fiducial representation for the Hamiltonian, we thus have that the term in square brackets must vanish for each  \((P, \bs{j})\) in the sum by \cref{prop:fiducial}. 
    This proves the inductive step for \cref{eq:deformeq}.
\end{proof}

\section{Relation Between the Cayley Transform of \texorpdfstring{ \(\mathrm{ad}_H\)}{adH} and the Transfer Operator}
\label[appendix]{app:cayley}

This appendix makes the relation between the Cayley transform and the transfer operator explicit.
It is not strictly required to prove the path support of the Krylov currents, because the relevant expansion can be obtained from Ref.~\cite{fukai2026solvingb} without it (see \cref{prop:zero_mode} and \cref{prop:TchiT}).
We include it because the resulting picture of the Krylov flow as a Cayley-parametrized unitary conjugation may be of independent interest.

From \cite[Equation (3.5)]{fendley2019free}, we have that for the FFD model, the following product decomposition of the transfer operator holds:
\begin{proposition}[Diagonal form of the transfer operator]\label{prop:diagonal_transfer}
    For the transfer operator of the FFD Hamiltonian, as given in \cref{def:transfer}, we have
    \begin{align}
        T(u) = \prod_{j = 1}^{\alpha(G)} \left(I - u \varepsilon_j [\psi_j, \psi_j^{\dagger}] \right)\;,
        \label{eq:diagonaltransfer}
    \end{align}
    for  \(u \in U_0\) (region around  \(0\)). 
\end{proposition}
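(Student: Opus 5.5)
The approach is to prove \cref{eq:diagonaltransfer} by showing that \(T(u)\) acts diagonally on the Fock basis generated by the incognito modes, and then fixing its eigenvalues. Both sides are polynomials in \(u\) of degree at most \(\alpha(G)\) with constant term \(I\). It therefore suffices to establish the identity on a basis of simultaneous eigenstates of the number operators \(n_j = \psi_j^\dagger\psi_j\). Since \(\varepsilon_j[\psi_j,\psi_j^\dagger] = \varepsilon_j(1-2n_j)\), the right-hand side acts on an occupation state \(\ket{\mathbf n}\) as \(\prod_j (1 - u\varepsilon_j(1-2n_j))\).

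The key intermediate step is an intertwining relation of the form
\begin{align}
    T(u)\,\psi_j^\dagger = \frac{1+u\varepsilon_j}{1-u\varepsilon_j}\,\psi_j^\dagger\, T(u), \qquad
    T(u)\,\psi_j = \frac{1-u\varepsilon_j}{1+u\varepsilon_j}\,\psi_j\, T(u).
\end{align}
I would derive this following Fendley~\cite{fendley2019free}, in three steps.
\begin{itemize}
    \item Use \cref{lem:qconservation} to get \([T(u),T(u')]=0\) and \([T(u),H]=0\).
    \item Use the clique recursion for \(T_G\) together with the defining relation \(\psi_j \propto T(-u_j)\chi T(u_j)\) of \cref{def:incognito}.
    \item Show that \(T(u)\,T(-u_j)\chi T(u_j)\,T(u)^{-1}\) is proportional to \(T(-u_j)\chi T(u_j)\), with an explicit ratio depending only on \(u\) and \(u_j\).
\end{itemize}
The ratio is obtained by commuting \(\chi\) (which anticommutes only with the simplicial clique) through \(T(u)\). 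This yields the rational factor once one uses \(T(u_j)T(-u_j)=P_G(-u_j^2)=0\) (\cref{prop:productoftransfer}) to kill the remainder terms. Granting the intertwining relation, the operator \(T(u)\) maps each \(\ket{\mathbf n}\) to a multiple of itself. The vector \(T(u)\ket{\mathbf n}\) is annihilated by the same modes as \(\ket{\mathbf n}\), so by the Fock structure it is a multiple of \(\ket{\mathbf n}\) within each symmetry sector. The eigenvalue is then \(t_0(u)\prod_{j:n_j=1}\frac{1+u\varepsilon_j}{1-u\varepsilon_j}\), where \(t_0(u)\) is the vacuum eigenvalue.

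To fix \(t_0\), I would use \cref{prop:productoftransfer}. The eigenvalue of \(T(u)T(-u)\) on the vacuum equals \(t_0(u)t_0(-u) = P_G(-u^2) = \prod_j(1-u^2\varepsilon_j^2)\), since the roots satisfy \(u_j = \varepsilon_j^{-1}\). Every eigenvalue of \(T(u)\) must be a polynomial of degree at most \(\alpha(G)\). Applying this to the fully occupied state, the denominators \(1-u\varepsilon_j\) must cancel, which forces \(t_0(u) = \prod_j (1-u\varepsilon_j)\). This gives exactly the claimed product on every occupation state. The constant term \(t_0(0)=1\) fixes the sign choice among the factorizations of \(P_G(-u^2)\).

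The main obstacle is the intertwining relation. The delicate part is controlling the terms produced when \(\chi\) passes through \(T(u)\), and showing that they assemble into \(T(u)\) times a scalar rather than a new operator. I would first attempt this via the recursion \(T_G = T_{G\setminus K} - u\sum_{\bs j\in K}h_{\bs j}T_{G\setminus N[\bs j]}\), with \(K\) the simplicial clique, so that only terms touching \(K\) acquire signs. A secondary issue is possible degeneracies among the \(\varepsilon_j\) and zero modes. I would handle these as in \cref{sec:omega_reduction}, by perturbing \(\bs b\) to make the roots distinct, using \cref{prop:zero_mode} for \(\phi_0\), and passing to the limit by polynomial continuity.
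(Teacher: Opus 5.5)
\emph{Comparison with the paper.} The paper gives no derivation of this identity. It imports it from \cite[Eq.~(3.5)]{fendley2019free} and then uses it to prove \cref{lem:cayley_transform}. Your intertwining relation is, via the inversion relation and the substitution $iu\to u$, exactly the content of that lemma. So you are running the paper's logic backwards, which is legitimate.

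\emph{What works, with one repair.} Your second half (inversion relation plus polynomiality on the fully occupied state) does pin down the eigenvalues, but one point needs fixing. The occupation sectors are $2^{n-\alpha(G)}$-fold degenerate, so you cannot assume $T(u)$ maps each occupation state to a multiple of itself. Instead, choose a joint eigenbasis of the commuting Hermitian $Q^{(k)}$ inside the vacuum space, which is $T(u)$-invariant, and build the rest with creation operators.

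\emph{The gap.} The genuine gap is the intertwining relation itself, and the mechanism you sketch does not produce it. Commuting $\chi$ through $T(u)$ gives $T(u)\chi=\chi\,(2T_{G\setminus K}(u)-T(u))$. With $\Psi_j\coloneqq T(-u_j)\chi T(u_j)$ this yields
\[
T(u)\Psi_j-c\,\Psi_jT(u)=T(-u_j)\,\chi\,\bigl[2T_{G\setminus K}(u)-(1+c)\,T(u)\bigr]\,T(u_j).
\]
Since $T_{G\setminus K}(u)$ does not commute with $T(\pm u_j)$, no product $T(u_j)T(-u_j)$ ever appears for $P_G(-u_j^2)=0$ to kill. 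Substituting $2T_{G\setminus K}(u)=T(u)+\chi T(u)\chi$ just undoes the step. For $c=(1-u\varepsilon_j)/(1+u\varepsilon_j)$, what you actually need is $T(-u_j)\chi\bigl[\varepsilon_jT_{G\setminus K}(u)+\sum_{\bs{k}\in K}h_{\bs{k}}T_{G\setminus N[\bs{k}]}(u)\bigr]T(u_j)=0$ for all $u$. Its $u^0$ term is $[H,\psi_j]=2\varepsilon_j\psi_j$. The higher orders, however, are new identities for the higher charges, and nothing in your plan supplies them.

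\emph{A repair using cited results.} The missing step follows once you use the inversion relation to turn the sandwich into a conjugation: $P_G(-u^2)^{-1}T(-u)\chi T(u)=T(-u)\chi T(-u)^{-1}$ for small $u$. By \cref{prop:TchiT} and \cref{prop:l1span}, writing $\chi=\phi_0+\sum_jc_j(\psi_j+\psi_j^\dagger)$ with $c_j\neq0$,
\[
T(-u)\chi T(-u)^{-1}=\phi_0+\sum_jc_j\Bigl(\tfrac{1+u\varepsilon_j}{1-u\varepsilon_j}\,\psi_j+\tfrac{1-u\varepsilon_j}{1+u\varepsilon_j}\,\psi_j^\dagger\Bigr).
\]
Conjugation by $T(-u)$ commutes with $\mc{L}$, because $[T(-u),H]=0$. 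On $\mathrm{span}\{\phi_0,\psi_k,\psi_k^\dagger\}$ the operator $\mc{L}$ has distinct eigenvalues $0,\pm i\varepsilon_k$, so $c_j\psi_j=p_j(\mc{L})[\chi]$ for a Lagrange interpolation polynomial $p_j$. Applying $p_j(\mc{L})$ to both sides gives $T(-u)\psi_jT(-u)^{-1}=\frac{1+u\varepsilon_j}{1-u\varepsilon_j}\psi_j$, which is your relation after $u\to-u$; the same works for $\psi_j^\dagger$. Polynomiality in $u$ then extends it to all $u$.

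\emph{An alternative that skips intertwining.} Take a joint eigenvector $v$ of the $Q^{(k)}$ with $Hv=E_vv$. The inversion relation forces $T(u)v=\prod_j(1-s_ju\varepsilon_j)v$ with $s_j=\pm1$, and the linear term gives $\sum_js_j\varepsilon_j=E_v$. By \cref{eq:diaghamiltonian}, at couplings where the $2^{\alpha(G)}$ signed sums $\sum_j\pm\varepsilon_j$ are distinct, $v$ lies in a single occupation sector with $s_j=1-2n_j$. Continuity in $\bs{b}$ then removes the genericity assumption.
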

Furthermore, we make use of the following well-known identity:
\begin{proposition}[Generating function for Newton's identities]
    Given the elementary symmetric polynomials  \(\{e_k(\{\bs{x}\})\}_{k = 0}^d\) and power sum symmetric polynomials  \(\{p_k(\{\bs{x}\})\}_k\) in the variables  \(\{x_j\}_{j = 1}^d\), defined by
    \begin{align}
        &e_0(\{\bs{x}\}) = 1 \qquad \text{and} \qquad e_k(\{\bs{x}\}) = \sum_{1 \leq i_1 < \dots < i_k \leq d} \left(\prod_{j = 1}^k x_{i_j}\right) \quad \text{for  \(k > 0\)}\\
        \text{and} \qquad &p_k(\{\bs{x}\}) = \sum_{j = 1}^d x_j^k
    \end{align}
    we have
    \begin{align}
        \sum_{k = 0}^d e_k(\{\bs{x}\}) (-u)^k = \mathrm{exp}\left[-\sum_{k = 1}^{\infty} \frac{p_k(\{\bs{x}\})}{k} u^k\right].
    \end{align}
    \label{prop:newtongenerating}
\end{proposition}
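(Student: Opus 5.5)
The plan is to pass through the factorised form of the left-hand side and then take logarithms term by term. I will work in the ring of formal power series \(\mathbb{Q}[x_1,\dots,x_d][[u]]\), where \(\exp\) and \(\log\) are mutually inverse bijections between series with zero constant term and series with constant term \(1\). Because everything is a polynomial identity in the \(x_j\), the analytic version for \(u\in U_0\), say \(|u|<1/\max_j|x_j|\), follows by specialising to complex values of the \(x_j\).

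First, I will establish the factorisation
\begin{align}
    \prod_{j=1}^{d}\left(1 - x_j u\right) = \sum_{k=0}^{d} e_k(\{\bs{x}\})(-u)^k .
\end{align}
Expanding the product chooses, for each \(j\), either the factor \(1\) or \(-x_j u\). Choosing the second option for an index set \(\{i_1<\dots<i_k\}\) contributes \((-u)^k\prod_{m=1}^k x_{i_m}\), and summing over all \(k\)-subsets gives exactly the definition of \(e_k\).

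Second, for each single factor I will use the formal expansion \(\log(1-x_j u) = -\sum_{k\ge1} x_j^k u^k/k\). Equivalently, \(1-x_ju = \exp\bigl(-\sum_{k\geq 1} x_j^k u^k/k\bigr)\). To see this, note that both sides have constant term \(1\) and satisfy the same first-order formal ODE \(f' = -x_j f/(1-x_ju)\), so they agree by uniqueness of coefficient recursion.

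Third, I will multiply these \(d\) identities together using \(\exp(a)\exp(b)=\exp(a+b)\) for formal series with zero constant term. The exponents then sum to \(-\sum_{k\ge1}\bigl(\sum_j x_j^k\bigr)u^k/k = -\sum_{k\geq 1} p_k(\{\bs{x}\})u^k/k\). Combining this with the first step gives the claim.

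I expect the only real obstacle to be the formal-series bookkeeping: justifying the exponential addition law and the single-factor logarithm identity without appealing to analytic convergence. Both are standard; the derivative/uniqueness argument in the second step handles the latter, and the binomial expansion of \(\exp(a+b)\) handles the former.
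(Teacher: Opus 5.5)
Your proof is correct and complete. The factorisation $\prod_j(1-x_ju)=\sum_k e_k(-u)^k$ handles the left-hand side. The single-factor formal logarithm is justified by your uniqueness argument: the ODE gives the recursion $(n+1)a_{n+1}=(n-1)x_j a_n$, which fixes every coefficient over $\mathbb{Q}$. The exponential addition law then assembles the $d$ factors into the stated identity.

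The paper does not prove this proposition; it quotes it as a well-known identity. There is therefore no competing route to compare, and yours is the standard argument.

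One point is worth making explicit. Because you work in $\mathbb{Q}[x_1,\dots,x_d][[u]]$, the identity specialises to any mutually commuting elements of a $\mathbb{Q}$-algebra. That is exactly how the paper uses it in the proof of \cref{lem:cayley_transform}, with $u\to iu$ and $x_j\to\varepsilon_j[\psi_j,\psi_j^{\dagger}]$, which commute for distinct $j$. Your formal setup therefore also justifies the paper's remark that it may ``effectively deal with them as scalars''.
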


\CayleyTransform*
\begin{proof}
    With respect to the Hilbert-Schmidt inner product, the incognito modes are orthogonal and with that linearly independent. 
    In the  \(2\alpha(G)\)-dimensional subspace spanned by these modes, the adjoint operator is diagonal via
    \begin{equation}
        \mathrm{ad}_H \psi_j \mapsto 2\varepsilon_j \psi_j \qquad \text{and} \qquad \mathrm{ad}_H \psi_j^\dagger \mapsto -2\varepsilon_j \psi_j^\dagger\;.
    \end{equation}
    This spectral decomposition allows us to define the inverse of  \(\left(I - \frac{iu}{2} \mathrm{ad}_H \right)\), more specifically we have
    \begin{equation}
        \left(I - \frac{iu}{2} \mathrm{ad}_H \right)^{-1} \psi_j \mapsto \left(1 - iu\varepsilon_j\right)^{-1} \psi_j \qquad \text{and} \qquad \left(I - \frac{iu}{2} \mathrm{ad}_H \right)^{-1} \psi_j^\dagger \mapsto \left(1 + iu \varepsilon_j\right)^{-1} \psi_j^\dagger\;.
    \end{equation}
    Similar arguments obviously hold for  \([\psi_j, \psi_j^{\dagger}]\),  \(j = 1, \ldots, \alpha(G)\), which we implicitly use below when taking the logarithm and using Newton's identity (\cref{prop:newtongenerating}) on related terms (to effectively deal with them as scalars).
    
    Now let
    \begin{align}
        X = c_0 \phi_0 + \sum_{j = 1}^{\alpha(G)} \left(x_j \psi_j + y_j \psi_j^{\dagger}\right)
        \label{eq:incognitospan}
    \end{align}
    for appropriate complex \(c_0\),  \((x_j)_j\), and  \((y_j)_j\). 
    As the operator  \(V(u)\) is unitary, we can thus express it as the exponential of an anti-Hermitian operator as  \(V(u) \coloneqq \mathrm{exp}\left[i \mc{H}(u)\right]\), where  \(\mc{H}(u)\) is a Hermitian operator given by
    \begin{align}
        -i \mathrm{log}\left[V(u)\right] = \frac{i}{2} \mathrm{log}[P_G(u^2)]I - i \mathrm{log}[T(iu)].
    \end{align}
    Expanding the transfer operator using \cref{eq:diagonaltransfer} gives
    \begin{align}
        \mathrm{log}[T(iu)] &= \sum_{j = 1}^{\alpha(G)} \mathrm{log}\left(I - i u \varepsilon_j [\psi_j, \psi_j^{\dagger}]\right).
    \end{align}
    By \cref{prop:newtongenerating}, together with \([\psi_j, \psi_j^{\dagger}]^{2} = I\) by the canonical anticommutation relations, we have
    \begin{align}
        \mathrm{log}[T(iu)] &= -\sum_{j = 1}^{\alpha(G)} \sum_{k = 1}^{\infty} \frac{(iu \varepsilon_j)^k}{k}[\psi_j, \psi_j^{\dagger}]^{k} \\
        &= -\sum_{j = 1}^{\alpha(G)} \left\{\left(\frac{1}{2}\sum_{k = 1}^{\infty} \frac{(-u^2\varepsilon_j^2)^k}{k}\right)I + i \left(\sum_{k = 0}^{\infty} \frac{(-1)^k (u\varepsilon_j)^{2k + 1}}{2k + 1}\right)[\psi_j, \psi_j^{\dagger}]\right\} \\
        \mathrm{log}[T(iu)] &= \frac{1}{2}\mathrm{log}[P_G(u^2)]I - i \sum_{j = 1}^{\alpha(G)} \mathrm{tan}^{-1}\left(u\varepsilon_j\right)[\psi_j, \psi_j^{\dagger}]
    \end{align}
    and thus
    \begin{align}
        -i \mathrm{log}\left[V(u)\right] = - \sum_{j = 1}^{\alpha(G)} \mathrm{tan}^{-1}\left(u\varepsilon_j\right)[\psi_j, \psi_j^{\dagger}].
    \end{align}
    Next, we expand the left-hand side of \cref{eq:adjointunitary} using \cref{eq:diaghamiltonian} and the ansatz in \cref{eq:incognitospan}
    \begin{align}
        &\mathcal{C}(u)[X] = c_0 \phi_0 + \sum_{j = 1}^{\alpha(G)} \left\{x_j \left(\frac{1 + i u \varepsilon_j}{1 - i u \varepsilon_j} \right) \psi_j + y_j \left(\frac{1 - i u \varepsilon_j}{1 + i u \varepsilon_j} \right) \psi_j^{\dagger}\right\} \\
        &\mh{10mm}= c_0 \phi_0 + \sum_{j = 1}^{\alpha(G)} \left\{x_j e^{2i \mathrm{tan}^{-1}(u\varepsilon_j)} \psi_j + y_j e^{-2i \mathrm{tan}^{-1}(u\varepsilon_j)} \psi_j^{\dagger}\right\} \\
        &\mh{10mm}= \mathrm{exp}\left[i \sum_{j = 1}^{\alpha(G)} \mathrm{tan}^{-1}(u\varepsilon_j) [\psi_j, \psi_j^{\dagger}]\right]\left[c_0 \phi_0 + \sum_{j = 1}^{\alpha(G)} \left(x_j \psi_j + y_j \psi_j^{\dagger}\right)\right] \mathrm{exp}\left[-i \sum_{j = 1}^{\alpha(G)} \mathrm{tan}^{-1}(u\varepsilon_j) [\psi_j, \psi_j^{\dagger}]\right] \\
        &\left(I + \frac{iu}{2} \mathrm{ad}_H \right) \left(I - \frac{iu}{2} \mathrm{ad}_H \right)^{-1}[X] = V(u)^{\dagger} X V(u)
    \end{align}
    where  \(V(u) \coloneqq P_G^{-1/2}(u^2) T(iu)\), as desired.
\end{proof}

\section{Claw cancellation structures}
\label[appendix]{app:structural-graph}

In this appendix, we formalize a sufficient condition under which claws in the frustration graph do not obstruct the commutation of the first and third independent-set charges. 
The mechanism is the one used in the models of Refs.~\cite{fukai2025free,Fendley2024}: claws are not removed from the frustration graph, but their contributions to the charge commutator cancel pairwise after restricting to an appropriate symmetry sector.

Let
\begin{equation}
    H=\sum_{\bsj\in V}h_{\bsj}
\end{equation}
be a Pauli Hamiltonian with frustration graph  \(G=(V,E)\), and assume that all nonzero Hamiltonian coefficients are equal. 
We absorb this common coefficient into the overall energy scale, so that each  \(h_{\bsj}\) below denotes the corresponding normalized Pauli operator. 
Define the independent-set charges
\begin{equation}
    Q^{(r)} = \sum_{S\in\mc{S}^{(r)}}h_S, \qquad h_S\coloneqq\prod_{\bsj\in S}h_{\bsj},
\end{equation}
where products are taken according to a fixed ordering of  \(V\).

We write  \(\claws(G)\) for the set of induced claws of  \(G\). 
If
\[
\claw=(\bs{c};\bs{a},\bs{b},\bs{d})
\]
is a claw with centre  \(\bs{c}\) and leaves  \(\{\bs{a},\bs{b},\bs{d}\}\), we write
\begin{equation}
    V(\claw) = \{\bs{c},\bs{a},\bs{b},\bs{d}\}, \qquad h_{\claw} \coloneqq h_{\bs{c}}h_{\bs{a}}h_{\bs{b}}h_{\bs{d}}.
\end{equation}
Here the centre is placed first; this fixes the sign, since  \(h_{\bs{c}}\) anticommutes with each of the three mutually commuting leaves, so that reordering the factors can change  \(h_{\claw}\) by a sign.

We first record the elementary parity condition that produces symmetries. 
For a subset  \(U\subseteq V\), define
\begin{equation}
    h_U=\prod_{\bsj\in U}h_{\bsj}.
\end{equation}
Then, for any vertex  \(\bsk\in V\),
\begin{equation}
    h_{\bsk}h_U = (-1)^{\dgrU{U}{\bsk}}h_Uh_{\bsk}.
    \label{eq:appendix_degree_parity_commutation}
\end{equation}
Hence, if
\begin{equation}
    \dgrU{U}{\bsk}=0\mod 2 \qquad \text{for all }\bsk\in V,
    \label{eq:appendix_even_degree_into_U}
\end{equation}
then  \(h_U\) commutes with every term in  \(H\), and therefore with  \(H\) itself. 
Equivalently, if  \(\mbf{A}_G\) is the binary adjacency matrix of  \(G\), then
\begin{equation}
    \mbf{A}_G\subv{U}=0 \qquad \text{over }\mathbb{F}_2.
\end{equation}
In particular,  \(G[U]\) is Eulerian if it is connected.

\begin{definition}[Admissible claw pairing]
    \label{def:admissible_claw_pairing}
    An admissible claw pairing of  \(G\) is a fixed-point-free involution
    \begin{equation}
        \pi:\claws(G)\to\claws(G)
    \end{equation}
    such that, for every claw  \(\claw\in\claws(G)\), the symmetric difference
    \begin{equation}
        U_{\claw} \coloneqq V(\claw)\sd V(\pi(\claw))
    \end{equation}
    satisfies
    \begin{equation}
        \dgrU{U_{\claw}}{\bsj}=0\mod 2 \qquad \text{for all }\bsj\in V.
        \label{eq:admissible_claw_pairing_even_degree}
    \end{equation}
    Equivalently,  \(h_{U_{\claw}}\) is a symmetry of  \(H\).
\end{definition}

\begin{lemma}[Symmetric cancellation of claws]
    \label{lem:symmetric_claw_cancellation}
    Let  \(H\) be an equal-coefficient Pauli Hamiltonian with frustration graph  \(G\), and suppose that  \(G\) admits an admissible claw pairing  \(\pi\).
    Let  \(\mathcal{H}_{\lambda}\) be an  \(H\)-invariant symmetry sector such that, for every paired pair of claws  \(\claw\) and  \(\pi(\claw)\),
    \begin{equation}
        h_{\claw}\big|_{\mathcal{H}_{\lambda}} = -h_{\pi(\claw)}\big|_{\mathcal{H}_{\lambda}}.
        \label{eq:paired_claw_cancellation_condition}
    \end{equation}
    Then
    \begin{equation}
        [Q^{(1)},Q^{(3)}]\big|_{\mathcal{H}_{\lambda}}=0.
        \label{eq:q1q3_commute_in_sector}
    \end{equation}
\end{lemma}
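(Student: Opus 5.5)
The plan is to expand the commutator \([Q^{(1)},Q^{(3)}]=\sum_{\bsk\in V}\sum_{S\in\mc{S}^{(3)}}[h_{\bsk},h_S]\) term by term, following the proof of \cref{lem:qconservation} in Ref.~\cite{elman2021free}. For fixed \(\bsk\) and independent \(S\) of order three, \([h_{\bsk},h_S]=2h_{\bsk}h_S\) if \(\dgrU{S}{\bsk}\) is odd and vanishes otherwise. Since \(S\) is independent, \(\dgrU{S}{\bsk}\in\{1,3\}\) in the nonvanishing cases, and \(\bsk\notin S\).

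First I would handle the degree-one terms, where \(\bsk\) has exactly one neighbour \(\bs{s}\in S\). These are the terms that cancel in the claw-free argument. Set \(S'=(S\setminus\{\bs{s}\})\cup\{\bsk\}\). This is again an independent set of order three, and \(\bs{s}\) has exactly one neighbour (\(\bsk\)) in \(S'\). The map \((\bsk,S)\mapsto(\bs{s},S')\) is an involution on these terms, and \(h_{\bsk}h_S=-h_{\bs{s}}h_{S'}\) up to the fixed ordering of factors. The sign follows by commuting \(h_{\bsk}\) past the factors of \(h_S\); only \(h_{\bs{s}}\) anticommutes with it. Because the coefficients are equal and normalized, the two contributions cancel identically, without any sector restriction. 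I will check the ordering signs carefully here, since the products are taken in a fixed order of \(V\).

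The remaining terms have \(\dgrU{S}{\bsk}=3\), so \((\bsk;S)\) is exactly an induced claw \(\claw\) with centre \(\bsk\) and leaves \(S\). Each such term equals \(2h_{\bsk}h_S=2h_{\claw}\), possibly up to a sign from reordering the commuting leaves, which is harmless because the leaves commute. Hence
\begin{equation*}
[Q^{(1)},Q^{(3)}]=2\sum_{\claw\in\claws(G)}h_{\claw}.
\end{equation*}
Since \(\pi\) is a fixed-point-free involution, this sum splits into orbit pairs \(\{\claw,\pi(\claw)\}\), each contributing \(h_{\claw}+h_{\pi(\claw)}\).

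Finally, I restrict to \(\mathcal{H}_\lambda\), which is legitimate because it is \(H\)-invariant and the charges are polynomials in the \(h_{\bsj}\). Applying \cref{eq:paired_claw_cancellation_condition}, each paired contribution vanishes on the sector, which gives \cref{eq:q1q3_commute_in_sector}.

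The main obstacle is sign bookkeeping. The sign of \(h_{\claw}\) in the commutator has to match the centre-first convention used in \cref{eq:paired_claw_cancellation_condition}, and the degree-one pairing must produce an exact minus sign given the fixed ordering. I would first verify both signs by the commutation rule \cref{eq:appendix_degree_parity_commutation}. The admissibility condition only guarantees that \(h_{U_\claw}\propto h_{\claw}h_{\pi(\claw)}\) is a symmetry, so that sectors with the required eigenvalue \(-1\) exist; it plays no direct role in the cancellation beyond this.
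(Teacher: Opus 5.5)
Your proposal is correct and follows the paper's proof essentially step for step: expand the commutator, cancel the degree-one terms with the vertex-shifting involution, identify the degree-three terms with $2\sum_{\claw}h_{\claw}$ in centre-first order, and cancel these pairwise on $\mathcal{H}_\lambda$ using only that $\pi$ is a fixed-point-free involution. The one imprecision is your justification of the restriction step: $H$-invariance alone would not make $\mathcal{H}_\lambda$ invariant under $Q^{(3)}$, but no such invariance is needed, because $[Q^{(1)},Q^{(3)}]=2\sum_{\claw}h_{\claw}$ holds on the full space and restriction is linear.
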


\begin{proof}
    Expanding the commutator gives
    \begin{equation}
        [Q^{(1)},Q^{(3)}] = \sum_{\substack{\bsj\in V\\S\in\mc{S}^{(3)}}}
        [h_{\bsj},h_S].
    \end{equation}
    Terms for which  \(\bsj\) has an even number of neighbours in  \(S\) commute.
    When  \(\bsj\) has exactly one neighbour in  \(S\), the corresponding contributions cancel under the usual vertex-shifting involution that exchanges  \(\bsj\) with its unique neighbour in  \(S\). 
    The only unpaired contributions therefore occur when  \(\bsj\) is adjacent to all three vertices of  \(S\). 
    Since  \(S\) is independent, these four vertices induce a claw, and the corresponding term is  \([h_{\bs{j}},h_S]=2h_{\bs{j}}h_S=2h_{\claw}\) with the centre-first ordering of  \(h_{\claw}\). 
    Consequently,
    \begin{equation}
        [Q^{(1)},Q^{(3)}] = 2\sum_{\claw\in\claws(G)}h_{\claw}.
        \label{eq:q1q3_claw_obstruction}
    \end{equation}
    
    The admissible pairing partitions the claws into pairs  \(\{\claw,\pi(\claw)\}\). 
    By \cref{eq:paired_claw_cancellation_condition}, each pair satisfies
    \begin{equation}
        \left( h_{\claw} + h_{\pi(\claw)} \right) \big|_{\mathcal{H}_{\lambda}} = 0.
    \end{equation}
    The sum in \cref{eq:q1q3_claw_obstruction} therefore vanishes after restriction to  \(\mathcal{H}_{\lambda}\), which proves \cref{eq:q1q3_commute_in_sector}.
\end{proof}

The preceding lemma establishes only the cancellation required for the lowest nontrivial commutator  \([Q^{(1)},Q^{(3)}]\). 
It appears likely that the same mechanism can be extended to commutators  \([Q^{(r)},Q^{(s)}]\) for arbitrary  \(r\) and  \(s\), following the construction of \citet{fukai2025free}. 
In such an extension, the vertices outside the local claw would supply common independent-set factors, while the paired local claw contributions would cancel in the relevant symmetry sector. 
A complete proof would require showing that the claw-pairing involution is compatible with all such external independent-set extensions and with possible overlaps between the resulting configurations. 
Since this generalization is not needed for the results of the present paper, we do not pursue that additional combinatorial bookkeeping here.

\end{document}